\theoremstyle{plain}%
\theoremstyle{remark}%
\theoremstyle{definition}%
\newtheorem{thm}{Theorem}
\newtheorem{lem}[thm]{Lemma}
\newtheorem{lemma}{Lemma}
\newtheorem{prop}[thm]{Proposition}
\newtheorem{cor}[thm]{Corollary}
\newcommand{\unit}{1\!\!1}
\newcommand{\lV}{\left\Vert}
\newcommand{\rV}{\right\Vert}
\newcommand{\lmk}{\left(}
\newcommand{\rmk}{\right)}
\newcommand{\Tr}{\mathop{\mathrm{Tr}}\nolimits}
\newcommand{\caH}{{\mathcal H}}
\newcommand{\caA}{{\mathcal A}}
\newcommand{\caB}{{\mathcal B}}
\newcommand{\caU}{{\mathcal U}}
\newcommand{\caS}{{\mathcal S}}
\newcommand{\caT}{{\mathcal T}}
\newcommand{\Aut}{\mathop{\mathrm{Aut}}\nolimits}
\newcommand{\Mat}{\mathop{\mathrm{M}}\nolimits}
\newcommand{\bbC}{\mathbb C}
\newcommand{\bbZ}{\mathbb Z}
\newcommand{\bbT}{\mathbb T}
\newcommand{\Ad}{\mathop{\mathrm{Ad}}\nolimits}
\newcommand{\id}{\mathop{\mathrm{id}}\nolimits}
\newcommand{\ketbra}[2]{\left\vert #1\right \rangle \! \left\langle #2\right\vert}
\newcommand{\tr}{\operatorname{Tr}}
\newcommand{\ket}[1]{\vert #1 \rangle}
\newcommand{\bra}[1]{\langle #1 \vert}
\newcommand{\scalprod}[2]{\langle #1 \vert #2 \rangle}
\newcommand{\aut}{\mathrm{Aut}}
\tikzset{
  ->-/.style={postaction={decorate,decoration={
        markings,
        mark=at position .6 with {\arrow[#1]{stealth}}
      }}},
  -<-/.style={postaction={decorate,decoration={
        markings,
        mark=at position .6 with {\arrow[#1]{stealth reversed}}
      }}},
  virtual/.style = {gray},
  irrep/.style={
    anchor=south,
    font = \tiny,
    inner sep=2pt
  },
}
\tikzset{
  tensor/.style={
    inner sep = 0.055cm, shape = circle, draw, fill
  },
  conjugate tensor/.style={
    tensor, fill=white
  },
  t/.style={
    inner sep = 0.03cm, shape = circle, draw, fill
  },
  conjugate t/.style={
    inner sep = 0.03cm, shape = circle, draw, fill=white
  },
  mps/.style={
    tensor, blue
  },
  cmps/.style={
    conjugate tensor, draw=blue,
  },
  mpo/.style={
    tensor
  },
  cmpo/.style={
    conjugate tensor,
  },
  fusion/.style={
    line width=1mm, 
    line cap=round,
    shorten >= -0.2mm,
    shorten <= -0.2mm
  },
  cfusion/.style={
    line width=0.5pt,
    double distance = 0.8mm,
    rounded corners,  
    line cap=round,
    shorten >= -0.2mm,
    shorten <= -0.2mm
  },
  action/.style={
    blue,
    line width=1mm, 
    line cap=round,
    shorten >= -0.2mm,
    shorten <= -0.2mm
  },
  caction/.style={
    draw = blue,
    line width=0.5pt,
    double distance = 0.8mm,
    rounded corners,  
    line cap=round,
    shorten >= -0.2mm,
    shorten <= -0.2mm
  },
  halfc action/.style={
    blue!30!white,
    line width=1mm,
    line cap=round,
    shorten >= -0.2mm,
    shorten <= -0.2mm
  },
}
\tikzset{>=stealth}
\tikzset{
  every picture/.style = {
    baseline=-1mm,
    font=\scriptsize
  },
}
\begin{document}

\title[Article Title]{Classifying symmetric and symmetry-broken spin chain phases with anomalous group actions}

\author[1]{\fnm{Jos\'e} \sur{Garre Rubio}}

\author[1]{\fnm{Andras} \sur{Molnar}}

\author[2]{\fnm{Yoshiko} \sur{Ogata}}

\affil[1]{\orgdiv{Faculty of Mathematics}, \orgname{University of Vienna}, \orgaddress{\street{Oskar-Morgenstern-Platz 1}, \city{Vienna}, \postcode{1090}, \country{Austria}}}

\affil[2]{\orgdiv{Research Institute for Mathematical Sciences}, \orgname{Kyoto University}, \orgaddress{\street{Kyoto}, \city{Kyoto}, \postcode{606-8502 }, \country{Japan}}}

\abstract{
We consider the classification problem of quantum spin chains invariant under local decomposable group actions, covering matrix product unitaries (MPUs), using an operator algebraic approach. We focus on finite group symmetries hosting both symmetric and symmetry broken phases. The local-decomposable group actions we consider have a $3$-cocycle class of the symmetry group associated to them. We derive invariants for our classification that naturally cover one-dimensional symmetry protected topological (SPT) phases. We prove that these invariants coincide with the ones of \cite{mposym} using matrix product states (MPSs) techniques, by explicitly working out the GNS representation of MPSs and MPUs, resulting in a useful dictionary between both approaches that could be of independent interest.
}

\keywords{Quantum spin chains, global non-on-site symmetries, matrix product unitaries}

\maketitle

\tableofcontents

\section{Introduction}\label{sec:intro}

The phase classification problem of quantum spin systems with global symmetries has attracted humongous attention in the last two decades. The main target are geometrically local and gapped Hamiltonians placed on a lattice. The classification approach focuses on finding the invariants on the boundary of a ground state's region after applying the symmetry there \cite{EntSPT,SPTclass1,SPTclass2}. That insight was given by matrix product states (MPSs), the family of tensor network states that approximates efficiently one-dimensional ground states of local and gapped Hamiltonians\cite{Hastings}. Later, the approach was generalized to higher-dimensional tensor network states \cite{2dSPT} and also using algebraic approaches \cite{EN3co}. 

Finally, an operator algebraic approach developed by one of the authors concluded the classification problem for on-site symmetric chains \cite{O1SPT}. The completeness of this work is based on two main points. First, the framework is based on states satisfying the so called split property which is exactly satisfied by ground states of local gapped Hamiltonians \cite{Matsui}. Then, with this formalism the approximation introduced by using MPSs disappears. Second, the operator algebra approach allows to deal properly with the thermodynamic limit which is exactly where quantum phases are defined.

The result for on-site (tensor product form) global symmetries is as follows: for unique ground states, the so-called symmetry protected topological (SPT) phases, the invariants are given by the second and third cohomology group of the symmetry group in one and two spatial dimensions respectively.

However, more general symmetries than on-site actions can be considered as representations of finite groups acting on 1D lattices. These are, nonetheless, automorphisms of the quantum spin systems that can be locally decomposed using unitaries (see definition below). For example, matrix product unitary (MPU) representations of groups fall into this category\cite{MPUdef}. 

In fact, Ref.\ \cite{mposym} considers MPU representations of finite groups, and even representations of fusion categories as global symmetries of ground spaces spanned by MPSs. They derive a phase classification based on  invariants obtained at  the virtual level of the MPSs which restricts to the second cohomology class for the regular setting of SPT phases.

In this work, we generalize those findings by using the operator algebra approach which  allows us to work outside tensor networks and in the thermodynamic limit. Moreover, we prove that their invariants coincide with ours working out explicitly the connection between tensor networks and the operator algebra approach. We note that the symmetry is specified by an extra index, not just the group $G$ as in the SPT setting, associated with the representation of the finite group $G$ that takes values on the third cohomology group of $G$. In our setting, we deal with degenerate ground states where the global symmetry permutes between them. Our findings can be stated informally as follows: given a finite group $G$ and a representation of it characterized by $3$-cocycle class $[\omega]$, the different quantum phases symmetric under the pair $(G,[\omega])$ are labeled by $(H, \sigma)$: first, a subgroup $H \subset G$ that trivializes $\omega$ to a $3$-coboundary and second, an element $\sigma$ of the second cohomology group of $H$. The subgroup $H$ is the unbroken symmetry  group, the ground state degeneracy is $|G/H|$ and $\sigma$ characterizes the SPT phase of the unbroken symmetry.

We note that while finishing this article, we became aware of Ref. \cite{Kanomalous} that also shows the 3-cocycle index of global symmetries and the impossibility of having unique ground state for non-trivial ones.

\section{General theory}
\subsection{Setting and main result}\label{setting}

In this paper we consider one dimensional quantum spin systems.
{For each $z\in\bbZ$,  let $\caA_{\{z\}}$ be an isomorphic copy of $\Mat_{d}$, and for any finite subset $\Lambda\subset\bbZ$, let $\caA_{\Lambda} = \otimes_{z\in\Lambda}\caA_{\{z\}}$, which is the local algebra of observables in $\Lambda$. 
For finite $\Lambda$, the algebra $\caA_{\Lambda} $ can be regarded as the set of all bounded operators acting on
the Hilbert space $\otimes_{z\in\Lambda}{\bbC}^{d}$.
We use this identification freely.
If $\Lambda_1\subset\Lambda_2$, the algebra $\caA_{\Lambda_1}$ is naturally embedded in $\caA_{\Lambda_2}$ by tensoring its elements with the identity. 
The algebra $\caA$
 (resp. $\caA_{R}$, $\caA_L$) representing the infinite chain (resp. half-infinite chain)
is given as the inductive limit of the algebras $\caA_{\Lambda}$(
resp. $\caA_{\Lambda}$ with $\Lambda\subset[0,\infty)$
 and $\Lambda\subset(-\infty -1]$) completed by the operator norm. 
}
Let $G$ be a finite group and $\{\beta_g\}_{g\in G}$ be a set of automorphisms on
$\caA$ such that
\begin{align}
\begin{split}
\beta_g\beta_h=\beta_{gh},\quad g,h\in G.
\end{split}
\end{align}
We assume that each $\beta_g$ is local-decomposable, i.e., it satisfies:
\begin{align}\label{bdecom}
\begin{split}
\beta_g=\Ad(w_g)\lmk\beta_{gL}\otimes\beta_{gR}\rmk,
\end{split}
\end{align}
where $w_g$ is a unitary in $\caA$,
$\beta_{gL}$ an automorphism on the left infinite chain $\caA_L$
and $\beta_{gR}$ an automorphism on the right infinite chain $\caA_R$. Notice that this particular choice is arbitrary and this decomposition can be done for any chosen left and right partition. 
We may set $\beta_{eR}=\id$.
We denote by $\boldsymbol{B}_R$
the set of such $\boldsymbol{\beta_R}:=\{\beta_{gR}\}_{g\in G}$.

Let $X$ be a finite $G$-right set and  consider a set ${\boldsymbol{\omega}}:=\{\omega_x\}_{x\in X}$
of pure split states on 
$\caA$ labeled by $x\in X$.
We assume that the automorphisms permute between the set of pure split states:
\begin{align}\label{xgt}
\begin{split}
\omega_x\beta_g=\omega_{x\cdot g}.
\end{split}
\end{align}
We denote by $\caS$ the set of all such $\boldsymbol{ \omega}$.

We consider a classification of $\caS$.
We denote by $\Aut_{G,0}\caA$ the set of all
automorphisms $\alpha$ on $\caA$ allowing a decomposition
\begin{align}\label{adecom}
    \alpha=\Ad V\circ(\alpha_L\otimes \alpha_R)
\end{align}
and  commuting with $\beta_g$.
Recall that automorphisms given by $G$-invariant interactions
satisfy this condition.
We say $\boldsymbol{\omega}, \boldsymbol{\hat \omega}\in \caS$ are equivalent
and write $\boldsymbol{\omega}\sim_{G,r} \boldsymbol{\hat\omega}$
if there exists an $\alpha\in \Aut_{G,0}\caA$
such that $\hat\omega_x=\omega_x\alpha$ for all $x\in X$.
Note that $\sim_{G,r}$ is rougher than the equivalence relation considered in \cite{OCDM}.

We regard $\bigoplus_{x\in X}U(1)$ an abelian group with the pointwise multiplication, and
associate a
$G$-action
\begin{align}
   \lmk \bigoplus_{x\in X} a_x\rmk^g:=\bigoplus_{x\in X} a_{xg}.
\end{align}
For $\boldsymbol{\sigma}\in C^2\lmk G,\; \bigoplus_{x\in X}U(1)\rmk$,
$\boldsymbol{\zeta}\in C^1\lmk G, \;\bigoplus_{x\in X}U(1)\rmk$
we set
\begin{align}
\lmk d{\boldsymbol{\sigma}}(g,h,k)\rmk_x:=
\frac{\sigma_x(g,h)
\sigma_x(gh,k)}
{\sigma_{xg}(h,k)
\sigma_x(g,hk)},\quad x\in X,\quad g,h,k\in G,
\end{align}
and
\begin{align}
    \lmk d\boldsymbol{\zeta}\rmk_x(g,h):=\frac{\zeta_x(gh)}{\zeta_{xg}(h)\zeta_x(g)}
    \quad x\in X,\quad g,h,k\in G.
\end{align}
Set
\begin{align}
    \caT_0:=\left\{
(\boldsymbol{\sigma},c) \mid 
\boldsymbol{\sigma} \in C^2\lmk G,\; \bigoplus_{x\in X}U(1)\rmk,\;\;
c\in Z^3(G,\; U(1)),\;\; (d \boldsymbol{\sigma})_x=c,\;\text{for all}\; x\in X 
    \right\}.
\end{align}
We introduce an equivalence relation on $\caT_0$
: $(\hat{\boldsymbol{\sigma}},\hat c)
    \sim (\boldsymbol{\sigma}, c)$
    if and only if there exist 
    $ \boldsymbol{\zeta}\in C^1\lmk G, \bigoplus_{x\in X}U(1)\rmk$,
    $z\in C^2\lmk G,  U(1)\rmk$
such that
$\boldsymbol{\hat\sigma}=d\boldsymbol{\zeta}\cdot \boldsymbol{\sigma}\cdot {\bar z} $
and $ \hat c={dz}\cdot c$.
We denote by $\caT$ the set of all equivalence classes
of $\caT_0$ with respect to this equivalence relation.
For $(\boldsymbol{\sigma}, c)\in \caT_0$, we denote by 
$[(\boldsymbol{\sigma}, c)]_{\caT}$ the equivalence class of  
$(\boldsymbol{\sigma}, c)$.
Here is our main theorem in this general setting.
\begin{thm}\label{mainthm}
There exists a $\caT$-valued invariant $I(\boldsymbol{\omega})$ on $\boldsymbol{\omega}\in \caS$
of the classification $\sim_G$.
\end{thm}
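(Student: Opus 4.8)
The plan is to construct, from the GNS data of the states $\boldsymbol{\omega}=\{\omega_x\}_{x\in X}$ together with the implementers of the right components $\beta_{gR}$, a pair $(\boldsymbol{\sigma},c)\in\caT_0$, and then to define $I(\boldsymbol{\omega}):=[(\boldsymbol{\sigma},c)]_{\caT}$. This generalizes the on-site SPT invariant of \cite{O1SPT}: the novelty is that the symmetry permutes several pure split states according to \eqref{xgt} and is only local-\emph{decomposable}, the non-factorizing correction $w_g$ in \eqref{bdecom} being precisely what upgrades the invariant from a single $2$-cohomology class to the pair $(\boldsymbol{\sigma},c)$ with $(d\boldsymbol{\sigma})_x=c$.

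First I would build the implementers. For each $x$ let $(\caH_x,\pi_x,\Omega_x)$ be the GNS triple of $\omega_x$; by the split property $\caH_x$ factorizes through a type $I$ factor, so that $\pi_x|_{\caA_R}$ is quasi-equivalent to the GNS representation of $\omega_x|_{\caA_R}$. For $A\in\caA_R$, relations \eqref{xgt} and \eqref{bdecom} give $\omega_{xg}(A)=\omega_x\!\left(w_g\,\beta_{gR}(A)\,w_g^{*}\right)$, so $\pi_x\circ\beta_{gR}$ and $\pi_{xg}$ are quasi-equivalent on $\caA_R$ after the inner perturbation by $\pi_x(w_g)$. Uniqueness of implementers on a type $I$ factor then yields unitaries $W_x(g)$ realizing $\beta_{gR}$ as an intertwiner from the right GNS representation of $\omega_{xg}$ to that of $\omega_x$. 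Comparing $W_x(g)W_{xg}(h)$ with $W_x(gh)$ — up to the right component of the correction $w_g\,(\beta_{gL}\otimes\beta_{gR})(w_h)\,w_{gh}^{*}$, which measures the failure of $g\mapsto\beta_{gR}$ to be a homomorphism — produces the phase $\sigma_x(g,h)$. Associativity of the triple product $W_x(g)W_{xg}(h)W_{xgh}(k)$, where the index shift $x\mapsto xg$ is dictated by \eqref{xgt}, then yields a class $c(g,h,k)\in Z^3(G,U(1))$ together with the identity $(d\boldsymbol{\sigma})_x=c$. Since $c$ is manufactured solely from the symmetry data $\{w_g,\beta_{gL},\beta_{gR}\}$, it does not depend on $x$, so $(\boldsymbol{\sigma},c)\in\caT_0$.

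Next I would verify that $[(\boldsymbol{\sigma},c)]_{\caT}$ is independent of the choices. Two sources of ambiguity enter: the implementers $W_x(g)$ are fixed only up to phases $\zeta_x(g)\in U(1)$, and the right component of the non-factorizing unitaries $w_g$ (equivalently of $w_g\,(\beta_{gL}\otimes\beta_{gR})(w_h)\,w_{gh}^{*}$) is fixed only up to a state-independent $U(1)$ phase. The first modifies $\boldsymbol{\sigma}$ by $d\boldsymbol{\zeta}$ with $\boldsymbol{\zeta}\in C^1\lmk G,\;\bigoplus_{x\in X}U(1)\rmk$, the $G$-action on the index entering exactly through \eqref{xgt}; the second multiplies $\boldsymbol{\sigma}$ by $\bar z$ and $c$ by $dz$ for some $z\in C^2(G,U(1))$. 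These are precisely the moves generating the equivalence relation defining $\caT$, so $I(\boldsymbol{\omega}):=[(\boldsymbol{\sigma},c)]_{\caT}$ is well defined.

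Finally I would prove invariance under $\sim_{G,r}$. Given $\alpha\in\Aut_{G,0}\caA$ with $\hat\omega_x=\omega_x\alpha$, the decomposition \eqref{adecom} shows $\alpha$ is split, so its right component induces for each $x$ a unitary identification of the right GNS representation of $\hat\omega_x$ with that of $\omega_x$; and the condition $\alpha\beta_g=\beta_g\alpha$ guarantees that this identification intertwines the $W_x(g)$ with the implementers built for $\hat{\boldsymbol{\omega}}$. Conjugating the whole construction therefore leaves $(\boldsymbol{\sigma},c)$ unchanged up to the ambiguities already quotiented out, giving $I(\hat{\boldsymbol{\omega}})=I(\boldsymbol{\omega})$. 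The main obstacle I anticipate is handling the correction $w_g$: because it does not split as a tensor product across the cut, extracting a well-defined right half of $w_g$ and of $w_g\,(\beta_{gL}\otimes\beta_{gR})(w_h)\,w_{gh}^{*}$, and then showing that the resulting $c$ is genuinely state-independent and satisfies $(d\boldsymbol{\sigma})_x=c$ for every $x$ at once, is the step that binds the state-dependent $2$-cochain to the fixed $3$-cocycle class of the symmetry and where the real work lies.
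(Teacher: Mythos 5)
Your proposal is correct and is essentially the paper's own proof: the right-half unitaries $v(g,h)$ implementing $\beta_{gR}\beta_{hR}\beta_{ghR}^{-1}$ (obtained in the paper from Lemma B.1 of \cite{2dSPT_O}), the GNS implementers $u_{x,g}$ whose defect against $\pi_x(v(g,h))$ gives $\sigma_x(g,h)$ by irreducibility, the pentagon identity $(d\boldsymbol{\sigma})_x=c$ with $c$ state-independent, the quotient by the $d\boldsymbol{\zeta}$- and $z$-moves, and the invariance argument obtained by conjugating the whole construction with $\alpha_R$ are precisely the content of Subsections \ref{betas}, \ref{sxgh} and the proof of Theorem \ref{mainthm}. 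The one step you compress is choice-independence beyond phases: the invariance argument implicitly computes $I(\boldsymbol{\hat\omega})$ with the conjugated decomposition $\alpha_R^{-1}\beta_{gR}\alpha_R$ (equivalently, with the fixed decomposition one picks up intertwiners $a_g\in\caA_R$ satisfying $\Ad(a_g)\beta_{gR}\alpha_R=\alpha_R\beta_{gR}$), so one must also check that replacing $\beta_{gR}$ by $\Ad(b_g)\beta_{gR}$ changes $(\boldsymbol{\sigma},c)$ only by $d\boldsymbol{\zeta}$- and $z$-moves — the paper verifies exactly this in Subsections \ref{betas} and \ref{sxgh}, and with it your argument closes.
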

For the rest of this section we prove this theorem.

\subsection{The $3$rd group cohomology index from local decomposable finite group actions}
\label{betas}
In this section we derive a  $3$rd group cohomology index for our $\beta$.
Let $\boldsymbol{\beta_R}:=\{\beta_{gR}\}_{g\in G}\in \boldsymbol{B}_R$.
We have
\begin{align}
\begin{split}
&\Ad(w_{gh})\lmk\beta_{ghL}\otimes\beta_{ghR}\rmk=
\beta_{gh}=\beta_g\beta_h
=\Ad(w_g)\lmk\beta_{gL}\otimes\beta_{gR}\rmk
\Ad(w_h)\lmk\beta_{hL}\otimes\beta_{hR}\rmk\\
&=\Ad\lmk w_g \lmk\beta_{gL}\otimes\beta_{gR}\rmk(w_h)\rmk
\lmk\beta_{gL}\beta_{hL}\otimes\beta_{gR}\beta_{hR}\rmk.
\end{split}
\end{align}
From this, we see
\begin{align}
\begin{split}
&\lmk\beta_{gL}\beta_{hL}\beta_{gh,L}^{-1}
\otimes\beta_{gR}\beta_{hR}\beta_{ghR}^{-1}\rmk\\
&=\Ad\lmk
\lmk w_g \lmk\beta_{gL}\otimes\beta_{gR}\rmk(w_h)\rmk^{-1}w_{gh}
\rmk.
\end{split}
\end{align}
By Lemma B.1 of \cite{2dSPT_O}, we see that there exists a unitary $v(g,h)\in \caU(\caA_R)$ such that
\begin{align}\label{vdef}
\begin{split}
\Ad(v(g,h))= \beta_{gR}\beta_{hR}\beta_{ghR}^{-1}.
\end{split}
\end{align}
We denote by $\boldsymbol{V}(\boldsymbol{\beta_R})$ 
the set of all such $\boldsymbol{v}:=\{v(g,h))\}_{g,h\in G}$.
For each $\boldsymbol{v}:=\{v(g,h))\}_{g,h\in G}\in \boldsymbol{V}(\boldsymbol{\beta_R})$ we have
\begin{align}
\begin{split}
\beta_{gR}\beta_{hR}\beta_{kR} =\Ad(v(g,h))\beta_{ghR}\beta_{kR}= \Ad(v(g,h)v(gh,k))\beta_{ghkR}
\end{split}
\end{align}
and
\begin{align}
\begin{split}
\beta_{gR}\beta_{hR}\beta_{kR}
=\beta_{gR}\Ad(v(h,k))\beta_{hkR}
=\Ad\lmk \beta_{gR}\lmk v(h,k) \rmk\rmk\beta_{gR}\beta_{hkR}\\
=\Ad\lmk \beta_{gR}\lmk v(h,k) \rmk v(g,hk)\rmk\beta_{ghkR}.
\end{split}
\end{align}
Comparing them, with $\caA_R\cap\caA_R'=\bbC\unit$,
we conclude that there exists $c(g,h,k)\in U(1)$ such that
\begin{align}\label{sc}
\begin{split}
\beta_{gR}\lmk v(h,k) \rmk v(g,hk) = c(g,h,k)v(g,h)v(gh,k)
\end{split}
\end{align}
Let us show now that $c$ satisfies the 3-cocyle condition
\begin{equation}\label{eq:3-cocycle}
    c(g,h,k) c(g,hk,l)c(h,k,l) = c(g,h,kl) c(gh,k,l).
\end{equation}
For that, let us consider the operator $v(g,h)v(gh,k)v(ghk,l)$. Using \eqref{sc} repeatedly, we can write
\begin{equation*}
    \begin{aligned}
        &v(g,h)v(gh,k)v(ghk,l) = \frac{1}{c(g,h,k)} \beta_{gR}\lmk v(h,k) \rmk v(g,hk) v(ghk,l) =\\
        &\frac{1}{c(g,h,k)c(g,hk,l)} \beta_{gR} \lmk v(h,k) v(hk,l) \rmk v(g,hkl) = \\
        & \frac{1}{c(g,h,k)c(g,hk,l)c(h,k,l)} \beta_{gR} \lmk \beta_{hR} \lmk v(k,l)\rmk \rmk \beta_{gR}\lmk v(h,kl) \rmk v(g,hkl) = \\
        & \frac{c(g,h,kl)}{c(g,h,k)c(g,hk,l)c(h,k,l)} \beta_{gR} \lmk \beta_{hR} \lmk v(k,l)\rmk \rmk v(g,h)v(gh,kl).
    \end{aligned}
\end{equation*}
Finally, considering that \eqref{vdef} implies
\begin{equation*}
    \beta_{gR} \lmk \beta_{hR} \lmk v(k,l)\rmk \rmk v(g,h) = v(g,h)\beta_{ghR} \lmk v(k,l) \rmk,
\end{equation*}
we can further write
\begin{equation*}
    \begin{aligned}
        v(g,h)v(gh,k)v(ghk,l)  &=  \frac{c(g,h,kl)}{c(g,h,k)c(g,hk,l)c(h,k,l)} v(g,h) \beta_{ghR} \lmk v(k,l)\rmk v(gh,kl) \\
        & = \frac{c(g,h,kl)c(gh,k,l)}{c(g,h,k)c(g,hk,l)c(h,k,l)} v(g,h) v(gh,k) v(ghk,l),
    \end{aligned}
\end{equation*}
and thus, as $v(g,h)v(gh,k)v(ghk,l)\neq 0$, \eqref{eq:3-cocycle} holds.
We denote this $c\in Z^3(G,U(1))$ by $c\lmk\boldsymbol{\beta_R}, \boldsymbol{v}\rmk$.

Let us show  that the equivalence class of the 3-cocycle does not depend on the choice of 
$\boldsymbol{\beta}_{R}\in \boldsymbol{B}_R$, $\boldsymbol{v}\in \boldsymbol{V}(\boldsymbol{\beta}_R)$.
Let $\boldsymbol{\hat \beta}_{R}\in \boldsymbol{B}_R$, $\boldsymbol{\hat v}:=\{\hat v(g,h)\}\in \boldsymbol{V}(\boldsymbol{\hat \beta}_R)$
be a different choice.
Comparing the two expressions ((\ref{bdecom}) and corresponding decomposition for
$\hat\beta_{gR}$) for $\beta_g$, we obtain
\begin{equation*}
    \Ad(\hat w_g^{-1} w_g) = \hat\beta_{gL}\beta_{gL}^{-1}\otimes\hat\beta_{gR}\beta_{gR}^{-1},
\end{equation*}
with some unitaries $w_g,\hat w_g\in \caA$
and thus, by  Lemma B.1 of \cite{2dSPT_O}, there exists a unitary $a_g\in\mathcal{U}(\mathcal{A}_R)$ such that
\begin{equation*}
    \Ad\lmk a_g \rmk = \hat \beta_{gR} \beta_{gR}^{-1}.
\end{equation*}
We can thus express $\Ad \lmk \hat v(g,h) \rmk $ as
\begin{equation*}
\begin{aligned}
    \Ad \lmk \hat v(g,h) \rmk &= \hat\beta_{gR} \Ad(a_h) \beta_{hR} \hat \beta_{ghR}^{-1} = \Ad \lmk \hat\beta_{gR} (a_h) \rmk \hat\beta_{gR}  \beta_{hR} \hat \beta_{ghR}^{-1} = \\
    &= \Ad \left( \hat \beta_{gR} (a_h) a_g\right) \beta_{gR} \beta_{hR} \beta_{ghR}^{-1} \Ad \lmk a_{gh}^{-1} \rmk \\
    & = \Ad \lmk  \hat\beta_{gR} (a_h) a_g v(g,h) a_{gh}^{-1}\rmk.
\end{aligned}
\end{equation*}
This means that there is $z(g,h)\in U(1)$ such that
\begin{equation*}
    \hat v(g,h)  = z(g,h)  \hat \beta_{gR} (a_h) a_g v(g,h) a_{gh}^{-1} = z(g,h) a_g \beta_{gR} (a_h)  v(g,h) a_{gh}^{-1} .
\end{equation*}
We also obtain
\begin{equation*}
\begin{aligned}
    \hat \beta_{gR}\lmk \hat v(h,k) \rmk \hat v(g,hk) &= z(h,k)z(g,hk) \hat \beta_{gR} \left( \hat \beta_{hR}  (a_{k}) a_h v(h,k) \right)  a_g v(g,hk) a_{ghk}^{-1} \\
    & = z(h,k)z(g,hk) \hat \beta_{gR}(a_h) \hat\beta_{gR}  \left( \beta_{hR}  (a_{k}) v(h,k) \right)  a_g v(g,hk) a_{ghk}^{-1} \\
    & = z(h,k)z(g,hk) \hat \beta_{gR}(a_h) a_g \beta_{gR}  \left( \beta_{hR}  (a_{k}) v(h,k) \right)  v(g,hk) a_{ghk}^{-1} \\
    & = z(h,k)z(g,hk) \hat \beta_{gR}(a_h) a_g \beta_{gR}  \left( \beta_{hR} (a_{k}) \right) \beta_{gR}\left( v(h,k) \right)  v(g,hk) a_{ghk}^{-1} \\
    & = z(h,k)z(g,hk) c(g,h,k) \hat \beta_{gR}(a_h) a_g \beta_{gR}  \left( \beta_{hR} (a_{k}) \right) v(g,h) v(gh,k) a_{ghk}^{-1}\\
    & = z(h,k)z(g,hk) c(g,h,k) \hat \beta_{gR}(a_h) a_g v(g,h)  \beta_{ghR} (a_{k})  v(gh,k) a_{ghk}^{-1}\\
    & = \frac{z(h,k)z(g,hk)}{z(gh,k)} c(g,h,k) \hat \beta_{gR}(a_h) a_g v(g,h) a_{gh}^{-1}  \hat v(gh,k) \\
    & = \frac{z(h,k)z(g,hk)}{z(g,h) z(gh,k)} c(g,h,k) \hat v(g,h)  \hat v(gh,k).
\end{aligned}
\end{equation*}
Therefore
\begin{equation*}
    \hat c(g,h,k) = \frac{z(h,k)z(g,hk)}{z(g,h) z(gh,k)} c(g,h,k).
\end{equation*}
Hence $\left[c\lmk\boldsymbol{\beta_R}, \boldsymbol{v}\rmk\right]_{H^3(G,U(1))}
=\left[c\lmk\boldsymbol{\hat \beta_R}, \boldsymbol{\hat v}\rmk\right]_{H^3(G,U(1))}$,
i.e.\ the equivalence class $[c]_{H^3(G,U(1))}$ of the 3-cocycle $c$ does not depend on the choice of $w_g$, $\beta_{gL}$, $\beta_{gR}$.

\subsection{Index of the symmetry acting on the split states}\label{sxgh}
In this section we derive an index $I(\boldsymbol{\omega})$
for each $\boldsymbol{\omega}\in \caS$.
By the split property, for each ${\boldsymbol{\omega}}:=\{\omega_x\}_{x\in X}\in \caS$,
each $\omega_x$ satisfies 
\begin{align}\label{odec}
    \omega_x\simeq_{u.e.}\omega_{xL}\otimes \omega_{xR}.
\end{align}
Here, $\omega_{xL}$ and $\omega_{xR}$ are pure states on
$\caA_L$, $\caA_R$ respectively and 
$\simeq_{u.e.}$ denotes (unitary) equivalence.
For each ${\boldsymbol{\omega}}\in \caS$, we denote by 
$\mathrm O_R(\boldsymbol{\omega})$ the set of all such ${\boldsymbol{ \omega}}_{R}:=\{\omega_{xR}\}_{x\in X}$.

Let $\boldsymbol{\omega}_R=\{\omega_{xR}\}_{x\in X}\in \mathrm O_R(\boldsymbol{\omega})$.
Consider 
$\lmk \boldsymbol{\caH}, \boldsymbol{\pi}\rmk=
(\caH_x,\pi_x)_{x\in X}$,
where $(\caH_x,\pi_x)$ is a GNS representation of 
$\omega_{xR}$ for each $x\in X$.
We denote by $\boldsymbol{GNS}(\boldsymbol{\omega_R})$
the set of such tuples of GNS representations. 
By definition, we obtain
\begin{align}
\begin{split}
\omega_{xL}\beta_{gL}\otimes\omega_{xR}\beta_{gR}
\simeq_{u.e.}
\omega_x\beta_g=\omega_{x\cdot g}\simeq_{u.e.}
\omega_{x\cdot g,L}\otimes\omega_{x\cdot g,R}
\end{split}
\end{align}
Restricting this to $\caA_R$,
we obtain
\begin{align}
\begin{split}
\omega_{x,R}\beta_{gR}
\simeq_{u.e.}
\omega_{x\cdot g,R}.
\end{split}
\end{align}
Hence for each $\lmk \boldsymbol{\caH}, \boldsymbol{\pi}\rmk=
(\caH_x,\pi_x)_{x\in X}\in \boldsymbol{GNS}(\boldsymbol{\omega_R})$, there is a unitary
\begin{align}
u_{x, g} : \caH_{xg}\to \caH_{x}
\end{align}
such that
\begin{align}\label{udef}
\Ad\lmk u_{x,g}\rmk\pi_{xg}=\pi_x\beta_{gR}.
\end{align}
We denote by ${\boldsymbol U}(\boldsymbol{\omega_R}, \boldsymbol{\beta_R},  \boldsymbol{\caH}, \boldsymbol{\pi})$
the set of such 
$\boldsymbol{u}:=\{ u_{x,g}\}_{x\in X, g\in G}$.
For each  $\boldsymbol{v}\in \boldsymbol{V}(\boldsymbol{\beta}_R)$, we have
\begin{align}
\begin{split}
\Ad\lmk u_{x,g}u_{xg, h} u_{x,gh}^{-1}\rmk\pi_x
=\pi_x\beta_{gR}\beta_{hR}\beta_{ghR}^{-1}
=\Ad\lmk \pi_x(v(g,h))\rmk\pi_x.
\end{split}
\end{align}
Because $\omega_{x, R}$ is irreducible, this means there exists $\sigma_x(g,h)\in U(1)$
such that
\begin{align}\label{eq:mps_index}
\begin{split}
u_{x,g}u_{xg, h} u_{x,gh}^{-1}
=\sigma_x(g,h)\pi_x(v(g,h)).
\end{split}
\end{align}
We denote by 
$\boldsymbol{\sigma}\lmk \boldsymbol{u}, \boldsymbol{v},
\boldsymbol{\omega_R}, \boldsymbol{\beta_R},  \boldsymbol{\caH}, \boldsymbol{\pi}\rmk$
this
$\boldsymbol{\sigma}:=\{\sigma_x(g,h)\}_{g,h\in G, x\in X}$.
We then have
\begin{align}
\begin{split}
&u_{x,g}u_{xg, h}u_{xgh, k} u_{x,ghk}^{-1}
=
u_{x,g}u_{xg, h} u_{x,gh}^{-1}
u_{x,gh}u_{xgh, k} u_{x,ghk}^{-1}\\
&=\sigma_x(g,h)\pi_x(v(g,h))
\sigma_x(gh,k)\pi_x(v(gh, k)),
\end{split}
\end{align}
and
\begin{align}
\begin{split}
&u_{x,g}u_{xg, h}u_{xgh, k} u_{x,ghk}^{-1}
=
u_{x,g}
u_{xg,h}u_{xgh, k} u_{xg,hk}^{-1}u_{xg,hk}
u_{x,ghk}^{-1}\\
&=u_{x,g}
\sigma_{xg}(h,k)\pi_{xg}(v(h,k))
u_{x,g}^{-1}
u_{x,g}
u_{xg,hk}
u_{x,ghk}^{-1}\\
&=
\sigma_{xg}(h,k)\pi_{x}\beta_{gR}(v(h,k))
\sigma_x(g,hk)\pi_x(v(g,hk))
\end{split}
\end{align}
Comparing with (\ref{sc}), we obtain
\begin{align}\label{eq:sigmaconstraint}
\begin{split}
\sigma_x(g,h)
\sigma_x(gh,k)
=c(g,h,k)\sigma_{xg}(h,k)
\sigma_x(g,hk),
\end{split}
\end{align}
for $c\lmk\boldsymbol{\beta_R}, \boldsymbol{v}\rmk=\{c(g,h,k)\}$.
We note that this equation corresponds to the pentagon equation of a module category over the fusion category ${\rm Vec}_G^c$ \cite{EO}.
This equation has several consequences. For trivial $3$-cocycle, $c=1$, it corresponds to the classification of symmetry broken phases for on-site global symmetries. In that case, if,  moreover, $X$ consists of a single point, the equation corresponds to the $2$-cocycle equation for the invariants of SPT phases \cite{OCDM}. It also shows that if $X$ consists of single point, $c$ has trivial cohomology. This means that for an automorphism with non-trivial $3$-cocycle, $c \neq 1$, $X$ cannot be a single point, i.e. there is no unique ground state phase. This result has been derived before in the MPS setting \cite{2dSPT}. Let us define $H\subseteq G$ as $H=\{ h\in H, x\cdot h= x, \forall x\in X \}$, the so-called unbroken symmetry group. Then, Eq.\eqref{eq:sigmaconstraint} restricted to the elements of $H$ reads:
\begin{align}\label{eq:sigmaconstraintH}
\begin{split}
\sigma_x(h_1,h_2)
\sigma_x(h_1h_2,h_3)
=c(h_1,h_2,h_3)\sigma_{x}(h_2,h_3)
\sigma_x(h_1,h_2h_3),
\end{split}
\end{align}
which implies that $c$ restricted to $H$ is a trivial 3-cocycle and then, as argued in \cite{mposym}, $\sigma_{x}$ restricted to $H$ is a 2-cocycle of $H$.

Finally Eq.~\eqref{eq:sigmaconstraint} also leads to a well-defined index of $\omega$.
\begin{thm}
Let $\boldsymbol{\omega}\in \caS$.
Then 
\begin{align}
    I(\omega):=\left[ 
    \boldsymbol{\sigma}\lmk \boldsymbol{u}, \boldsymbol{v},
\boldsymbol{\omega_R}, \boldsymbol{\beta_R},  \boldsymbol{\caH}, \boldsymbol{\pi}\rmk, c\lmk\boldsymbol{\beta_R}, \boldsymbol{v}\rmk\right]_{\caT}
    \in \caT
\end{align} 
is independent of the choice of $\boldsymbol{\beta}_R\in \boldsymbol{B}_R$,
$\boldsymbol{v}\in \boldsymbol{V}(\boldsymbol{\beta_R})$,
$\boldsymbol{\omega}_R\in \boldsymbol{O}_R(\boldsymbol{\omega})$,
$(\boldsymbol{\caH}, \boldsymbol{\pi})\in 
\boldsymbol{GNS}(\boldsymbol{\omega_R})$,
$\boldsymbol{u}\in{\boldsymbol U}(\boldsymbol{\omega_R}, \boldsymbol{\beta_R},  \boldsymbol{\caH}, \boldsymbol{\pi})$.
\end{thm}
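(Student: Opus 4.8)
The plan is to show that every admissible change of the data
$\lmk\boldsymbol{\beta}_R,\boldsymbol{v},\boldsymbol{\omega}_R,(\boldsymbol{\caH},\boldsymbol{\pi}),\boldsymbol{u}\rmk$
transforms the representative $\lmk\boldsymbol{\sigma},c\rmk\in\caT_0$ into an equivalent one, i.e.\ $\boldsymbol{\hat\sigma}=d\boldsymbol{\zeta}\cdot\boldsymbol{\sigma}\cdot\bar z$ and $\hat c=dz\cdot c$ for suitable $\boldsymbol{\zeta}\in C^1\lmk G,\bigoplus_x U(1)\rmk$ and $z\in C^2\lmk G,U(1)\rmk$. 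Since these moves compose (the relation defining $\caT$ in Section~\ref{setting} is an equivalence relation), it suffices to treat each choice in turn.

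First I would fix everything except $\boldsymbol{u}$. As each $\omega_{xR}$ is pure, $\pi_x$ is irreducible, so \eqref{udef} determines $u_{x,g}$ up to a phase $\zeta_x(g)\in U(1)$; substituting $\hat u_{x,g}=\zeta_x(g)u_{x,g}$ into \eqref{eq:mps_index} multiplies $\sigma_x(g,h)$ by $\lmk d\bar{\boldsymbol{\zeta}}\rmk_x(g,h)$ and leaves $c$ untouched (take $z=1$). Next, replacing $(\boldsymbol{\caH},\boldsymbol{\pi})$ by unitarily equivalent GNS data $\pi_x'=\Ad(W_x)\pi_x$ allows the choice $u_{x,g}'=W_x u_{x,g}W_{xg}^{-1}$, under which both $\sigma_x(g,h)$ and $c$ are literally unchanged. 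Finally, any two $\boldsymbol{\omega}_R,\boldsymbol{\hat\omega}_R\in\mathrm O_R(\boldsymbol{\omega})$ consist of pure states lying in the same unitary equivalence class by \eqref{odec}, so a GNS representation of $\hat\omega_{xR}$ is simultaneously one of $\omega_{xR}$, and this case reduces to the previous one. Thus these three choices alter $\boldsymbol{\sigma}$ only by a coboundary and keep $c$ fixed.

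I would then vary $\boldsymbol{v}$ with $\boldsymbol{\beta}_R$ fixed. Because $\caA_R\cap\caA_R'=\bbC\unit$, \eqref{vdef} forces $\hat v(g,h)=\lambda(g,h)v(g,h)$ with $\lambda\in C^2\lmk G,U(1)\rmk$; inserting this into \eqref{sc} gives $\hat c=d\lambda\cdot c$, while $\boldsymbol{u}$ (which does not depend on $\boldsymbol{v}$) stays fixed, so \eqref{eq:mps_index} yields $\hat\sigma_x(g,h)=\bar\lambda(g,h)\sigma_x(g,h)$; this is the allowed move with $\boldsymbol{\zeta}=1$, $z=\lambda$. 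The last and \emph{main} case is a change of $\boldsymbol{\beta}_R$, where the intertwiners themselves must be rebuilt. Reusing Section~\ref{betas}, with $\Ad(a_g)=\hat\beta_{gR}\beta_{gR}^{-1}$, $a_g\in\caU(\caA_R)$, I may take $\hat v(g,h)=z(g,h)\,a_g\beta_{gR}(a_h)v(g,h)a_{gh}^{-1}$, which already gave $\hat c=dz\cdot c$, and choose $\hat u_{x,g}=\pi_x(a_g)u_{x,g}$ as the intertwiner for $\hat\beta_{gR}$ in \eqref{udef}. Using \eqref{udef} to move $\pi_{xg}(a_h)$ past $u_{x,g}$ into $\pi_x\lmk\beta_{gR}(a_h)\rmk$ and then \eqref{eq:mps_index}, one finds
\[
\hat u_{x,g}\hat u_{xg,h}\hat u_{x,gh}^{-1}
=\sigma_x(g,h)\,\pi_x\!\lmk a_g\beta_{gR}(a_h)v(g,h)a_{gh}^{-1}\rmk
=\bar z(g,h)\,\sigma_x(g,h)\,\pi_x\!\lmk\hat v(g,h)\rmk,
\]
so $\hat\sigma_x(g,h)=\bar z(g,h)\sigma_x(g,h)$, again the allowed move with $\boldsymbol{\zeta}=1$ and the same $z$.

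Any other $\boldsymbol{\hat v}\in\boldsymbol{V}(\boldsymbol{\hat\beta}_R)$ and any other intertwiners differ from the ones used above by the moves already treated, so combining all cases and using transitivity shows that $\left[\lmk\boldsymbol{\sigma},c\rmk\right]_{\caT}$ is independent of every choice. I expect the genuine obstacle to be the change of $\boldsymbol{\beta}_R$: there both the cocycle $c$ and the unitaries $u_{x,g}$ transform simultaneously, and one must feed in the precise relation between $\hat v$ and $v$ from Section~\ref{betas} so that the $a_g$-factors reassemble exactly into $\pi_x(\hat v)$; the only conceptual subtlety elsewhere is arguing that distinct right states in $\mathrm O_R(\boldsymbol{\omega})$ yield the same unitary class of representations, which is what lets the $\boldsymbol{\omega}_R$-dependence collapse onto the GNS-dependence.
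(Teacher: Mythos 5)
Your proposal is correct and follows essentially the same route as the paper: both rest on the Section~\ref{betas} parametrization $\hat\beta_{gR}=\Ad(a_g)\beta_{gR}$, $\hat v(g,h)=z(g,h)\,a_g\beta_{gR}(a_h)v(g,h)a_{gh}^{-1}$ with $\hat c=dz\cdot c$, on the unitary $W_x$ coming from unitary equivalence of the GNS representations, on the phase freedom $\zeta_x(g)$ from irreducibility, and on the same central computation pushing $\pi_{xg}(a_h)$ through $u_{x,g}$ via \eqref{udef}. The only difference is organizational: you verify each elementary change separately and compose by transitivity of the equivalence on $\caT_0$, whereas the paper parametrizes an arbitrary simultaneous change of all the data and carries out one combined calculation giving $\boldsymbol{\hat\sigma}=d\boldsymbol{\zeta}\,\bar z\,\boldsymbol{\sigma}$ and $\hat c=dz\cdot c$ at once.
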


\begin{proof}
    Let
    $\boldsymbol{\hat \beta}_R\in \boldsymbol{B}_R$,
$\boldsymbol{\hat v}\in \boldsymbol{V}(\boldsymbol{\hat\beta_R})$,
$\boldsymbol{\hat \omega}_R\in \boldsymbol{O}_R(\boldsymbol{\omega})$,
$(\boldsymbol{\hat \caH}, \boldsymbol{\hat \pi})\in 
\boldsymbol{GNS}(\boldsymbol{\hat \omega_R})$,
$\boldsymbol{\hat u}\in{\boldsymbol U}(\boldsymbol{\hat\omega_R}, 
\boldsymbol{\hat \beta_R},  \boldsymbol{\hat \caH}, \boldsymbol{\hat \pi})$
be another choice,
and let 
$\boldsymbol{\hat c}:=\boldsymbol{c\lmk\boldsymbol{\hat \beta_R},\boldsymbol{\hat v}\rmk}$
and $\boldsymbol{\hat \sigma}:=
\boldsymbol{\sigma}\lmk \boldsymbol{\hat u}, \boldsymbol{\hat v},
\boldsymbol{\hat \omega_R}, \boldsymbol{\hat \beta_R},  \boldsymbol{\hat \caH}, \boldsymbol{\hat \pi}\rmk$.

By the subsection \ref{betas}, there are unitaries $b_g\in \caA_R$
and $z\in C^2(G, U(1))$
such that $\hat\beta_{g,R}=\Ad b_g \beta_{g,R}$
 and 
 \begin{align}
    \hat v(g,h)=
    z(g,h)b_g \beta_{gR}(b_h) v(g,h) b_{gh}^*\in \boldsymbol{V}(\boldsymbol{\hat\beta_{R}}),
 \end{align}
 and $\hat c=c\cdot dz$.
Because $(\hat\caH_x,\hat \pi_{x}), (\caH_x, \pi_{x})$ are GNS-representations of $\omega_{x,R}$, $\hat\omega_{x,R}$ for each $x\in X$,
and $\hat\omega_{x,R}\simeq_{u.e.}\omega_{x,R}$
there exists a unitary $W_x: \caH_x\to \hat\caH_x$ 
such that $\hat\pi_x=\Ad(W_x)\circ\pi_x$.
From 
$\Ad\hat u_{x, g}\hat\pi_{xg}=\hat\pi_x\hat\beta_{g,R}$,
 we have 
\begin{align}
    \Ad\hat u_{x, g}\hat\pi_{xg}=\hat\pi_x\hat\beta_{g,R}
    =\Ad(W_x\pi_x(b_g) u_{x,g} W_{xg}^*)\hat \pi_{xg}.
\end{align}
Because $\hat \pi_{xg}$ is irreducible, there exists a
$\zeta_x(g)\in U(1)$ such that
\begin{align}
    \hat u_{x, g}
    =\zeta_x(g)W_x\pi_x(b_g) u_{x,g} W_{xg}^*.
\end{align}
Substituting this to the definition of $\boldsymbol{\hat\sigma}$, we have
\begin{align}\begin{split}
&\hat \sigma_x(g,h)\hat\pi_x(\hat v(g,h))\\
    &=\hat u_{x,g} \hat u_{xg, h}\hat u_{x,gh}^{-1}\\
&    =\zeta_x(g)\zeta_{xg}(h)\overline{\zeta_x(gh)}
    W_x\pi_x(b_g) \Ad u_{x,g} \lmk \pi_{xg}(b_h)\rmk u_{x,g}
    u_{xg,h} u_{x,gh}^*\pi_x(b_{gh}^*)W_x^*\\
    &=\zeta_x(g)\zeta_{xg}(h)\overline{\zeta_x(gh)}\sigma_x(g,h)
    W_x\pi_x\lmk b_g \beta_{gR}(b_h) v(g,h) b_{gh}^*\rmk W_x^*\\
&=\zeta_x(g)\zeta_{xg}(h)\overline{\zeta_x(gh)}\sigma_x(g,h)
\overline{z(g,h)}\hat\pi_x(\hat v(g,h)),
\end{split}
\end{align}
and $\boldsymbol{\hat \sigma}=d\boldsymbol{\zeta}\bar z \boldsymbol{\sigma}$.
Hence we obtain 
$(\sigma,c)\sim (\hat\sigma, \hat c)$.

\end{proof}

\subsection{Proof of Theorem \ref{mainthm}}

We give the proof of Theorem \ref{mainthm}.
It suffices to show that $I(\boldsymbol{\omega})$ obtained above
is an invariant.
\begin{proof}
Let $\alpha\in \Aut_{G,0}\caA$ with decomposition (\ref{adecom}).
Set $\boldsymbol{\hat\omega}:=\{ \hat\omega_x:=\omega_x\alpha\}$.
It suffices to show that $I(\boldsymbol{\hat\omega})=I(\boldsymbol{\omega})$.
Let  $\boldsymbol{\beta}_R\in \boldsymbol{B}_R$,
$\boldsymbol{v}\in \boldsymbol{V}(\boldsymbol{\beta_R})$,
$\boldsymbol{\omega}_R\in \boldsymbol{O}_R(\boldsymbol{\omega})$,
$(\boldsymbol{\caH}, \boldsymbol{\pi})\in 
\boldsymbol{GNS}(\boldsymbol{\omega_R})$,
$\boldsymbol{u}\in{\boldsymbol U}(\boldsymbol{\omega_R}, \boldsymbol{\beta_R},  \boldsymbol{\caH}, \boldsymbol{\pi})$.
Let $\boldsymbol{\sigma}:=\boldsymbol{\sigma}\lmk \boldsymbol{u}, \boldsymbol{v},
\boldsymbol{\omega_R}, \boldsymbol{\beta_R},  \boldsymbol{\caH}, \boldsymbol{\pi}\rmk$.

With a decomposition (\ref{bdecom}) for this $\boldsymbol{\beta_R}$, we have
\begin{align}
    \Ad(\beta_g(V)w_g)\circ(\beta_{gL}\alpha_L\otimes \beta_{gR}\alpha_R)
    =\beta_g\alpha=\alpha\beta_g=
    \Ad(\alpha(w_g) V) \lmk \alpha_L\beta_{gL}\otimes 
    \alpha_R\beta_{gR}\rmk.
\end{align}
From this, there exists a unitary $a_g\in \caA_R$
such that
\begin{align}
    \Ad(a_g)\beta_{gR}\alpha_R=\alpha_R\beta_{gR}.
\end{align}
Therefore, we have
\begin{align}
    \boldsymbol{\hat\beta_R}:=\{\hat \beta_{gR}:=\alpha_R^{-1}\beta_{gR}\alpha_R\}\in \boldsymbol{B}_R.
\end{align}
By 
\begin{align}
    \hat\beta_{gR}\hat\beta_{hR}\hat\beta_{ghR}^{-1}
    =\alpha_R^{-1}\beta_{gR}\beta_{hR}\beta_{ghR}^{-1}\alpha_R
    =\Ad \lmk \alpha_R^{-1}\lmk v(g,h)\rmk\rmk,
\end{align}
we have
\begin{align}
\begin{split}
    &\boldsymbol{\hat v}:=\{\hat v(g,h):=  \alpha_R^{-1}\lmk v(g,h) \rmk\}\in \boldsymbol{V}(\hat\beta_R),\\
    &c\lmk \boldsymbol{\hat \beta_R}, \boldsymbol{\hat v}\rmk=c\lmk \boldsymbol{ \beta_R}, \boldsymbol{ v}\rmk.
 \end{split}   
\end{align}
With the decomposition (\ref{odec}) for our $\boldsymbol{\omega_R}$,
we also have
\begin{align}\label{fox}
\hat\omega_x:=\omega_x\alpha\simeq\omega_{xL}\alpha_L\otimes \omega_{xR}\alpha_R.
\end{align}
Set
\begin{align}
\begin{split}
&   \boldsymbol{\hat \omega}_{R}:=\{\hat\omega_{x,R}:=\omega_{x,R}\alpha_R\},\quad
    \boldsymbol{\hat\pi}:=\{\hat\pi_x:=\pi_x\alpha_R\}.\
\end{split}
\end{align}
Clearly $\boldsymbol{\hat \omega_R}\in \boldsymbol{O}_R(\boldsymbol{\hat\omega})$,
$(\boldsymbol{\caH}, \boldsymbol{\hat\pi})\in \boldsymbol{GNS}(\boldsymbol{\hat\omega_R})$.
From 
\begin{align}
    \Ad  u_{x,g}\hat\pi_{xg}
    =\Ad u_{x,g} \pi_{xg}\alpha_R
    =\pi_{x}\beta_{gR}\alpha_R
    =\pi_x\alpha_R\hat \beta_{gR}
    =\hat\pi_{x}\hat \beta_{gR},
\end{align}
we have
$\boldsymbol{u}\in{\boldsymbol U}(\boldsymbol{\hat \omega_R}, \boldsymbol{\hat \beta_R}, 
\boldsymbol{\caH}, \boldsymbol{\hat \pi})$.
Therefore, for $\boldsymbol{\hat\sigma}:=\boldsymbol{\sigma}\lmk \boldsymbol{u}, \boldsymbol{\hat v},
\boldsymbol{\hat \omega_R}, \boldsymbol{\hat \beta_R},  \boldsymbol{\caH}, \boldsymbol{\hat \pi}\rmk$
we have
\begin{align}
\begin{split}
 \sigma_x(g,h)\pi_x( v(g,h))
=
u_{x,g} u_{xg, h}  u_{x,gh}^{-1}
=\hat \sigma_x(g,h)\hat \pi_x(\hat v(g,h))
=\hat \sigma_x(g,h)\pi_x( v(g,h)).
\end{split}
\end{align}
Therefore, we get $\boldsymbol{\hat\sigma}=\boldsymbol{\sigma}$,
and obtain
\begin{align}
\begin{split}
    &I(\boldsymbol{\hat\omega})
    =\left[ 
    \boldsymbol{\sigma}\lmk \boldsymbol{u}, \boldsymbol{\hat v},
\boldsymbol{\hat \omega_R}, \boldsymbol{\hat \beta_R},  \boldsymbol{\caH}, \boldsymbol{\hat \pi}\rmk, c\lmk\boldsymbol{\hat \beta_R}, \boldsymbol{\hat v}\rmk\right]_{\caT}\\
&=\left[ 
    \boldsymbol{\sigma}\lmk \boldsymbol{u}, \boldsymbol{v},
\boldsymbol{\omega_R}, \boldsymbol{\beta_R},  \boldsymbol{\caH}, \boldsymbol{\pi}\rmk, c\lmk\boldsymbol{\beta_R}, \boldsymbol{v}\rmk\right]_{\caT}=I(\boldsymbol{\omega}).
\end{split}
\end{align}
This completes the proof.
\end{proof}

\section{Translation invariant case}\label{tani}
In this section, we restrict our attention to the
translation invariant case.
Let $\tau$ be the space translation of $\caA$,
one site to the right.
Throughout this section, we assume that 
the action $\beta$ is translation invariant:
\begin{align}\label{ike}
    \beta_g\tau=\tau\beta_g,\quad g\in G.
\end{align}
We furthermore suppose
 that $\beta_{gR}$,  $\beta_{gL}$ in (\ref{bdecom})
 can be chosen so that they act non-trivially only on $\caA_{[r,\infty)}$
and  $\caA_{(-\infty,-r-1]}$ respectively for some $2\le r$, and we do fix 
such $\boldsymbol{\beta_R}\in \boldsymbol{B}_R$ and
$\boldsymbol{v}\in \boldsymbol{V}(\boldsymbol{\beta_R})$.

We denote by $\caS_\tau$ the set of $\boldsymbol{\omega}=\{\omega_x\}\in \caS$
such that
 \begin{align}\label{ti}
 \omega_x\tau=\omega_x,\quad \text{for all}\quad x\in X.
 \end{align}
In this setting we derive MPS-like representation of $\boldsymbol{\omega}\in\caS_{\tau}$.
Throughout this section we fix $\boldsymbol{\omega}\in \caS_\tau$
and
$\boldsymbol{\omega}_R\in \boldsymbol{O}_R(\boldsymbol{\omega})$,
$(\boldsymbol{\caH}, \boldsymbol{\pi})\in 
\boldsymbol{GNS}(\boldsymbol{\omega_R})$,
$\boldsymbol{u}\in{\boldsymbol U}(\boldsymbol{\omega_R}, \boldsymbol{\beta_R},  \boldsymbol{\caH}, \boldsymbol{\pi})$.

\subsection{Representations of \texorpdfstring{$\omega_{x}\beta_{gR}, \omega_x\beta_{gR}\beta_{hR}$}{oxg}}
We show the MPS-like representation of states, following the standard argument \cite{bjp}\cite{arv}.
\begin{prop}\label{niji}
For any $x\in X$ and $g\in G$,
there are $\{S_\mu^{(x), g}\}_{x\in X,g\in G,\mu=1,\ldots,d}\subset\caB(\caH_x)$
and a density matrix $\rho_{x}$ on $\caH_x$
such that
\begin{align}
\begin{split}
&S_\mu^{(x), g}\lmk S_\nu^{(x), g}\rmk^*
=\pi_x\beta_{gR}\lmk e_{\mu\nu}^{(0)}\rmk,\\
&
 \pi_x\beta_{gR}\tau(A)
 =
 \sum_\mu  {S_\mu^{(x), g}}  \pi_x\beta_{gR}(A)\lmk  {S_\mu^{(x), g}}\rmk^*
,\quad A\in \caA_R,\\
&( {S_\mu^{(x), g}})^* {S_\nu^{(x), g}}=\delta_{\mu,\nu}\unit,
\end{split}
\end{align}
\begin{align}
  \begin{split}
&\omega_{x}\beta_{gR}\lmk e_{\mu_0\nu_0}^{(0)}\otimes e_{\mu_1\nu_1}^{(1)}\otimes \cdots\otimes e_{\mu_N\nu_N}^{(N)}\rmk\\
   &=\Tr\rho_{x}\lmk {S_{\mu_0}^{(x), g}} {S_{\mu_1}^{(x), g}}\cdots {S_{\mu_N}^{(x), g} }
{S_{\nu_N}^{(x), g}}^*\cdots {S_{\nu_1}^{(x), g}}^*{S_{\nu_0}^{(x), g}}^*\rmk.   
  \end{split}  
\end{align}
Here $\{e_{\mu,\nu}^{(k)}\}_{\mu,\nu=1,\ldots, d}$ is the system of standard matrix units
of $\Mat_d\simeq \caA_{\{0\}}$.

\end{prop}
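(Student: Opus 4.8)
The plan is to follow the standard reconstruction of a shift-invariant state by Cuntz-type isometries implementing the translation, as in \cite{bjp,arv}, carried out inside the irreducible representation $\pi:=\pi_x\beta_{gR}$ of $\caA_R$ on $\caH_x$ (here $\tau$ is viewed as the unital injective $*$-endomorphism of $\caA_R$ with $\tau(\caA_R)=\caA_{[1,\infty)}$). I would organize the argument into three parts: producing the density matrix $\rho_x$, constructing the isometries, and telescoping the local expectation values into matrix-product form. For the density matrix: by the split property $\omega_x\simeq_{u.e.}\omega_{xL}\otimes\omega_{xR}$, the restriction $\omega_x|_{\caA_R}$ is quasi-equivalent to $\omega_{xR}$ and hence normal in its GNS representation $\pi_x$. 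Since $\omega_{xR}$ is pure, $\pi_x$ is irreducible, so $\pi_x(\caA_R)''=\caB(\caH_x)$ and the normal state $\omega_x|_{\caA_R}$ is implemented by a density matrix $\rho_x$ on $\caH_x$, i.e.\ $\omega_x(A)=\Tr\rho_x\pi_x(A)$ for $A\in\caA_R$. Composing with the automorphism $\beta_{gR}$ gives $\omega_x\beta_{gR}(A)=\Tr\rho_x\pi(A)$, which supplies the outer trace in the final formula.

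For the isometries, I would exploit that, $\pi$ being irreducible, the copy $\pi(\caA_{\{0\}})\cong\Mat_d$ is a type-$I_d$ subfactor of $\caB(\caH_x)$, so $\{\pi(e^{(0)}_{\mu\nu})\}$ is a full system of matrix units inducing a factorization $\caH_x\cong\bbC^d\otimes\caK$ with $\caK=\pi(e^{(0)}_{11})\caH_x$ and $\pi(e^{(0)}_{\mu\nu})=\ket{\mu}\bra{\nu}\otimes\id_{\caK}$. The inclusion $\pi(\caA_{[1,\infty)})=\pi(\tau\caA_R)\subseteq\pi(\caA_{\{0\}})'=\id\otimes\caB(\caK)$ then defines a representation $\lambda$ of $\caA_R$ on $\caK$ by $\pi(\tau B)=\id\otimes\lambda(B)$. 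Granting a unitary $W:\caH_x\to\caK$ with $W\pi(B)W^*=\lambda(B)$, I would set $S_\mu^{(x),g}:=(\ket{\mu}\otimes\id_{\caK})\,W$, where $\ket{\mu}\otimes\id_{\caK}:\caK\to\bbC^d\otimes\caK$ is $\kappa\mapsto\ket{\mu}\otimes\kappa$, and check directly that $(S_\mu^{(x),g})^*S_\nu^{(x),g}=\delta_{\mu\nu}\unit$ (as $W^*W=\unit$), that $S_\mu^{(x),g}(S_\nu^{(x),g})^*=\ket{\mu}\bra{\nu}\otimes\id_{\caK}=\pi(e^{(0)}_{\mu\nu})$ (as $WW^*=\id_{\caK}$), whence $\sum_\mu S_\mu^{(x),g}(S_\mu^{(x),g})^*=\unit$, and more generally $S_\mu^{(x),g}\pi(B)(S_\nu^{(x),g})^*=\pi(e^{(0)}_{\mu\nu}\tau B)$; specializing the last identity to $\mu=\nu$ and summing yields the transfer relation $\pi\tau(A)=\sum_\mu S_\mu^{(x),g}\pi(A)(S_\mu^{(x),g})^*$.

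For the matrix-product formula, I would use $e^{(k)}_{\mu\nu}=\tau^k(e^{(0)}_{\mu\nu})$ so that the generalized identity $S_\mu\pi(B)S_\nu^*=\pi(e^{(0)}_{\mu\nu}\tau B)$, applied with $B=e^{(0)}_{\mu_1\nu_1}\cdots e^{(N-1)}_{\mu_N\nu_N}$, peels off one site. Inducting on $N$ (base case being the second relation) gives $\pi(e^{(0)}_{\mu_0\nu_0}\otimes\cdots\otimes e^{(N)}_{\mu_N\nu_N})=S_{\mu_0}\cdots S_{\mu_N}S_{\nu_N}^*\cdots S_{\nu_0}^*$, and composing with $\Tr\rho_x(\cdot)$ from the first part produces the stated expression.

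The hard part will be the equivalence $\lambda\cong\pi$, equivalently the quasi-equivalence $\pi\circ\tau\sim\pi$ (which also forces $\lambda$ irreducible, so that the relative commutant is exactly the site-$0$ matrix algebra). This is precisely where translation invariance must enter, and the delicate point is that $\tau$ and $\beta_{gR}$ do not commute on the nose: since $\beta_{gR}$ is supported on $\caA_{[r,\infty)}$, translation invariance of $\beta$ only gives $\tau\beta_{gR}\tau^{-1}=\Ad(y_g)\beta_{gR}$ for some unitary $y_g\in\caA_R$. One must therefore verify that the normal state $\Tr\rho_x\pi(\cdot)$ remains invariant under the endomorphism induced by $\tau$ (equivalently that $\tau$ is unitarily implementable in $\pi$), combining $\omega_x\tau=\omega_x$, the translation invariance of $\beta$, and the locality of $\beta_{gR}$. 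This is the step for which I would lean on the standard arguments of \cite{bjp,arv}.
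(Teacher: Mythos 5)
Your construction of the isometries, the telescoping argument, and the identification of the density matrix all match the paper's proof in substance: the paper likewise obtains $\rho_x$ from the quasi-equivalence of $\omega_x\vert_{\caA_R}$ with $\omega_{xR}$, obtains Cuntz isometries implementing a unital endomorphism $\varphi_{x,g}$ of $\caB(\caH_x)$ satisfying $\varphi_{x,g}\circ\pi_x\beta_{gR}=\pi_x\beta_{gR}\tau$, and then peels off sites exactly as you do. Your factorization $\caH_x\cong\bbC^d\otimes\caK$ with the unitary $W$ is just an explicit presentation of that same endomorphism, since $\sum_\mu S_\mu a S_\mu^* = \id_{\bbC^d}\otimes WaW^*$; this part is fine.

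The genuine gap is the step you yourself flag as ``the hard part'' and then defer to \cite{bjp,arv}: the quasi-equivalence $\pi_x\beta_{gR}\tau\sim\pi_x\beta_{gR}$, equivalently the existence of $W$. Those references cannot supply it: they produce the isometries \emph{once a normal unital endomorphism of $\caB(\caH_x)$ is already given}, i.e.\ once the quasi-equivalence is established; the quasi-equivalence itself is a model-specific statement, and it is where the paper does its real work. The paper's argument runs as follows: since $\tau$ commutes with the \emph{full} automorphism $\beta_g$, one has $\omega_x\beta_g\tau\beta_g^{-1}=\omega_x\tau=\omega_x$; writing $\beta_g=\Ad(w_g)\lmk\beta_{gL}\otimes\beta_{gR}\rmk$ and $\omega_x\simeq_{u.e.}\omega_{xL}\otimes\omega_{xR}$, this gives $\lmk\omega_{xL}\otimes\omega_{xR}\rmk\lmk\beta_{gL}\otimes\beta_{gR}\rmk\tau\lmk\beta_{gL}\otimes\beta_{gR}\rmk^{-1}\simeq_{u.e.}\omega_{xL}\otimes\omega_{xR}$, so the corresponding GNS representations are unitarily equivalent; restricting to $\caA_R$ and observing that for $A\in\caA_R$ one has $\lmk\beta_{gL}\otimes\beta_{gR}\rmk\tau\lmk\beta_{gL}\otimes\beta_{gR}\rmk^{-1}(A)=\beta_{gR}\tau\beta_{gR}^{-1}(A)$ (because $\tau\beta_{gR}^{-1}(A)\in\caA_{[1,\infty)}$, where $\beta_{gL}$ acts trivially) yields $\pi_x\beta_{gR}\tau\beta_{gR}^{-1}\sim_{q.e.}\pi_x$, hence $\pi_x\beta_{gR}\tau\sim_{q.e.}\pi_x\beta_{gR}$, which irreducibility upgrades to ``$\pi_x\beta_{gR}\tau$ is a multiple of $\pi_x\beta_{gR}$''. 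Note that this route never needs the unitary $y_g$ in $\tau\beta_{gR}\tau^{-1}=\Ad(y_g)\beta_{gR}$ that you were worried about: by conjugating $\tau$ with the full $\beta_g$ and exploiting the split decomposition \emph{before} restricting, the non-commutation of $\tau$ with $\beta_{gR}$ never enters. Finally, your proposed criterion --- invariance of the normal state $\Tr\rho_x\pi(\cdot)$ under the induced endomorphism --- is not the right formulation: what must be shown is that $\pi_x\beta_{gR}\tau$ lies in the folium of $\pi_x\beta_{gR}$, and invariance of a single state does not by itself give that.
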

\begin{proof}
This follows from a standard argument \cite{bjp}, \cite{arv} but for the reader's convenience we give a proof here.    

First note that by the irreducibility of $\pi_x\beta_{gR}$, 
we have
 \begin{align}\label{pi}
 \pi_x\beta_{gR}\tau(\caA_R)'
 =\pi_x\beta_{gR}\lmk\caA_{[1,\infty)}\rmk'
 =\pi_x\beta_{gR}(\caA_{\{0\}}).
 \end{align}

 We claim that there exists an endomorphism
 $\varphi_{x,g}: \caB(\caH_x)\to \caB(\caH_x)$
 such that
 \begin{align}\label{umi}
 \varphi_{x,g}\lmk \pi_x\beta_{gR}(A)\rmk=
 \pi_x\beta_{gR}\tau(A),\quad A\in \caA_R.
 \end{align}
 To see this, note that
 \begin{align}\label{sora}
\lmk \omega_{x,L}\otimes\omega_{x,R}\rmk 
\lmk \beta_{gL}\otimes\beta_{gR}\rmk
\tau
\lmk \beta_{gL}\otimes\beta_{gR}\rmk^{-1}
\simeq_{u.e.}
\omega_x\beta_g\tau\beta_g^{-1}
=\omega_x\simeq_{u.e.}  \omega_{x,L}\otimes\omega_{x,R}.
 \end{align}
Let $(\caH_{xL},\pi_{xL},\Omega_{xL})$, $(\caH_{x},\pi_{x},\Omega_{x})$
GNS triples for $\omega_{xL}$ and $\omega_{xR}$ respectively.
Then
\begin{align}
    \lmk \caH_{xL}\otimes \caH_x, \pi_1:=\lmk \pi_{xL}\otimes\pi_x\rmk
    \lmk \beta_{gL}\otimes\beta_{gR}\rmk
\tau
\lmk \beta_{gL}\otimes\beta_{gR}\rmk^{-1},\Omega_{xL}\otimes\Omega_x
    \rmk
\end{align}
is a GNS triple of $\lmk \omega_{x,L}\otimes\omega_{x,R}\rmk 
\lmk \beta_{gL}\otimes\beta_{gR}\rmk
\tau
\lmk \beta_{gL}\otimes\beta_{gR}\rmk^{-1}$,
while
\begin{align*}
    \lmk \caH_{xL}\otimes \caH_x, \pi_2:=\pi_{xL}\otimes\pi_x,
    \Omega_{xL}\otimes\Omega_x
    \rmk
\end{align*}
is a GNS triple of $ \omega_{x,L}\otimes\omega_{x,R}$.
From (\ref{sora}),
$\pi_1$ and $\pi_2$ are unitarily equivalent.
Restricting this to $\caA_R$, 
$\pi_1\vert_{\caA_R}$ and $\pi_2\vert_{\caA_R}$ are quasi-equivalent.
Note that $\pi_1\vert_{\caA_R}$ and 
$\pi_x \beta_{gR}\tau \lmk \beta_{gR}\rmk^{-1}$
are quasi-equivalent and
$\pi_2\vert_{\caA_R}$ and $\pi_x$ are quasi-equivalent.
This and the irreducibility of $\pi_x\beta_{gR}$ implies (\ref{umi}).

This $\varphi_{x,g}$ in (\ref{umi}) is an endomorphism of $\caB(\caH_x)$ with Powers index $d$, from
(\ref{pi}).
Therefore, from \cite{bjp}, \cite{arv},
there exists $ {\widetilde{S_\mu^{(x), g}}}\in \caB(\caH_x)$, $\mu=1,\ldots,d$
such that
\begin{align}
\begin{split}
&( \widetilde{S_\mu^{(x), g}})^* \widetilde{S_\nu^{(x), g}}=\delta_{\mu,\nu}\unit,\\
&\varphi_{x,g}\lmk a\rmk=\sum_\mu  \widetilde{S_\mu^{(x), g}} a\lmk  \widetilde{S_\mu^{(x), g}}\rmk^*,\quad
a\in \caB(\caH_x).
\end{split}
\end{align}
In particular, we have
\begin{align}\label{pea}
 \pi_x\beta_{gR}\tau(A)
 =\varphi_{x,g}\lmk \pi_x\beta_{gR}(A)\rmk=
 \sum_\mu  \widetilde{S_\mu^{(x), g}}  \pi_x\beta_{gR}(A)\lmk  \widetilde{S_\mu^{(x), g}}\rmk^*
,\quad A\in \caA_R.
\end{align}
We claim that
\begin{align}\label{tanuki}
\pi_x\beta_{gR}\lmk\caA_{\{0\}}\rmk=
\bbC-\left\{ \widetilde{S_\mu^{(x), g}}\lmk \widetilde{S_\nu^{(x), g}}\rmk^*\mid \mu,\nu=1,\ldots,d
\right\}.
\end{align}
Here $\bbC-$ means $\bbC$-linear span.
Multiplying (\ref{pea}) by $\widetilde{S_\mu^{(x), g}} \lmk \widetilde{S_\nu^{(x), g}}\rmk^*$ from right and
left gives the same result,
therefore, we see that
\begin{align}
\widetilde{S_\mu^{(x), g}}\lmk \widetilde{S_\nu^{(x), g}}\rmk^*
\in \pi_x\beta_{gR}\tau(\caA_R)'
  =\pi_x\beta_{gR}(\caA_{\{0\}}),
\end{align}
from (\ref{tanuki}).
On the other hand,
for any $B\in \caA_{\{0\}}$ and $A\in \caA_R$, we have
\begin{align}
\begin{split}
&0= \left[\pi_x\beta_{gR}\tau(A),\pi_x\beta_{gR}(B)\right]\\
& =
 \sum_\mu
 \lmk
 \widetilde{S_\mu^{(x), g}}  \pi_x\beta_{gR}(A)\lmk  \widetilde{S_\mu^{(x), g}}\rmk^* \pi_x\beta_{gR}(B)
 -\pi_x\beta_{gR}(B)\widetilde{S_\mu^{(x), g}}  \pi_x\beta_{gR}(A)\lmk  \widetilde{S_\mu^{(x), g}}\rmk^*
 \rmk.
\end{split}
\end{align}
Multiplying $\lmk  \widetilde{S_\nu^{(x), g}}\rmk^* $ from left
and $ \widetilde{S_\mu^{(x), g}} $ from right,
we obtain
\begin{align}
\left[
\pi_x\beta_{gR}(A),\lmk  \widetilde{S_\nu^{(x), g}}\rmk^* \pi_x\beta_{gR}(B)\widetilde{S_\mu^{(x), g}}
\right]=0
\end{align}
for any $B\in \caA_{0}$ and $A\in \caA_R$.
Because $\pi_x\beta_{gR}$ is irreducible, we have
\begin{align}
\lmk  \widetilde{S_\nu^{(x), g}}\rmk^* \pi_x\beta_{gR}(B)\widetilde{S_\mu^{(x), g}}=c_{\nu\mu}(B)\unit,
\end{align}
with some $c_{\nu\mu}(B)\in \bbC$, for any $B\in \caA_{\{0\}}$ .
Therefore, for any $B\in \caA_{\{0\}}$, we obtain
\begin{align}
\pi_x\beta_{gR}(B)
=\sum_{\mu,\nu} \widetilde{S_\nu^{(x), g}} \lmk  \widetilde{S_\nu^{(x), g}}\rmk^*  \pi_x\beta_{gR}(B)
\widetilde{S_\mu^{(x), g}}  \lmk  \widetilde{S_\mu^{(x), g}}\rmk^*
=\sum_{\mu,\nu}c_{\nu\mu}(B)
\widetilde{S_\nu^{(x), g}}   \lmk  \widetilde{S_\mu^{(x), g}}\rmk^* ,
\end{align}
proving the claim.

Let 
$\{e_{\mu\nu}^{(0)}\}$ be the standard system of matrix units of 
$\Mat_d\simeq\caA_{\{0\}}$.
Note that $\{\widetilde{S_\mu^{(x), g}}\lmk \widetilde{S_\nu^{(x), g}}\rmk^*\}_{ \mu,\nu=1,\ldots,d}$
is a system of matrix units of $\pi_x\beta_{gR}\lmk\caA_{\{0\}}\rmk$.
Hence there is some unitary $U\in \caU_d$ such that
\begin{align}
\begin{split}
\widetilde{S_\mu^{(x), g}}\lmk \widetilde{S_\nu^{(x), g}}\rmk^*
=\pi_x\beta_{gR}\lmk U^* e_{\mu\nu}^{(0)}U\rmk.
\end{split}
\end{align}
Setting
\begin{align}
S_\mu^{(x), g}:=\sum_\lambda U_{\lambda\mu} \widetilde{S_\lambda^{(x), g}},
\end{align}
we obtain
\begin{align}
\begin{split}
&S_\mu^{(x), g}\lmk S_\nu^{(x), g}\rmk^*
=\sum_{\lambda,\zeta} U_{\lambda\mu} \overline{U_{\zeta\nu} }
\widetilde{S_\lambda^{(x), g}}\lmk \widetilde{S_\zeta^{(x), g}}\rmk^*\\
&=\sum_{\lambda,\zeta} U_{\lambda\mu} \overline{U_{\zeta\nu} }
\pi_x\beta_{gR}\lmk U^* e_{\lambda\zeta}^{(0)}U\rmk
=\pi_x\beta_{gR}\lmk U^* Ue_{\mu\nu}^{(0)}U^*U\rmk
=\pi_x\beta_{gR}\lmk e_{\mu\nu}^{(0)}\rmk,\\
&
 \pi_x\beta_{gR}\tau(A)
 =
 \sum_\mu  {S_\mu^{(x), g}}  \pi_x\beta_{gR}(A)\lmk  {S_\mu^{(x), g}}\rmk^*
,\quad A\in \caA_R,\\
&( {S_\mu^{(x), g}})^* {S_\nu^{(x), g}}=\delta_{\mu,\nu}\unit.
\end{split}
\end{align}
From this, we obtain
\begin{align}\begin{split}
 &   \pi_x\beta_{gR} \lmk e_{\mu_0\nu_0}^{(0)}\otimes e_{\mu_1\nu_1}^{(1)}\otimes \cdots\otimes e_{\mu_N\nu_N}^{(N)}\rmk
   =\pi_x\beta_{gR} \lmk e_{\mu_0\nu_0}^{(0)}\tau \lmk e_{\mu_1\nu_1}^{(0)}\rmk \cdots \tau^N \lmk e_{\mu_N\nu_N}^{(0)}\rmk\rmk,\\
&   = {S_{\mu_0}^{(x), g}} {S_{\mu_1}^{(x), g}}\cdots {S_{\mu_N}^{(x), g} }
{S_{\nu_N}^{(x), g}}^*\cdots {S_{\nu_1}^{(x), g}}^*{S_{\nu_0}^{(x), g}}^*.
\end{split}
\end{align}

Note that $\omega_x\vert_{\caA_R}$ is quasi-equivalent to $\omega_{xR}$ because of
the definition of $\omega_{xR}$ and the purity of $\omega_x$.
Therefore, there exists a density matrix $\rho_{x}$ on $\caH_x$
such that 
\begin{align*}
    \omega_x\vert_{\caA_R}=\Tr \rho_{x}\pi_{x}.
\end{align*}
Then we have
\begin{align*}
    \begin{split}
&        \omega_x\beta_{gR}\lmk e_{\mu_0\nu_0}^{(0)}\otimes e_{\mu_1\nu_1}^{(1)}\otimes \cdots\otimes e_{\mu_N\nu_N}^{(N)}\rmk
        =\Tr \rho_{x}\pi_{x}\beta_{gR}\lmk e_{\mu_0\nu_0}^{(0)}\otimes e_{\mu_1\nu_1}^{(1)}\otimes \cdots\otimes e_{\mu_N\nu_N}^{(N)}\rmk\\
        &= \Tr \rho_{x}
        \lmk {S_{\mu_0}^{(x), g}} {S_{\mu_1}^{(x), g}}\cdots {S_{\mu_N}^{(x), g} }
{S_{\nu_N}^{(x), g}}^*\cdots {S_{\nu_1}^{(x), g}}^*{S_{\nu_0}^{(x), g}}^*.
        \rmk.
    \end{split}
\end{align*}
This completes the proof of Proposition (\ref{niji}).
\end{proof}

Now we consider the action of $u_{x,g}$ on
$S_\mu^{(x) g}$s.

\begin{lem}\label{yama}

For any $x\in X$, 
$g\in G$
there exits $c_2((x), g)\in U(1)$ such that
\begin{align}
\label{xsc}
\Ad\lmk u_{xg^{-1},g}^*\rmk \lmk S_\mu^{(xg^{-1}),g}\rmk
=c_2((x),g)S_\mu^{(x), e}.
\end{align}
\end{lem}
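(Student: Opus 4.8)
The plan is to show that the transported family $T_\mu := \Ad\lmk u_{xg^{-1},g}^*\rmk\lmk S_\mu^{(xg^{-1}),g}\rmk \in \caB(\caH_x)$ satisfies exactly the same defining relations as $\{S_\mu^{(x),e}\}$ from Proposition \ref{niji}, and then to use irreducibility of $\pi_x$ to conclude that the two families agree up to a single global phase. Note $u_{xg^{-1},g}^*:\caH_{xg^{-1}}\to\caH_x$ and $S_\mu^{(xg^{-1}),g}\in\caB(\caH_{xg^{-1}})$, so indeed $T_\mu\in\caB(\caH_x)$, where $S_\mu^{(x),e}$ also lives.

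First I would record the intertwining relation from \eqref{udef}. Replacing $x$ by $xg^{-1}$ gives $\Ad(u_{xg^{-1},g})\pi_x=\pi_{xg^{-1}}\beta_{gR}$ on $\caA_R$, equivalently $\Ad\lmk u_{xg^{-1},g}^*\rmk\lmk\pi_{xg^{-1}}\beta_{gR}(A)\rmk=\pi_x(A)$ for all $A\in\caA_R$. Combining this with the three relations of Proposition \ref{niji} for the pair $(xg^{-1},g)$, and recalling $\beta_{eR}=\id$, I would verify that $\{T_\mu\}$ obeys $T_\mu T_\nu^*=\pi_x(e_{\mu\nu}^{(0)})=S_\mu^{(x),e}(S_\nu^{(x),e})^*$ (using $e_{\mu\nu}^{(0)}\in\caA_R$), $T_\mu^* T_\nu=\delta_{\mu\nu}\unit$, and, since $\tau(A)\in\caA_R$ whenever $A\in\caA_R$, $\sum_\mu T_\mu\pi_x(A)T_\mu^*=\pi_x\tau(A)=\sum_\mu S_\mu^{(x),e}\pi_x(A)(S_\mu^{(x),e})^*$ for $A\in\caA_R$. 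Thus both families share the matrix-unit identity, the isometry relations, and implement the same shift endomorphism.

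Write $S_\mu:=S_\mu^{(x),e}$ and $E_{\mu\nu}:=S_\mu S_\nu^*=T_\mu T_\nu^*$. Since $T_\mu$ is an isometry, $T_\mu=T_\mu T_\mu^* T_\mu=E_{\mu\mu}T_\mu=S_\mu(S_\mu^* T_\mu)$, so setting $d_\mu:=S_\mu^* T_\mu$ gives $T_\mu=S_\mu d_\mu$. A short computation using $E_{\mu\mu}=S_\mu S_\mu^*=T_\mu T_\mu^*$ and $S_\mu^* S_\mu=T_\mu^* T_\mu=\unit$ shows $d_\mu^* d_\mu=d_\mu d_\mu^*=\unit$, i.e.\ each $d_\mu$ is unitary. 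Comparing $E_{\mu\nu}=S_\mu S_\nu^*$ with $T_\mu T_\nu^*=S_\mu d_\mu d_\nu^* S_\nu^*$ and compressing by $S_\mu^*\,\cdot\,S_\nu$ forces $d_\mu d_\nu^*=\unit$, so all $d_\mu$ coincide with a single unitary $d\in\caB(\caH_x)$.

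Finally I would feed $T_\mu=S_\mu d$ into the shared shift relation: for $A\in\caA_R$, $\sum_\mu S_\mu d\pi_x(A)d^* S_\mu^*=\sum_\mu S_\mu\pi_x(A)S_\mu^*$, and compressing by $S_\nu^*\,\cdot\,S_\nu$ yields $d\pi_x(A)d^*=\pi_x(A)$ for every $A\in\caA_R$. Hence $d\in\pi_x(\caA_R)'$, which equals $\bbC\unit$ since $\pi_x$, being the GNS representation of the pure state $\omega_{xR}$, is irreducible; therefore $d=c_2((x),g)\unit$ with $c_2((x),g)\in U(1)$, establishing \eqref{xsc}. The main obstacle is the rigidity step of the third paragraph: passing from two abstractly matching Cuntz-type families to a genuine scalar rather than merely a unitary intertwiner, which is precisely where the irreducibility of $\pi_x$ is used decisively.
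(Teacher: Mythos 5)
Your proof is correct and follows essentially the same route as the paper's: transport $S_\mu^{(xg^{-1}),g}$ by $\Ad\lmk u_{xg^{-1},g}^*\rmk$, check that the transported family satisfies the same relations as $\{S_\mu^{(x),e}\}$ in Proposition \ref{niji}, and conclude equality up to a single phase. The only difference is that the paper invokes "the standard argument" (citing \cite{ot}) for the final rigidity step, whereas you write it out explicitly (factoring $T_\mu=S_\mu d$ with a common unitary $d\in\pi_x(\caA_R)'=\bbC\unit$), which is precisely what that standard argument amounts to.
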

\begin{proof}
Consider 
\begin{align}
\begin{split}
X_\mu^{(x),g}:=\Ad\lmk u_{x,g}^*\rmk \lmk S_\mu^{(x),g}\rmk\in \caB(\caH_{xg}),\quad
x\in X.
\end{split}
\end{align}
It is straightforward to check that $X_\mu^{(xg^{-1}), g}$ satisfies
the same relation as $S_\mu^{(x), e}$s
in Proposition \ref{niji}.
Therefore, by the standard argument (see \cite{ot} for example), there exists $c_2((x),g)\in U(1)$ such that
\begin{align}\label{xsc}
\Ad\lmk u_{xg^{-1},g}^*\rmk \lmk S_\mu^{(xg^{-1}),g}\rmk
=
X_\mu^{(xg^{-1}), g}=c_2((x),g)S_\mu^{(x), e}.
\end{align}
\end{proof}

We also introduce $S^{(x),g,h}$ corresponding to the state $\omega_x \beta_g \beta_h$:

\begin{prop}\label{niji2}
For any $x\in X$ and $g,h\in G$,
there are $\{S_\mu^{(x), g,h}\}_{x\in X,g,h\in G,\mu=1,\ldots,d}\subset\caB(\caH_x)$
and a density matrix $\rho_{x,g,h}$ on $\caH_x$
such that
\begin{align}
\begin{split}
&S_\mu^{(x), g,h}\lmk S_\nu^{(x), g,h}\rmk^*
=\pi_x\beta_{gR}\beta_{hR}\lmk e_{\mu\nu}^{(0)}\rmk,\\
&
 \pi_x\beta_{gR}\beta_{hR}\tau(A)
 =
 \sum_\mu  {S_\mu^{(x), g,h}}  \pi_x\beta_{gR}\beta_{hR}(A)\lmk  {S_\mu^{(x), g,h}}\rmk^*
,\quad A\in \caA_R,\\
&( {S_\mu^{(x), g,h}})^* {S_\nu^{(x), g,h}}=\delta_{\mu,\nu}\unit,
\end{split}
\end{align}
\begin{align}
  \begin{split}
&\omega_{x}\beta_{gR}\beta_{hR}\lmk e_{\mu_0\nu_0}^{(0)}\otimes e_{\mu_1\nu_1}^{(1)}\otimes \cdots\otimes e_{\mu_N\nu_N}^{(N)}\rmk\\
   &=\Tr\rho_{x,g,h}\lmk {S_{\mu_0}^{(x), g,h}} {S_{\mu_1}^{(x), g,h}}\cdots {S_{\mu_N}^{(x), g,h} }
{S_{\nu_N}^{(x), g,h}}^*\cdots {S_{\nu_1}^{(x), g,h}}^*{S_{\nu_0}^{(x), g,h}}^*\rmk.   
  \end{split}  
\end{align}
Here $\{e_{\mu,\nu}^{(k)}\}_{\mu,\nu=1,\ldots, d}$ is the system of standard matrix units
of $\Mat_d$.
\end{prop}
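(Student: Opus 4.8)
The plan is to mirror the proof of Proposition \ref{niji} almost verbatim, replacing the single automorphism $\beta_{gR}$ by the composite $\beta_{gR}\beta_{hR}$ throughout. The key observation that makes this work is that $\pi_x\beta_{gR}\beta_{hR}$ is again an irreducible representation of $\caA_R$ on $\caH_x$: indeed, by \eqref{vdef} we have $\beta_{gR}\beta_{hR}=\Ad(v(g,h))\beta_{ghR}$, so $\pi_x\beta_{gR}\beta_{hR}=\Ad(\pi_x(v(g,h)))\circ\pi_x\beta_{ghR}$, which is unitarily equivalent to the irreducible representation $\pi_x\beta_{ghR}$ and hence irreducible. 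Thus every step in Proposition \ref{niji} that invoked the irreducibility of $\pi_x\beta_{gR}$ carries over directly.

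First I would establish the analogue of \eqref{umi}: there is an endomorphism $\varphi_{x,g,h}$ of $\caB(\caH_x)$ with
\begin{align*}
    \varphi_{x,g,h}\lmk \pi_x\beta_{gR}\beta_{hR}(A)\rmk = \pi_x\beta_{gR}\beta_{hR}\tau(A),\quad A\in\caA_R.
\end{align*}
This follows from the same quasi-equivalence argument as before, now applied to the state $\lmk \omega_{x,L}\otimes\omega_{x,R}\rmk\lmk \beta_{gL}\beta_{hL}\otimes\beta_{gR}\beta_{hR}\rmk\tau\lmk \beta_{gL}\beta_{hL}\otimes\beta_{gR}\beta_{hR}\rmk^{-1}$, using that $\omega_x\beta_{gh}\tau\beta_{gh}^{-1}=\omega_x$ together with the decomposition $\beta_{gh}=\Ad(w_{gh})(\beta_{ghL}\otimes\beta_{ghR})$ and the unitary relation between $\beta_{gR}\beta_{hR}$ and $\beta_{ghR}$. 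The Powers index of $\varphi_{x,g,h}$ is again $d$, since by the analogue of \eqref{pi} the relative commutant $\pi_x\beta_{gR}\beta_{hR}\tau(\caA_R)'=\pi_x\beta_{gR}\beta_{hR}(\caA_{\{0\}})$ is a copy of $\Mat_d$.

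Next I would invoke \cite{bjp},\cite{arv} to extract isometries $\widetilde{S_\mu^{(x),g,h}}$ implementing $\varphi_{x,g,h}$, prove the span identity $\pi_x\beta_{gR}\beta_{hR}\lmk\caA_{\{0\}}\rmk=\bbC\text{-}\{\widetilde{S_\mu^{(x),g,h}}(\widetilde{S_\nu^{(x),g,h}})^*\}$ by the identical double-commutant computation, then rotate by a unitary $U\in\caU_d$ to obtain $S_\mu^{(x),g,h}$ satisfying the matrix-unit relation $S_\mu^{(x),g,h}(S_\nu^{(x),g,h})^*=\pi_x\beta_{gR}\beta_{hR}(e_{\mu\nu}^{(0)})$. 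Finally, the density matrix $\rho_{x,g,h}$ is produced from the quasi-equivalence of $(\omega_x\beta_{gh})\vert_{\caA_R}$ with $\pi_x\beta_{gR}\beta_{hR}$: since $\omega_x\beta_{gh}=\omega_{x\cdot gh}$ is pure and its restriction to $\caA_R$ is quasi-equivalent to $\omega_{x\cdot gh,R}$, hence to $\pi_x\beta_{gR}\beta_{hR}$ via the equivalence above, there is a density matrix $\rho_{x,g,h}$ with $(\omega_x\beta_{gh})\vert_{\caA_R}=\Tr\rho_{x,g,h}\,\pi_x\beta_{gR}\beta_{hR}(\cdot)$, giving the stated MPS-like formula. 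I do not anticipate a genuine obstacle here; the only point requiring care is confirming that $\pi_x\beta_{gR}\beta_{hR}$ is irreducible, which I address above via \eqref{vdef}, and everything else is a transcription of the preceding proof with $\beta_{gR}\rightsquigarrow\beta_{gR}\beta_{hR}$.
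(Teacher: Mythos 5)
Your overall strategy --- transcribing the proof of Proposition \ref{niji} with $\beta_{gR}$ replaced by $\beta_{gR}\beta_{hR}$ throughout --- is exactly the intended argument (the paper states Proposition \ref{niji2} without a separate proof, implicitly for this reason), and the operator-algebraic core of your transcription is sound. In fact the irreducibility point is even easier than you make it: $\beta_{gR}\beta_{hR}$ is itself an automorphism of $\caA_R$, so $\pi_x\circ(\beta_{gR}\beta_{hR})$ is irreducible simply because $\pi_x$ is; the detour through \eqref{vdef} is correct but unnecessary. The construction of the endomorphism $\varphi_{x,g,h}$ via the quasi-equivalence argument, the Powers-index computation, the isometries from \cite{bjp},\cite{arv}, and the matrix-unit rotation all carry over verbatim, as you say.

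The final step, however, is wrong as written. You produce a density matrix representing the functional $(\omega_x\beta_{gh})\vert_{\caA_R}$ in the representation $\pi_x\beta_{gR}\beta_{hR}$, but that is not the functional appearing in the proposition. For $A\in\caA_R$ one has $\beta_{gh}(A)=\beta_g\beta_h(A)=\Ad\lmk \beta_g(w_h)w_g\rmk\lmk \beta_{gR}\beta_{hR}(A)\rmk$, so $(\omega_x\beta_{gh})\vert_{\caA_R}=\lmk \omega_x\Ad(\beta_g(w_h)w_g)\rmk\circ\lmk\beta_{gR}\beta_{hR}\rmk$, which differs from $\omega_x\beta_{gR}\beta_{hR}$: the conjugating unitary lives in $\caA$, not in $\caA_R$, and $\omega_x\Ad(\beta_g(w_h)w_g)\neq\omega_x$ in general. (Equivalently, $\beta_{gR}\beta_{hR}=\Ad(v(g,h))\beta_{ghR}$ with $v(g,h)\in\caA_R$, whereas $\beta_{gh}\vert_{\caA_R}$ involves conjugation by the $w$'s; the two states on $\caA_R$ are quasi-equivalent but not equal.) So the identity you derive is for the wrong state. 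The repair is immediate and simpler than your detour: exactly as in the proof of Proposition \ref{niji}, purity of $\omega_x$ and the definition of $\omega_{xR}$ give a density matrix $\rho_x$ on $\caH_x$ with $\omega_x\vert_{\caA_R}=\Tr\rho_x\pi_x(\cdot)$; since $\beta_{gR}\beta_{hR}$ maps $\caA_R$ into $\caA_R$, composing both sides with it yields $\omega_x\beta_{gR}\beta_{hR}(\cdot)=\Tr\rho_x\,\pi_x\beta_{gR}\beta_{hR}(\cdot)$, which together with your matrix-unit identity is precisely the stated formula. In particular one may take $\rho_{x,g,h}=\rho_x$, independent of $g$ and $h$.
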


Next, similar to Lemma~\ref{yama}, we have

\begin{lemma}\label{yama2}
    There exists $c_3(xg,g,h)\in U(1)$ such that 
\begin{equation*}
    S^{(x),g,h} = c_3(xg,g,h) \cdot \Ad \left(\pi_x (v(g,h))\right) \left( S^{(x),gh} \right).
\end{equation*}
\end{lemma}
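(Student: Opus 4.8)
The plan is to exhibit the family $\Ad\lmk\pi_x(v(g,h))\rmk\lmk S_\mu^{(x),gh}\rmk$ as a second solution of the defining relations of Proposition \ref{niji2} for $S_\mu^{(x),g,h}$, and then to invoke the same uniqueness-up-to-phase argument already used in Lemma \ref{yama}. Write $V:=\pi_x(v(g,h))$ and set $\tilde S_\mu:=\Ad(V)\lmk S_\mu^{(x),gh}\rmk$ for $\mu=1,\dots,d$. The single algebraic input is \eqref{vdef}, which reads $\beta_{gR}\beta_{hR}=\Ad(v(g,h))\beta_{ghR}$ as automorphisms of $\caA_R$; applying the homomorphism $\pi_x$ this gives $\pi_x\beta_{gR}\beta_{hR}(A)=\Ad(V)\lmk\pi_x\beta_{ghR}(A)\rmk$ for every $A\in\caA_R$, and the same identity with $A$ replaced by $\tau(A)$ (recall $\tau(\caA_R)\subset\caA_R$).

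First I would check that $\{\tilde S_\mu\}$ satisfies the three relations of Proposition \ref{niji2}. Conjugation by the unitary $V$ passes through each of them: from $S_\mu^{(x),gh}\lmk S_\nu^{(x),gh}\rmk^*=\pi_x\beta_{ghR}(e_{\mu\nu}^{(0)})$ one obtains $\tilde S_\mu\tilde S_\nu^*=\Ad(V)\lmk\pi_x\beta_{ghR}(e_{\mu\nu}^{(0)})\rmk=\pi_x\beta_{gR}\beta_{hR}(e_{\mu\nu}^{(0)})$; the isometry relation $\tilde S_\mu^*\tilde S_\nu=\delta_{\mu\nu}\unit$ is immediate since $V$ is unitary; and the shift relation follows by inserting $V^*V=\unit$ between factors and using the identity above with $\tau(A)$ in place of $A$. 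Thus $\{\tilde S_\mu\}$ and $\{S_\mu^{(x),g,h}\}$ solve the same relations with respect to the irreducible representation $\pi_x\beta_{gR}\beta_{hR}$ of $\caA_R$.

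The remaining step is uniqueness up to a scalar, which is the standard argument (cf.\ Lemma \ref{yama} and \cite{ot}). From $\tilde S_\mu\tilde S_\nu^*=S_\mu^{(x),g,h}\lmk S_\nu^{(x),g,h}\rmk^*$ and $\tilde S_\mu^*\tilde S_\nu=\delta_{\mu\nu}\unit=\lmk S_\mu^{(x),g,h}\rmk^*S_\nu^{(x),g,h}$ one deduces that $W:=\tilde S_\mu^*S_\mu^{(x),g,h}$ is independent of $\mu$ and unitary, and that $S_\mu^{(x),g,h}=\tilde S_\mu W$ for all $\mu$. Feeding this back into the shift relation, which both families satisfy, forces $W$ to commute with $\pi_x\beta_{gR}\beta_{hR}(\caA_R)$; by irreducibility $W=c_3(xg,g,h)\unit$ with $c_3(xg,g,h)\in U(1)$, and since a scalar commutes with $\tilde S_\mu$ this is exactly the asserted identity $S^{(x),g,h}=c_3(xg,g,h)\,\Ad\lmk\pi_x(v(g,h))\rmk\lmk S^{(x),gh}\rmk$.

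I expect the only delicate point — the main obstacle — to be this last reduction of the intertwining unitary $W$ to a phase: the two matrix-unit relations alone guarantee merely that $W$ is unitary, and it is the shift (endomorphism) relation together with the irreducibility of $\pi_x\beta_{gR}\beta_{hR}$ that pins $W$ down to a scalar. Everything preceding it is a direct transfer of the defining relations through conjugation by $V$.
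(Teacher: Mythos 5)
Your proposal is correct and follows essentially the same route as the paper: Lemma \ref{yama2} is stated there as an analogue of Lemma \ref{yama}, whose proof consists of exactly this conjugation step (using \eqref{vdef} to show the conjugated family satisfies the relations of Proposition \ref{niji2}) followed by the standard uniqueness-up-to-phase argument for such isometry families, cited to \cite{ot}. The only difference is that you spell out that standard argument (independence of $\mu$, unitarity of $W$, and the reduction to a scalar via the shift relation and irreducibility of $\pi_x\beta_{gR}\beta_{hR}$) rather than citing it, and your execution of it is sound.
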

We also have
\begin{lemma}\label{yama3}
    There exists $c_2'(x,g,h)\in U(1)$ such that 
    \begin{equation*}
        S^{(x),g,h} = c_2'(x,g,h)\cdot  \Ad (u_{x,g}) \left( S^{(xg),h} \right).
    \end{equation*}
\end{lemma}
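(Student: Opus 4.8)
The plan is to follow the strategy of Lemma~\ref{yama} line by line: I would produce a second family of operators solving the defining relations of Proposition~\ref{niji2} for $S_\mu^{(x),g,h}$, and then invoke the uniqueness up to a phase of such a solution. The natural candidate is
\begin{align}
Y_\mu:=\Ad\lmk u_{x,g}\rmk\lmk S_\mu^{(xg),h}\rmk\in\caB(\caH_x),
\end{align}
which is well defined since $u_{x,g}\colon\caH_{xg}\to\caH_x$ and $S_\mu^{(xg),h}$ is the family attached to $\omega_{xg}\beta_{hR}$ by Proposition~\ref{niji} (with $x\to xg$, $g\to h$).

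First I would check that $\{Y_\mu\}$ satisfies the three identities of Proposition~\ref{niji2}. The isometry relation $Y_\mu^*Y_\nu=\delta_{\mu\nu}\unit$ is immediate from $\lmk S_\mu^{(xg),h}\rmk^* S_\nu^{(xg),h}=\delta_{\mu\nu}\unit$. For the matrix-unit relation the key input is the intertwiner \eqref{udef}, $\Ad(u_{x,g})\pi_{xg}=\pi_x\beta_{gR}$, which after composing with $\beta_{hR}$ reads $\Ad(u_{x,g})\pi_{xg}\beta_{hR}=\pi_x\beta_{gR}\beta_{hR}$; applying it to $S_\mu^{(xg),h}\lmk S_\nu^{(xg),h}\rmk^*=\pi_{xg}\beta_{hR}(e_{\mu\nu}^{(0)})$ gives $Y_\mu Y_\nu^*=\pi_x\beta_{gR}\beta_{hR}(e_{\mu\nu}^{(0)})$. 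For the shift relation I would conjugate the corresponding identity for $S_\mu^{(xg),h}$ by $u_{x,g}$; the two points to handle are that $\Ad(u_{x,g}^{-1})$ returns $\pi_x\beta_{gR}\beta_{hR}(A)$ to $\pi_{xg}\beta_{hR}(A)$ for $A\in\caA_R$ by \eqref{udef}, and that $\tau(\caA_R)\subseteq\caA_R$, so that conjugation of $\pi_{xg}\beta_{hR}\tau(A)=\sum_\mu S_\mu^{(xg),h}\pi_{xg}\beta_{hR}(A)\lmk S_\mu^{(xg),h}\rmk^*$ yields exactly $\pi_x\beta_{gR}\beta_{hR}\tau(A)=\sum_\mu Y_\mu\,\pi_x\beta_{gR}\beta_{hR}(A)\,Y_\mu^*$.

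At this stage $\{S_\mu^{(x),g,h}\}$ and $\{Y_\mu\}$ both solve the same relations: in particular both realise the matrix units $\pi_x\beta_{gR}\beta_{hR}(e_{\mu\nu}^{(0)})$ and both implement the same endomorphism. The conclusion then follows from the same standard uniqueness statement used in Lemma~\ref{yama} (cf.~\cite{ot}): the operator $u:=\sum_\mu S_\mu^{(x),g,h}Y_\mu^*$ is unitary, commutes with $\pi_x\beta_{gR}\beta_{hR}(\caA_{\{0\}})$ and with $\pi_x\beta_{gR}\beta_{hR}\tau(\caA_R)=\pi_x\beta_{gR}\beta_{hR}(\caA_{[1,\infty)})$, hence with all of the irreducible algebra $\pi_x\beta_{gR}\beta_{hR}(\caA_R)$, and is therefore a scalar $c_2'(x,g,h)\in U(1)$ with $S_\mu^{(x),g,h}=c_2'(x,g,h)\,Y_\mu$. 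I expect no genuine obstacle here, since the irreducibility-based uniqueness is already available from Lemma~\ref{yama}; the only care needed is bookkeeping — keeping track of the direction of $u_{x,g}\colon\caH_{xg}\to\caH_x$ and of the order of the two automorphisms $\beta_{gR},\beta_{hR}$ — and the verification is otherwise a routine conjugation computation.
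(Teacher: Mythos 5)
Your proposal is correct and matches the paper's intended argument: the paper states Lemma~\ref{yama3} without proof, as an analogue of Lemma~\ref{yama}, whose proof is exactly your strategy — conjugate the reference family by the intertwining unitary, check that the result satisfies the defining relations of Proposition~\ref{niji2} (using $\Ad(u_{x,g})\pi_{xg}=\pi_x\beta_{gR}$ and $\tau(\caA_R)\subset\caA_R$), and conclude equality up to a $U(1)$ phase by the standard irreducibility/uniqueness argument of \cite{ot}. Your explicit verification of the three relations and of the unitary $u=\sum_\mu S_\mu^{(x),g,h}Y_\mu^*$ being scalar is a correct spelling-out of what the paper leaves implicit.
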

Let us express now $S_\mu^{(x),g,h}$ in two different ways using Lemma~\ref{yama2} and \ref{yama3}. First, using  Lemma~\ref{yama3} we obtain
\begin{align*}
S_\mu^{(x),g,h} = c_2'(x,g,h)\cdot \Ad( u_{x,g} ) \left(S_\mu^{(xg), h} \right)
\end{align*}
Using now Eq.~\eqref{xsc}, we obtain
\begin{equation*}
   S_\mu^{(x),g,h} = c_2'(x,g,h) c_2(xgh,h)\cdot \Ad( u_{x,g} u_{xg,h}) \left(S_\mu^{(xgh), e} \right) .
\end{equation*}
Let us express now $S_\mu^{(x),g,h}$ with the help of Lemma~\ref{yama2}:
\begin{align*}
S_\mu^{(x),g,h} =  c_3(xg,g,h) \cdot \Ad \left(\pi_x (v(g,h) )\right) \left(S_\mu^{(x), gh} \right).
\end{align*}
Using again Eq.~\eqref{xsc}, we obtain
\begin{align*}
S_\mu^{(x),g,h} = c_3(xg,g,h) c_2(xgh,gh) \cdot \Ad \left(\pi_x (v(g,h) u_{x,gh})\right) \left(S_\mu^{(xgh), e} \right).
\end{align*}
Finally comparing the two different expressions for $S^{(x),g,h}$, we obtain
\begin{align*}
&c_2'(x,g,h) c_2(xgh,h)\cdot \Ad( u_{x,g} u_{xg,h}) \left(S_\mu^{(xgh), e} \right) \\
= &c_3(xg,g,h) c_2(xgh,gh) \cdot \Ad \left(\pi_x (v(g,h) u_{x,gh})\right) \left(S_\mu^{(xgh), e} \right).
\end{align*}

\subsection{Reduction to MPS}

Now we apply the general theory in the previous subsection to MPS.
Assume now that $\omega_x$ are injective MPS.

For each $x\in X$, let  $\Phi_x$ be a parent interaction of $\omega_x$.
Let $P^{(x), g}$ be the orthogonal projection onto
\[
\cap_{X\subset [0,\infty)} \ker\pi_x\beta_{gR}\lmk\Phi_{xg}(X)\rmk\subset \caH_x,
\]
for $x\in X$ and $g\in G$.
Set 
\begin{align*}
    B_\mu^{(x),g}:=P^{(x), g} S_\mu^{(x), g}P^{(x), g}.
\end{align*}
for $x\in X$ and $g\in G$.
By \cite{O1}, 
we know that $P^{(x),e }$ is equal to the support projection of the density matrix
$\rho_x$, and that
\begin{align}\label{kuma}
    B_{\mu}^{(x), e}=P^{(x), e}S_\mu^{(x), e} 
\end{align}
and that  the tuple of matrices on $P^{(x),e}\caH_x$
$\{B_{\mu}^{(x), e}\}_\mu$ gives an injective MPS.
Because 
\[
\Ad(u_{x,g})\lmk \pi_{xg}\lmk\Phi_{xg}(X)\rmk\rmk=\pi_x\beta_{gR}\lmk\Phi_{xg}(X)\rmk,
\]
we see that
\[
\Ad\lmk u_{x,g}^*\rmk (P^{(x), g})=P^{(xg), e}.
\]
By this, (\ref{xsc}), and (\ref{kuma}), we have
\begin{align}\label{inu}
B^{(x), g}_\mu
=
c_2((xg),g) \Ad u_{x,g}\lmk B_\mu^{(xg), e}\rmk.
\end{align}
From this, we have
\begin{align}\label{kuma}
    B_{\mu}^{(x), g}=P^{(x), g}S_\mu^{(x), g},
\end{align}
and the tuple of matrices $\{B_\mu^{(x), g}\}_\mu$ on $P^{(x), g}\caH_x$
gives an injective tensor.

Similarly, let $P^{(x), g, h}$ be an orthogonal projection onto
\begin{align*}
    \cap_{X\subset [0,\infty)} \ker\pi_x\beta_{gR}\beta_{hR}\lmk \Phi_{xgh}(X)\rmk.
\end{align*}
Then because $\pi_x\beta_{gR}\beta_{hR}\lmk \Phi_{xgh}(X)\rmk
=\Ad \pi_x\lmk  v(g,h)\rmk\lmk \pi_x\beta_{gh, R}(\Phi_{xgh}(X)) \rmk$,
we have
\begin{align*}
    \Ad\lmk  \pi_x\lmk v(g,h)\rmk\rmk\lmk P^{(x),gh}\rmk=P^{(x), g,h}.
\end{align*}
Setting
\begin{align*}
B_\mu^{(x), g, h}:=P^{(x), g,h} S_\mu^{(x), g,h} P^{(x), g,h},
\end{align*}
we get
\begin{align}\label{neko}
    B_\mu^{(x), g, h}=P^{(x), g,h} S_\mu^{(x), g,h}
    =c_3(xg, g,h) \Ad\lmk  \pi_x\lmk v(g,h)\rmk\rmk\lmk B_\mu^{(x), gh}\rmk.
\end{align}
By this formula, the tuple of matrices $\{B_\mu^{(x), g, h}\}$ 
on $P^{(x), g, h}\caH_x$
gives an injective tensor.

Finally, by $\pi_x\beta_{gR}\beta_{hR}\lmk\Phi_{xgh}(X)\rmk=\Ad\lmk u_{x,g}\rmk\pi_{xg}\beta_{hR}\lmk\Phi_{xgh}(X)\rmk$,
we get
\begin{align*}
    \Ad\lmk u_{x, g}\rmk \lmk P^{(xg), h}\rmk=P^{(x), g,h}.
\end{align*}
Therefore, we obtain
\begin{align}
    B_\mu^{(x), g,h}
    =c_2'(x,g,h) \Ad u_{x,g} \lmk B_\mu^{(xg), h}\rmk.
\end{align}

Let us express now $B_\mu^{(x),g,h}$ in two different ways.
First we obtain
\begin{align*}
B_\mu^{(x),g,h} = c_2'(x,g,h)\cdot \Ad( u_{x,g} ) \left(B_\mu^{(xg), h} \right)
\end{align*}
Using now Eq.~\eqref{inu}, we obtain
\begin{equation*}
   B_\mu^{(x),g,h} = c_2'(x,g,h) c_2(xgh,h)\cdot \Ad( u_{x,g} u_{xg,h}) \left(B_\mu^{(xgh), e} \right) .
\end{equation*}
Let us express now $B_\mu^{(x),g,h}$ with the help of (\ref{neko})
\begin{align*}
B_\mu^{(x),g,h} =  c_3(xg,g,h) \cdot \Ad \left(\pi_x (v(g,h) )\right) \left(B_\mu^{(x), gh} \right).
\end{align*}
Using again Eq.~\eqref{inu}, we obtain
\begin{align*}
B_\mu^{(x),g,h} = c_3(xg,g,h) c_2(xgh,gh) \cdot \Ad \left(\pi_x (v(g,h) u_{x,gh})\right) \left(B_\mu^{(xgh), e} \right).
\end{align*}
Finally comparing the two different expressions for $S^{(x),g,h}$, we obtain
\begin{align*}
&c_2'(x,g,h) c_2(xgh,h)\cdot \Ad( u_{x,g} u_{xg,h}) \left(B_\mu^{(xgh), e} \right) \\
= &c_3(xg,g,h) c_2(xgh,gh) \cdot \Ad \left(\pi_x (v(g,h) u_{x,gh})\right) \left(B_\mu^{(xgh), e} \right).
\end{align*}
From injectivity, we derive \eqref{eq:mps_index} from  this.

\section{Matrix product state approach and its GNS representation}

In this section we develop a dictionary  between the tensor network approach and the operator algebra formalism used in the  previous sections. We show how the GNS representation of MPS works writing explicitly all the ingredients (see also \cite{UncleHamiltonian}). We hope this section helps to connect both approaches.

\subsection{The finite size MPS and MPU setup}

We assume that for every $g\in G$ we are given an integer $D_g$ and a tensor $u(g)\in \Mat_{d}\otimes \Mat_{D_g}$,

\begin{equation*}
    u(g) = \sum_{ij} \ket{i}\bra{j} \otimes u_{ij}(g) = \sum_{\alpha\beta} u_{\alpha\beta}(g) \otimes \ket{\alpha}\bra{\beta} ,
\end{equation*}
such that the finite size matrix product operator (MPO) $U_n(g) \in\Mat_d^{\otimes n}$ generated by this tensor,
\begin{align*}
    U_n(g) = &\sum_{ij} \tr \{u_{i_1j_1}(g) \cdot u_{i_2j_2}(g) \cdots  u_{i_nj_n}(g) \} \cdot \ket{i_1i_2\dots i_n}\bra{j_1j_2\dots j_n} \\ 
    =&\sum_{\alpha_1 \dots \alpha_n} u_{\alpha_1\alpha_2}(g)\otimes u_{\alpha_2\alpha_3}(g) \otimes \dots \otimes u_{\alpha_n\alpha_1}(g) \ ,
\end{align*}
form a unitary representation of $G$: $U_n(g) U_n(h) = U_n(gh)$, $U_n(1)=\mathbb 1^{\otimes n}$ (note that this is also an MPO with $D_1=1$), and  $U_n(g)^* U_n(g) = \mathbb 1^{\otimes n}$ for all $g,h\in G$ and $n\in\mathbb N$ so that it is a matrix product unitary (MPU).

We also assume that for every $x\in X$ we are given an integer $D_x$ and an MPS tensor $a(x)\in \mathbb C^{d} \otimes \Mat_{D_x}$,
\begin{equation*}
    a(x) = \sum_{\alpha\beta} a_{\alpha\beta}(x) \otimes \ket{\alpha}\bra{\beta} = \sum_i \ket{i} \otimes a_i(x),
\end{equation*}
such that the finite size MPS, $\psi_n(x)\in(\mathbb C^d)^{\otimes n}$, defined by 
\begin{align*}
    \psi_n(x) & = \sum_{i} \tr \{a_{i_1}(x)\cdot a_{i_2}(x)\cdots  a_{i_n}(x) \} \cdot \ket{i_1i_2\dots i_n}\\
    &= \sum_{\alpha_1\dots \alpha_n} a_{\alpha_1 \alpha_2}(x) \otimes a_{\alpha_2\alpha_3}(x)\otimes \dots \otimes a_{\alpha_n\alpha_1}(x),
\end{align*}
satisfies the equations $U_n(g)^* \psi_n(x) = \psi_n(xg)$ for all $x\in X$, $g\in G$, and $n\in\mathbb N$.

We further assume that the MPS tensors $a(x)$ and $u(g)$ are injective after blocking, that is, there is $K_0\in\mathbb N$ such that for all $k\geq K_0$ the maps $\Mat_{D_x}\to \mathbb C$ and $\Mat_{D_g}\to\Mat_d$ defined by
\begin{align*}
    X &\mapsto \sum_{\alpha_1 \dots \alpha_k} \bra{\alpha_{k+1}} X \ket{\alpha_1} \cdot a_{\alpha_1\alpha_2}(x) \otimes \dots \otimes a_{\alpha_k\alpha_{k+1}}(x),\\
    Y &\mapsto \sum_{\alpha_1 \dots \alpha_k} \bra{\alpha_{k+1}} Y \ket{\alpha_1} \cdot u_{\alpha_1\alpha_2}(g) \otimes \dots \otimes u_{\alpha_k\alpha_{k+1}}(g),\\
\end{align*}
are injective\footnote{Note that if these maps are injective for some $k\in\mathbb N$, then they are also injective for any $n>k$, and thus this assumption is redundant.}. The transfer matrices of these MPS and MPO tensors are defined as the completely positive maps 
\begin{align*}
    T_x(\rho) = \sum_i a_i(x) \rho a_i(x)^*  
    \quad \text{and} \quad
    T_{g}(\rho) =  \sum_{ij} u_{ij}(g) \rho u_{ij}(g)^* . 
\end{align*}
Let us assume that the MPS tensors are normalized in such a way that the spectral radius of $T_x$ is $1$; as $U_n$ is unitary, $\tr(T_g^n) = \tr(U_n^*U_n)=d^n$, and thus the spectral radius of $T_g$ is $d$. Through the Perron-Frobenius theorem, the spectral radius of each of these maps is an eigenvalue of the map. Due to the injectivity condition, this eigenvalue is non-degenerate and the corresponding eigenvector is positive and full rank \cite{PerronFrobenius}. Using the fact that given an invertible matrix $X$ the MPS tensors defined by matrices $a_i$ and $Xa_iX^{-1}$ generate the same MPS for all system sizes, we can assume  w.l.o.g.\ that these eigenvectors are the identity (that is, the tensor is in the right canonical form \cite{FCS,MPSReps}), 
\begin{align}\label{eq:rfp}
    \sum_{i} a_i(x)a_i(x)^* = \mathbb 1 , 
    \quad \text{and} \quad 
    \sum_{ij} u_{ij}(g)u_{ij}(g)^* = d\cdot\mathbb 1 . 
\end{align}
Similarly, let $\rho(x)$ and $\rho(g)$ be the positive full rank matrices uniquely defined by the equations
\begin{equation}\label{eqlfp}
     \sum_i a_i(x)^* \rho(x) a_i(x) = \rho(x) \quad \text{and} \quad  \sum_{ij} u_{ij}(g)^* \rho(g) u_{ij}(g) = d \cdot \rho(g).
\end{equation}
Furthermore, we will write $a^{(n)}(x)$ and $u^{(n)}(g)$ for the tensors obtained by blocking $n$ sites:
\begin{align*}
    a_{\alpha\beta}^{(n)}(x)  &= \sum_{\gamma_1\dots \gamma_n} a_{\alpha\gamma_1}(x) \otimes a_{\gamma_1\gamma_2}(x) \otimes \dots \otimes a_{\gamma_{n-1}\beta}(x), \\
    u_{\alpha\beta}^{(n)}(g)  &= \sum_{\gamma_1\dots \gamma_n} u_{\alpha\gamma_1}(g) \otimes u_{\gamma_1\gamma_2}(g) \otimes \dots \otimes u_{\gamma_{n-1}\beta}(g),
\end{align*}

\subsection{Graphical notation}

As some of the formulas that we will write are quite cumbersome, we will use a graphical language customary in the study of tensor networks to express those equations. In this notation tensors are denoted with various shapes, and their indices are denoted as lines attached to the shape. For example, the tensor $u(g)$ is denoted by
  \begin{equation}
    u(g) = \sum_{\alpha,\beta=1}^{D_g} u_{\alpha\beta}(g) \otimes \ket{\alpha}\bra{\beta}  =
    \begin{tikzpicture}
      \node[irrep] at (0.2,0) {$g$};
      \node[mpo] (t) at (0,0) {};
      \foreach \x/\d in {0/-<-, 90/->-, 180/->-, 270/-<-}{
        \draw[\d] (t) --++ (\x:0.4);
      }
    \end{tikzpicture} \ ,
  \end{equation}
  where the vertical indices correspond to the indices of the matrix $u_{\alpha\beta}\in\Mat_d$, while the horizontal indices correspond to the indices of $\ketbra{\alpha}{\beta}\in\Mat_D$. We have attached a label $g$ on the horizontal line to denote the $g$-dependence of the tensor. The arrows differentiate between the input and output indices of the matrices: the vertical matrix acts from bottom to top, while the horizontal one from right to left. Similarly, the MPS tensor $a(x)$ is denoted as 
  \begin{equation}
    a(x) = \sum_{\alpha,\beta=1}^{D_x} a_{\alpha\beta}(g) \otimes \ket{\alpha}\bra{\beta}  =
    \begin{tikzpicture}
      \node[irrep] at (0.2,0) {$x$};
      \node[mps] (t) at (0,0) {};
      \foreach \x/\d/\c in {0/-<-/blue, 90/->-/black, 180/->-/blue}{
        \draw[\d,\c] (t) --++ (\x:0.4);
      }
    \end{tikzpicture} \ .
  \end{equation}
  Here, the index $x$ indicates the $x$-dependence of the tensor $a$ and it is written on the horizontal line. 
  
  As usual, joining lines corresponds to index contraction. Let us remark here that when working with matrices we think about column vectors, so the product is opposite to composition: $AB$ means $B$ acts first, $A$ second, or graphically,
  \begin{equation*}
    \begin{tikzpicture}[scale=0.6]
        \node[tensor,label=below:$A$] (t) at (0,0) {};
        \node[tensor,label=below:$B$] (r) at (1,0) {};
        \draw[-<-] (t)--(r);
        \draw[-<-] (r)--++(1,0);
        \draw[->-] (t)--++(-1,0);
    \end{tikzpicture} =
    \begin{tikzpicture}[scale=0.6]
        \node[tensor,label=below:$AB$] (t) at (0,0) {};
        \draw[-<-] (t)--++(1,0);
        \draw[->-] (t)--++(-1,0);
    \end{tikzpicture} \ .
  \end{equation*}
  From now on, we will only indicate explicitly the reading direction of the matrices when they are not from left to right, or bottom to top, respectively. In this notation, the MPS $\psi_n(x)$ and the MPU $U_n(g)$ reads as
  \begin{align*}
    \psi_n(x) &= \sum_{\alpha_1 \dots \alpha_n} a_{\alpha_1\alpha_2}(x)\otimes a_{\alpha_2\alpha_3}(x) \otimes \dots \otimes a_{\alpha_n\alpha_1}(x) =
    \begin{tikzpicture}
      \def\d{0.6};
      \draw[blue] (-0.5,0) rectangle (3*\d+0.5,-0.5);
      \node[irrep] at (\d/2,0) {$x$};
      \foreach \x in {0,1,3}{
        \draw (\x*\d,0) -- (\x*\d,0.4);
        \node[mps] (t) at (\x*\d,0) {};
      }
      \node[fill=white] at (\d*2,0) {$\dots$};
    \end{tikzpicture} \ , \\
    U_n(g) &= \sum_{\alpha_1 \dots \alpha_n} u_{\alpha_1\alpha_2}(g)\otimes u_{\alpha_2\alpha_3}(g) \otimes \dots \otimes u_{\alpha_n\alpha_1}(g) =
    \begin{tikzpicture}
      \def\d{0.6};
      \draw (-0.5,0) rectangle (3*\d+0.5,-0.5);
      \node[irrep] at (\d/2,0) {$g$};
      \foreach \x in {0,1,3}{
        \draw (\x*\d,-0.4) -- (\x*\d,0.4);
        \node[mpo] (t) at (\x*\d,0) {};
      }
      \node[fill=white] at (\d*2,0) {$\dots$};
    \end{tikzpicture} \ .
  \end{align*}
  In this notation, the equation $U_n(g) U_n(h) = U_n(gh)$ reads as
  \begin{equation}\label{eq:MPU_mult}
     U_n(g) U_n(h) =
    \begin{tikzpicture}[baseline = 1mm,xscale=0.6]
      \draw (-0.8,0) rectangle (3.8,-0.5);
      \draw (-1,0.4) rectangle (4,-0.6);
      \node[irrep] at (0.5,0.4) {$g$};
      \node[irrep] at (0.5,0) {$h$};
      \foreach \x in {0,1,3}{
        \draw (\x,-0.4) -- (\x,0.8);
        \node[mpo] (t) at (\x,0) {};
        \node[mpo] (t) at (\x,0.4) {};
      }
      \node[fill=white] at (2,0) {$\dots$};
      \node[fill=white] at (2,0.4) {$\dots$};
    \end{tikzpicture} \  = 
    \begin{tikzpicture}[xscale=0.6]
      \draw (-0.5,0) rectangle (4,-0.5);
      \node[irrep] at (0.5,0) {$gh$};
      \foreach \x in {0,1,3}{
        \draw (\x,-0.4) -- (\x,0.4);
        \node[mpo] (t) at (\x,0) {};
      }
      \node[fill=white] at (2,0) {$\dots$};
    \end{tikzpicture} \ = U_n(gh).
  \end{equation}
  The adjoint of the MPS $\psi_n(x)$ (that is, the linear functional $\phi\mapsto\langle\psi_n(x)|\phi\rangle$) and of the MPU $U_n(g)$ are described by the MPS and MPO tensors
  \begin{align*}
    u(g)^* = \sum_{\alpha,\beta=1}^{D_g} u_{\alpha\beta}(g)^* \otimes \ket{\beta}\bra{\alpha}  =
    \begin{tikzpicture}
      \node[irrep] at (0.2,0) {$g$};
      \node[cmpo] (t) at (0,0) {};
      \foreach \x/\d in {0/->-, 90/->-, 180/-<-, 270/-<-}{
        \draw[\d] (t) --++ (\x:0.4);
      }
    \end{tikzpicture} \ , \\
    a(x)^* = \sum_{\alpha,\beta=1}^{D_x} a_{\alpha\beta}(g)^* \otimes \ket{\beta}\bra{\alpha}  =
    \begin{tikzpicture}
      \node[irrep] at (0.2,0) {$x$};
      \node[cmps] (t) at (0,0) {};
      \foreach \x/\d/\c in {0/->-/blue, 270/-<-/black, 180/-<-/blue}{
        \draw[\d,\c] (t) --++ (\x:0.4);
      }
    \end{tikzpicture} \ ,
  \end{align*}
  where again the arrows indicate which index is input and which is output. Specifically, on both of the horizontal lines, the index $\ket{\beta}$ of the matrix unit is on the right side of the tensor, while the index $\bra{\alpha}$ of the matrix unit is on the left side of the tensor. From now on, we do not display the orientation of these tensors; they will always be oriented from bottom to top and left to right. Using these tensors the equation $U_n(g)^* \psi_n(x) = \psi_n(xg)$ reads as
\begin{equation}\label{eq:MPS_sym}
    \begin{tikzpicture}[baseline = 3mm]
      \def\d{0.6};
      \draw[blue] (-0.5,0) rectangle (3*\d+0.5,-0.5);
      \draw (-0.7,0.4) rectangle (3*\d+0.7,-0.7);
      \node[irrep] at (\d/2,0.4) {$g$};
      \node[irrep] at (\d/2,0) {$x$};
      \foreach \x in {0,1,3}{
        \draw (\x*\d,0) -- (\x*\d,0.8);
        \node[mps] (t) at (\x*\d,0) {};
        \node[cmpo] (t) at (\x*\d,0.4) {};
      }
      \node[fill=white] at (\d*2,0) {$\dots$};
      \node[fill=white] at (\d*2,0.4) {$\dots$};
    \end{tikzpicture} \ = \ 
    \begin{tikzpicture}
      \def\d{0.6};
      \draw[blue] (-0.5,0) rectangle (3*\d+0.5,-0.5);
      \node[irrep] at (\d/2,0) {$xg$};
      \foreach \x in {0,1,3}{
        \draw (\x*\d,0) -- (\x*\d,0.4);
        \node[mps] (t) at (\x*\d,0) {};
      }
      \node[fill=white] at (\d*2,0) {$\dots$};
    \end{tikzpicture} \ .
\end{equation}
  As $U_n(g)^\dagger = U_n(g^{-1})$, and both $u(g^{-1})$ and $u(g)^*$ are injective MPO tensors, using the fundamental theorem of MPS we obtain that there is $X_g \in \mathbb{C}^D \otimes \mathbb{C}^D$ and $X_g^{-1} \in \left(\mathbb{C}^D\right)^*\otimes\left(\mathbb{C}^D\right)^*$ such that
  \begin{equation}\label{eq:conjugate_gauge}
    \begin{tikzpicture}
      \node[irrep] at (0.3,0) {$g^{-1}$};
      \node[irrep] at (-0.3,0) {$g^{-1}$};
      \node[mpo] (t) at (0,0) {};
      \foreach \x/\d in {0/-<-, 90/->-, 180/->-, 270/-<-}{
        \draw[\d] (t) --++ (\x:0.6);
      }
    \end{tikzpicture} \  = \
    \begin{tikzpicture}
      \node[irrep] at (0.3,0) {$g$};
      \node[irrep] at (-0.3,0) {$g$};
      \node[irrep] at (0.9,0) {$g^{-1}$};
      \node[irrep] at (-0.9,0) {$g^{-1}$};
      \node[cmpo] (t) at (0,0) {};
      \node[tensor,fill=gray,label=below:$X_g\vphantom{X_g^{-1}}$] (l) at (-0.6,0) {};
      \node[tensor,fill=gray,label=below:$X_g^{-1}$] (r) at (0.6,0) {};
      \draw[-<-] (t)--(l);
      \draw[->-] (t)--(r);
      \draw[->-] (l)--++(-0.6,0);
      \draw[-<-] (r)--++(0.6,0);
      \draw[->-] (t)--++(0,0.6);
      \draw[-<-] (t)--++(0,-0.6);
    \end{tikzpicture}, \quad
    \begin{tikzpicture}
      \node[irrep] at (-0.3,0) {$g$};
      \node[irrep] at (0.3,0) {$g^{-1}$};
      \node[irrep] at (-0.9,0) {$g^{-1}$};
      \node[tensor,fill=gray,label=below:$X_g\vphantom{X_g^{-1}}$] (l) at (-0.6,0) {};
      \node[tensor,fill=gray,label=below:$\ X_g^{-1}$] (r) at (0,0) {};
      \draw[->-] (l)--(r);
      \draw[->-] (l)--++(-0.6,0);
      \draw[-<-] (r)--++(0.6,0);
    \end{tikzpicture}  = \mathbb{1}
    \quad \text{and} \quad
    \begin{tikzpicture}
      \node[irrep] at (-0.3,0) {$g^{-1}$};
      \node[irrep] at (0.3,0) {$g$};
      \node[irrep] at (-0.9,0) {$g$};
      \node[tensor,fill=gray,label=below:$X_g^{-1}$] (l) at (-0.6,0) {};
      \node[tensor,fill=gray,label=below:$\ X_g\vphantom{X_g^{-1}}$] (r) at (0,0) {};
      \draw[-<-] (l)--(r);
      \draw[-<-] (l)--++(-0.6,0);
      \draw[->-] (r)--++(0.6,0);
    \end{tikzpicture}  = \mathbb{1}.
  \end{equation}

The eigenvalue equations \eqref{eq:rfp} and \eqref{eqlfp} read as 
\begin{equation} \label{eq:graphical lfp}
\begin{aligned}
    \begin{tikzpicture}[baseline=1.5mm]
        \def\d{0.6}
        \draw (\d/2,0.5) -- (-\d/2,0.5) -- (-\d/2,0) -- (\d/2,0);
        \draw (0,0)--(0,0.5);
        \node[mps] at (0,0) {};
        \node[mps,label=left:$\rho{(x)}$] at (-\d/2,0.25) {};
        \node[cmps] at (0,0.5) {};
    \end{tikzpicture} = 
    \begin{tikzpicture}[baseline=1.5mm]
        \def\d{0.6}
        \draw (0,0.5) -- (-\d/2,0.5) -- (-\d/2,0) -- (0,0);
        \node[cmps,label=left:$\rho{(x)}$] at (-\d/2,0.25) {};
    \end{tikzpicture} \ &,  \quad
    \begin{tikzpicture}[baseline=1.5mm,xscale=-1]
        \def\d{0.6}
        \draw (\d/2,0.5) -- (-\d/2,0.5) -- (-\d/2,0) -- (\d/2,0);
        \draw (0,0)--(0,0.5);
        \node[mps] at (0,0) {};
        \node[cmps] at (0,0.5) {};
    \end{tikzpicture} = 
    \begin{tikzpicture}[baseline=1.5mm,xscale=-1]
        \def\d{0.6}
        \draw (0,0.5) -- (-\d/2,0.5) -- (-\d/2,0) -- (0,0);
    \end{tikzpicture}, \\ 
  \begin{tikzpicture}
    \draw (0,-0.6) rectangle (0.3,0.6);
    \draw (0.6,0.2) -- (-0.6,0.2) -- (-0.6,-0.2) -- (0.6,-0.2);
    \node[mpo] at (0,0.2) {};
    \node[cmpo] at (0,-0.2) {};
    \node[tensor,label=left:$\rho(g)$] at (-0.6,0) {};
  \end{tikzpicture} = 
  \begin{tikzpicture}[baseline=1.5mm]
    \def\d{0.6}
    \draw (0,0.5) -- (-\d/2,0.5) -- (-\d/2,0) -- (0,0);
    \node[mpo,label=left:$\rho{(g)}$] at (-\d/2,0.25) {};
  \end{tikzpicture} \ &,  \quad  
  \begin{tikzpicture}[xscale=-1]
    \draw (0,-0.6) rectangle (0.3,0.6);
    \draw (0.6,0.2) -- (-0.6,0.2) -- (-0.6,-0.2) -- (0.6,-0.2);
    \node[mpo] at (0,0.2) {};
    \node[cmpo] at (0,-0.2) {};
  \end{tikzpicture} =
    \begin{tikzpicture}[baseline=1.5mm]
        \def\d{0.6}
        \draw (0,0.5) -- (\d/2,0.5) -- (\d/2,0) -- (0,0);
    \end{tikzpicture} \ .  
\end{aligned}
\end{equation}

  Graphically, injectivity reads as the map
  \begin{equation*}
    X \mapsto
    \begin{tikzpicture}
      \def\d{0.6};
      \draw[decorate, decoration=brace] (-0.1,0.5) -- (3*\d+0.1,0.5) node[midway,above] {$K$};
      \draw (-\d-0.5,0) rectangle (3*\d+0.5,-0.5);
      \node[tensor,label=below:$X$] at (-\d,0) {};
      \foreach \x in {0,1,3}{
        \draw (\x*\d,-0.4) -- (\x*\d,0.4);
        \node[mpo] (t) at (\x*\d,0) {};
      }
      \node[fill=white] at (\d*2,0) {$\dots$};
    \end{tikzpicture} \ .
  \end{equation*}
  being injective.
  
\subsection{The main tools for the finite size MPS setup}

In the following we repeatedly use the following lemma. For a proof, see \cite{MPSdec}.

\begin{lemma}\label{lem:main_mps_tool}
    Let $a\in \mathbb C^d\otimes \Mat_D$ be an MPS tensor that becomes injective after blocking and $b\in \mathbb C^d \otimes \Mat_{D'}$ be an MPS tensor such that for all $n\in\mathbb{N}$,
    \begin{equation}\label{eq:MPS_equal}
    \begin{tikzpicture}
      \def\d{0.6};
      \draw (-0.5,0) rectangle (3*\d+0.5,-0.5);
      \foreach \x in {0,1,3}{
        \draw (\x*\d,0) -- (\x*\d,0.4);
        \node[tensor,label={[label distance=-2pt]below:$a$}] (t) at (\x*\d,0) {};
      }
      \node[fill=white] at (\d*2,0) {$\dots$};
      \draw[decorate, decoration=brace] (-0.1,0.5) -- (3*\d+0.1,0.5) node[midway,above] {$n$};
    \end{tikzpicture}   =
    \begin{tikzpicture}
      \def\d{0.6};
      \draw (-0.5,0) rectangle (3*\d+0.5,-0.5);
      \foreach \x in {0,1,3}{
        \draw (\x*\d,0) -- (\x*\d,0.4);
        \node[tensor,label={[label distance=-2pt]below:$b$}] (t) at (\x*\d,0) {};
      }
      \node[fill=white] at (\d*2,0) {$\dots$};
      \draw[decorate, decoration=brace] (-0.1,0.5) -- (3*\d+0.1,0.5) node[midway,above] {$n$};
    \end{tikzpicture} \ .
    \end{equation}
    Then there is $V \in \Mat_{D,D'}$ and $W\in\Mat_{D',D}$ such that $VW=\mathbb{1}_D$ and for all $n\in \mathbb{N}$,
    \begin{equation}\label{eq:reduction}
    \begin{tikzpicture}
      \def\d{0.6};
      \draw (-\d-0.5,0) -- (4*\d+0.5,0);
      \foreach \x in {0,1,3}{
        \draw (\x*\d,0) -- (\x*\d,0.4);
        \node[tensor,label={[label distance=-2pt]below:$b$}] (t) at (\x*\d,0) {};
      }
      \node[tensor,label={[label distance=-1pt]below:$V$}] (t) at (-\d,0) {};
      \node[tensor,label={[label distance=-1pt]below:$W$}] (t) at (4*\d,0) {};
      \node[fill=white] at (\d*2,0) {$\dots$};
      \draw[decorate, decoration=brace] (-0.1,0.5) -- (3*\d+0.1,0.5) node[midway,above] {$n$};
    \end{tikzpicture}   =
    \begin{tikzpicture}
      \def\d{0.6};
      \draw (-0.5,0) -- (3*\d+0.5,0);
      \foreach \x in {0,1,3}{
        \draw (\x*\d,0) -- (\x*\d,0.4);
        \node[tensor,label={[label distance=-2pt]below:$a$}] (t) at (\x*\d,0) {};
      }
      \node[fill=white] at (\d*2,0) {$\dots$};
      \draw[decorate, decoration=brace] (-0.1,0.5) -- (3*\d+0.1,0.5) node[midway,above] {$n$};
    \end{tikzpicture}  \ .
    \end{equation}
    Moreover, if $a,b$ and $V,W$ are such that both Eq.~\eqref{eq:MPS_equal}  and Eq.~\eqref{eq:reduction} holds, then there is $M\in\mathbb{N}$, referred to as the \emph{nilpotency length} corresponding to $V$ and $W$,  such that for all  $m,k,n\in\mathbb{N}$, $m,k\geq M$,
    \begin{equation}\label{eq:VW_factorize}
    \begin{tikzpicture}
      \def\d{0.6};
      \draw (-0.4,0) -- (3*\d+0.4,0);
      \foreach \x in {0,1,3}{
        \draw (\x*\d,0) -- (\x*\d,0.4);
        \node[tensor,label={[label distance=-1pt]below:$b\vphantom{V}$}] (t) at (\x*\d,0) {};
      }
      \node[fill=white] at (\d*2,0) {$\dots$};
      \draw[decorate, decoration=brace] (-0.1,0.5) -- (3*\d+0.1,0.5) node[midway,above] {$n+m+k$};
    \end{tikzpicture}   =
    \begin{tikzpicture}
      \def\d{0.6};
      \draw (-0.4,0) -- (10*\d+0.4,0);
      \foreach \x/\l in {0/b,2/b,4/a,6/a,8/b,10/b}{
        \draw (\x*\d,0) -- (\x*\d,0.4);
        \node[tensor,label={[label distance=-1pt]below:$\l\vphantom{V}$}] (t) at (\x*\d,0) {};
      }
      \node[tensor,label={[label distance=-1pt]below:$W$}] (t) at (3*\d,0) {};
      \node[tensor,label={[label distance=-1pt]below:$V$}] (t) at (7*\d,0) {};
      \node[fill=white] at (\d,0) {$\dots$};
      \node[fill=white] at (\d*5,0) {$\dots$};
      \node[fill=white] at (\d*9,0) {$\dots$};
      \draw[decorate, decoration=brace] (-0.1,0.5) -- (2*\d+0.1,0.5) node[midway,above] {$m$};
      \draw[decorate, decoration=brace] (4*\d-0.1,0.5) -- (6*\d+0.1,0.5) node[midway,above] {$n$};
      \draw[decorate, decoration=brace] (8*\d-0.1,0.5) -- (10*\d+0.1,0.5) node[midway,above] {$k$};
    \end{tikzpicture} \ .
    \end{equation}
\end{lemma}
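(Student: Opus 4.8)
The plan is to prove this as an instance of the fundamental theorem of matrix product states, using that $a$ is injective after blocking. The first part (existence of $V,W$) identifies $a$ as a block of $b$ in canonical form; the second part (the ``moreover'') is a nilpotency estimate on the complementary block.

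\emph{Existence of $V,W$.} First I would bring $b$ to canonical form. By a gauge transformation $G\in\Mat_{D'}$, which changes neither the generated states in \eqref{eq:MPS_equal} nor the bond dimension, one may assume $b$ is block upper triangular, each diagonal block being either a normal tensor (injective after blocking) or a tensor whose transfer operator $\rho\mapsto\sum_i b_i\rho b_i^*$ is nilpotent \cite{MPSReps}. Since the trace of a product of block upper triangular matrices only sees the diagonal blocks, the periodic state in \eqref{eq:MPS_equal} splits as the sum of the periodic states of the diagonal blocks; the nilpotent ones contribute nothing, because a nilpotent transfer operator forces the open tensor $b^{(n)}$ of that block to vanish once $n$ exceeds its nilpotency order. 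Hence the state equals a sum of normal MPS states, which must agree with the single normal state generated by $a$. By the fundamental theorem for normal tensors together with the linear independence of normal MPS states from gauge-inequivalent tensors, exactly one diagonal block $j_0$ is gauge-equivalent to $a$ and no further normal block survives. Absorbing all gauges, I define $W$ as the inclusion of $\mathbb{C}^D$ into the block $j_0$ and $V$ as $G$ followed by the projection onto it, obtaining $V\in\Mat_{D,D'}$, $W\in\Mat_{D',D}$ with $VW=\mathbb 1_D$. Because the $(j_0,j_0)$ diagonal block of the product $b^{(n)}$ equals $(b^{(j_0)})^{(n)}=a^{(n)}$, the reduction \eqref{eq:reduction} follows.

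\emph{The factorization.} Here I set $P:=WV$, which is idempotent since $VW=\mathbb 1_D$, and $Q:=\mathbb 1_{D'}-P$. Comparing $a^{(m+n)}=Vb^{(m+n)}W$ with $a^{(m)}a^{(n)}=Vb^{(m)}W\,Vb^{(n)}W=Vb^{(m)}Pb^{(n)}W$ yields the key identity $Vb^{(p)}Qb^{(q)}W=0$ for all $p,q$. Substituting $a^{(n)}=Vb^{(n)}W$ into \eqref{eq:VW_factorize}, the claim becomes $b^{(m)}b^{(n)}b^{(k)}=b^{(m)}Pb^{(n)}Pb^{(k)}$, i.e.\ the vanishing of $b^{(m)}\bigl(Qb^{(n)}+Pb^{(n)}Q\bigr)b^{(k)}$. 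Working in the canonical gauge of the first step, where $P$ projects onto the $a$-block and the complementary diagonal blocks are nilpotent, the off-block parts of $b^{(m)}$ and $b^{(k)}$ are annihilated as soon as $m,k$ exceed the largest nilpotency order $M$ of those blocks; a direct block multiplication then shows that the accumulated off-diagonal contributions reorganize exactly into $b^{(m)}Wa^{(n)}Vb^{(k)}$. This $M$ is the advertised nilpotency length. For an arbitrary admissible pair $(V,W)$ satisfying \eqref{eq:MPS_equal} and \eqref{eq:reduction}, injectivity of $a^{(L)}$ pins it to the canonical pair up to the nilpotent dressing, so the same bound applies.

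\emph{Main obstacle.} The technical heart is twofold. In the first step it is the uniqueness statement---that the normal block equivalent to $a$ is unique and that no spurious normal block survives---which rests on the linear independence of normal MPS states. In the second step it is extracting a \emph{uniform} nilpotency length $M$, valid for all $n$ and for every admissible $(V,W)$ rather than one depending on the particular configuration. These are precisely the estimates packaged in \cite{MPSdec}, on which I would rely for the detailed bounds.
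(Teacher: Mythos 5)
Your strategy matches the source the paper itself relies on: the paper gives no proof of Lemma~\ref{lem:main_mps_tool} but defers to \cite{MPSdec}, and your route (gauge $b$ into block upper-triangular canonical form, use linear independence of normal MPS states to identify a unique diagonal block gauge-equivalent to $a$, and take $V,W$ to be the corresponding projection and inclusion) is exactly the argument of that reference, so the first half of your proposal is sound in outline. Two technicalities are glossed over, though both are handled in \cite{MPSdec}: the canonical form of \cite{MPSReps} also admits irreducible-but-periodic diagonal blocks, which are neither normal nor nilpotent and must be eliminated by a blocking argument (delicate here because \eqref{eq:MPS_equal} and \eqref{eq:reduction} are required for \emph{all} $n$, not only multiples of a blocking length); and the linear-independence step must treat the weights $\lambda_j^n$ attached to the normal blocks, not just gauge-inequivalence of the tensors.

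The genuine gap is in the factorization step. Your claim that ``the off-block parts of $b^{(m)}$ and $b^{(k)}$ are annihilated as soon as $m,k$ exceed the largest nilpotency order'' is false. Take $b_i=\begin{pmatrix}0&c_i&e_i\\0&a_i&d_i\\0&0&0\end{pmatrix}$, with $1\times1$ nilpotent blocks above and below the $a$-block: this satisfies \eqref{eq:MPS_equal}, yet $(b^{(m)})_{23}=a^{(m-1)}d$ and $(b^{(m)})_{13}=c\,a^{(m-2)}d$ are nonzero for \emph{every} $m$ (by injectivity of $a$), so with $Q=\mathbb 1-WV$ one has $b^{(m)}Q\neq0$ and $Qb^{(m)}Q\neq0$ for all $m$: a path may enter the normal block and exit again. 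The statement that is true, and that the proof needs, is finer: split $Q=Q_<\oplus Q_>$ into the blocks preceding and following the $a$-block in the triangular ordering. Upper-triangularity gives the exact identities $Pb^{(n)}Q_<=Q_>b^{(n)}P=Q_>b^{(n)}Q_<=0$, while monotonicity of paths through the block structure plus nilpotency gives $b^{(m)}Q_<=0$ and $Q_>b^{(k)}=0$ once $m,k\geq M$ with $M$ of order $(\text{number of blocks})\times(\text{largest nilpotency order})$ --- so the uniform $M$ must grow with the number of blocks, not only with the nilpotency order of a single block. Inserting $\mathbb 1=P+Q_<+Q_>$ at the two junctions, these identities kill every term except $b^{(m)}Pb^{(n)}Pb^{(k)}$, which is \eqref{eq:VW_factorize}. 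Finally, the lemma asserts \eqref{eq:VW_factorize} for \emph{every} pair $(V,W)$ satisfying \eqref{eq:MPS_equal} and \eqref{eq:reduction} --- and the paper genuinely uses this generality, e.g.\ in Lemma~\ref{lem:fusion_unique}, where two different admissible pairs are compared --- whereas your argument only constructs and treats the canonical pair; the sentence ``injectivity of $a^{(L)}$ pins it to the canonical pair up to the nilpotent dressing'' is not a proof, and that reduction is precisely the nontrivial content you are outsourcing to \cite{MPSdec}.
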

  
Note that as every MPU is an injective MPS, this lemma also applies to MPUs. We will also use the following simple consequence of injectivity:

  \begin{lemma}\label{lem:injectivity_factorize}
      Let $a\in\mathbb C^d \otimes \Mat_D$ be an MPS tensor that is injective after blocking $K$ tensors. Let $\mathcal{H}_L$ and $\mathcal{H}_R$ be arbitrary finite dimensional vector spaces and $v,w\in \mathbb \mathcal{H}_L \otimes \mathbb C^D$ and $u,z\in \mathcal{H}_R \otimes \mathbb C^D$ be vectors such that
      \begin{equation*}
          \sum_{\alpha\beta} v_{\alpha} \otimes a^{(n)}_{\alpha\beta}  \otimes u_{\beta} = \sum_{\alpha\beta} w_\alpha \otimes a^{(n)}_{\alpha\beta} \otimes z_{\beta}
      \end{equation*}
      holds for some $n\geq K$. Then $\exists \mu \in \mathbb C$, $\mu\neq 0$ such that
      \begin{equation*}
          v = \mu w \quad\text{and} \quad u = \mu^{-1} z.
      \end{equation*}
  \end{lemma}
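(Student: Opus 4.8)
The plan is to peel off the physical (top) legs of the blocked tensors using injectivity, which collapses the hypothesis to an identity between two elementary tensors, and then to invoke the elementary linear-algebra fact that a nonzero elementary tensor pins down its two factors up to a reciprocal scalar. First I would record what injectivity after blocking $K$ buys us: for the $n \geq K$ appearing in the hypothesis, the map $\Mat_D \to (\bbC^d)^{\otimes n}$ that inserts a virtual matrix into the open MPS ring is injective, which is the same as saying that the $D^2$ physical vectors $\{a^{(n)}_{\alpha\beta}\}_{\alpha,\beta=1}^{D}$ are linearly independent in $(\bbC^d)^{\otimes n}$.

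Given that linear independence, I would choose a dual family of covectors $\phi^{\alpha\beta} \in \big((\bbC^d)^{\otimes n}\big)^*$ satisfying $\phi^{\alpha\beta}(a^{(n)}_{\gamma\delta}) = \delta_{\alpha\gamma}\delta_{\beta\delta}$, and apply $\id_{\mathcal{H}_L} \otimes \phi^{\alpha\beta} \otimes \id_{\mathcal{H}_R}$ to both sides of the displayed hypothesis. On the left this contracts the physical legs down to $v_\alpha \otimes u_\beta$ and on the right to $w_\alpha \otimes z_\beta$, so I obtain $v_\alpha \otimes u_\beta = w_\alpha \otimes z_\beta$ for every pair $\alpha,\beta$. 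Reassembling $v = \sum_\alpha v_\alpha \otimes \ket{\alpha}$ and similarly $u,w,z$, these $D^2$ identities are exactly the single tensor identity $v \otimes u = w \otimes z$ viewed in $(\mathcal{H}_L \otimes \bbC^D) \otimes (\mathcal{H}_R \otimes \bbC^D)$, with $v,w$ the first factors and $u,z$ the second.

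The final step is the elementary-tensor argument: assuming the common vector is nonzero (so that $v,w,u,z$ are all nonzero), I would pick a functional $\varphi$ on $\mathcal{H}_R \otimes \bbC^D$ with $\varphi(u)=1$, apply $\id \otimes \varphi$ to get $v = \mu w$ with $\mu := \varphi(z)$, note $\mu \neq 0$ because $v \neq 0$, and substitute back into $v \otimes u = w \otimes z$ to force $u = \mu^{-1} z$. I expect the only real subtlety—rather than the main calculation, which is routine—to be the nonzero hypothesis hiding in this last step: if one of the factors were to vanish, the reciprocal scalar could fail to exist, so I would either invoke that in every application the vectors arise from canonical-form MPS data and are genuinely nonzero, or add that nondegeneracy to the statement. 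A secondary point worth checking explicitly is the footnote-level monotonicity of injectivity, namely that injectivity after blocking $K$ indeed persists for the particular $n \geq K$ occurring in the hypothesis, which is what legitimises passing to the dual covectors $\phi^{\alpha\beta}$.
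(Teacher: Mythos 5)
Your proof is correct and is essentially the paper's argument read in the opposite order: the paper contracts the auxiliary spaces $\mathcal{H}_L,\mathcal{H}_R$ with dual basis vectors and then invokes injectivity (linear independence of the $a^{(n)}_{\alpha\beta}$) to equate coefficients, whereas you use injectivity first to build the dual family $\phi^{\alpha\beta}$ on the physical space and contract there; both routes terminate in the same rank-one factorization fact $v\otimes u = w\otimes z \Rightarrow v=\mu w,\ u=\mu^{-1}z$. Your caveat about degenerate (zero) vectors is a genuine observation, but it applies equally to the paper's own statement and proof, and is harmless in all applications since the vectors there arise from nonvanishing MPS boundary data.
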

  
  Graphically, the statement of the lemma reads as
  \begin{equation*}
      \begin{tikzpicture}[xscale=0.6,yscale=0.4]
          \draw (0,0) -- (4,0);
          \foreach \x/\s/\l in {0/{red}/v,1//a,3//a,4/{blue}/u}{
            \draw[\s] (\x,0) --++(0,1);
            \node[mpo] at (\x,0) {};
            \node[anchor=north,inner sep=5pt] at (\x,0) {$\l$};
          }
          \node[fill=white] at (2,0) {$\dots$};
      \end{tikzpicture} = 
      \begin{tikzpicture}[xscale=0.6,yscale=0.4]
          \draw (0,0) -- (4,0);
          \foreach \x/\s/\l in {0/{red}/w,1//a,3//a,4/{blue}/z}{
            \draw[\s] (\x,0) --++(0,1);
            \node[mpo] at (\x,0) {};
            \node[anchor=north,inner sep=5pt] at (\x,0) {$\l$};
          }
          \node[fill=white] at (2,0) {$\dots$};
      \end{tikzpicture} \quad \Rightarrow \quad v=\mu w \quad \text{and} \quad u = \mu^{-1} z. 
  \end{equation*}

  \begin{proof}
      Fix a basis on $\mathcal{H}_L$ and $\mathcal{H}_R$, and apply an element of the dual basis on each space to obtain
      \begin{equation*}
          \sum_{\alpha\beta} v_{\alpha}^iu_{\beta}^j \cdot a^{(n)}_{\alpha\beta}    = \sum_{\alpha\beta} w_\alpha^iz_{\beta}^j \cdot a^{(n)}_{\alpha\beta} . 
      \end{equation*}
      Due to injectivity of $a^{(n)}$, this implies
      \begin{equation*}
          v_{\alpha}^iu_{\beta}^j = w_\alpha^iz_{\beta}^j,
      \end{equation*}
      for all $i,j$ and $\alpha,\beta$. This is equivalent to the statement of the lemma. 
  \end{proof}

  This lemma lets us now prove a simple corollary of Lemma \ref{lem:main_mps_tool}:
  \begin{cor}\label{cor:zipper}
      Let $a\in \mathbb{C}^d \otimes \Mat_D$, $b\in \mathbb{C}^d \otimes \Mat_{D'}$, $V\in\Mat_{D,D'}$, $W\in\Mat_{D',D}$ and $M\in\mathbb N$ such that both Eq.~\eqref{eq:MPS_equal} and Eq.~\eqref{eq:reduction} holds for all $n\geq 0$. Then for all $k>M$,
      \begin{align*}
        \sum_{\alpha\beta} b^{(k)}_{\alpha\beta} \otimes Ve_{\alpha\beta} &= \sum_{\alpha \beta \gamma\delta} a_{\alpha\beta} \otimes b^{(k-1)}_{\gamma\delta} \otimes e_{\alpha\beta}Ve_{\gamma \delta}, \\  
        \sum_{\alpha\beta} b^{(k)}_{\alpha\beta} \otimes e_{\alpha\beta}W &= \sum_{\alpha \beta \gamma\delta}  b^{(k-1)}_{\alpha\beta} \otimes a_{\gamma\delta} \otimes e_{\alpha\beta}We_{\gamma \delta},
      \end{align*}
      or graphically,
    \begin{equation}\label{eq:reduction_size_independence}
    \begin{aligned}
    \begin{tikzpicture}
      \def\d{0.5};
      \draw (-\d-0.4,0) -- (3*\d+0.4,0);
      \foreach \x in {0,1,3}{
        \draw (\x*\d,0) -- (\x*\d,0.4);
        \node[tensor,label={[label distance=-1pt]below:$b\vphantom{V}$}] (t) at (\x*\d,0) {};
      }
      \node[tensor,label={[label distance=-1pt]below:$V$}] (t) at (-\d,0) {};
      \node[fill=white] at (\d*2,0) {$\dots$};
      \draw[decorate, decoration=brace] (-0.1,0.5) -- (3*\d+0.1,0.5) node[midway,above] {$k$};
    \end{tikzpicture}   =
    \begin{tikzpicture}
      \def\d{0.5};
      \draw (-\d-0.4,0) -- (3*\d+0.4,0);
      \foreach \x/\l in {-1/a,1/b,3/b}{
        \draw (\x*\d,0) -- (\x*\d,0.4);
        \node[tensor,label={[label distance=-1pt]below:$\l\vphantom{V}$}] (t) at (\x*\d,0) {};
      }
      \node[tensor,label={[label distance=-1pt]below:$V$}] (t) at (0,0) {};
      \node[fill=white] at (\d*2,0) {$\dots$};
      \draw[decorate, decoration=brace] (\d-0.1,0.5) -- (3*\d+0.1,0.5) node[midway,above] {$k-1$};
    \end{tikzpicture}, \\
    \begin{tikzpicture}
      \def\d{0.5};
      \draw (-0.4,0) -- (4*\d+0.4,0);
      \foreach \x in {0,2,3}{
        \draw (\x*\d,0) -- (\x*\d,0.4);
        \node[tensor,label={[label distance=-1pt]below:$b\vphantom{W}$}] (t) at (\x*\d,0) {};
      }
      \node[tensor,label={[label distance=-1pt]below:$W$}] (t) at (4*\d,0) {};
      \node[fill=white] at (\d,0) {$\dots$};
      \draw[decorate, decoration=brace] (-0.1,0.5) -- (3*\d+0.1,0.5) node[midway,above] {$k$};
    \end{tikzpicture}   = 
    \begin{tikzpicture}
      \def\d{0.5};
      \draw (-0.4,0) -- (4*\d+0.4,0);
      \foreach \x/\l in {0/b,2/b,4/a}{
        \draw (\x*\d,0) -- (\x*\d,0.4);
        \node[tensor,label={[label distance=-1pt]below:$\l\vphantom{W}$}] (t) at (\x*\d,0) {};
      }
      \node[tensor,label={[label distance=-1pt]below:$W$}] (t) at (3*\d,0) {};
      \node[fill=white] at (\d,0) {$\dots$};
      \draw[decorate, decoration=brace] (-0.1,0.5) -- (2*\d+0.1,0.5) node[midway,above] {$k-1$};
    \end{tikzpicture} \ .
    \end{aligned}
    \end{equation}
  \end{cor}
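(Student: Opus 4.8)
The plan is to derive both identities in \eqref{eq:reduction_size_independence} purely from the factorization property \eqref{eq:VW_factorize} of Lemma~\ref{lem:main_mps_tool} together with the cancellation Lemma~\ref{lem:injectivity_factorize}; the reduction \eqref{eq:reduction} and the gauge relation $VW=\mathbb{1}$ enter only implicitly, through \eqref{eq:VW_factorize}. I would fix once and for all an auxiliary middle length $n$ large enough that $a^{(n)}$ is injective (possible since $a$ is injective after blocking) and an auxiliary boundary length $m\geq M$, and use throughout only the trivial blocking identities $a^{(n+1)}=a\,a^{(n)}=a^{(n)}a$.

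For the first line of \eqref{eq:reduction_size_independence}, the idea is to apply \eqref{eq:VW_factorize} twice to the same object $b^{(m+n+k)}$, shifting one site from the right $b$-block into the central $a$-block. With middle size $n$ and right size $k$ (legal since $m,k\geq M$) it gives $b^{(m+n+k)}=b^{(m)}W a^{(n)}V b^{(k)}$, while with middle size $n+1$ and right size $k-1$ (legal since $m,k-1\geq M$, i.e.\ $k>M$) it gives $b^{(m+n+k)}=b^{(m)}W a^{(n+1)}V b^{(k-1)}=b^{(m)}W a^{(n)}\bigl(a\,V b^{(k-1)}\bigr)$. Equating the two expressions yields
\[
\bigl(b^{(m)}W\bigr)\,a^{(n)}\,\bigl(V b^{(k)}\bigr)=\bigl(b^{(m)}W\bigr)\,a^{(n)}\,\bigl(a\,V b^{(k-1)}\bigr),
\]
an equality of MPS on $m+n+k$ sites in which the central $n$ sites carry the injective tensor $a^{(n)}$, the left $m$ sites together with the open left virtual leg form a boundary vector $v:=b^{(m)}W$ that is identical on both sides, and the right $k$ sites with the open right leg form the two boundary vectors $u:=V b^{(k)}$ and $z:=a\,V b^{(k-1)}$. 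Since $v\neq 0$ (because $W\neq 0$, as $VW=\mathbb{1}$, and $b$ generates a non-zero state), Lemma~\ref{lem:injectivity_factorize} applied to the block $a^{(n)}$ forces its scalar $\mu$ to equal $1$, hence $u=z$, i.e.\ $V b^{(k)}=a\,V b^{(k-1)}$.

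The second line is obtained by the mirror-image argument, now peeling a site off the left $b$-block. Applying \eqref{eq:VW_factorize} to $b^{(k+n+m)}$ with left size $k$, middle $n$, right $m$, and again with left size $k-1$, middle $n+1$, right $m$, and using $a^{(n+1)}=a\,a^{(n)}$, one gets
\[
\bigl(b^{(k)}W\bigr)\,a^{(n)}\,\bigl(V b^{(m)}\bigr)=\bigl(b^{(k-1)}W a\bigr)\,a^{(n)}\,\bigl(V b^{(m)}\bigr).
\]
Here the right boundary $V b^{(m)}$ is common and non-zero, the central injective block is again $a^{(n)}$, and the two left boundaries are $b^{(k)}W$ and $b^{(k-1)}W a$. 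Lemma~\ref{lem:injectivity_factorize} once more pins $\mu=1$ and therefore $b^{(k)}W=b^{(k-1)}W a$, valid for all $k>M$.

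The only genuinely delicate point is the bookkeeping of the two bond dimensions $D$ and $D'$: in each comparison one must verify that the injective tensor $a^{(n)}$, whose virtual legs are $D$-dimensional, sits immediately adjacent to the pair of open tensors to be separated, while the opposite side of the $a^{(n)}$ block carries a boundary that is \emph{literally} the same on the two sides of the equation, so that the scalar ambiguity $\mu$ furnished by Lemma~\ref{lem:injectivity_factorize} is pinned to $1$ rather than merely to a nonzero constant. Once this adjacency and the equality of the common boundary are arranged, the cancellation is immediate and no further estimates are needed.
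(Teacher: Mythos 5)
Your proof is correct and follows essentially the same route as the paper's: two applications of Eq.~\eqref{eq:VW_factorize} with parameters $(m,n,k)$ and $(m,n+1,k-1)$ (resp.\ $(k,n,m)$ and $(k-1,n+1,m)$), followed by Lemma~\ref{lem:injectivity_factorize}, with the boundary common to both sides pinning the scalar $\mu$ to $1$. You are in fact slightly more careful than the paper in flagging that the shared boundary vector $b^{(m)}W$ must be nonzero for this pinning; note that this follows most cleanly from Eq.~\eqref{eq:reduction} itself, since $Vb^{(m)}W=a^{(m)}\neq 0$.
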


  \begin{proof}
    Let us apply Lemma~\ref{lem:main_mps_tool} to this situation. As $V$ and $W$ satisfy Eq.~\eqref{eq:reduction}, there is an $M\in\mathbb N$ such that Eq.~\eqref{eq:VW_factorize} holds for any $n$ and $m,k\geq M$. Let us consider this equation twice, once  with $m,n,k$ and once  with $m,n+1,k-1$ such that $m,k>M$ and $n\geq L$, where $L$ is the injectivity length of $a$. We obtain
    \begin{align*}
    &
    \begin{tikzpicture}
      \def\d{0.6};
      \draw (-0.4,0) -- (10*\d+0.4,0);
      \foreach \x/\l in {0/b,2/b,4/a,6/a,8/b,10/b}{
        \draw (\x*\d,0) -- (\x*\d,0.4);
        \node[tensor,label={[label distance=-1pt]below:$\l\vphantom{W}$}] (t) at (\x*\d,0) {};
      }
      \node[tensor,label={[label distance=-1pt]below:$W$}] (t) at (3*\d,0) {};
      \node[tensor,label={[label distance=-1pt]below:$V$}] (t) at (7*\d,0) {};
      \node[fill=white] at (\d,0) {$\dots$};
      \node[fill=white] at (\d*5,0) {$\dots$};
      \node[fill=white] at (\d*9,0) {$\dots$};
      \draw[decorate, decoration=brace] (-0.1,0.5) -- (2*\d+0.1,0.5) node[midway,above] {$m$};
      \draw[decorate, decoration=brace] (4*\d-0.1,0.5) -- (6*\d+0.1,0.5) node[midway,above] {$n$};
      \draw[decorate, decoration=brace] (8*\d-0.1,0.5) -- (10*\d+0.1,0.5) node[midway,above] {$k$};
    \end{tikzpicture} \ =\\&
    \begin{tikzpicture}
      \def\d{0.6};
      \draw (-0.4,0) -- (11*\d+0.4,0);
      \foreach \x/\l in {0/b,2/b,4/a,6/a,7/a,9/b,11/b}{
        \draw (\x*\d,0) -- (\x*\d,0.4);
        \node[tensor,label={[label distance=-1pt]below:$\l\vphantom{V}$}] (t) at (\x*\d,0) {};
      }
      \node[tensor,label={[label distance=-1pt]below:$W$}] (t) at (3*\d,0) {};
      \node[tensor,label={[label distance=-1pt]below:$V$}] (t) at (8*\d,0) {};
      \node[fill=white] at (\d,0) {$\dots$};
      \node[fill=white] at (\d*5,0) {$\dots$};
      \node[fill=white] at (\d*10,0) {$\dots$};
      \draw[decorate, decoration=brace] (-0.1,0.5) -- (2*\d+0.1,0.5) node[midway,above] {$m$};
      \draw[decorate, decoration=brace] (4*\d-0.1,0.5) -- (6*\d+0.1,0.5) node[midway,above] {$n$};
      \draw[decorate, decoration=brace] (9*\d-0.1,0.5) -- (11*\d+0.1,0.5) node[midway,above] {$k-1$};
    \end{tikzpicture} \ .
    \end{align*}
  Given that $n$ that is larger than the injectivity length of the tensor $a$, using Lemma~\ref{lem:injectivity_factorize} and recognizing that the two tensors on the left side of $a^{(n)}$ are the same in both sides of the equation, we conclude that for all $k>M$ 
    \begin{equation*}
    \begin{tikzpicture}
      \def\d{0.5};
      \draw (-\d-0.4,0) -- (3*\d+0.4,0);
      \foreach \x in {0,1,3}{
        \draw (\x*\d,0) -- (\x*\d,0.4);
        \node[tensor,label={[label distance=-1pt]below:$b\vphantom{V}$}] (t) at (\x*\d,0) {};
      }
      \node[tensor,label={[label distance=-1pt]below:$V$}] (t) at (-\d,0) {};
      \node[fill=white] at (\d*2,0) {$\dots$};
      \draw[decorate, decoration=brace] (-0.1,0.5) -- (3*\d+0.1,0.5) node[midway,above] {$k$};
    \end{tikzpicture}   = 
    \begin{tikzpicture}
      \def\d{0.5};
      \draw (-\d-0.4,0) -- (3*\d+0.4,0);
      \foreach \x/\l in {-1/a,1/b,3/b}{
        \draw (\x*\d,0) -- (\x*\d,0.4);
        \node[tensor,label={[label distance=-1pt]below:$\l\vphantom{V}$}] (t) at (\x*\d,0) {};
      }
      \node[tensor,label={[label distance=-1pt]below:$V$}] (t) at (0,0) {};
      \node[fill=white] at (\d*2,0) {$\dots$};
      \draw[decorate, decoration=brace] (\d-0.1,0.5) -- (3*\d+0.1,0.5) node[midway,above] {$k-1$};
    \end{tikzpicture}. 
    \end{equation*}
    The other equation is obtained similarly, by changing $m$ instead of $k$.
  \end{proof}

  Finally we also obtain that, while  $V$ and $W$ are not unique, different choices can be compared to each other:
  \begin{lemma}\label{lem:fusion_unique}
      Let $a,b$ be two MPS tensors and $V,W$ and $\hat V, \hat W$ be two pairs of operators such that $VW=\hat V \hat W=\mathbb 1$ and such that both Eq.~\eqref{eq:MPS_equal} and Eq.~\eqref{eq:reduction} holds for all $m,k>M$ and $n$. Then $\exists \lambda\in\mathbb C$, $\lambda\neq 0$ such that $\forall m\geq M$
  \begin{equation}\label{eq:reduction_uniqueness}
  \begin{aligned}
  \begin{tikzpicture}
    \def\d{0.5};
    \draw (-\d-0.4,0) -- (3*\d+0.4,0);
    \foreach \x in {0,1,3}{
      \draw (\x*\d,0) -- (\x*\d,0.4);
      \node[tensor,label={[label distance=-1pt]below:$b\vphantom{V}$}] (t) at (\x*\d,0) {};
    }
    \node[tensor,label={[label distance=-1pt]below:$V$}] (t) at (-\d,0) {};
    \node[fill=white] at (\d*2,0) {$\dots$};
    \draw[decorate, decoration=brace] (-0.1,0.5) -- (3*\d+0.1,0.5) node[midway,above] {$m$};
  \end{tikzpicture}   &= \lambda\cdot
  \begin{tikzpicture}
    \def\d{0.5};
    \draw (-\d-0.4,0) -- (3*\d+0.4,0);
    \foreach \x in {0,1,3}{
      \draw (\x*\d,0) -- (\x*\d,0.4);
      \node[tensor,label={[label distance=-1pt]below:$b\vphantom{\hat V}$}] (t) at (\x*\d,0) {};
    }
    \node[tensor,label={[label distance=-1pt]below:$\hat{V}$}] (t) at (-\d,0) {};
    \node[fill=white] at (\d*2,0) {$\dots$};
    \draw[decorate, decoration=brace] (-0.1,0.5) -- (3*\d+0.1,0.5) node[midway,above] {$m$};
  \end{tikzpicture}, \\
  \begin{tikzpicture}
    \def\d{0.5};
    \draw (-0.4,0) -- (4*\d+0.4,0);
    \foreach \x in {0,2,3}{
      \draw (\x*\d,0) -- (\x*\d,0.4);
      \node[tensor,label={[label distance=-1pt]below:$b\vphantom{W}$}] (t) at (\x*\d,0) {};
    }
    \node[tensor,label={[label distance=-1pt]below:$W$}] (t) at (4*\d,0) {};
    \node[fill=white] at (\d,0) {$\dots$};
    \draw[decorate, decoration=brace] (-0.1,0.5) -- (3*\d+0.1,0.5) node[midway,above] {$m$};
  \end{tikzpicture}   &= \frac{1}{\lambda} \cdot
  \begin{tikzpicture}
    \def\d{0.5};
    \draw (-0.4,0) -- (4*\d+0.4,0);
    \foreach \x in {0,2,3}{
      \draw (\x*\d,0) -- (\x*\d,0.4);
      \node[tensor,label={[label distance=-1pt]below:$b\vphantom{\hat W}$}] (t) at (\x*\d,0) {};
    }
    \node[tensor,label={[label distance=-1pt]below:$\hat{W}$}] (t) at (4*\d,0) {};
    \node[fill=white] at (\d,0) {$\dots$};
    \draw[decorate, decoration=brace] (-0.1,0.5) -- (3*\d+0.1,0.5) node[midway,above] {$m$};
  \end{tikzpicture} \ .
  \end{aligned}
  \end{equation}
  \end{lemma}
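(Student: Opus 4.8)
The plan is to produce the scalar $\lambda$ by comparing the \emph{two} factorizations of a single long block of $b$ tensors and then extracting the constant through the injectivity of $a$. Since both pairs $(V,W)$ and $(\hat V,\hat W)$ satisfy \eqref{eq:MPS_equal} and \eqref{eq:reduction}, Lemma~\ref{lem:main_mps_tool} applies to each of them separately; after enlarging $M$ if necessary I may assume a common nilpotency length. Thus for all $m,k\ge M$ and all $n\ge 0$ the factorization \eqref{eq:VW_factorize} holds for both pairs, and equating the two resulting expressions for the block $b^{(n+m+k)}$ yields the central identity
\[
 b^{(m)}\,W\,a^{(n)}\,V\,b^{(k)} \;=\; b^{(m)}\,\hat W\,a^{(n)}\,\hat V\,b^{(k)} .
\]
In graphical terms, both sides are the block $b^{(n+m+k)}$, split on the left as $b^{(m)}$ followed by $W$ (resp.\ $\hat W$), then $a^{(n)}$, and on the right by $V$ (resp.\ $\hat V$) followed by $b^{(k)}$.

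Next I would fix $n$ at least as large as the injectivity length $K$ of $a$, so that $a^{(n)}$ is injective, and read the displayed identity as an instance of the hypothesis of Lemma~\ref{lem:injectivity_factorize}. Concretely, the left boundary vectors are $b^{(m)}W$ and $b^{(m)}\hat W$ (each carrying the left external leg, the $m$ physical legs, and the $D$-dimensional internal leg that meets $a^{(n)}$), while the right boundary vectors are $V b^{(k)}$ and $\hat V b^{(k)}$. Lemma~\ref{lem:injectivity_factorize} then supplies a nonzero scalar $\mu$, a priori depending on $m,k$, such that $b^{(m)}W=\mu\,b^{(m)}\hat W$ and $V b^{(k)}=\mu^{-1}\hat V b^{(k)}$. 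These are precisely the two identities claimed in \eqref{eq:reduction_uniqueness}, with $\lambda=\mu^{-1}$.

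Finally I would argue that $\mu$ is a single constant valid for all $m,k\ge M$. The left identity exhibits $\mu$ as a ratio determined by $m$ alone, while the right identity exhibits it as a ratio determined by $k$ alone; these ratios are well defined and nonzero because the reduction \eqref{eq:reduction} gives $V b^{(k)}W=a^{(k)}\neq 0$, forcing $V b^{(k)},\hat V b^{(k)}\neq 0$, and likewise on the left. A number that depends only on $m$ and only on $k$ must be constant, so one $\mu$ works throughout, and setting $\lambda:=\mu^{-1}$ finishes the proof. I expect the only delicate point to be the bookkeeping in passing from \eqref{eq:VW_factorize} to the central identity and in matching its three segments with the left/right boundary data required by Lemma~\ref{lem:injectivity_factorize}; as a safeguard, one may instead first establish the identity at $m=k=M$ and then propagate it to all larger indices using the one-site zipper of Corollary~\ref{cor:zipper}, via $V b^{(m+1)}=a\,V b^{(m)}=\lambda\,a\,\hat V b^{(m)}=\lambda\,\hat V b^{(m+1)}$ and the analogous relation for $W$.
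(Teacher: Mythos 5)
Your proposal is correct and follows essentially the same route as the paper's proof: equate the two factorizations of $b^{(m+n+k)}$ coming from \eqref{eq:VW_factorize} for the pairs $(V,W)$ and $(\hat V,\hat W)$, apply Lemma~\ref{lem:injectivity_factorize} across an injective block $a^{(n)}$, and then observe that the resulting scalar is independent of $m$ and $k$ because each of the two identities it satisfies involves only one of those parameters. The additional remarks (nonvanishing of the boundary vectors and the zipper-propagation safeguard) are harmless refinements of the same argument.
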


\begin{proof}
    As both of the pairs $(V,W)$ and $(\hat V, \hat W)$ satisfy Eq.~\eqref{eq:reduction}, there is an $M\in\mathbb N$ such that Eq.~\eqref{eq:VW_factorize} holds for any $n$ and $m,k\geq M$, for both $(V,W)$ and $(\hat V, \hat W)$. Therefore
    \begin{align*}
    &
    \begin{tikzpicture}
      \def\d{0.6};
      \draw (-0.4,0) -- (10*\d+0.4,0);
      \foreach \x/\l in {0/b,2/b,4/a,6/a,8/b,10/b}{
        \draw (\x*\d,0) -- (\x*\d,0.4);
        \node[tensor,label={[label distance=-1pt]below:$\l\vphantom{V}$}] (t) at (\x*\d,0) {};
      }
      \node[tensor,label={[label distance=-1pt]below:$W$}] (t) at (3*\d,0) {};
      \node[tensor,label={[label distance=-1pt]below:$V$}] (t) at (7*\d,0) {};
      \node[fill=white] at (\d,0) {$\dots$};
      \node[fill=white] at (\d*5,0) {$\dots$};
      \node[fill=white] at (\d*9,0) {$\dots$};
      \draw[decorate, decoration=brace] (-0.1,0.5) -- (2*\d+0.1,0.5) node[midway,above] {$m$};
      \draw[decorate, decoration=brace] (4*\d-0.1,0.5) -- (6*\d+0.1,0.5) node[midway,above] {$n$};
      \draw[decorate, decoration=brace] (8*\d-0.1,0.5) -- (10*\d+0.1,0.5) node[midway,above] {$k$};
    \end{tikzpicture} \  = \\ &
    \begin{tikzpicture}
      \def\d{0.6};
      \draw (-0.4,0) -- (10*\d+0.4,0);
      \foreach \x/\l in {0/b,2/b,4/a,6/a,8/b,10/b}{
        \draw (\x*\d,0) -- (\x*\d,0.4);
        \node[tensor,label={[label distance=-1pt]below:$\l\vphantom{\hat V}$}] (t) at (\x*\d,0) {};
      }
      \node[tensor,label={[label distance=-1pt]below:$\hat W$}] (t) at (3*\d,0) {};
      \node[tensor,label={[label distance=-1pt]below:$\hat V$}] (t) at (7*\d,0) {};
      \node[fill=white] at (\d,0) {$\dots$};
      \node[fill=white] at (\d*5,0) {$\dots$};
      \node[fill=white] at (\d*9,0) {$\dots$};
      \draw[decorate, decoration=brace] (-0.1,0.5) -- (2*\d+0.1,0.5) node[midway,above] {$m$};
      \draw[decorate, decoration=brace] (4*\d-0.1,0.5) -- (6*\d+0.1,0.5) node[midway,above] {$n$};
      \draw[decorate, decoration=brace] (8*\d-0.1,0.5) -- (10*\d+0.1,0.5) node[midway,above] {$k$};
    \end{tikzpicture} \ .
    \end{align*}
    This equation is true with $n$ that is larger than the injectivity length of $a$, and thus using Lemma~\ref{lem:injectivity_factorize} leads to
  \begin{align*}
  \begin{tikzpicture}
    \def\d{0.5};
    \draw (-\d-0.4,0) -- (3*\d+0.4,0);
    \foreach \x in {0,1,3}{
      \draw (\x*\d,0) -- (\x*\d,0.4);
      \node[tensor,label={[label distance=-1pt]below:$b\vphantom{V}$}] (t) at (\x*\d,0) {};
    }
    \node[tensor,label={[label distance=-1pt]below:$V$}] (t) at (-\d,0) {};
    \node[fill=white] at (\d*2,0) {$\dots$};
    \draw[decorate, decoration=brace] (-0.1,0.5) -- (3*\d+0.1,0.5) node[midway,above] {$k$};
  \end{tikzpicture}   &= \lambda_{m,k}\cdot
  \begin{tikzpicture}
    \def\d{0.5};
    \draw (-\d-0.4,0) -- (3*\d+0.4,0);
    \foreach \x in {0,1,3}{
      \draw (\x*\d,0) -- (\x*\d,0.4);
      \node[tensor,label={[label distance=-1pt]below:$b\vphantom{\hat V}$}] (t) at (\x*\d,0) {};
    }
    \node[tensor,label={[label distance=-1pt]below:$\hat{V}$}] (t) at (-\d,0) {};
    \node[fill=white] at (\d*2,0) {$\dots$};
    \draw[decorate, decoration=brace] (-0.1,0.5) -- (3*\d+0.1,0.5) node[midway,above] {$k$};
  \end{tikzpicture}, \\
  \begin{tikzpicture}
    \def\d{0.5};
    \draw (-0.4,0) -- (4*\d+0.4,0);
    \foreach \x in {0,2,3}{
      \draw (\x*\d,0) -- (\x*\d,0.4);
      \node[tensor,label={[label distance=-1pt]below:$b\vphantom{W}$}] (t) at (\x*\d,0) {};
    }
    \node[tensor,label={[label distance=-1pt]below:$W$}] (t) at (4*\d,0) {};
    \node[fill=white] at (\d,0) {$\dots$};
    \draw[decorate, decoration=brace] (-0.1,0.5) -- (3*\d+0.1,0.5) node[midway,above] {$m$};
  \end{tikzpicture}   &= \frac{1}{\lambda_{m,k}} \cdot
  \begin{tikzpicture}
    \def\d{0.5};
    \draw (-0.4,0) -- (4*\d+0.4,0);
    \foreach \x in {0,2,3}{
      \draw (\x*\d,0) -- (\x*\d,0.4);
      \node[tensor,label={[label distance=-1pt]below:$b\vphantom{\hat W}$}] (t) at (\x*\d,0) {};
    }
    \node[tensor,label={[label distance=-1pt]below:$\hat{W}$}] (t) at (4*\d,0) {};
    \node[fill=white] at (\d,0) {$\dots$};
    \draw[decorate, decoration=brace] (-0.1,0.5) -- (3*\d+0.1,0.5) node[midway,above] {$m$};
  \end{tikzpicture} \ .
  \end{align*}
  The first equation implies that $\lambda_{m,k}$ in independent of $m$, while the second implies that $\lambda_{m,k}$ is independent of $k$, and thus $\lambda_{m,k}$ is independent from both $m$ and $k$.
\end{proof}

\subsection{Connection between the $3$-cocycle index of finite MPU and the one in Section \ref{betas}}

In this subsection we show that the $3$-cocycle index derived for $\beta_g$ in Section\ref{betas} coincides with the one of the finite size MPU when the $\beta_g$ is defined as a ``conjugation with an MPU representation of $g$'' (finite state automaton). We first defined the index corresponding to the finite size MPU representation of a finite group $G$. Then we define $\beta_g$ rigorously using MPUs and we show that $\beta_g$ defined in this way satisfies the local decomposable assumption. We finally prove that the $3$-cocycle obtained from this calculation coincides with the cocycle obtained from the finite size MPU representation of the group $G$.

\subsubsection{Third cohomology index of the finite size MPU representations}

  We note that the index measuring translations defined in Ref.\ \cite{MPUdef} for MPU representations of a finite group is trivial since the index is additive under MPU multiplication so that $0={\rm ind}(U_e)={\rm ind}(U^{|G|}_g) = |G|\times {\rm ind}(U_g)$.

Let us use Lemma \ref{lem:main_mps_tool} for the MPU representation of the finite group $G$. The equation $U_gU_h = U_{gh}$ guarantees that there exists a pair of rank-three tensors $V(g,h)$ and $W(g,h)$ such that $V(g,h) W(g,h) = \mathbb{1}_{gh}$ and such that for all $n\in\mathbb N$,
  \begin{equation}\label{eq:fusion_tensors}
    \begin{tikzpicture}[baseline=1mm]
      \def\d{0.6};
      \draw (-0.5,0) -- (2*\d+0.5,0);
      \draw (-0.5,0.4) -- (2*\d+0.5,0.4);
      \foreach \x in {0,2}{
        \draw (\x*\d,-0.4) -- (\x*\d,0.8);
        \node[mpo] (t) at (\x*\d,0) {};
        \node[mpo] (t) at (\x*\d,0.4) {};
      }
      \node[fill=white] at (\d*1,0) {$\dots$};
      \node[fill=white] at (\d*1,0.4) {$\dots$};
      \draw[fusion] (-0.5,0)--++(0,0.4);
      \draw[fusion] (2*\d+0.5,0)--++(0,0.4);
      \draw (-0.5,0.2)--++(-0.5,0);
      \draw (2*\d+0.5,0.2)--++(0.5,0);
      \node[irrep] at (-0.25,0) {$h$};
      \node[irrep] at (-0.25,0.4) {$g$};
      \node[irrep] at (-0.75,0.2) {$gh$};
      \node[irrep] at (2*\d+0.25,0) {$h$};
      \node[irrep] at (2*\d+0.25,0.4) {$g$};
      \node[irrep] at (2*\d+0.75,0.2) {$gh$};
    \draw[decorate, decoration=brace] (-0.1,0.9) -- (2*\d+0.1,0.9) node[midway,above] {$n$};
    \end{tikzpicture} \  = \
    \begin{tikzpicture}
      \def\d{0.6};
      \draw (-0.5,0) -- (2*\d+0.5,0);
      \foreach \x in {0,2}{
        \draw (\x*\d,-0.4) -- (\x*\d,0.4);
        \node[mpo] (t) at (\x*\d,0) {};
      }
      \node[fill=white] at (\d*1,0) {$\dots$};
      \node[irrep] at (-0.25,0) {$gh$};representations
      \node[irrep] at (2*\d+0.25,0) {$gh$};
    \draw[decorate, decoration=brace] (-0.1,0.5) -- (2*\d+0.1,0.5) node[midway,above] {$n$};
    \end{tikzpicture} \ .
  \end{equation}
From now on, let us fix such a pair of tensor for each $g,h\in G$. We call these tensors \emph{fusion tensors}. The nilpotency length corresponding to these operators is $M(g,h)$ and we will write $M=\max_{g,h} M(g,h)$. Considering the product of three group elements, we observe that
  \begin{equation*}
    \begin{tikzpicture}[baseline=4mm]
      \def\d{0.6};
      \draw (-0.5,0) -- (2*\d+0.5,0);
      \draw (-0.5,0.4) -- (2*\d+0.5,0.4);
      \draw (-1,0.8) -- (2*\d+1,0.8);
      \foreach \x in {0,2}{
        \draw (\x*\d,-0.4) -- (\x*\d,1.2);
        \node[mpo] (t) at (\x*\d,0) {};
        \node[mpo] (t) at (\x*\d,0.4) {};
        \node[mpo] (t) at (\x*\d,0.8) {};
      }
      \node[fill=white] at (\d*1,0) {$\dots$};
      \node[fill=white] at (\d*1,0.4) {$\dots$};
      \node[fill=white] at (\d*1,0.8) {$\dots$};
      \draw[fusion] (-0.5,0)--++(0,0.4);
      \draw (-0.5,0.2)--++(-0.5,0);
      \draw[fusion] (2*\d+0.5,0)--++(0,0.4);
      \draw (2*\d+0.5,0.2)--++(0.5,0);
      \draw[fusion] (-1,0.2)--++(0,0.6);
      \draw (-1,0.5)--++(-0.5,0);
      \draw[fusion] (2*\d+1,0.2)--++(0,0.6);
      \draw (2*\d+1,0.5)--++(0.5,0);
      \node[irrep] at (-0.25,0.8) {$g$};
      \node[irrep] at (-0.25,0.4) {$h$};
      \node[irrep] at (-0.25,0) {$k$};
      \node[irrep] at (-0.75,0.2) {$hk$};
      \node[irrep] at (-1.3,0.5) {$ghk$};
      \node[irrep] at (2*\d+0.25,0.8) {$g$};
      \node[irrep] at (2*\d+0.25,0.4) {$h$};
      \node[irrep] at (2*\d+0.25,0) {$k$};
      \node[irrep] at (2*\d+0.75,0.2) {$hk$};
      \node[irrep] at (2*\d+1.3,0.5) {$ghk$};
      \draw[decorate, decoration=brace] (-0.1,1.3) -- (2*\d+0.1,1.3) node[midway,above] {$n$};
    \end{tikzpicture} \  = \
    \begin{tikzpicture}
      \def\d{0.6};
      \draw (-0.5,0) -- (2*\d+0.5,0);
      \foreach \x in {0,2}{
        \draw (\x*\d,-0.4) -- (\x*\d,0.4);
        \node[mpo] (t) at (\x*\d,0) {};
      }
      \node[fill=white] at (\d*1,0) {$\dots$};
      \node[irrep] at (-0.25,0) {$ghk$};
      \node[irrep] at (2*\d+0.25,0) {$ghk$};
      \draw[decorate, decoration=brace] (-0.1,0.5) -- (2*\d+0.1,0.5) node[midway,above] {$n$};
    \end{tikzpicture} \ = \
    \begin{tikzpicture}[baseline=2mm]
      \def\d{0.6};
      \draw (-1,0) -- (2*\d+1,0);
      \draw (-0.5,0.4) -- (2*\d+0.5,0.4);
      \draw (-0.5,0.8) -- (2*\d+0.5,0.8);
      \foreach \x in {0,2}{
        \draw (\x*\d,-0.4) -- (\x*\d,1.2);
        \node[mpo] (t) at (\x*\d,0) {};
        \node[mpo] (t) at (\x*\d,0.4) {};
        \node[mpo] (t) at (\x*\d,0.8) {};
      }
      \node[fill=white] at (\d*1,0) {$\dots$};
      \node[fill=white] at (\d*1,0.4) {$\dots$};
      \node[fill=white] at (\d*1,0.8) {$\dots$};
      \draw[fusion] (-0.5,0.4)--++(0,0.4);
      \draw (-0.5,0.6)--++(-0.5,0);
      \draw[fusion] (2*\d+0.5,0.4)--++(0,0.4);
      \draw (2*\d+0.5,0.6)--++(0.5,0);
      \draw[fusion] (-1,0)--++(0,0.6);
      \draw (-1,0.3)--++(-0.5,0);
      \draw[fusion] (2*\d+1,0)--++(0,0.6);
      \draw (2*\d+1,0.3)--++(0.5,0);
      \foreach \x in {-0.25,2*\d+0.35}{
        \node[irrep] at (\x,0.8) {$g$};
        \node[irrep] at (\x,0.4) {$h$};
        \node[irrep] at (\x,0) {$k$};
      }
      \node[irrep] at (-0.75,0.6) {$gh$};
      \node[irrep] at (-1.3,0.3) {$ghk$};
      \node[irrep] at (2*\d+0.75,0.6) {$gh$};
      \node[irrep] at (2*\d+1.3,0.3) {$ghk$};
      \draw[decorate, decoration=brace] (-0.1,1.3) -- (2*\d+0.1,1.3) node[midway,above] {$n$};
    \end{tikzpicture} \  .
  \end{equation*}
These equations hold for $n=0$ as well in the sense that
  \begin{equation*}
    \begin{tikzpicture}[baseline=4mm]
      \draw (-0.5,0) -- (0.5,0);
      \draw (-0.5,0.4) -- (0.5,0.4);
      \draw (-1,0.8) -- (1,0.8);
      \draw[fusion] (-0.5,0)--++(0,0.4);
      \draw (-0.5,0.2)--++(-0.5,0);
      \draw[fusion] (0.5,0)--++(0,0.4);
      \draw (0.5,0.2)--++(0.5,0);
      \draw[fusion] (-1,0.2)--++(0,0.6);
      \draw (-1,0.5)--++(-0.5,0);
      \draw[fusion] (1,0.2)--++(0,0.6);
      \draw (1,0.5)--++(0.5,0);
      \node[irrep] at (-0.25,0.8) {$g$};
      \node[irrep] at (-0.25,0.4) {$h$};
      \node[irrep] at (-0.25,0) {$k$};
      \node[irrep] at (-0.75,0.2) {$hk$};
      \node[irrep] at (-1.3,0.5) {$ghk$};
      \node[irrep] at (0.25,0.8) {$g$};
      \node[irrep] at (0.25,0.4) {$h$};
      \node[irrep] at (0.25,0) {$k$};
      \node[irrep] at (0.75,0.2) {$hk$};
      \node[irrep] at (1.3,0.5) {$ghk$};
    \end{tikzpicture} \  = \mathbb{1}_{ghk} = \
    \begin{tikzpicture}[baseline=2mm]
      \draw (-1,0) -- (1,0);
      \draw (-0.5,0.4) -- (0.5,0.4);
      \draw (-0.5,0.8) -- (0.5,0.8);
      \draw[fusion] (-0.5,0.4)--++(0,0.4);
      \draw (-0.5,0.6)--++(-0.5,0);
      \draw[fusion] (0.5,0.4)--++(0,0.4);
      \draw (0.5,0.6)--++(0.5,0);
      \draw[fusion] (-1,0)--++(0,0.6);
      \draw (-1,0.3)--++(-0.5,0);
      \draw[fusion] (1,0)--++(0,0.6);
      \draw (1,0.3)--++(0.5,0);
      \foreach \x in {-0.25,0.35}{
        \node[irrep] at (\x,0.8) {$g$};
        \node[irrep] at (\x,0.4) {$h$};
        \node[irrep] at (\x,0) {$k$};
      }
      \node[irrep] at (-0.75,0.6) {$gh$};
      \node[irrep] at (-1.3,0.3) {$ghk$};
      \node[irrep] at (0.75,0.6) {$gh$};
      \node[irrep] at (1.3,0.3) {$ghk$};
    \end{tikzpicture} \  .
  \end{equation*}
Using Lemma \ref{lem:fusion_unique}, there is $\omega(g,h,k)\in\mathbb{C}$ and $M\in \mathbb{N}$ such that for all $m>M$
  \begin{equation}\label{eq:mps_cocycle_def}
  \begin{aligned}
    \begin{tikzpicture}[baseline=4mm]
      \def\d{0.6};
      \draw (-0.5,0) -- (2*\d+0.5,0);
      \draw (-0.5,0.4) -- (2*\d+0.5,0.4);
      \draw (-0.5,0.8) -- (2*\d+1,0.8);
      \foreach \x in {0,2}{
        \draw (\x*\d,-0.4) -- (\x*\d,1.2);
        \node[mpo] (t) at (\x*\d,0) {};
        \node[mpo] (t) at (\x*\d,0.4) {};
        \node[mpo] (t) at (\x*\d,0.8) {};
      }
      \node[fill=white] at (\d*1,0) {$\dots$};
      \node[fill=white] at (\d*1,0.4) {$\dots$};
      \node[fill=white] at (\d*1,0.8) {$\dots$};
      \draw[fusion] (2*\d+0.5,0)--++(0,0.4);
      \draw (2*\d+0.5,0.2)--++(0.5,0);
      \draw[fusion] (2*\d+1,0.2)--++(0,0.6);
      \draw (2*\d+1,0.5)--++(0.5,0);
      \node[irrep] at (-0.25,0.8) {$g$};
      \node[irrep] at (-0.25,0.4) {$h$};
      \node[irrep] at (-0.25,0) {$k$};
      \node[irrep] at (2*\d+0.25,0.8) {$g$};
      \node[irrep] at (2*\d+0.25,0.4) {$h$};
      \node[irrep] at (2*\d+0.25,0) {$k$};
      \node[irrep] at (2*\d+0.75,0.2) {$hk$};
      \node[irrep] at (2*\d+1.3,0.5) {$ghk$};
      \draw[decorate, decoration=brace] (-0.1,1.3) -- (2*\d+0.1,1.3) node[midway,above] {$m$};
    \end{tikzpicture} \  = \ \omega(g,h,k) \cdot
    \begin{tikzpicture}[baseline=2mm]
      \def\d{0.6};
      \draw (-0.5,0) -- (2*\d+1,0);
      \draw (-0.5,0.4) -- (2*\d+0.5,0.4);
      \draw (-0.5,0.8) -- (2*\d+0.5,0.8);
      \foreach \x in {0,2}{
        \draw (\x*\d,-0.4) -- (\x*\d,1.2);
        \node[mpo] (t) at (\x*\d,0) {};
        \node[mpo] (t) at (\x*\d,0.4) {};
        \node[mpo] (t) at (\x*\d,0.8) {};
      }
      \node[fill=white] at (\d*1,0) {$\dots$};
      \node[fill=white] at (\d*1,0.4) {$\dots$};
      \node[fill=white] at (\d*1,0.8) {$\dots$};
      \draw[fusion] (2*\d+0.5,0.4)--++(0,0.4);
      \draw (2*\d+0.5,0.6)--++(0.5,0);
      \draw[fusion] (2*\d+1,0)--++(0,0.6);
      \draw (2*\d+1,0.3)--++(0.5,0);
      \node[irrep] at (-0.25,0.8) {$g$};
      \node[irrep] at (-0.25,0.4) {$h$};
      \node[irrep] at (-0.25,0) {$k$};
      \node[irrep] at (2*\d+0.25,0.8) {$g$};
      \node[irrep] at (2*\d+0.25,0.4) {$h$};
      \node[irrep] at (2*\d+0.25,0) {$k$};
      \node[irrep] at (2*\d+0.75,0.6) {$gh$};
      \node[irrep] at (2*\d+1.3,0.3) {$ghk$};
      \draw[decorate, decoration=brace] (-0.1,1.3) -- (2*\d+0.1,1.3) node[midway,above] {$m$};
    \end{tikzpicture}, \\
    \begin{tikzpicture}[baseline=4mm,xscale=-1]
      \def\d{0.6};
      \draw (-0.5,0) -- (2*\d+0.5,0);
      \draw (-0.5,0.4) -- (2*\d+0.5,0.4);
      \draw (-0.5,0.8) -- (2*\d+1,0.8);
      \foreach \x in {0,2}{
        \draw (\x*\d,-0.4) -- (\x*\d,1.2);
        \node[mpo] (t) at (\x*\d,0) {};
        \node[mpo] (t) at (\x*\d,0.4) {};
        \node[mpo] (t) at (\x*\d,0.8) {};
      }
      \node[fill=white] at (\d*1,0) {$\dots$};
      \node[fill=white] at (\d*1,0.4) {$\dots$};
      \node[fill=white] at (\d*1,0.8) {$\dots$};
      \draw[fusion] (2*\d+0.5,0)--++(0,0.4);
      \draw (2*\d+0.5,0.2)--++(0.5,0);
      \draw[fusion] (2*\d+1,0.2)--++(0,0.6);
      \draw (2*\d+1,0.5)--++(0.5,0);
      \node[irrep] at (-0.25,0.8) {$g$};
      \node[irrep] at (-0.25,0.4) {$h$};
      \node[irrep] at (-0.25,0) {$k$};
      \node[irrep] at (2*\d+0.25,0.8) {$g$};
      \node[irrep] at (2*\d+0.25,0.4) {$h$};
      \node[irrep] at (2*\d+0.25,0) {$k$};
      \node[irrep] at (2*\d+0.75,0.2) {$hk$};
      \node[irrep] at (2*\d+1.3,0.5) {$ghk$};
      \draw[decorate, decoration={brace,mirror}] (-0.1,1.3) -- (2*\d+0.1,1.3) node[midway,above] {$m$};
    \end{tikzpicture} \  = \ \frac{1}{\omega(g,h,k)} \cdot
    \begin{tikzpicture}[baseline=2mm,xscale=-1]
      \def\d{0.6};
      \draw (-0.5,0) -- (2*\d+1,0);
      \draw (-0.5,0.4) -- (2*\d+0.5,0.4);
      \draw (-0.5,0.8) -- (2*\d+0.5,0.8);
      \foreach \x in {0,2}{
        \draw (\x*\d,-0.4) -- (\x*\d,1.2);
        \node[mpo] (t) at (\x*\d,0) {};
        \node[mpo] (t) at (\x*\d,0.4) {};
        \node[mpo] (t) at (\x*\d,0.8) {};
      }
      \node[fill=white] at (\d*1,0) {$\dots$};
      \node[fill=white] at (\d*1,0.4) {$\dots$};
      \node[fill=white] at (\d*1,0.8) {$\dots$};
      \draw[fusion] (2*\d+0.5,0.4)--++(0,0.4);
      \draw (2*\d+0.5,0.6)--++(0.5,0);
      \draw[fusion] (2*\d+1,0)--++(0,0.6);
      \draw (2*\d+1,0.3)--++(0.5,0);
      \node[irrep] at (-0.25,0.8) {$g$};
      \node[irrep] at (-0.25,0.4) {$h$};
      \node[irrep] at (-0.25,0) {$k$};
      \node[irrep] at (2*\d+0.25,0.8) {$g$};
      \node[irrep] at (2*\d+0.25,0.4) {$h$};
      \node[irrep] at (2*\d+0.25,0) {$k$};
      \node[irrep] at (2*\d+0.75,0.6) {$gh$};
      \node[irrep] at (2*\d+1.3,0.3) {$ghk$};
      \draw[decorate, decoration={brace,mirror}] (-0.1,1.3) -- (2*\d+0.1,1.3) node[midway,above] {$m$};
    \end{tikzpicture} \  .
  \end{aligned}
  \end{equation}
  One can show (see for example \cite{2dSPT}) that $\omega$ satisfies the 3-cocycle condition Eq.~\eqref{eq:3-cocycle}. Different choices of $V(g,h)$ and $W(g,h)$ lead to different $\omega$, but all of these $\omega$ are related to each other through a 2-coboundary. For instance, as $U_n(1)=\id^{\otimes n}$ is a rank-one MPO, the tensors $V(g,1),V(1,g),W(g,1),W(1,g)$ are all rank-two instead of rank three. In fact, one can choose them such that
  \begin{equation*}
    V(g,1)=W(g,1) = V(1,g) = W(1,g) = \id_{D_g}.
  \end{equation*}
  With this choice,
  \begin{equation*}
    \omega(g,h,1) = \omega(g,1,h) = \omega(1,g,h) = 1.
  \end{equation*}

\subsubsection{ Defining \texorpdfstring{$\beta_g$}{beta\_g} from MPUs}

In this section we define an automorphism group, its elements denoted by $\beta_g$ ($g\in G$),  of the quasi-local observables using the an MPU representation of a finite group $G$. We define $\beta_g$ first on local observables then, we show that it is an automorphism, that it is norm-contractive and also that $\beta_g\beta_h = \beta_{gh}$. Finally we extend $\beta_g$ to the whole set of quasi-local observables.

Let us recall that given an MPU representation of a finite group $G$ we have obtained certain rank-three tensors $V(g,h)$ and $W(g,h)$, called fusion tensors, such that \eqref{eq:fusion_tensors} holds. If $h=g^{-1}$, then $U_n(gh)=\id^{\otimes n}$ is an MPO with bond dimension one, and thus $V(g,g^{-1})$ and $W(g,g^{-1})$ are rank-two tensors instead of rank-three. These tensors satisfy the equation (see \eqref{eq:reduction})
\begin{equation}
  \begin{tikzpicture}[baseline=1mm]
    \def\d{0.6};
    \draw (-0.6,0) -- (2*\d+0.6,0);
    \draw (-0.6,0.4) -- (2*\d+0.6,0.4);
    \foreach \x in {0,2}{
      \draw (\x*\d,-0.4) -- (\x*\d,0.8);
      \node[mpo] (t) at (\x*\d,0) {};
      \node[mpo] (t) at (\x*\d,0.4) {};
    }
    \node[fill=white] at (\d*1,0) {$\dots$};
    \node[fill=white] at (\d*1,0.4) {$\dots$};
    \draw[fusion] (-0.6,0)--++(0,0.4);
    \draw[fusion] (2*\d+0.6,0)--++(0,0.4);
    \node[irrep] at (-0.25,0) {$g^{-1}$};
    \node[irrep] at (-0.25,0.4) {$g$};
    \node[irrep] at (2*\d+0.3,0) {$g^{-1}$};
    \node[irrep] at (2*\d+0.3,0.4) {$g$};
    \draw[decorate, decoration=brace] (-0.1,0.9) -- (2*\d+0.1,0.9) node[midway,above] {$n$};
  \end{tikzpicture} \  = \id^{\otimes n}
\end{equation}
for all $n\in \mathbb{N}$, as well as the equation
\begin{equation}\label{eq:simple_VW_factorize}
  \begin{tikzpicture}[baseline=1mm]
    \def\d{0.6};
    \draw (-0.6,0) -- (2*\d+0.6,0);
    \draw (-0.6,0.4) -- (2*\d+0.6,0.4);
    \foreach \x in {0,2}{
      \draw (\x*\d,-0.4) -- (\x*\d,0.8);
      \node[mpo] (t) at (\x*\d,0) {};
      \node[mpo] (t) at (\x*\d,0.4) {};
    }
    \node[fill=white] at (\d*1,0) {$\dots$};
    \node[fill=white] at (\d*1,0.4) {$\dots$};
    \node[irrep] at (-0.3,0) {$g^{-1}$};
    \node[irrep] at (-0.3,0.4) {$g$};
    \node[irrep] at (2*\d+0.3,0) {$g^{-1}$};
    \node[irrep] at (2*\d+0.3,0.4) {$g$};
    \draw[decorate, decoration=brace] (-0.1,0.9) -- (2*\d+0.1,0.9) node[midway,above] {$n+k+m$};
  \end{tikzpicture} \  =
  \begin{tikzpicture}[baseline=1mm]
    \def\d{0.6};
    \draw (-0.6,0) -- (2*\d+0.6,0);
    \draw (-0.6,0.4) -- (2*\d+0.6,0.4);
    \draw (8*\d-0.6,0) -- (10*\d+0.6,0);
    \draw (8*\d-0.6,0.4) -- (10*\d+0.6,0.4);
    \foreach \x in {0,2,8,10}{
      \draw (\x*\d,-0.4) -- (\x*\d,0.8);
      \node[mpo] (t) at (\x*\d,0) {};
      \node[mpo] (t) at (\x*\d,0.4) {};
    }
    \foreach \x in {4,6}{
      \draw (\x*\d,-0.4) -- (\x*\d,0.8);
    }
    \draw[fusion] (8*\d-0.6,0)--++(0,0.4);
    \draw[fusion] (2*\d+0.6,0)--++(0,0.4);
    \node[fill=white] at (\d*1,0) {$\dots$};
    \node[fill=white] at (\d*1,0.4) {$\dots$};
    \node[fill=white] at (\d*9,0) {$\dots$};
    \node[fill=white] at (\d*9,0.4) {$\dots$};
    \node[fill=white] at (\d*5,0.2) {$\dots$};
    \node[irrep] at (-0.3,0) {$g^{-1}$};
    \node[irrep] at (-0.3,0.4) {$g$};
    \node[irrep] at (2*\d+0.3,0) {$g^{-1}$};
    \node[irrep] at (2*\d+0.3,0.4) {$g$};
    \node[irrep] at (8*\d-0.3,0) {$g^{-1}$};
    \node[irrep] at (8*\d-0.3,0.4) {$g$};
    \node[irrep] at (10*\d+0.3,0) {$g^{-1}$};
    \node[irrep] at (10*\d+0.3,0.4) {$g$};
    \draw[decorate, decoration=brace] (-0.1,0.9) -- (2*\d+0.1,0.9) node[midway,above] {$k$};
    \draw[decorate, decoration=brace] (4*\d-0.1,0.9) -- (6*\d+0.1,0.9) node[midway,above] {$n$};
    \draw[decorate, decoration=brace] (8*\d-0.1,0.9) -- (10*\d+0.1,0.9) node[midway,above] {$m$};
  \end{tikzpicture} \
\end{equation}
for all $n\geq 0$, $k,m\geq M$. In particular, \eqref{eq:reduction_size_independence} reads as (remember that $\mu_{m,k}=1$)
\begin{equation}\label{eq:trivial_zipper}
  \begin{aligned}
  \begin{tikzpicture}[baseline=1mm]
    \def\d{0.6};
    \draw (-0.6,0) -- (2*\d+0.6,0);
    \draw (-0.6,0.4) -- (2*\d+0.6,0.4);
    \foreach \x in {0,2}{
      \draw (\x*\d,-0.4) -- (\x*\d,0.8);
      \node[mpo] (t) at (\x*\d,0) {};
      \node[mpo] (t) at (\x*\d,0.4) {};
    }
    \draw[fusion] (2*\d+0.6,0)--++(0,0.4);
    \node[fill=white] at (\d*1,0) {$\dots$};
    \node[fill=white] at (\d*1,0.4) {$\dots$};
    \node[irrep] at (-0.3,0) {$g^{-1}$};
    \node[irrep] at (-0.3,0.4) {$g$};
    \node[irrep] at (2*\d+0.3,0) {$g^{-1}$};
    \node[irrep] at (2*\d+0.3,0.4) {$g$};
    \draw[decorate, decoration=brace] (-0.1,0.9) -- (2*\d+0.1,0.9) node[midway,above] {$k$};
  \end{tikzpicture} \  =  \
  \begin{tikzpicture}[baseline=1mm]
    \def\d{0.6};
    \draw (-0.6,0) -- (2*\d+0.6,0);
    \draw (-0.6,0.4) -- (2*\d+0.6,0.4);
    \foreach \x in {0,2}{
      \draw (\x*\d,-0.4) -- (\x*\d,0.8);
      \node[mpo] (t) at (\x*\d,0) {};
      \node[mpo] (t) at (\x*\d,0.4) {};
    }
    \draw[fusion] (2*\d+0.6,0)--++(0,0.4);
    \node[fill=white] at (\d*1,0) {$\dots$};
    \node[fill=white] at (\d*1,0.4) {$\dots$};
    \node[irrep] at (-0.3,0) {$g^{-1}$};
    \node[irrep] at (-0.3,0.4) {$g$};
    \node[irrep] at (2*\d+0.3,0) {$g^{-1}$};
    \node[irrep] at (2*\d+0.3,0.4) {$g$};
    \draw[decorate, decoration=brace] (-0.1,0.9) -- (2*\d+0.1,0.9) node[midway,above] {$k-1$};
  \end{tikzpicture}  \otimes \id, \\
  \begin{tikzpicture}[baseline=1mm,xscale=-1]
    \def\d{0.6};
    \draw (-0.6,0) -- (2*\d+0.6,0);
    \draw (-0.6,0.4) -- (2*\d+0.6,0.4);
    \foreach \x in {0,2}{
      \draw (\x*\d,-0.4) -- (\x*\d,0.8);
      \node[mpo] (t) at (\x*\d,0) {};
      \node[mpo] (t) at (\x*\d,0.4) {};
    }
    \draw[fusion] (2*\d+0.6,0)--++(0,0.4);
    \node[fill=white] at (\d*1,0) {$\dots$};
    \node[fill=white] at (\d*1,0.4) {$\dots$};
    \node[irrep] at (-0.3,0) {$g^{-1}$};
    \node[irrep] at (-0.3,0.4) {$g$};
    \node[irrep] at (2*\d+0.3,0) {$g^{-1}$};
    \node[irrep] at (2*\d+0.3,0.4) {$g$};
    \draw[decorate, decoration={brace,mirror}] (-0.1,0.9) -- (2*\d+0.1,0.9) node[midway,above] {$k$};
  \end{tikzpicture} \  =  \
  \begin{tikzpicture}[baseline=1mm,xscale=-1]
    \def\d{0.6};
    \draw (-0.6,0) -- (2*\d+0.6,0);
    \draw (-0.6,0.4) -- (2*\d+0.6,0.4);
    \foreach \x in {0,2}{
      \draw (\x*\d,-0.4) -- (\x*\d,0.8);
      \node[mpo] (t) at (\x*\d,0) {};
      \node[mpo] (t) at (\x*\d,0.4) {};
    }
    \draw[fusion] (2*\d+0.6,0)--++(0,0.4);
    \node[fill=white] at (\d*1,0) {$\dots$};
    \node[fill=white] at (\d*1,0.4) {$\dots$};
    \node[irrep] at (-0.3,0) {$g^{-1}$};
    \node[irrep] at (-0.3,0.4) {$g$};
    \node[irrep] at (2*\d+0.3,0) {$g^{-1}$};
    \node[irrep] at (2*\d+0.3,0.4) {$g$};
    \draw[decorate, decoration={brace,mirror}] (-0.1,0.9) -- (2*\d+0.1,0.9) node[midway,above] {$k-1$};
  \end{tikzpicture}  \otimes \id,
  \end{aligned}
\end{equation}
for all $k>M$.  

Let us now define the  maps $\beta_g^{n,m,k}:\Mat_d^{\otimes n}\to \Mat_d^{\otimes (n+m+k)}$ for every $n$ and $m,k\geq M$ as
\begin{equation}\label{eq:beta_def_with_mps}
   \beta_g^{n,m,k} : \
  \begin{tikzpicture}
    \def\d{0.6};
    \draw (-\d,-0.5)--(-\d,0.5);
    \draw (0,-0.5)--(0,0.5);
    \draw (\d,-0.5)--(\d,0.5);
    \node[draw, fill=white,rounded corners, minimum width = \d*3cm] at (0,0) {$X$};
    \draw[decorate, decoration={brace}] (-\d-0.1,0.6) -- (\d+0.1,0.6) node[midway,above] {$n$};
  \end{tikzpicture} \ \mapsto \
  \begin{tikzpicture}
    \def\d{0.6};
    \draw (-\d*3-0.5,0.5) -- (\d*3+0.5,0.5);
    \draw (-\d*3-0.5,-0.5) -- (\d*3+0.5,-0.5);
    \node[irrep] at (-2.5*\d,0.5) {$g$};
    \node[irrep] at (-2.5*\d,-0.5) {$g^{-1}$};
    \foreach \x in {-3,-2,-1,0,1,2,3}{
      \draw (\d*\x,-0.9)--(\d*\x,0.9);
      \node[mpo] at (\d*\x,0.5) {};
      \node[mpo] at (\d*\x,-0.5) {};
    }
    \node[draw, fill=white,rounded corners, minimum width = \d*3cm] at (0,0) {$X$};
    \draw[fusion] (-\d*3-0.5,-0.5) -- (-\d*3-0.5,0.5);
    \draw[fusion] (\d*3+0.5,-0.5) -- (\d*3+0.5,0.5);
    \draw[decorate, decoration={brace}] (-3*\d-0.1,1) -- (-2*\d+0.1,1) node[midway,above] {$m$};
    \draw[decorate, decoration={brace}] (2*\d-0.1,1) -- (3*\d+0.1,1) node[midway,above] {$k$};
    \draw[decorate, decoration={brace}] (-1*\d-0.1,1) -- (1*\d+0.1,1) node[midway,above] {$n$};
  \end{tikzpicture} \ ,
\end{equation}
for any $X\in \Mat_d^{\otimes n}$, for any $n$. Through the application of \eqref{eq:trivial_zipper} (remember that we have shown that $\mu=1$), we see that $\beta^{n,m,k}$ behaves nicely when tensoring with the identity:
\begin{align*}
    \beta_g^{n+1,m,k}(\mathbb{1} \otimes X ) = \beta_g^{n,m+1,k} (X) = \mathbb{1} \otimes \beta_g^{n,m,k}(X), \\
    \beta_g^{n+1,m,k}(X\otimes\mathbb{1}) = \beta_g^{n,m,k+1}(X) = \beta_g^{n,m,k}(X)\otimes \mathbb{1}.
\end{align*}
Because of these equations we can define a single map $\beta_g$ acting on the algebra of local observables consistently. To denote $\beta_g:\mathcal{A}_{loc}\to\mathcal{A}_{loc}$ we will use the same graphical notation as for $\beta_g^{n,m,k}$, but without specifying $n,m$ and $k$. In the following we show that $\beta_g$ is a $*$-automorphism of $\mathcal{A}_{loc}$ that is norm-contractive and that $g\mapsto\beta_g$ is a representation of $G$. Strictly speaking, we prove all these properties for $\beta_g^{n,m,k}$, but the proof trivially lifts to $\beta_g$.

Notice now that $\beta_g(\mathbb{1}) = \mathbb{1}$. One can then easily see that this operation is an algebra isomorphism, as
\begin{align*}
  \beta_g(X)\beta_g(Y) =
  \begin{tikzpicture}[baseline=-8.5mm]
    \def\d{0.6};
    \draw (-\d*3-0.5,0.5) -- (\d*3+0.5,0.5);
    \draw (-\d*3-0.5,-0.5) -- (\d*3+0.5,-0.5);
    \draw (-\d*3-0.5,-1) -- (\d*3+0.5,-1);
    \draw (-\d*3-0.5,-2) -- (\d*3+0.5,-2);
    \node[irrep] at (-2.5*\d,0.5) {$g$};
    \node[irrep] at (-2.5*\d,-0.5) {$g^{-1}$};
    \node[irrep] at (-2.5*\d,-1) {$g$};
    \node[irrep] at (-2.5*\d,-2) {$g^{-1}$};
    \foreach \x in {-3,-2,-1,0,1,2,3}{
      \draw (\d*\x,-2.4)--(\d*\x,0.9);
      \node[mpo] at (\d*\x,0.5) {};
      \node[mpo] at (\d*\x,-0.5) {};
      \node[mpo] at (\d*\x,-1) {};
      \node[mpo] at (\d*\x,-2) {};
    }
    \node[draw, fill=white,rounded corners, minimum width = \d*3cm] at (0,0) {$X$};
    \node[draw, fill=white,rounded corners, minimum width = \d*3cm] at (0,-1.5) {$Y$};
    \draw[fusion] (-\d*3-0.5,-0.5) -- (-\d*3-0.5,0.5);
    \draw[fusion] (\d*3+0.5,-0.5) -- (\d*3+0.5,0.5);
    \draw[fusion] (-\d*3-0.5,-2) -- (-\d*3-0.5,-1);
    \draw[fusion] (\d*3+0.5,-2) -- (\d*3+0.5,-1);
  \end{tikzpicture} \ =
  \begin{tikzpicture}[baseline=-8.5mm]
    \def\d{0.6};
    \draw (-\d*3-0.5,0.5) -- (\d*3+0.5,0.5);
    \draw (-\d*3-0.5,-0.5) -- (-\d*1.5,-0.5);
    \draw (\d*1.5,-0.5) -- (\d*3+0.5,-0.5);
    \draw (-\d*3-0.5,-1) -- (-\d*1.5,-1);
    \draw (\d*1.5,-1) -- (\d*3+0.5,-1);
    \draw (-\d*3-0.5,-2) -- (\d*3+0.5,-2);
    \node[irrep] at (-2.5*\d,0.5) {$g$};
    \node[irrep] at (-2.5*\d,-0.5) {$g^{-1}$};
    \node[irrep] at (-2.5*\d,-1) {$g$};
    \node[irrep] at (-2.5*\d,-2) {$g^{-1}$};
    \foreach \x in {-3,-2,-1,0,1,2,3}{
      \draw (\d*\x,-2.4)--(\d*\x,0.9);
      \node[mpo] at (\d*\x,0.5) {};
      \node[mpo] at (\d*\x,-2) {};
    }
    \foreach \x in {-3,-2,2,3}{
      \node[mpo] at (\d*\x,-0.5) {};
      \node[mpo] at (\d*\x,-1) {};
    }
    \node[draw, fill=white,rounded corners, minimum width = \d*3cm] at (0,0) {$X$};
    \node[draw, fill=white,rounded corners, minimum width = \d*3cm] at (0,-1.5) {$Y$};
    \draw[fusion] (-\d*3-0.5,-0.5) -- (-\d*3-0.5,0.5);
    \draw[fusion] (\d*3+0.5,-0.5) -- (\d*3+0.5,0.5);
    \draw[fusion] (-\d*3-0.5,-2) -- (-\d*3-0.5,-1);
    \draw[fusion] (\d*3+0.5,-2) -- (\d*3+0.5,-1);
    \draw[fusion] (-\d*1.5,-1) -- (-\d*1.5,-0.5);
    \draw[fusion] (\d*1.5,-1) -- (\d*1.5,-0.5);
  \end{tikzpicture}  = \\
  \begin{tikzpicture}[baseline=-8.5mm]
    \def\d{0.6};
    \draw (-\d*3-0.5,0.5) -- (\d*3+0.5,0.5);
    \draw (-\d*3-0.5,-0.5) -- (-\d*1.5,-0.5);
    \draw (\d*1.5,-0.5) -- (\d*3+0.5,-0.5);
    \draw (-\d*3-0.5,-1) -- (-\d*1.5,-1);
    \draw (\d*1.5,-1) -- (\d*3+0.5,-1);
    \draw (-\d*3-0.5,-1.5) -- (\d*3+0.5,-1.5);
    \node[irrep] at (-2.5*\d,0.5) {$g$};
    \node[irrep] at (-2.5*\d,-0.5) {$g^{-1}$};
    \node[irrep] at (-2.5*\d,-1) {$g$};
    \node[irrep] at (-2.5*\d,-1.5) {$g^{-1}$};
    \foreach \x in {-3,-2,-1,0,1,2,3}{
      \draw (\d*\x,-1.9)--(\d*\x,0.9);
      \node[mpo] at (\d*\x,0.5) {};
      \node[mpo] at (\d*\x,-1.5) {};
    }
    \foreach \x in {-3,-2,2,3}{
      \node[mpo] at (\d*\x,-0.5) {};
      \node[mpo] at (\d*\x,-1) {};
    }
    \node[draw, fill=white,rounded corners, minimum width = \d*3cm] at (0,0) {$X\cdot Y$};
    \draw[fusion] (-\d*3-0.5,-0.5) -- (-\d*3-0.5,0.5);
    \draw[fusion] (\d*3+0.5,-0.5) -- (\d*3+0.5,0.5);
    \draw[fusion] (-\d*3-0.5,-1.5) -- (-\d*3-0.5,-1);
    \draw[fusion] (\d*3+0.5,-1.5) -- (\d*3+0.5,-1);
    \draw[fusion] (-\d*1.5,-1) -- (-\d*1.5,-0.5);
    \draw[fusion] (\d*1.5,-1) -- (\d*1.5,-0.5);
  \end{tikzpicture}  =
  \begin{tikzpicture}[baseline=-8.5mm]
    \def\d{0.6};
    \draw (-\d*3-0.5,0.5) -- (\d*3+0.5,0.5);
    \draw (-\d*3-0.5,-0.5) -- (\d*3+0.5,-0.5);
    \draw (-\d*3-0.5,-1) -- (\d*3+0.5,-1);
    \draw (-\d*3-0.5,-1.5) -- (\d*3+0.5,-1.5);
    \node[irrep] at (-2.5*\d,0.5) {$g$};
    \node[irrep] at (-2.5*\d,-0.5) {$g^{-1}$};
    \node[irrep] at (-2.5*\d,-1) {$g$};
    \node[irrep] at (-2.5*\d,-1.5) {$g^{-1}$};
    \foreach \x in {-3,-2,-1,0,1,2,3}{
      \draw (\d*\x,-1.9)--(\d*\x,0.9);
      \node[mpo] at (\d*\x,0.5) {};
      \node[mpo] at (\d*\x,-0.5) {};
      \node[mpo] at (\d*\x,-1) {};
      \node[mpo] at (\d*\x,-1.5) {};
    }
    \node[draw, fill=white,rounded corners, minimum width = \d*3cm] at (0,0) {$X\cdot Y$};
    \draw[fusion] (-\d*3-0.5,-0.5) -- (-\d*3-0.5,0.5);
    \draw[fusion] (\d*3+0.5,-0.5) -- (\d*3+0.5,0.5);
    \draw[fusion] (-\d*3-0.5,-1.5) -- (-\d*3-0.5,-1);
    \draw[fusion] (\d*3+0.5,-1.5) -- (\d*3+0.5,-1);
  \end{tikzpicture} \ = \beta_g(XY)
\end{align*}
where in the second equality we have used \eqref{eq:simple_VW_factorize}, and in the fourth one we have used it again in the other direction. In the last equality we have used $\beta_g(\mathbb{1})=\mathbb{1}$.

Let us check that $g\mapsto \beta_g$ is a group representation. First, $\beta_1=\id$ trivially. Second, we need to check that $\beta_g(\beta_h(X)) = \beta_{gh}(X)$. We can write $\beta_g(\beta_h(X))$ as
\begin{equation}
   \beta_g(\beta_h(X)) =
  \begin{tikzpicture}
    \def\d{0.6};
    \foreach \y/\x/\l in {-0.9/1/{g^{-1}},-0.5/0.5/{h^{-1}},0.5/0.5/{h},0.9/1/{g}}{
      \draw (-\d*3-\x,\y) -- (\d*3+\x,\y);
    \node[irrep] at (-2.5*\d,\y) {$\l$};
    }
    \foreach \x in {-3,-2,-1,0,1,2,3}{
      \draw (\d*\x,-1.3)--(\d*\x,1.3);
      \node[mpo] at (\d*\x,0.5) {};
      \node[mpo] at (\d*\x,-0.5) {};
      \node[mpo] at (\d*\x,0.9) {};
      \node[mpo] at (\d*\x,-0.9) {};
    }
    \node[draw, fill=white,rounded corners, minimum width = \d*3cm] at (0,0) {$X$};
    \draw[fusion] (-\d*3-0.5,-0.5) -- (-\d*3-0.5,0.5);
    \draw[fusion] (\d*3+0.5,-0.5) -- (\d*3+0.5,0.5);
    \draw[fusion] (-\d*3-1,-0.9) -- (-\d*3-1,0.9);
    \draw[fusion] (\d*3+1,-0.9) -- (\d*3+1,0.9);
  \end{tikzpicture} \ .
\end{equation}
Notice that, using Eq.~\eqref{eq:mps_cocycle_def}, we obtain
\begin{align*}
  \begin{tikzpicture}[xscale=-1]
    \def\d{0.6};
    \foreach \y/\x/\l in {-0.6/1/{g^{-1}},-0.2/0.5/{h^{-1}},0.2/0.5/{h},0.6/1/{g}}{
      \draw (-\x,\y) -- (\d+0.4,\y);
      \node[irrep] at (0.5*\d,\y) {$\l$};
    }
    \foreach \x in {0,1}{
      \draw (\d*\x,-0.9)--(\d*\x,0.9);
      \node[mpo] at (\d*\x,0.2) {};
      \node[mpo] at (\d*\x,-0.2) {};
      \node[mpo] at (\d*\x,0.6) {};
      \node[mpo] at (\d*\x,-0.6) {};
    }
    \draw[fusion] (-0.5,-0.2) -- (-0.5,0.2);
    \draw[fusion] (-1,-0.6) -- (-1,0.6);
  \end{tikzpicture} \ =
  \begin{tikzpicture}[xscale=-1]
    \def\d{0.6};
    \foreach \y/\x/\l in {-0.6/1/{g^{-1}},-0.2/0.5/{h^{-1}},0.2/0.5/{h},0.6/1.5/{g}}{
      \draw (-\x,\y) -- (\d+0.4,\y);
      \node[irrep] at (0.5*\d,\y) {$\l$};
    }
    \foreach \x in {0,1}{
      \draw (\d*\x,-0.9)--(\d*\x,0.9);
      \node[mpo] at (\d*\x,0.2) {};
      \node[mpo] at (\d*\x,-0.2) {};
      \node[mpo] at (\d*\x,0.6) {};
      \node[mpo] at (\d*\x,-0.6) {};
    }
    \draw[fusion] (-0.5,-0.2) -- (-0.5,0.2);
    \draw[densely dotted] (-0.5,0) --++(-0.5,0);
    \draw[fusion] (-1,-0.6) -- (-1,0);
    \draw (-1,-0.3) --++(-0.5,0);
    \draw[fusion] (-1.5,-0.3) -- (-1.5,0.6);
  \end{tikzpicture} \ = \frac{1}{\omega(h,h^{-1},g^{-1})}
  \begin{tikzpicture}[xscale=-1]
    \def\d{0.6};
    \foreach \y/\x/\l in {-0.6/0.5/{g^{-1}},-0.2/0.5/{h^{-1}},0.2/1/{h},0.6/1.5/{g}}{
      \draw (-\x,\y) -- (\d+0.4,\y);
      \node[irrep] at (0.5*\d,\y) {$\l$};
    }
    \foreach \x in {0,1}{
      \draw (\d*\x,-0.9)--(\d*\x,0.9);
      \node[mpo] at (\d*\x,0.2) {};
      \node[mpo] at (\d*\x,-0.2) {};
      \node[mpo] at (\d*\x,0.6) {};
      \node[mpo] at (\d*\x,-0.6) {};
    }
    \draw[fusion] (-0.5,-0.6) -- (-0.5,-0.2);
    \draw (-0.5,-0.4) --++(-0.5,0);
    \draw[fusion] (-1,-0.4) -- (-1,0.2);
    \draw (-1,-0.1) --++(-0.5,0);
    \draw[fusion] (-1.5,-0.1) -- (-1.5,0.6);
  \end{tikzpicture} \ ,
\end{align*}
and thus, using Eq.~\eqref{eq:mps_cocycle_def} again, we obtain that
\begin{align*}
  \begin{tikzpicture}[xscale=-1]
    \def\d{0.6};
    \foreach \y/\x/\l in {-0.6/1/{g^{-1}},-0.2/0.5/{h^{-1}},0.2/0.5/{h},0.6/1/{g}}{
      \draw (-\x,\y) -- (\d+0.4,\y);
      \node[irrep] at (0.5*\d,\y) {$\l$};
    }
    \foreach \x in {0,1}{
      \draw (\d*\x,-0.9)--(\d*\x,0.9);
      \node[mpo] at (\d*\x,0.2) {};
      \node[mpo] at (\d*\x,-0.2) {};
      \node[mpo] at (\d*\x,0.6) {};
      \node[mpo] at (\d*\x,-0.6) {};
    }
    \draw[fusion] (-0.5,-0.2) -- (-0.5,0.2);
    \draw[fusion] (-1,-0.6) -- (-1,0.6);
  \end{tikzpicture} \ =
  \frac{\omega(g,h,h^{-1})}{\omega(h,h^{-1},g^{-1})}
  \begin{tikzpicture}[xscale=-1]
    \def\d{0.6};
    \foreach \y/\x/\l in {-0.6/0.5/{g^{-1}},-0.2/0.5/{h^{-1}},0.2/0.5/{h},0.6/0.5/{g}}{
      \draw (-\x,\y) -- (\d+0.4,\y);
      \node[irrep] at (0.5*\d,\y) {$\l$};
    }
    \foreach \x in {0,1}{
      \draw (\d*\x,-0.9)--(\d*\x,0.9);
      \node[mpo] at (\d*\x,0.2) {};
      \node[mpo] at (\d*\x,-0.2) {};
      \node[mpo] at (\d*\x,0.6) {};
      \node[mpo] at (\d*\x,-0.6) {};
    }
    \draw[fusion] (-0.5,-0.6) -- (-0.5,-0.2);
    \draw (-0.5,-0.4) --++(-0.5,0);
    \draw[fusion] (-0.5,0.6) -- (-0.5,0.2);
    \draw (-0.5,0.4) --++(-0.5,0);
    \draw[fusion] (-1,-0.4) -- (-1,0.4);
  \end{tikzpicture} \ .
\end{align*}
Similarly,
\begin{align*}
  \begin{tikzpicture}[xscale=1]
    \def\d{0.6};
    \foreach \y/\x/\l in {-0.6/1/{g^{-1}},-0.2/0.5/{h^{-1}},0.2/0.5/{h},0.6/1/{g}}{
      \draw (-\x,\y) -- (\d+0.4,\y);
      \node[irrep] at (0.5*\d,\y) {$\l$};
    }
    \foreach \x in {0,1}{
      \draw (\d*\x,-0.9)--(\d*\x,0.9);
      \node[mpo] at (\d*\x,0.2) {};
      \node[mpo] at (\d*\x,-0.2) {};
      \node[mpo] at (\d*\x,0.6) {};
      \node[mpo] at (\d*\x,-0.6) {};
    }
    \draw[fusion] (-0.5,-0.2) -- (-0.5,0.2);
    \draw[fusion] (-1,-0.6) -- (-1,0.6);
  \end{tikzpicture} \ =
  \frac{\omega(h,h^{-1},g^{-1})}{\omega(g,h,h^{-1})}
  \begin{tikzpicture}[xscale=1]
    \def\d{0.6};
    \foreach \y/\x/\l in {-0.6/0.5/{g^{-1}},-0.2/0.5/{h^{-1}},0.2/0.5/{h},0.6/0.5/{g}}{
      \draw (-\x,\y) -- (\d+0.4,\y);
      \node[irrep] at (0.5*\d,\y) {$\l$};
    }
    \foreach \x in {0,1}{
      \draw (\d*\x,-0.9)--(\d*\x,0.9);
      \node[mpo] at (\d*\x,0.2) {};
      \node[mpo] at (\d*\x,-0.2) {};
      \node[mpo] at (\d*\x,0.6) {};
      \node[mpo] at (\d*\x,-0.6) {};
    }
    \draw[fusion] (-0.5,-0.6) -- (-0.5,-0.2);
    \draw (-0.5,-0.4) --++(-0.5,0);
    \draw[fusion] (-0.5,0.6) -- (-0.5,0.2);
    \draw (-0.5,0.4) --++(-0.5,0);
    \draw[fusion] (-1,-0.4) -- (-1,0.4);
  \end{tikzpicture} \ .
\end{align*}
Therefore, 
\begin{equation*}
   \beta_g(\beta_h(X)) =
  \begin{tikzpicture}
    \def\d{0.6};
    \foreach \y/\x/\l in {-0.9/0.5/{g^{-1}},-0.5/0.5/{h^{-1}},0.5/0.5/{h},0.9/0.5/{g}}{
      \draw (-\d*3-\x,\y) -- (\d*3+\x,\y);
    \node[irrep] at (-2.5*\d,\y) {$\l$};
    }
    \foreach \x in {-3,-2,-1,0,1,2,3}{
      \draw (\d*\x,-1.3)--(\d*\x,1.3);
      \node[mpo] at (\d*\x,0.5) {};
      \node[mpo] at (\d*\x,-0.5) {};
      \node[mpo] at (\d*\x,0.9) {};
      \node[mpo] at (\d*\x,-0.9) {};
    }
    \node[draw, fill=white,rounded corners, minimum width = \d*3cm] at (0,0) {$X$};
    \draw[fusion] (-\d*3-0.5,-0.9) -- (-\d*3-0.5,-0.5);
    \draw (-\d*3-0.5,-0.7) --++ (-0.5,0);
    \draw[fusion] (\d*3+0.5,-0.9) -- (\d*3+0.5,-0.5);
    \draw (\d*3+0.5,-0.7) --++ (0.5,0);
    \draw[fusion] (-\d*3-0.5,0.5) -- (-\d*3-0.5,0.9);
    \draw (-\d*3-0.5,0.7) --++ (-0.5,0);
    \draw[fusion] (\d*3+0.5,0.5) -- (\d*3+0.5,0.9);
    \draw (\d*3+0.5,0.7) --++ (0.5,0);
    \draw[fusion] (-\d*3-1,-0.7) -- (-\d*3-1,0.7);
    \draw[fusion] (\d*3+1,-0.7) -- (\d*3+1,0.7);
  \end{tikzpicture} \ = \beta_{gh}(X),
\end{equation*}
where in the second equality we have used \eqref{eq:fusion_tensors}.

Let us show now that the operation $\beta_g$ is a $*$-isomorphism and that it is norm-contractive. 
Combining \eqref{eq:trivial_zipper} with \eqref{eq:conjugate_gauge}, we obtain
\begin{equation}\label{eq:gray_g_g_zipper}
  \begin{aligned}
  \begin{tikzpicture}[baseline=1mm]
    \def\d{0.6};
    \draw (-0.6,0) -- (3*\d,0);
    \draw (-0.6,0.4) -- (3*\d,0.4);
    \foreach \x in {0,2}{
      \draw (\x*\d,-0.4) -- (\x*\d,0.8);
      \node[cmpo] (t) at (\x*\d,0) {};
      \node[mpo] (t) at (\x*\d,0.4) {};
    }
    \draw[fusion,gray] (3*\d,0)--++(0,0.4);
    \node[fill=white] at (\d*1,0) {$\dots$};
    \node[fill=white] at (\d*1,0.4) {$\dots$};
    \node[irrep] at (-0.3,0) {$g$};
    \node[irrep] at (-0.3,0.4) {$g$};
    \node[irrep] at (2.5*\d,0) {$g$};
    \node[irrep] at (2.5*\d,0.4) {$g$};
    \draw[decorate, decoration={brace}] (-0.1,0.9) -- (2*\d+0.1,0.9) node[midway,above] {$k$};
  \end{tikzpicture} \  &= \
  \begin{tikzpicture}[baseline=1mm]
    \def\d{0.6};
    \draw (-0.6,0) -- (3*\d,0);
    \draw (-0.6,0.4) -- (3*\d,0.4);
    \foreach \x in {0,2}{
      \draw (\x*\d,-0.4) -- (\x*\d,0.8);
      \node[cmpo] (t) at (\x*\d,0) {};
      \node[mpo] (t) at (\x*\d,0.4) {};
    }
    \draw[fusion,gray] (3*\d,0)--++(0,0.4);
    \node[fill=white] at (\d*1,0) {$\dots$};
    \node[fill=white] at (\d*1,0.4) {$\dots$};
    \node[irrep] at (-0.3,0) {$g$};
    \node[irrep] at (-0.3,0.4) {$g$};
    \node[irrep] at (2.5*\d,0) {$g$};
    \node[irrep] at (2.5*\d,0.4) {$g$};
    \draw[decorate, decoration={brace}] (-0.1,0.9) -- (2*\d+0.1,0.9) node[midway,above] {$k-1$};
  \end{tikzpicture}  \otimes \mathbb 1, \\
  \begin{tikzpicture}[baseline=1mm,xscale=-1]
    \def\d{0.6};
    \draw (-0.6,0) -- (3*\d,0);
    \draw (-0.6,0.4) -- (3*\d,0.4);
    \foreach \x in {0,2}{
      \draw (\x*\d,-0.4) -- (\x*\d,0.8);
      \node[cmpo] (t) at (\x*\d,0) {};
      \node[mpo] (t) at (\x*\d,0.4) {};
    }
    \draw[fusion,gray] (3*\d,0)--++(0,0.4);
    \node[fill=white] at (\d*1,0) {$\dots$};
    \node[fill=white] at (\d*1,0.4) {$\dots$};
    \node[irrep] at (-0.3,0) {$g$};
    \node[irrep] at (-0.3,0.4) {$g$};
    \node[irrep] at (2.5*\d,0) {$g$};
    \node[irrep] at (2.5*\d,0.4) {$g$};
    \draw[decorate, decoration={brace,mirror}] (-0.1,0.9) -- (2*\d+0.1,0.9) node[midway,above] {$k$};
  \end{tikzpicture} \  &=  \mathbb 1 \otimes \
  \begin{tikzpicture}[baseline=1mm,xscale=-1]
    \def\d{0.6};
    \draw (-0.6,0) -- (3*\d,0);
    \draw (-0.6,0.4) -- (3*\d,0.4);
    \foreach \x in {0,2}{
      \draw (\x*\d,-0.4) -- (\x*\d,0.8);
      \node[cmpo] (t) at (\x*\d,0) {};
      \node[mpo] (t) at (\x*\d,0.4) {};
    }
    \draw[fusion,gray] (3*\d,0)--++(0,0.4);
    \node[fill=white] at (\d*1,0) {$\dots$};
    \node[fill=white] at (\d*1,0.4) {$\dots$};
    \node[irrep] at (-0.3,0) {$g$};
    \node[irrep] at (-0.3,0.4) {$g$};
    \node[irrep] at (2.5*\d,0) {$g$};
    \node[irrep] at (2.5*\d,0.4) {$g$};
    \draw[decorate, decoration={brace,mirror}] (-0.1,0.9) -- (2*\d+0.1,0.9) node[midway,above] {$k-1$};
  \end{tikzpicture} ,
  \end{aligned}
\end{equation}
for all $k$ large enough where we have introduced
\begin{equation}\label{eq:gray_fusion}
    \begin{tikzpicture}[xscale=0.5,yscale=0.25,baseline=-1mm]
        \draw (0,-1) --++ (-1,0);
        \draw (0,1) --++ (-1,0);
        \draw[fusion,gray] (0,-1) -- (0,1);
        \node[irrep] at (-0.5,-1) {$g$};
        \node[irrep] at (-0.5,1) {$g$};
    \end{tikzpicture} = 
    \begin{tikzpicture}[xscale=0.5,yscale=0.25,baseline=-1mm]
        \draw (0,1) --++ (-2.2,0);
        \draw (0,-1) --++ (-2.2,0);
         \node[mpo,fill=gray,label=below:$X_g$] at (-1.2,-1) {};
        \draw[fusion] (0,-1) -- (0,1);
        \node[irrep] at (-0.5,1) {$g$};
        \node[irrep] at (-0.6,-1) {$g^{-1}$};
        \node[irrep] at (-1.7,-1) {$g$};
    \end{tikzpicture}
    \quad \text{and} \quad  
    \begin{tikzpicture}[xscale=-0.5,yscale=0.25,baseline=-1mm]
        \draw (0,-1) --++ (-1,0);
        \draw (0,1) --++ (-1,0);
        \draw[fusion,gray] (0,-1) -- (0,1);
        \node[irrep] at (-0.5,-1) {$g$};
        \node[irrep] at (-0.5,1) {$g$};
    \end{tikzpicture} = 
    \begin{tikzpicture}[xscale=-0.5,yscale=0.25,baseline=-1mm]
        \draw (0,1) --++ (-2.2,0);
        \draw (0,-1) --++ (-2.2,0);
         \node[mpo,fill=gray,label=below:$X_g$] at (-1.2,-1) {};
        \draw[fusion] (0,-1) -- (0,1);
        \node[irrep] at (-0.5,1) {$g$};
        \node[irrep] at (-0.6,-1) {$g^{-1}$};
        \node[irrep] at (-1.7,-1) {$g$};
    \end{tikzpicture} \ .
\end{equation}
Tracing the physical indices in this equation, we obtain that the matrices 
\begin{equation}\label{eq:rho_l_rho_r}
  \begin{tikzpicture}[xscale=-1]
    \draw (0.2,0.2) -- (0,0.2)--(0,-0.2)--(0.2,-0.2);
    \node[tensor, label=right:$\rho_g^r$] at (0,0) {};
  \end{tikzpicture} =\frac{1}{d^M}\
  \begin{tikzpicture}[baseline=1mm]
    \def\d{0.6};
    \draw (-0.6,0) -- (3*\d,0);
    \draw (-0.6,0.4) -- (3*\d,0.4);
    \foreach \x in {0,2}{
      \draw (\x*\d,-0.4) rectangle (\x*\d-0.2,0.8);
      \node[cmpo] (t) at (\x*\d,0) {};
      \node[mpo] (t) at (\x*\d,0.4) {};
    }
    \draw[fusion,gray] (3*\d,0)--++(0,0.4);
    \node[fill=white] at (\d*1,0) {$\dots$};
    \node[fill=white] at (\d*1,0.4) {$\dots$};
    \node[irrep] at (-0.3,0) {$g$};
    \node[irrep] at (-0.3,0.4) {$g$};
    \node[irrep] at (2*\d+0.3,0) {$g$};
    \node[irrep] at (2*\d+0.3,0.4) {$g$};
    \draw[decorate, decoration={brace}] (-0.3,0.9) -- (2*\d+0.1,0.9) node[midway,above] {$M$};
  \end{tikzpicture} \quad \text{and} \quad
  \begin{tikzpicture}
    \draw (0.2,0.2) -- (0,0.2)--(0,-0.2)--(0.2,-0.2);
    \node[tensor, label=left:$\rho_g^l$] at (0,0) {};
  \end{tikzpicture} = \frac{1}{d^{M}}\
  \begin{tikzpicture}[baseline=1mm,xscale=-1]
    \def\d{0.6};
    \draw (-0.6,0) -- (3*\d,0);
    \draw (-0.6,0.4) -- (3*\d,0.4);
    \foreach \x in {0,2}{
      \draw (\x*\d,-0.4) rectangle (\x*\d-0.2,0.8);
      \node[cmpo] (t) at (\x*\d,0) {};
      \node[mpo] (t) at (\x*\d,0.4) {};
    }
    \draw[fusion,gray] (3*\d,0)--++(0,0.4);
    \node[fill=white] at (\d*1,0) {$\dots$};
    \node[fill=white] at (\d*1,0.4) {$\dots$};
    \node[irrep] at (-0.3,0) {$g$};
    \node[irrep] at (-0.3,0.4) {$g$};
    \node[irrep] at (2*\d+0.3,0) {$g$};
    \node[irrep] at (2*\d+0.3,0.4) {$g$};
    \draw[decorate, decoration={brace,mirror}] (-0.3,0.9) -- (2*\d+0.1,0.9) node[midway,above] {$M$};
  \end{tikzpicture}
\end{equation}
are right and left eigenvectors of the transfer matrix $T$ of $u(g)$, respectively, with eigenvalue $d$. Therefore they are proportional to the right and left Frobenius-Perron eigenvectors of $T$. Notice that $\tr \rho_g^l\rho_g^r = 1$, and thus, in fact,
\begin{equation}\label{eq:boundary_positive}
    \rho^l_g \otimes \rho^r_g = \rho_g \otimes \mathbb{1}.
\end{equation}
Using the phase degree of freedom of $X_g$, we can also assume that $\rho^l_g = \rho_g$ and $\rho^r_g =  \mathbb 1$. Notice now that using \eqref{eq:conjugate_gauge} in \eqref{eq:beta_def_with_mps} and tracing out a sufficient amount of physical indices at the two ends, we can equally write
\begin{equation*}
  \beta_g(X) =
  \begin{tikzpicture}
    \def\d{0.6};
    \draw (-\d*3-0.5,0.5) rectangle (\d*3+0.5,-0.5);
    \node[irrep] at (-2.5*\d,0.5) {$g$};
    \node[irrep] at (-2.5*\d,-0.5) {$g$};
    \foreach \x in {-3,-2,-1,0,1,2,3}{
      \draw (\d*\x,-0.9)--(\d*\x,0.9);
      \node[mpo] at (\d*\x,0.5) {};
      \node[cmpo] at (\d*\x,-0.5) {};
    }
    \node[draw, fill=white,rounded corners, minimum width = \d*3cm] at (0,0) {$X$};
    \node[tensor,label=left:$\rho_g$] at (-\d*3-0.5,0) {};
  \end{tikzpicture} \ .
\end{equation*}
As $\rho_g $ is positive,  $\beta_g(X)$ can be written in the form $\beta_g(X) = \sum_i U_{g,i} X U_{g,i}^*$ with $\sum_i U_{g,i} U_{g,i}^\dagger = \id$. This means that
\begin{equation*}
    \beta_g(X^*) = \left(\sum_i U_{g,i} X U_{g,i}^*\right)^* = \sum_i U_{g,i} X^* U_{g,i}^* = \beta_g(X)^*,
\end{equation*}
i.e.\ that it is a $*$-isomorphism, and that
\begin{equation*}
    \|\beta_g(X)\| \leq \|\beta_g(\mathbb{1})\| \cdot \|X\|  = \|X\|,
\end{equation*}
i.e.\ that it is norm-contractive.

As $\beta_g$ is norm-contractive, it can be extended to the whole set of quasi-local operators such that it remains an automorphism and a group representation.  We call this map $\beta_g$ the conjugation by $U_g$.

\subsubsection{The \texorpdfstring{$\beta_g$}{beta\_g} defined from MPUs decomposes locally}

Let us show now that $\beta_g$ satisfies
\begin{equation}
  \beta_g  = \aut(w_g) \cdot (\beta_{g,L} \otimes \beta_{g,R}),
\end{equation}
for some unitary $w_g$. For that, let $L$ be larger than $M$ and let us define a new MPU tensor $u^{(L)}(g) \in \Mat_d^{\otimes L} \otimes \Mat_D$ as
\begin{equation*}
  u^{(L)}(g) = \sum_{\alpha} u_{\alpha_1\alpha_2}(g) \otimes \dots \otimes u_{\alpha_L\alpha_{L+1}}(g) \otimes \ket{\alpha_1}\bra{\alpha_{L+1}},
\end{equation*}
or graphically,
\begin{equation*}
  \begin{tikzpicture}
    \draw (-0.4,0) -- (0.4,0);
    \draw[very thick] (0,-0.4) -- (0,0.4);
    \node[mpo] at (0,0) {};
  \end{tikzpicture} =
  \begin{tikzpicture}
    \def\d{0.6};
    \draw (-0.4,0) -- (3*\d+0.4,0);
    \foreach \x in {0,\d,3*\d}{
      \draw (\x,-0.4) -- (\x,0.4);
      \node[mpo] at (\x,0) {};
    }
    \node[fill=white] at (2*\d,0) {$\dots$};
  \end{tikzpicture} \ .
\end{equation*}
The conjugate tensor is defined similarly:
\begin{equation*}
  \begin{tikzpicture}
    \draw (-0.4,0) -- (0.4,0);
    \draw[very thick] (0,-0.4) -- (0,0.4);
    \node[cmpo] at (0,0) {};
  \end{tikzpicture} =
  \begin{tikzpicture}
    \def\d{0.6};
    \draw (-0.4,0) -- (3*\d+0.4,0);
    \foreach \x in {0,\d,3*\d}{
      \draw (\x,-0.4) -- (\x,0.4);
      \node[cmpo] at (\x,0) {};
    }
    \node[fill=white] at (2*\d,0) {$\dots$};
  \end{tikzpicture} \ .
\end{equation*}
Considering \eqref{eq:simple_VW_factorize} once with $n=0$ and once with $n=2M$, we obtain that these tensors satisfy
\begin{equation*}
  \begin{aligned}
  \begin{tikzpicture}[baseline=1.5mm,xscale=0.6]
    \draw (-0.5,0) -- (4.5,0);
    \draw (-0.5,0.5) -- (4.5,0.5);
    \foreach \x/\s in {0/very thick,1/,3/,4/very thick}{
        \draw[\s] (\x,-0.4)-- (\x,0.9);
        \node[cmpo] at (\x,0) {};
        \node[mpo] at (\x,0.5) {};
    }
    \node[fill=white] at (2,0) {$\dots$};
    \node[fill=white] at (2,0.5) {$\dots$};
    \draw[decorate, decoration={brace}] (1-0.1,1) -- (3+0.1,1) node[midway,above] {$2M$};
  \end{tikzpicture} =
  \begin{tikzpicture}[xscale=0.6,baseline=1.5mm]
    \draw (-0.5,0) -- (3.7,0);
    \draw (-0.5,0.5) -- (3.7,0.5);
    \foreach \x/\s in {0/very thick,1/,3/}{
        \draw[\s] (\x,-0.4)-- (\x,0.9);
        \node[cmpo] at (\x,0) {};
        \node[mpo] at (\x,0.5) {};
    }
    \node[fill=white] at (2,0) {$\dots$};
    \node[fill=white] at (2,0.5) {$\dots$};
    \draw[decorate, decoration={brace}] (1-0.1,1) -- (3+0.1,1) node[midway,above] {$M$};
    \draw[fusion,gray] (3.7,0) --(3.7,0.5);
  \end{tikzpicture} \otimes
  \begin{tikzpicture}[baseline=1.5mm,xscale=-0.6]
    \draw (-0.5,0) -- (3.7,0);
    \draw (-0.5,0.5) -- (3.7,0.5);
    \foreach \x/\s in {0/very thick,1/,3/}{
        \draw[\s] (\x,-0.4)-- (\x,0.9);
        \node[cmpo] at (\x,0) {};
        \node[mpo] at (\x,0.5) {};
    }
    \node[fill=white] at (2,0) {$\dots$};
    \node[fill=white] at (2,0.5) {$\dots$};
    \draw[decorate, decoration={brace,mirror}] (1-0.1,1) -- (3+0.1,1) node[midway,above] {$M$};
    \draw[fusion,gray] (3.7,0) --(3.7,0.5);
  \end{tikzpicture} =
  \begin{tikzpicture}[baseline=1.5mm,xscale=0.6]
    \draw (-0.5,0) -- (0.7,0);
    \draw (-0.5,0.5) -- (0.7,0.5);
    \foreach \x in {0}{
        \draw[very thick] (\x,-0.4)-- (\x,0.9);
        \node[cmpo] at (\x,0) {};
        \node[mpo] at (\x,0.5) {};
    }
    \draw[fusion,gray] (0.7,0) --(0.7,0.5);
  \end{tikzpicture} \otimes
  \mathbb{1}^{2M} \otimes
  \begin{tikzpicture}[xscale=-0.6,baseline=1.5mm]
    \draw (-0.5,0) -- (0.7,0);
    \draw (-0.5,0.5) -- (0.7,0.5);
    \foreach \x in {0}{
        \draw[very thick] (\x,-0.4)-- (\x,0.9);
        \node[cmpo] at (\x,0) {};
        \node[mpo] at (\x,0.5) {};
    }
    \draw[fusion,gray] (0.7,0) --(0.7,0.5);
  \end{tikzpicture} \ .
  \end{aligned}
\end{equation*}
Tracing out the middle $2M$ physical index in the last equality and dividing by $d^{2M}$, using Eq.~\eqref{eq:rho_l_rho_r} together with the fact that $\rho_g^l=\rho_g$ and $\rho_g^r=\mathbb{1}$ we obtain that
\begin{equation}\label{eq:factorize_after_blocking}
  \begin{tikzpicture}[baseline=1.5mm]
    \def\d{1.8};
    \draw[very thick] (0,-0.4)-- (0,0.9);
    \draw[very thick] (\d,-0.4)-- (\d,0.9);
    \draw (-0.5,0) -- (0.5,0) -- (0.5,0.5)-- (-0.5,0.5);
    \draw (\d+0.5,0) -- (\d-0.5,0) -- (\d-0.5,0.5) node[tensor,midway,label=left:$\rho_g$] {} -- (\d+0.5,0.5);
    \node[cmpo] at (0,0) {};
    \node[cmpo] at (\d,0) {};
    \node[mpo] at (0,0.5) {};
    \node[mpo] at (\d,0.5) {};
  \end{tikzpicture} \  =
  \begin{tikzpicture}[baseline=1.5mm,xscale=0.6]
    \draw (-0.5,0) -- (0.7,0);
    \draw (-0.5,0.5) -- (0.7,0.5);
    \foreach \x in {0}{
        \draw[very thick] (\x,-0.4)-- (\x,0.9);
        \node[cmpo] at (\x,0) {};
        \node[mpo] at (\x,0.5) {};
    }
    \draw[fusion,gray] (0.7,0) --(0.7,0.5);
  \end{tikzpicture} \otimes
  \begin{tikzpicture}[xscale=-0.6,baseline=1.5mm]
    \draw (-0.5,0) -- (0.7,0);
    \draw (-0.5,0.5) -- (0.7,0.5);
    \foreach \x in {0}{
        \draw[very thick] (\x,-0.4)-- (\x,0.9);
        \node[cmpo] at (\x,0) {};
        \node[mpo] at (\x,0.5) {};
    }
    \draw[fusion,gray] (0.7,0) --(0.7,0.5);
  \end{tikzpicture} \   
  =
  \begin{tikzpicture}[baseline=1.5mm]
    \def\d{0.6};
    \draw (-0.5,0) -- (\d+0.5,0);
    \draw (-0.5,0.5) -- (\d+0.5,0.5);
    \draw[very thick] (0,-0.4)-- (0,0.9);
    \draw[very thick] (\d,-0.4)-- (\d,0.9);
    \node[cmpo] at (0,0) {};
    \node[cmpo] at (\d,0) {};
    \node[mpo] at (0,0.5) {};
    \node[mpo] at (\d,0.5) {};
  \end{tikzpicture} ,
\end{equation}
where in the last equality we have used again \eqref{eq:simple_VW_factorize} with $n=0$ and $m,k=L$. Similarly,
\begin{equation}\label{eq:mpo_product_local_identity_after_blocking}
  \begin{tikzpicture}[baseline=1.5mm]
    \def\d{0.6};
    \draw (-0.5,0) -- (\d+0.5,0) --++(0,0.5);
    \draw (\d+0.5,0.5) -- (-0.5,0.5) --++(0,-0.5)  node[tensor,midway,label=left:$\rho_g$] {};
    \draw[very thick] (0,-0.4)-- (0,0.9);
    \draw[very thick] (\d,-0.4)-- (\d,0.9);
    \node[cmpo] at (0,0) {};
    \node[cmpo] at (\d,0) {};
    \node[mpo] at (0,0.5) {};
    \node[mpo] at (\d,0.5) {};
  \end{tikzpicture} =
  \begin{tikzpicture}[baseline=1.5mm,xscale=0.6]
    \draw (-1,0) -- (2,0);
    \draw (-1,0.5) -- (2,0.5);
    \draw[very thick] (0,-0.4)-- (0,0.9);
    \draw[very thick] (1,-0.4)-- (1,0.9);
    \node[cmpo] at (0,0) {};
    \node[cmpo] at (1,0) {};
    \node[mpo] at (0,0.5) {};
    \node[mpo] at (1,0.5) {};
    \draw[fusion,gray] (2,0) --(2,0.5);
    \draw[fusion,gray] (-1,0) --(-1,0.5);
  \end{tikzpicture} = \mathbb{1}^{2L}.
\end{equation}
Similar arguments show that there are $\hat{\rho}^L_g$, $\hat{\rho}_g^R$ positive definite matrices such that
\begin{equation}\label{eq:rho_hat_definition}
  \begin{tikzpicture}[baseline=1.5mm]
    \def\d{0.6};
    \draw (-0.5,0) -- (\d+0.5,0);
    \draw (-0.5,0.5) -- (\d+0.5,0.5);
    \draw[very thick] (0,-0.4)-- (0,0.9);
    \draw[very thick] (\d,-0.4)-- (\d,0.9);
    \node[mpo] at (0,0) {};
    \node[mpo] at (\d,0) {};
    \node[cmpo] at (0,0.5) {};
    \node[cmpo] at (\d,0.5) {};
  \end{tikzpicture} =
  \begin{tikzpicture}[baseline=1.5mm]
    \def\d{2.5};
    \draw[very thick] (0,-0.4)-- (0,0.9);
    \draw[very thick] (\d,-0.4)-- (\d,0.9);
    \draw (-0.5,0) -- (0.5,0) -- (0.5,0.5) node[tensor,midway,label=right:$\hat{\rho}_g^r$] {} -- (-0.5,0.5);
    \draw (\d+0.5,0) -- (\d-0.5,0) -- (\d-0.5,0.5) node[tensor,midway,label=left:$\hat{\rho}_g^l$] {} -- (\d+0.5,0.5);
    \node[mpo] at (0,0) {};
    \node[mpo] at (\d,0) {};
    \node[cmpo] at (0,0.5) {};
    \node[cmpo] at (\d,0.5) {};
  \end{tikzpicture} \quad \text{and} \quad
  \begin{tikzpicture}[baseline=1.5mm]
    \def\d{0.6};
    \draw (-0.5,0) rectangle (\d+0.5,0.5);
    \node[tensor,label=left:$\hat{\rho}_g^l$] (t1) at (-0.5,0.25) {};
    \node[tensor,label=right:$\hat{\rho}_g^r$] (t2) at (\d+0.5,0.25) {};
    \draw[very thick] (0,-0.4)-- (0,0.9);
    \draw[very thick] (\d,-0.4)-- (\d,0.9);
    \node[mpo] at (0,0) {};
    \node[mpo] at (\d,0) {};
    \node[cmpo] at (0,0.5) {};
    \node[cmpo] at (\d,0.5) {};
  \end{tikzpicture} = \mathbb{1} \otimes \mathbb{1},
\end{equation}

Let us reproduce the results in \cite{MPUdef} to be certain that we cite them right. Following \cite{MPUdef}, let us consider a minimal rank decomposition of the MPO tensors:
\begin{equation}\label{eq:mpo_tensors_scmidt_decomposition}
  \begin{tikzpicture}
    \draw (-0.4,0)--(0.4,0);
    \draw[very thick] (0,-0.4)--(0,0.4);
    \node[mpo] at (0,0) {};
  \end{tikzpicture} =
  \begin{tikzpicture}
    \node[mpo] (t) at (0.2,0.2) {};
    \node[mpo] (r) at (-0.2,-0.2) {};
    \draw[red] (t)--(r);
    \draw (t)--++(0.4,0);
    \draw[very thick] (t)--++(0,0.4);
    \draw (r)--++(-0.4,0);
    \draw[very thick] (r)--++(0,-0.4);
  \end{tikzpicture} =
  \begin{tikzpicture}
    \node[mpo] (t) at (0.2,-0.2) {};
    \node[mpo] (r) at (-0.2,0.2) {};
    \draw[blue] (t)--(r);
    \draw (t)--++(0.4,0);
    \draw[very thick] (t)--++(0,-0.4);
    \draw (r)--++(-0.4,0);
    \draw[very thick] (r)--++(0,0.4);
  \end{tikzpicture} \quad \text{and} \quad
  \begin{tikzpicture}
    \draw (-0.4,0)--(0.4,0);
    \draw[very thick] (0,-0.4)--(0,0.4);
    \node[cmpo] at (0,0) {};
  \end{tikzpicture} =
  \begin{tikzpicture}
    \node[cmpo] (t) at (0.2,0.2) {};
    \node[cmpo] (r) at (-0.2,-0.2) {};
    \draw[blue] (t)--(r);
    \draw (t)--++(0.4,0);
    \draw[very thick] (t)--++(0,0.4);
    \draw (r)--++(-0.4,0);
    \draw[very thick] (r)--++(0,-0.4);
  \end{tikzpicture} =
  \begin{tikzpicture}
    \node[cmpo] (t) at (0.2,-0.2) {};
    \node[cmpo] (r) at (-0.2,0.2) {};
    \draw[red] (t)--(r);
    \draw (t)--++(0.4,0);
    \draw[very thick] (t)--++(0,-0.4);
    \draw (r)--++(-0.4,0);
    \draw[very thick] (r)--++(0,0.4);
  \end{tikzpicture} \ .
\end{equation}
As the two tensors are the adjoint of each other, we can choose the minimal rank decompositions such that the two decompositions denoted by red (resp.\ blue) lines are adjoint to each other. By formulas,
\begin{equation*}
\begin{aligned}
    &\sum_{\alpha \beta} u_{\alpha \beta}^{(L)} (g) \otimes \ket{\alpha}\otimes \bra{\beta} = S_{2314}\left(\sum_{i=1}^{l_g(L)} \ket{x_i} \, {\color{red} \otimes}\, \ket{y_i} \right) = S_{1324}\left(\sum_{i=1}^{r_g(L)} \ket{z_i} \,{\color{blue} \otimes}\, \ket{v_i} \right),\\
    &\sum_{\alpha \beta} \left(u_{\alpha \beta}^{(L)} (g)\right)^\dagger  \otimes \bra{\alpha}\otimes \ket{\beta} = S_{2314}\left(\sum_{i=1}^{r_g(L)} \bra{z_i} \,{\color{blue}\otimes}\, \bra{v_i} \right)  = S_{1324}\left(\sum_{i=1}^{r_g(L)} \bra{x_i} \,{\color{red}\otimes}\, \bra{y_i} \right),
\end{aligned}
\end{equation*}
where $S_{ijkl}$ denotes the rearrangement of the tensor components according to the permutation $1\mapsto i$, $2\mapsto j$, $3\mapsto k$ and $4\mapsto l$. The vectors $\ket{x_i}$, $\ket{y_i}$, $\ket{z_i}$ and $\ket{w_i}$ are elements of the vector spaces $\ket{x_i} \in V^* \otimes W$, $\ket{y_i}\in V\otimes W^*$, $\ket{z_i} \in V^*\otimes W$ and $\ket{v_i}\in V^*\otimes W^*$, where $V = \left(\mathbb{C}^d\right)^{\otimes L}$, and $W=\mathbb{C}^D$. Here $\bra{x}$ denotes the linear functional $\ket{y}\mapsto \scalprod{x}{y}$, and thus $\ket{x}\mapsto \bra{x}$ is an anti-linear map, and we have noticed that for a matrix $u\in V\otimes V^*$, $u=\sum_{ij} u_{ij} \ket{i}\bra{j}$, we can write $u^\dagger = S_{12}\left(\sum_{ij} \bar{u}_{ij} \bra{i}\otimes \ket{j}\right)$. We have colored the tensor product sign to make it easier to identify which formula belongs to which picture. The rank-three tensors in \eqref{eq:mpo_tensors_scmidt_decomposition} are then defined as
\begin{equation*}
\begin{aligned}
    \begin{tikzpicture}
        \node[mpo] (t) at (0,0) {};
        \draw[very thick] (t)--++(0,-0.4);
        \draw (t)--++(-0.4,0);
        \draw[red] (t)--++(45:0.4);
    \end{tikzpicture} = \sum_{i=1}^{l_g(L)} \ket{x_i} \otimes \ket{i} \in V^* \otimes W \otimes U_{l,g,L}, \quad
    \begin{tikzpicture}
        \node[cmpo] (t) at (0,0) {};
        \draw[very thick] (t)--++(0,0.4);
        \draw (t)--++(-0.4,0);
        \draw[red] (t)--++(-45:0.4);
    \end{tikzpicture} = \sum_{i=1}^{l_g(L)} \bra{x_i} \otimes \bra{i} \in V \otimes W^* \otimes U_{l,g,L}^*, \\
    \begin{tikzpicture}
        \node[mpo] (t) at (0,0) {};
        \draw[very thick] (t)--++(0,0.4);
        \draw (t)--++(0.4,0);
        \draw[red] (t)--++(-135:0.4);
    \end{tikzpicture} = \sum_{i=1}^{l_g(L)} \ket{y_i} \otimes \bra{i} \in V \otimes W^* \otimes U_{l,g,L}^*, \quad
    \begin{tikzpicture}
        \node[cmpo] (t) at (0,0) {};
        \draw[very thick] (t)--++(0,-0.4);
        \draw (t)--++(0.4,0);
        \draw[red] (t)--++(135:0.4);
    \end{tikzpicture} = \sum_{i=1}^{l_g(L)} \bra{y_i} \otimes \ket{i} \in V^* \otimes W \otimes U_{l,g,L}, \\
    \begin{tikzpicture}
        \node[mpo] (t) at (0,0) {};
        \draw[very thick] (t)--++(0,0.4);
        \draw (t)--++(-0.4,0);
        \draw[blue] (t)--++(-45:0.4);
    \end{tikzpicture} = \sum_{i=1}^{r_g(L)} \ket{z_i} \otimes \ket{i} \in V \otimes W \otimes U_{r,g,L}, \quad
    \begin{tikzpicture}
        \node[cmpo] (t) at (0,0) {};
        \draw[very thick] (t)--++(0,-0.4);
        \draw (t)--++(-0.4,0);
        \draw[blue] (t)--++(45:0.4);
    \end{tikzpicture} = \sum_{i=1}^{r_g(L)} \bra{x_i} \otimes \bra{i} \in V^* \otimes W^* \otimes U_{r,g,L}^*, \\
    \begin{tikzpicture}
        \node[mpo] (t) at (0,0) {};
        \draw[very thick] (t)--++(0,-0.4);
        \draw (t)--++(0.4,0);
        \draw[blue] (t)--++(135:0.4);
    \end{tikzpicture} = \sum_{i=1}^{r_g(L)} \ket{v_i} \otimes \bra{i} \in V^* \otimes W^* \otimes U_{r,g,L}^*, \quad
    \begin{tikzpicture}
        \node[cmpo] (t) at (0,0) {};
        \draw[very thick] (t)--++(0,0.4);
        \draw (t)--++(0.4,0);
        \draw[blue] (t)--++(-135:0.4);
    \end{tikzpicture} = \sum_{i=1}^{r_g(L)} \bra{v_i} \otimes \ket{i} \in V \otimes W \otimes U_{r,g,L},
\end{aligned}
\end{equation*}
where we have introduced the complex inner product vector spaces $U_{l,g,L}$ and $U_{r,g,L}$ with dimensions $l_g(L)$ and $r_g(L)$, respectively, and orthonormal basis $\ket{i}$ ($i=1,\dots, l_g(L)$, and $i=1,\dots, r_g(L)$, respectively).

 As $\rho_g, \hat\rho_g^l,\mathbb{1}$ and $\hat\rho_g^r$ are full rank, and positive, the overlap matrices $\Lambda_1,\Lambda_2,\Lambda_3$ and $\Lambda_4$ defined as
\begin{equation*}
  \begin{aligned}
    &\bra{x_i} \mathbb{1}\otimes \rho_g \ket{x_j} = \left(\Lambda_1\right)_{ij},\\
    & \bra{y_i} \mathbb{1}\otimes \left(\hat\rho_g^r\right)^T \ket{y_j}  = \left(\Lambda_2\right)_{ij},\\
    &\bra{z_i} \mathbb{1}\otimes \hat\rho_g^l \ket{z_j} =\left(\Lambda_3\right)_{ij},\\
    &\bra{v_i} \mathbb{1}\otimes \mathbb{1} \ket{v_j} = \left(\Lambda_4\right)_{ij},
  \end{aligned}
\end{equation*}
are positive definite matrices.
As $\Lambda_1$ and $\Lambda_4$ are positive definite, we can write
\begin{equation*}
  \Lambda_1 = M_1^\dagger M_1 \quad \text{and} \quad \Lambda_4 = M_4^\dagger M_4,
\end{equation*}
with invertible matrices $M_1$ and $M_4$. We can thus modify the minimal rank decomposition to
\begin{equation}
  \begin{aligned}
    &\ket{\hat{x}_i} = \sum_j \left(M_1^{-1}\right)_{ji} \ket{x_j} \quad \text{and} \quad \ket{\hat y_i} = \sum_j \left(M_1\right)_{ij} \ket{y_j},\\
    &\ket{\hat{v}_i} = \sum_{j} \left(M_4^{-1}\right)_{ji} \ket{\hat{v}_i} \quad \text{and} \quad   \ket{\hat{z}_i} = \sum_{j} \left(M_4\right)_{ij} \ket{\hat{z}_i} .
  \end{aligned}
\end{equation}
The vectors $\ket{\hat x_i}$  and $\ket{\hat z_i}$ are now orthonormal:
\begin{equation*}
  \scalprod{\hat{x}_i}{\hat{x}_j} = \delta_{ij} \quad \text{and} \quad \scalprod{\hat{v}_i}{\hat{v}_j} = \delta_{ij}.
\end{equation*}
The vectors $\ket{y_i}$ and $\ket{z_i}$ are not orthonormal, but the matrices
\begin{equation*}
  \left(\Lambda_r\right)_{ij} = \scalprod{\hat{y}_i}{\hat y_j} = \left(\bar M_{1} \Lambda_2 M_1^T\right)_{ij} \quad \text{and} \quad
  \left(\Lambda_l\right)_{ij} = \scalprod{\hat{z}_i}{\hat z_j} = \left(\bar M_{4} \Lambda_3 M_4^T\right)_{ij}
\end{equation*}
are positive definite. Choosing this minimal rank decomposition in \eqref{eq:mpo_tensors_scmidt_decomposition}, the above equations can be expressed using the graphical language as
\begin{equation}\label{eq:schmidt_orthogonality}
    \begin{tikzpicture}[scale=0.5,baseline=-3.5mm,yscale=-1]
      \draw[very thick] (0,0)--(0,1);
      \draw[rounded corners=2mm] (0,0)--(-1,0)--(-1,1) node[tensor,midway,label=left:$\rho_g$] {} --(0,1);
      \draw[red] (0,0) --++ (-45:0.8);
      \draw[red] (0,1) --++ (45:0.8);
      \node[mpo] at (0,0) {};
      \node[cmpo] at (0,1) {};
    \end{tikzpicture} = \mathbb 1, \quad
    \begin{tikzpicture}[scale=0.5,baseline=1.5mm,xscale=-1]
      \draw[very thick] (0,0)--(0,1);
      \draw[rounded corners=2mm] (0,0)--(-1,0)--(-1,1) node[tensor,midway,label=right:$\hat\rho_g^r$] {} --(0,1);
      \draw[red] (0,0) --++ (-45:0.8);
      \draw[red] (0,1) --++ (45:0.8);
      \node[mpo] at (0,0) {};
      \node[cmpo] at (0,1) {};
    \end{tikzpicture} = \Lambda_r, \quad
    \begin{tikzpicture}[scale=0.5,baseline=1.5mm]
      \draw[very thick] (0,0)--(0,1);
      \draw[rounded corners=2mm] (0,0)--(-1,0)--(-1,1) node[tensor,midway,label=left:$\hat\rho_g^l$] {} --(0,1);
      \draw[blue] (0,0) --++ (-45:0.8);
      \draw[blue] (0,1) --++ (45:0.8);
      \node[mpo] at (0,0) {};
      \node[cmpo] at (0,1) {};
    \end{tikzpicture} = \Lambda_l \quad \text{and} \quad
    \begin{tikzpicture}[scale=0.5,baseline=-3.5mm,xscale=-1,yscale=-1]
      \draw[very thick] (0,0)--(0,1);
      \draw[rounded corners=2mm] (0,0)--(-1,0)--(-1,1) --(0,1);
      \draw[blue] (0,0) --++ (-45:0.8);
      \draw[blue] (0,1) --++ (45:0.8);
      \node[mpo] at (0,0) {};
      \node[cmpo] at (0,1) {};
    \end{tikzpicture} = \mathbb 1.
\end{equation}

Using the decomposition \eqref{eq:mpo_tensors_scmidt_decomposition} together with these equations (notice that by construction $\Lambda^r$ and $\Lambda^l$ are invertible), \eqref{eq:factorize_after_blocking} can be simplified to
\begin{equation*}
  \begin{tikzpicture}[baseline=1.5mm]
    \node[cmpo] (t11) at (0,0) {};
    \node[cmpo] (t21) at (0.6,0) {};
    \node[mpo] (t12) at (0,0.5) {};
    \node[mpo] (t22) at (0.6,0.5) {};
    \draw[very thick] (t11)--(t12);
    \draw[very thick] (t21)--(t22);
    \draw (t11)--(t21);
    \draw (t12)--(t22);
    \foreach \t/\d/\c in {t11/225/blue,t21/-45/red,t22/45/red,t12/135/blue}{
      \draw[\c] (\t)--++(\d:0.4);
    }
  \end{tikzpicture} =
  \begin{tikzpicture}[baseline=1.5mm]
    \node[cmpo] (t11) at (0,0) {};
    \node[cmpo] (t21) at (2.0,0) {};
    \node[mpo] (t12) at (0,0.5) {};
    \node[mpo] (t22) at (2.0,0.5) {};
    \draw[very thick] (t11)--(t12);
    \draw[very thick] (t21)--(t22);
    \draw[rounded corners=2mm] (t12)--++(0.5,0)--++(0,-0.5) -- (t11);
    \draw[rounded corners=2mm] (t22)--++(-0.5,0)--++(0,-0.5) node[midway,tensor,label=left:$\rho_g$] {} -- (t21);
    \foreach \t/\d/\c in {t11/225/blue,t21/-45/red,t22/45/red,t12/135/blue}{
      \draw[\c] (\t)--++(\d:0.4);
    }
  \end{tikzpicture}   = \mathbb 1 \otimes \mathbb 1.
\end{equation*}
Similarly, \eqref{eq:mpo_product_local_identity_after_blocking} simplifies to
\begin{equation}\label{eq:v_dagger_v_is_identity}
  \begin{tikzpicture}[baseline=1.5mm]
    \def\d{0.6};
    \node[cmpo] (t1) at (0,0) {};
    \node[cmpo] (t2) at (\d,0) {};
    \node[mpo] (r1) at (0,0.6) {};
    \node[mpo] (r2) at (\d,0.6) {};
    \draw (r1)--(r2);
    \draw (t1)--(t2);
    \draw[red] (r1) to[out=-135,in=135,looseness=2] (t1);
    \draw[blue] (r2) to[out=-45,in=45,looseness=2] (t2);
    \foreach \t/\y in {r1/1,r2/1,t1/-1,t2/-1}{
        \draw[very thick] (\t)--++(0,\y*0.4);
    }
  \end{tikzpicture} = \mathbb{1} \otimes \mathbb{1},
\end{equation}
and \eqref{eq:rho_hat_definition} simplifies to
\begin{equation*}
  \begin{tikzpicture}[baseline=1.5mm]
    \node[mpo] (t11) at (0,0) {};
    \node[mpo] (t21) at (0.6,0) {};
    \node[cmpo] (t12) at (0,0.5) {};
    \node[cmpo] (t22) at (0.6,0.5) {};
    \draw[very thick] (t11)--(t12);
    \draw[very thick] (t21)--(t22);
    \draw (t11)--(t21);
    \draw (t12)--(t22);
    \foreach \t/\d/\c in {t11/225/red,t21/-45/blue,t22/45/blue,t12/135/red}{
      \draw[\c] (\t)--++(\d:0.4);
    }
  \end{tikzpicture} =
  \begin{tikzpicture}[baseline=1.5mm]
    \node[mpo] (t11) at (0,0) {};
    \node[mpo] (t21) at (2.6,0) {};
    \node[cmpo] (t12) at (0,0.5) {};
    \node[cmpo] (t22) at (2.6,0.5) {};
    \draw[very thick] (t11)--(t12);
    \draw[very thick] (t21)--(t22);
    \draw[rounded corners=2mm] (t12)--++(0.5,0)--++(0,-0.5) node[midway,tensor,label=right:$\hat{\rho}_g^r$] {} -- (t11);
    \draw[rounded corners=2mm] (t22)--++(-0.5,0)--++(0,-0.5) node[midway,tensor,label=left:$\hat\rho_g^l$] {} -- (t21);
    \foreach \t/\d/\c in {t11/225/red,t21/-45/blue,t22/45/blue,t12/135/red}{
      \draw[\c] (\t)--++(\d:0.4);
    }
  \end{tikzpicture}   = \Lambda_r \otimes \Lambda_l \quad \text{and} \quad
  \begin{tikzpicture}[baseline=1.5mm]
    \def\d{0.6};
    \node[mpo] (t1) at (0,0) {};
    \node[mpo] (t2) at (\d,0) {};
    \node[cmpo] (r1) at (0,0.6) {};
    \node[cmpo] (r2) at (\d,0.6) {};
    \draw (r1)--(r2);
    \draw (t1)--(t2);
    \draw[draw=blue] (r1) to[out=-135,in=135,looseness=2] node[draw=black,tensor,midway,label=left:$\Lambda_l$] {} (t1);
    \draw[draw=red] (r2) to[out=-45,in=45,looseness=2] node[draw=black,tensor,midway,label=right:$\Lambda_r$] {} (t2);
    \foreach \t/\y in {r1/1,r2/1,t1/-1,t2/-1}{
        \draw[very thick] (\t)--++(0,\y*0.4);
    }
  \end{tikzpicture} = \mathbb{1} \otimes \mathbb{1}.
\end{equation*}
Let us finally notice that
\begin{equation*}
  \Lambda_r^2 \otimes \Lambda_l^2 =
  \begin{tikzpicture}[baseline=6.5mm]
    \foreach \y/\f in {0/black,1/white,2/black,3/white}{
      \node[tensor,fill=\f] (t1\y) at (0,0.5*\y) {};
      \node[tensor,fill=\f] (t2\y) at (0.6,0.5*\y) {};
      \draw (t1\y) -- (t2\y);
    }
    \draw[very thick] (t10)--(t11);
    \draw[very thick] (t20)--(t21);
    \draw[very thick] (t12)--(t13);
    \draw[very thick] (t22)--(t23);
    \foreach \t/\d/\c in {t10/225/red,t20/-45/blue,t23/45/blue,t13/135/red}{
      \draw[\c] (\t)--++(\d:0.4);
    }
    \draw[draw=red] (t11) to[out=135,in=-135,looseness=2] (t12);
    \draw[draw=blue] (t21) to[out=45,in=-45,looseness=2] (t22);
  \end{tikzpicture} =
  \begin{tikzpicture}[baseline=1.5mm]
    \node[mpo] (t11) at (0,0) {};
    \node[mpo] (t21) at (0.6,0) {};
    \node[cmpo] (t12) at (0,0.5) {};
    \node[cmpo] (t22) at (0.6,0.5) {};
    \draw[very thick] (t11)--(t12);
    \draw[very thick] (t21)--(t22);
    \draw (t11)--(t21);
    \draw (t12)--(t22);
    \foreach \t/\d/\c in {t11/225/red,t21/-45/blue,t22/45/blue,t12/135/red}{
      \draw[\c] (\t)--++(\d:0.4);
    }
  \end{tikzpicture} = \Lambda_r \otimes \Lambda_l,
\end{equation*}
where in the second equality we have used \eqref{eq:v_dagger_v_is_identity}. As $\Lambda_l$ and $\Lambda_r$ are positive definite, this implies that $\Lambda_l = \mathbb{1}$ and $\Lambda_r=\mathbb{1}$.
We thus obtain that the matrices
\begin{equation}
  w_g =
  \begin{tikzpicture}
    \node[mpo] (t) at (0,0) {};
    \node[mpo] (r) at (0.5,0) {};
    \draw (t)--(r);
    \draw[very thick] (t)--++(0,0.4);
    \draw[red] (t)--++(-135:0.4);
    \draw[very thick] (r)--++(0,0.4);
    \draw[blue] (r)--++(-45:0.4);
  \end{tikzpicture}
  \quad \text{and} \quad
  v_g =
  \begin{tikzpicture}
    \node[mpo] (t) at (0,0) {};
    \node[mpo] (r) at (0.5,0) {};
    \draw (t)--(r);
    \draw[very thick] (t)--++(0,-0.4);
    \draw[blue] (t)--++(135:0.4);
    \draw[very thick] (r)--++(0,-0.4);
    \draw[red] (r)--++(45:0.4);
  \end{tikzpicture},
\end{equation}
are unitary matrices, and in particular, the product of the two Schmidt ranks $l_g(L)$ and $r_g(L)$ (corresponding to the blue and the red cut, respectively) is $l_g(L) r_g(L) = d^{2L}$.

In \cite{MPUdef} it is proven that $i_g = l_g(L)/r_g(L)$ is independent of $L$ and that $i_{g^n} = (i_g)^n$, for any $n\in\mathbb N$. If $G$ is a finite group, then, as $i_1 = 1$, this implies $i_g = 1$ as well, and thus $l_g(L) =  r_g(L) = d^L$. Therefore in the above decomposition of $w_g$ and $v_g$  all free indices have dimension $d^L$. As the rank of the minimal rank decomposition is $d^L$, we can write
\begin{equation}
  \begin{tikzpicture}
    \node[mpo] (t) at (0,0) {};
    \draw[very thick] (t)--++(0,-0.4);
    \draw[blue] (t)--++(135:0.4);
    \draw (t)--++(0.4,0);
  \end{tikzpicture} \equiv
  \begin{tikzpicture}
    \node[mpo] (t) at (0,0) {};
    \draw[very thick] (t)--++(0,0.4);
    \draw[very thick] (t)--++(0,-0.4);
    \draw (t)--++(0.4,0);
  \end{tikzpicture} \ \quad \text{and} \quad
  \begin{tikzpicture}
    \node[cmpo] (t) at (0,0) {};
    \draw[very thick] (t)--++(0,0.4);
    \draw[blue] (t)--++(-135:0.4);
    \draw (t)--++(0.4,0);
  \end{tikzpicture} \equiv
  \begin{tikzpicture}
    \node[cmpo] (t) at (0,0) {};
    \draw[very thick] (t)--++(0,0.4);
    \draw[very thick] (t)--++(0,-0.4);
    \draw (t)--++(0.4,0);
  \end{tikzpicture} \ .
\end{equation}
Let us now define the maps  $\beta_{g,R}^{n,m}$ and $\beta_{g,L}^{n,m}$ for $n>M$ and $m\geq M$ through
\begin{align}
   &\beta_{g,R}^{n,m} : \
  \begin{tikzpicture}
    \def\d{0.6};
    \draw (-\d,-0.5)--(-\d,0.5);
    \draw (0,-0.5)--(0,0.5);
    \draw (\d,-0.5)--(\d,0.5);
    \node[draw, fill=white,rounded corners, minimum width = \d*3cm] at (0,0) {$X$};
    \draw[decorate, decoration={brace}] (-\d-0.1,0.6) -- (\d+0.1,0.6) node[midway,above] {$n$};
  \end{tikzpicture} \ \mapsto \
  \begin{tikzpicture}
    \def\d{0.6};
    \draw (-\d,0.5) -- (\d*3+0.5,0.5) -- (\d*3+0.5,-0.5) -- (-\d,-0.5);
    \node[irrep] at (2.5*\d,0.5) {$g$};
    \node[irrep] at (2.5*\d,-0.5) {$g$};
    \draw (-\d,-0.9)--(-\d,0.9);
    \node[mpo] at (-\d,0.5) {};
    \node[cmpo] at (-\d,-0.5) {};
    \foreach \x in {0,1,2,3}{
      \draw (\d*\x,-0.9)--(\d*\x,0.9);
      \node[mpo] at (\d*\x,0.5) {};
      \node[cmpo] at (\d*\x,-0.5) {};
    }
    \node[draw, fill=white,rounded corners, minimum width = \d*3cm] at (0,0) {$X$};
    \draw[decorate, decoration={brace}] (-\d-0.1,1) -- (\d+0.1,1) node[midway,above] {$n$};
    \draw[decorate, decoration={brace}] (2*\d-0.1,1) -- (3*\d+0.1,1) node[midway,above] {$m$};
  \end{tikzpicture} \  =
  \begin{tikzpicture}
    \def\d{0.6};
    \draw (-\d,0.5) -- (\d*3+0.5,0.5);
    \draw (-\d,-0.5) -- (\d*3+0.5,-0.5);
    \node[irrep] at (2.5*\d,0.5) {$g$};
    \node[irrep] at (2.5*\d,-0.5) {$g^{-1}$};
    \draw (-\d,-0.9)--(-\d,0.9);
    \node[mpo] at (-\d,0.5) {};
    \node[tensor,fill=gray] at (-\d,-0.5) {};
    \foreach \x in {0,1,2,3}{
      \draw (\d*\x,-0.9)--(\d*\x,0.9);
      \node[mpo] at (\d*\x,0.5) {};
      \node[mpo] at (\d*\x,-0.5) {};
    }
    \node[draw, fill=white,rounded corners, minimum width = \d*3cm] at (0,0) {$X$};
    \draw[fusion] (\d*3+0.5,-0.5) -- (\d*3+0.5,0.5);
    \draw[decorate, decoration={brace}] (-\d-0.1,1) -- (\d+0.1,1) node[midway,above] {$n$};
    \draw[decorate, decoration={brace}] (2*\d-0.1,1) -- (3*\d+0.1,1) node[midway,above] {$m$};
  \end{tikzpicture} \ , \\
   &\beta_{g,L}^{n,m} : \
  \begin{tikzpicture}
    \def\d{0.6};
    \draw (-\d,-0.5)--(-\d,0.5);
    \draw (0,-0.5)--(0,0.5);
    \draw (\d,-0.5)--(\d,0.5);
    \node[draw, fill=white,rounded corners, minimum width = \d*3cm] at (0,0) {$X$};
    \draw[decorate, decoration={brace}] (-\d-0.1,0.6) -- (\d+0.1,0.6) node[midway,above] {$n$};
  \end{tikzpicture} \ \mapsto \
  \begin{tikzpicture}[xscale=-1]
    \def\d{0.6};
    \draw (-\d,0.5) -- (\d*3+0.5,0.5) -- (\d*3+0.5,-0.5) -- (-\d,-0.5);
    \node[irrep] at (2.5*\d,0.5) {$g$};
    \node[irrep] at (2.5*\d,-0.5) {$g$};
    \draw (-\d,-0.9)--(-\d,0.9);
    \node[mpo] at (-\d,0.5) {};
    \node[cmpo] at (-\d,-0.5) {};
    \foreach \x in {0,1,2,3}{
      \draw (\d*\x,-0.9)--(\d*\x,0.9);
      \node[mpo] at (\d*\x,0.5) {};
      \node[cmpo] at (\d*\x,-0.5) {};
    }
    \node[draw, fill=white,rounded corners, minimum width = \d*3cm] at (0,0) {$X$};
    \node[tensor,label=left:$\rho_g$] at (\d*3+0.5,0) {};
    \draw[decorate, decoration={brace,mirror}] (-\d-0.1,1) -- (\d+0.1,1) node[midway,above] {$n$};
    \draw[decorate, decoration={brace,mirror}] (2*\d-0.1,1) -- (3*\d+0.1,1) node[midway,above] {$m$};
  \end{tikzpicture} \  =
  \begin{tikzpicture}[xscale=-1]
    \def\d{0.6};
    \draw (-\d,0.5) -- (\d*3+0.5,0.5);
    \draw (-\d,-0.5) -- (\d*3+0.5,-0.5);
    \node[irrep] at (2.5*\d,0.5) {$g$};
    \node[irrep] at (2.5*\d,-0.5) {$g^{-1}$};
    \draw (-\d,-0.9)--(-\d,0.9);
    \node[mpo] at (-\d,0.5) {};
    \node[tensor,fill=gray] at (-\d,-0.5) {};
    \foreach \x in {0,1,2,3}{
      \draw (\d*\x,-0.9)--(\d*\x,0.9);
      \node[mpo] at (\d*\x,0.5) {};
      \node[mpo] at (\d*\x,-0.5) {};
    }
    \node[draw, fill=white,rounded corners, minimum width = \d*3cm] at (0,0) {$X$};
    \draw[fusion] (\d*3+0.5,-0.5) -- (\d*3+0.5,0.5);
    \draw[decorate, decoration={brace,mirror}] (-\d-0.1,1) -- (\d+0.1,1) node[midway,above] {$n$};
    \draw[decorate, decoration={brace,mirror}] (2*\d-0.1,1) -- (3*\d+0.1,1) node[midway,above] {$m$};
  \end{tikzpicture} \ ,
\end{align}
where the gray tensors are defined as the gauge $X_g$ ($X_g^{-1}$) applied on the white tensors.
Using \eqref{eq:trivial_zipper} (remember that $\mu=1$) we obtain that these maps behave well under tensoring with the identity:
\begin{align*}
    \beta_{g,R}^{n,m}(X \otimes \mathbb 1) = \beta_{g,R}^{n,m+1}(X) = \beta_{g,R}^{n,m}(X) \otimes\mathbb{1} \\
    \beta_{g,L}^{m,n}(\mathbb{1} \otimes X ) = \beta_{g,R}^{m+1,n}(X) = \mathbb{1} \otimes \beta_{g,R}^{n,m}(X) ,
\end{align*}
and thus we can define maps $\beta_{g,L}$ and $\beta_{g,R}$ acting on $\mathcal{A}_{loc}$ that correspond to these two families of maps. We denote these maps with the same graphical notation as the maps $\beta_{g,L/R}^{n,m}$, but without specifying $n$ and $m$.
Using these maps, we arrive at the following decomposition of $\beta_g$:
\begin{equation*}
  \beta_g(X) =
  \begin{tikzpicture}
    \def\d{0.6};
    \draw (-\d*2.5-0.5,0.5) rectangle (\d*2.5+0.5,-0.5);
    \node[irrep] at (-2*\d,0.5) {$g$};
    \node[irrep] at (-2*\d,-0.5) {$g$};
    \foreach \x/\s in {-2.5/,-1.5/,-0.5/{very thick},0.5/{very thick},1.5/,2.5/}{
      \draw[\s] (\d*\x,-0.9)--(\d*\x,0.9);
      \node[mpo] at (\d*\x,0.5) {};
      \node[cmpo] at (\d*\x,-0.5) {};
    }
    \node[draw, fill=white,rounded corners, minimum width = \d*2cm] at (0,0) {$X$};
    \node[tensor,label=left:$\rho_g$] at (-\d*2.5-0.5,0) {};
  \end{tikzpicture} \ =
  \begin{tikzpicture}
    \def\d{0.6};
    \draw (\d*0.5,0.5) --++ (2*\d+0.5,0) --++ (0,-1) -- (0.5*\d,-0.5);
    \draw (-\d*0.5,0.5) --++ (-2*\d-0.5,0) --++ (0,-1) -- (-0.5*\d,-0.5);
    \node[irrep] at (-2*\d,0.5) {$g$};
    \node[irrep] at (-2*\d,-0.5) {$g$};
    \foreach \x/\s in {-2.5/,-1.5/,-0.5/{very thick},0.5/{very thick},1.5/,2.5/}{
      \draw[\s] (\d*\x,-0.9)--(\d*\x,0.9);
      \node[mpo] at (\d*\x,0.5) {};
      \node[cmpo] at (\d*\x,-0.5) {};
    }
    \node[draw, fill=white,rounded corners, minimum width = \d*2cm] at (0,0) {$X$};
    \draw[very thick] (-0.5*\d,1)--++(0,0.5);
    \draw[very thick] (0.5*\d,1)--++(0,0.5);
    \draw[very thick] (-0.5*\d,-1)--++(0,-0.5);
    \draw[very thick] (0.5*\d,-1)--++(0,-0.5);
    \node[draw, fill=white,rounded corners, minimum width = \d*2cm,inner sep=1pt] at (0,1) {$w(g)$};
    \node[draw, fill=white,rounded corners, minimum width = \d*2cm,inner sep=1pt] at (0,-1) {$w(g)^\dagger$};
    \node[tensor,label=left:$\rho_g$] at (-\d*2.5-0.5,0) {};
  \end{tikzpicture} \ ,
\end{equation*}
or
\begin{equation*}
\beta_g(X) = \Ad(w_g) \circ (\beta_{g,L} \otimes \beta_{g,R})(X).
\end{equation*}
Let us prove now that $\beta_{g,L}$ and $\beta_{g,R}$ are $*$-automorphisms that are norm contractive, and thus they can be extended to the quasi-local observables such that this equation still holds. First note that using \eqref{eq:schmidt_orthogonality}, we obtain that for any $n\geq 0$ and $m\geq L$
\begin{equation*}
\begin{aligned}
  \begin{tikzpicture}[baseline=1mm]
    \def\dx{0.6}
    \def\dy{0.4}
    \draw[very thick] (0,-0.4) -- (0,\dy+0.4);
    \draw (0,\dy) --(3*\dx+0.5,\dy);
    \draw (0,0) --(3*\dx+0.5,0);
    \node[cmpo] at (0,0) {};
    \node[mpo] at (0,\dy) {};
    \foreach \x in {1,2,3}{
      \draw (\x*\dx,-0.4) -- (\x*\dx,\dy+0.4);
      \node[cmpo] at (\x*\dx,0) {};
      \node[mpo] at (\x*\dx,\dy) {};
    }
    \node[irrep] at (\dx*0.5,\dy) {$g$};
    \node[irrep] at (\dx*0.5,0) {$g$};
    \draw[decorate, decoration={brace}] (2*\dx-0.1,\dy+0.5) -- (3*\dx+0.1,\dy+0.5) node[midway,above] {$m$};
    \draw[decorate, decoration={brace}] (0*\dx-0.1,\dy+0.5) -- (1*\dx+0.1,\dy+0.5) node[midway,above] {$L+n$};
  \end{tikzpicture}  =
  \begin{tikzpicture}[baseline=1mm]
    \def\dx{0.6}
    \def\dy{0.4}
    \draw[very thick] (0,-0.4) -- (0,\dy+0.4);
    \draw (3*\dx+0.5,0) --(-0.5,0)--++(0,\dy) node[tensor,midway,label=left:$\rho_g$] {} --(3*\dx+0.5,\dy);
    \node[cmpo] at (0,0) {};
    \node[mpo] at (0,\dy) {};
    \foreach \x in {1,2,3}{
      \draw (\x*\dx,-0.4) -- (\x*\dx,\dy+0.4);
      \node[cmpo] at (\x*\dx,0) {};
      \node[mpo] at (\x*\dx,\dy) {};
    }
    \node[irrep] at (\dx*0.5,\dy) {$g$};
    \node[irrep] at (\dx*0.5,0) {$g$};
    \draw[decorate, decoration={brace}] (2*\dx-0.1,\dy+0.5) -- (3*\dx+0.1,\dy+0.5) node[midway,above] {$m$};
    \draw[decorate, decoration={brace}] (0*\dx-0.1,\dy+0.5) -- (1*\dx+0.1,\dy+0.5) node[midway,above] {$L+n$};
  \end{tikzpicture}  =\id^{\otimes (L+n)} \otimes
  \begin{tikzpicture}[baseline=1mm]
    \def\dx{0.6}
    \def\dy{0.4}
    \draw (3*\dx+0.5,0) --(2*\dx-0.5,0)--++(0,\dy) node[tensor,midway,label=left:$\rho_g$] {} --(3*\dx+0.5,\dy);
    \foreach \x in {2,3}{
      \draw (\x*\dx,-0.4) -- (\x*\dx,\dy+0.4);
      \node[cmpo] at (\x*\dx,0) {};
      \node[mpo] at (\x*\dx,\dy) {};
    }
    \node[irrep] at (\dx*2.5,\dy) {$g$};
    \node[irrep] at (\dx*2.5,0) {$g$};
    \draw[decorate, decoration={brace}] (2*\dx-0.1,\dy+0.5) -- (3*\dx+0.1,\dy+0.5) node[midway,above] {$m$};
  \end{tikzpicture} , \\
  \begin{tikzpicture}[baseline=1mm,xscale=-1]
    \def\dx{0.6}
    \def\dy{0.4}
    \draw[very thick] (0,-0.4) -- (0,\dy+0.4);
    \draw (0,\dy) --(3*\dx+0.5,\dy);
    \draw (0,0) --(3*\dx+0.5,0);
    \node[cmpo] at (0,0) {};
    \node[mpo] at (0,\dy) {};
    \foreach \x in {1,2,3}{
      \draw (\x*\dx,-0.4) -- (\x*\dx,\dy+0.4);
      \node[cmpo] at (\x*\dx,0) {};
      \node[mpo] at (\x*\dx,\dy) {};
    }
    \node[irrep] at (\dx*0.5,\dy) {$g$};
    \node[irrep] at (\dx*0.5,0) {$g$};
    \draw[decorate, decoration={brace,mirror}] (2*\dx-0.1,\dy+0.5) -- (3*\dx+0.1,\dy+0.5) node[midway,above] {$m$};
    \draw[decorate, decoration={brace,mirror}] (0*\dx-0.1,\dy+0.5) -- (1*\dx+0.1,\dy+0.5) node[midway,above] {$L+n$};
  \end{tikzpicture}  =
  \begin{tikzpicture}[baseline=1mm,xscale=-1]
    \def\dx{0.6}
    \def\dy{0.4}
    \draw[very thick] (0,-0.4) -- (0,\dy+0.4);
    \draw (3*\dx+0.5,0) --(-0.5,0)--++(0,\dy) --(3*\dx+0.5,\dy);
    \node[cmpo] at (0,0) {};
    \node[mpo] at (0,\dy) {};
    \foreach \x in {1,2,3}{
      \draw (\x*\dx,-0.4) -- (\x*\dx,\dy+0.4);
      \node[cmpo] at (\x*\dx,0) {};
      \node[mpo] at (\x*\dx,\dy) {};
    }
    \node[irrep] at (\dx*0.5,\dy) {$g$};
    \node[irrep] at (\dx*0.5,0) {$g$};
    \draw[decorate, decoration={brace,mirror}] (2*\dx-0.1,\dy+0.5) -- (3*\dx+0.1,\dy+0.5) node[midway,above] {$m$};
    \draw[decorate, decoration={brace,mirror}] (0*\dx-0.1,\dy+0.5) -- (1*\dx+0.1,\dy+0.5) node[midway,above] {$L+n$};
  \end{tikzpicture}  =
  \begin{tikzpicture}[baseline=1mm,xscale=-1]
    \def\dx{0.6}
    \def\dy{0.4}
    \draw (3*\dx+0.5,0) --(2*\dx-0.5,0)--++(0,\dy) --(3*\dx+0.5,\dy);
    \foreach \x in {2,3}{
      \draw (\x*\dx,-0.4) -- (\x*\dx,\dy+0.4);
      \node[cmpo] at (\x*\dx,0) {};
      \node[mpo] at (\x*\dx,\dy) {};
    }
    \node[irrep] at (\dx*2.5,\dy) {$g$};
    \node[irrep] at (\dx*2.5,0) {$g$};
    \draw[decorate, decoration={brace,mirror}] (2*\dx-0.1,\dy+0.5) -- (3*\dx+0.1,\dy+0.5) node[midway,above] {$m$};
  \end{tikzpicture} \otimes \id^{\otimes (L+n)}.
\end{aligned}
\end{equation*}
Similarly, for any $n\geq 0$,
\begin{equation*}
\begin{aligned}
  \begin{tikzpicture}[baseline=1mm]
    \def\dx{0.6}
    \def\dy{0.4}
    \draw[very thick] (0,-0.4) -- (0,\dy+0.4);
    \draw (0,0) --(3*\dx+0.5,0)--++(0,\dy) --(0,\dy);
    \node[cmpo] at (0,0) {};
    \node[mpo] at (0,\dy) {};
    \foreach \x in {1,2,3}{
      \draw (\x*\dx,-0.4) -- (\x*\dx,\dy+0.4);
      \node[cmpo] at (\x*\dx,0) {};
      \node[mpo] at (\x*\dx,\dy) {};
    }
    \node[irrep] at (\dx*0.5,\dy) {$g$};
    \node[irrep] at (\dx*0.5,0) {$g$};
    \draw[decorate, decoration={brace}] (1*\dx-0.1,\dy+0.5) -- (3*\dx+0.1,\dy+0.5) node[midway,above] {$n$};
  \end{tikzpicture}  =
  \begin{tikzpicture}[baseline=1mm]
    \def\dx{0.6}
    \def\dy{0.4}
    \draw[very thick] (0,-0.4) -- (0,\dy+0.4);
    \draw (0,0) --(3*\dx+0.5,0)--++(0,\dy)--(0,\dy);
    \draw (0,0) --(-0.5,0)--++(0,\dy) node[tensor,midway,label=left:$\rho_g$] {} --(0,\dy);
    \node[cmpo] at (0,0) {};
    \node[mpo] at (0,\dy) {};
    \foreach \x in {1,2,3}{
      \draw (\x*\dx,-0.4) -- (\x*\dx,\dy+0.4);
      \node[cmpo] at (\x*\dx,0) {};
      \node[mpo] at (\x*\dx,\dy) {};
    }
    \node[irrep] at (\dx*0.5,\dy) {$g$};
    \node[irrep] at (\dx*0.5,0) {$g$};
    \draw[decorate, decoration={brace}] (1*\dx-0.1,\dy+0.5) -- (3*\dx+0.1,\dy+0.5) node[midway,above] {$n$};
  \end{tikzpicture}  =\mathbb 1,\\
  \begin{tikzpicture}[baseline=1mm,xscale=-1]
    \def\dx{0.6}
    \def\dy{0.4}
    \draw[very thick] (0,-0.4) -- (0,\dy+0.4);
    \draw (0,0) --(3*\dx+0.5,0)--++(0,\dy) node[tensor,midway,label=left:$\rho_g$] {} --(0,\dy);
    \node[cmpo] at (0,0) {};
    \node[mpo] at (0,\dy) {};
    \foreach \x in {1,2,3}{
      \draw (\x*\dx,-0.4) -- (\x*\dx,\dy+0.4);
      \node[cmpo] at (\x*\dx,0) {};
      \node[mpo] at (\x*\dx,\dy) {};
    }
    \node[irrep] at (\dx*0.5,\dy) {$g$};
    \node[irrep] at (\dx*0.5,0) {$g$};
    \draw[decorate, decoration={brace,mirror}] (1*\dx-0.1,\dy+0.5) -- (3*\dx+0.1,\dy+0.5) node[midway,above] {$n$};
  \end{tikzpicture}  =
  \begin{tikzpicture}[baseline=1mm,xscale=-1]
    \def\dx{0.6}
    \def\dy{0.4}
    \draw[very thick] (0,-0.4) -- (0,\dy+0.4);
    \draw (0,0) --(3*\dx+0.5,0)--++(0,\dy) node[tensor,midway,label=left:$\rho_g$] {} --(0,\dy);
    \draw (0,0) --(-0.5,0)--++(0,\dy)--(0,\dy);
    \node[cmpo] at (0,0) {};
    \node[mpo] at (0,\dy) {};
    \foreach \x in {1,2,3}{
      \draw (\x*\dx,-0.4) -- (\x*\dx,\dy+0.4);
      \node[cmpo] at (\x*\dx,0) {};
      \node[mpo] at (\x*\dx,\dy) {};
    }
    \node[irrep] at (\dx*0.5,\dy) {$g$};
    \node[irrep] at (\dx*0.5,0) {$g$};
    \draw[decorate, decoration={brace,mirror}] (1*\dx-0.1,\dy+0.5) -- (3*\dx+0.1,\dy+0.5) node[midway,above] {$n$};
  \end{tikzpicture}  =\mathbb 1.\\
\end{aligned}
\end{equation*}
These equations directly imply that $\beta_{g,L}(\mathbb 1) = \mathbb 1$ and $\beta_{g,R}(\mathbb 1) = \mathbb 1$. To see that $\beta_{g,L}$ and $\beta_{g,R}$ are norm contractive $*$-automorphisms, note that
\begin{equation*}
    \beta_{g,L}(X) = \Ad(w_g)^{-1} \circ \beta_g(\mathbb 1 \otimes X),
\end{equation*}
and thus $\beta_{g,L}$ (and $\beta_{g,R}$ as well) arises as a composition of norm-contractive $*$-automorphisms.

\subsubsection{Equivalence between  both 3-cocycle indices}\label{3eq}

Notice that
\begin{equation}\label{eq:zipper_g_g_inv}
  \begin{tikzpicture}[baseline=1mm]
    \def\dx{0.6}
    \def\dy{0.4}
    \draw[very thick] (0,-0.4) -- (0,\dy+0.4);
    \draw (0,0) --(3*\dx+0.5,0)--++(0,\dy)--(0,\dy);
    \node[cmpo] at (0,0) {};
    \node[mpo] at (0,\dy) {};
    \foreach \x in {1,2,3}{
      \draw (\x*\dx,-0.4) -- (\x*\dx,\dy+0.4);
      \node[cmpo] at (\x*\dx,0) {};
      \node[mpo] at (\x*\dx,\dy) {};
    }
    \node[irrep] at (\dx*0.5,\dy) {$g$};
    \node[irrep] at (\dx*0.5,0) {$g$};
  \end{tikzpicture}  =
  \begin{tikzpicture}[baseline=1mm]
    \def\dx{0.6}
    \def\dy{0.4}
    \draw[very thick] (0,-0.4) -- (0,\dy+0.4);
    \foreach \y/\f in {0/gray,1/black}{
      \draw (0,\y*\dy) --++ (3*\dx+0.4,0);
      \node[tensor,fill=\f] at (0,\y*\dy) {};
    }
    \draw[fusion] (3*\dx+0.4,0) --++(0,\dy);
    \foreach \x in {1,2,3}{
      \draw (\x*\dx,-0.4) -- (\x*\dx,\dy+0.4);
      \foreach \y in {0,1}{
        \node[mpo] at (\x*\dx,\y*\dy) {};
      }
    }
    \node[irrep] at (\dx*0.5,\dy) {$g$};
    \node[irrep] at (\dx*0.5,0) {$g^{-1}$};
  \end{tikzpicture}  =\mathbb 1.
\end{equation}
Let us multiply this equation with an operator of MPO form $X_{g}\in \Mat_d^{\otimes (L+n)} \otimes \mathbb{C}^D$ as follows:
\begin{equation}
  X_{g} =
  \begin{tikzpicture}
    \def\dx{0.6}
    \def\dy{0.4}
    \draw[very thick] (0,-0.4) -- (0,0.4);
    \draw (0,0) --++ (3*\dx+0.4,0);
    \node[mpo] at (0,0) {};
    \foreach \x in {1,2,3}{
      \draw (\x*\dx,-0.4) -- (\x*\dx,0.4);
      \node[mpo] at (\x*\dx,0) {};
    }
    \node[irrep] at (\dx*0.5,0) {$g$};
  \end{tikzpicture} \ .
\end{equation}
We obtain that
\begin{equation}
  X_g =
  \begin{tikzpicture}[baseline=3mm]
    \def\dx{0.6}
    \def\dy{0.4}
    \draw[very thick] (0,-0.4) -- (0,2*\dy+0.4);
    \foreach \y/\f/\l in {0/black/0.8,1/gray/0.4,2/black/0.4}{
      \draw (0,\y*\dy) --++ (3*\dx+\l,0);
      \node[tensor,fill=\f] at (0,\y*\dy) {};
    }
    \draw[densely dotted] (3*\dx+0.4,1.5*\dy) --++(0.4,0);
    \draw (3*\dx+0.8,0.75*\dy) --++(0.4,0);
    \draw[fusion] (3*\dx+0.4,\dy) --++(0,\dy);
    \draw[fusion] (3*\dx+0.8,0) --++(0,1.5*\dy);
    \foreach \x in {1,2,3}{
      \draw (\x*\dx,-0.4) -- (\x*\dx,2*\dy+0.4);
      \foreach \y in {0,1,2}{
        \node[mpo] at (\x*\dx,\y*\dy) {};
      }
    }
    \node[irrep] at (\dx*0.5,2*\dy) {$g$};
    \node[irrep] at (\dx*0.5,\dy) {$g^{-1}$};
    \node[irrep] at (\dx*0.5,0) {$g$};
  \end{tikzpicture} \ = \frac{1}{\omega(g,g^{-1},g)} \
  \begin{tikzpicture}[baseline=3mm]
    \def\dx{0.6}
    \def\dy{0.4}
    \draw[very thick] (0,-0.4) -- (0,2*\dy+0.4);
    \foreach \y/\f/\l in {0/black/0.4,1/gray/0.4,2/black/0.8}{
      \draw (0,\y*\dy) --++ (3*\dx+\l,0);
      \node[tensor,fill=\f] at (0,\y*\dy) {};
    }
    \draw[densely dotted] (3*\dx+0.4,0.5*\dy) --++(0.4,0);
    \draw (3*\dx+0.8,1.25*\dy) --++(0.4,0);
    \draw[fusion] (3*\dx+0.4,0) --++(0,\dy);
    \draw[fusion] (3*\dx+0.8,0.5*\dy) --++(0,1.5*\dy);
    \foreach \x in {1,2,3}{
      \draw (\x*\dx,-0.4) -- (\x*\dx,2*\dy+0.4);
      \foreach \y in {0,1,2}{
        \node[mpo] at (\x*\dx,\y*\dy) {};
      }
    }
    \node[irrep] at (\dx*0.5,2*\dy) {$g$};
    \node[irrep] at (\dx*0.5,\dy) {$g^{-1}$};
    \node[irrep] at (\dx*0.5,0) {$g$};
  \end{tikzpicture} \ = \frac{1}{\omega(g,g^{-1},g)}\
  \begin{tikzpicture}[baseline=3mm]
    \def\dx{0.6}
    \def\dy{0.4}
    \draw[very thick] (0,-0.4) -- (0,2*\dy+0.4);
    \foreach \y/\f/\l in {0/black/0.4,1/gray/0.4,2/black/0.4}{
      \draw (0,\y*\dy) --++ (3*\dx+\l,0);
      \node[tensor,fill=\f] at (0,\y*\dy) {};
    }
    \draw[fusion] (3*\dx+0.4,0) --++(0,\dy);
    \foreach \x in {1,2,3}{
      \draw (\x*\dx,-0.4) -- (\x*\dx,2*\dy+0.4);
      \foreach \y in {0,1,2}{
        \node[mpo] at (\x*\dx,\y*\dy) {};
      }
    }
    \node[irrep] at (\dx*0.5,2*\dy) {$g$};
    \node[irrep] at (\dx*0.5,\dy) {$g^{-1}$};
    \node[irrep] at (\dx*0.5,0) {$g$};
  \end{tikzpicture} \ ,
\end{equation}
and thus, as $X_g$ is left invertible,
\begin{equation*}
  \begin{tikzpicture}[baseline=1mm]
    \def\dx{0.6}
    \def\dy{0.4}
    \draw[very thick] (0,-0.4) -- (0,\dy+0.4);
    \draw (0,0) --(3*\dx+0.5,0)--++(0,\dy) node[tensor,midway,label=right:$\hat{\rho}_g^r$] {} --(0,\dy);
    \node[mpo] at (0,0) {};
    \node[cmpo] at (0,\dy) {};
    \foreach \x in {1,2,3}{
      \draw (\x*\dx,-0.4) -- (\x*\dx,\dy+0.4);
      \node[mpo] at (\x*\dx,0) {};
      \node[cmpo] at (\x*\dx,\dy) {};
    }
    \node[irrep] at (\dx*0.5,\dy) {$g$};
    \node[irrep] at (\dx*0.5,0) {$g$};
  \end{tikzpicture}  =\mathbb 1,
\end{equation*}
we obtain that
\begin{equation}
  \begin{tikzpicture}[baseline=1mm]
    \def\dx{0.6}
    \def\dy{0.4}
    \draw[very thick] (0,-0.4) -- (0,\dy+0.4);
    \foreach \y/\f in {0/black,1/gray}{
      \draw (0,\y*\dy) --++ (3*\dx+0.4,0);
      \node[tensor,fill=\f] at (0,\y*\dy) {};
    }
    \draw[fusion] (3*\dx+0.4,0) --++(0,\dy);
    \foreach \x in {1,2,3}{
      \draw (\x*\dx,-0.4) -- (\x*\dx,\dy+0.4);
      \foreach \y in {0,1}{
        \node[mpo] at (\x*\dx,\y*\dy) {};
      }
    }
    \node[irrep] at (\dx*0.5,\dy) {$g^{-1}$};
    \node[irrep] at (\dx*0.5,0) {$g$};
  \end{tikzpicture}  = \omega(g,g^{-1},g) \cdot \mathbb 1 .
\end{equation}

 We can now express $v(g,h)$ explicitly. It can be chosen (notice the phase degree of freedom) as:
\begin{equation}\label{eq:v graphical}
  v(g,h) =
  \begin{tikzpicture}[baseline=3mm]
    \def\dx{0.6}
    \def\dy{0.4}
    \draw[very thick] (0,-0.4) -- (0,2*\dy+0.4);
    \foreach \y/\f/\l in {0/gray/0.8,1/black/0.4,2/black/0.4}{
      \draw (0,\y*\dy) --++ (3*\dx+\l,0);
      \node[tensor,fill=\f] at (0,\y*\dy) {};
    }
    \draw (3*\dx+0.4,1.5*\dy) --++(0.4,0);
    \draw[fusion] (3*\dx+0.4,\dy) --++(0,\dy);
    \draw[fusion] (3*\dx+0.8,0) --++(0,1.5*\dy);
    \foreach \x in {1,2,3}{
      \draw (\x*\dx,-0.4) -- (\x*\dx,2*\dy+0.4);
      \foreach \y in {0,1,2}{
        \node[mpo] at (\x*\dx,\y*\dy) {};
      }
    }
    \node[irrep] at (\dx*0.5,2*\dy) {$g$};
    \node[irrep] at (\dx*0.5,\dy) {$h$};
    \node[irrep] at (\dx*0.5,0) {$(gh)^{-1}$};
  \end{tikzpicture} \ .
\end{equation}
Let us multiply this expression with $X_{gh}$. We obtain
\begin{align}
  (v(g,h) \otimes 1) \cdot X_{gh} &=
  \begin{tikzpicture}[baseline=3mm]
    \def\dx{0.6}
    \def\dy{0.4}
    \draw[very thick] (0,-\dy-0.4) -- (0,2*\dy+0.4);
    \foreach \y/\f/\l in {-1/black/1.2,0/gray/0.8,1/black/0.4,2/black/0.4}{
      \draw (0,\y*\dy) --++ (3*\dx+\l,0);
      \node[tensor,fill=\f] at (0,\y*\dy) {};
    }
    \draw (3*\dx+0.4,1.5*\dy) --++(0.4,0);
    \draw[densely dotted] (3*\dx+0.8,0.75*\dy) --++(0.4,0);
    \draw (3*\dx+1.2,-0.175*\dy) --++(0.4,0);
    \draw[fusion] (3*\dx+0.4,\dy) --++(0,\dy);
    \draw[fusion] (3*\dx+0.8,0) --++(0,1.5*\dy);
    \draw[fusion] (3*\dx+1.2,-\dy) --++(0,1.75*\dy);
    \foreach \x in {1,2,3}{
      \draw (\x*\dx,-\dy-0.4) -- (\x*\dx,2*\dy+0.4);
      \foreach \y in {-1,0,1,2}{
        \node[mpo] at (\x*\dx,\y*\dy) {};
      }
    }
    \node[irrep] at (\dx*0.5,2*\dy) {$g$};
    \node[irrep] at (\dx*0.5,\dy) {$h$};
    \node[irrep] at (\dx*0.5,0) {$(gh)^{-1}$};
    \node[irrep] at (\dx*0.5,-\dy) {$gh$};
  \end{tikzpicture} \ = \frac{1}{\omega(gh,(gh)^{-1},gh)}
  \begin{tikzpicture}[baseline=1mm]
    \def\dx{0.6}
    \def\dy{0.4}
    \draw[very thick] (0,-\dy-0.4) -- (0,2*\dy+0.4);
    \foreach \y/\f/\l in {-1/black/0.4,0/gray/0.4,1/black/0.4,2/black/0.4}{
      \draw (0,\y*\dy) --++ (3*\dx+\l,0);
      \node[tensor,fill=\f] at (0,\y*\dy) {};
    }
    \draw (3*\dx+0.4,1.5*\dy) --++(0.4,0);
    \draw (3*\dx+0.8,0.5*\dy) --++(0.4,0);
    \draw[densely dotted] (3*\dx+0.4,-0.5*\dy) --++(0.4,0);
    \draw[fusion] (3*\dx+0.4,\dy) --++(0,\dy);
    \draw[fusion] (3*\dx+0.8,-0.5*\dy) --++(0,2*\dy);
    \draw[fusion] (3*\dx+0.4,-\dy) --++(0,\dy);
    \foreach \x in {1,2,3}{
      \draw (\x*\dx,-\dy-0.4) -- (\x*\dx,2*\dy+0.4);
      \foreach \y in {-1,0,1,2}{
        \node[mpo] at (\x*\dx,\y*\dy) {};
      }
    }
    \node[irrep] at (\dx*0.5,2*\dy) {$g$};
    \node[irrep] at (\dx*0.5,\dy) {$h$};
    \node[irrep] at (\dx*0.5,0) {$(gh)^{-1}$};
    \node[irrep] at (\dx*0.5,-\dy) {$gh$};
  \end{tikzpicture} \ = \\
  & =
  \begin{tikzpicture}[baseline=5mm]
    \def\dx{0.6}
    \def\dy{0.4}
    \draw[very thick] (0,\dy-0.4) -- (0,2*\dy+0.4);
    \foreach \y/\f/\l in {1/black/0.4,2/black/0.4}{
      \draw (0,\y*\dy) --++ (3*\dx+\l,0);
      \node[tensor,fill=\f] at (0,\y*\dy) {};
    }
    \draw (3*\dx+0.4,1.5*\dy) --++(0.4,0);
    \draw[fusion] (3*\dx+0.4,\dy) --++(0,\dy);
    \foreach \x in {1,2,3}{
      \draw (\x*\dx,\dy-0.4) -- (\x*\dx,2*\dy+0.4);
      \foreach \y in {1,2}{
        \node[mpo] at (\x*\dx,\y*\dy) {};
      }
    }
    \node[irrep] at (\dx*0.5,2*\dy) {$g$};
    \node[irrep] at (\dx*0.5,\dy) {$h$};
  \end{tikzpicture} \ .
\end{align}
Let us consider now the MPO
\begin{equation}\label{eq:cocycle_rhs}
  \begin{tikzpicture}[baseline=3mm]
    \def\dx{0.6}
    \def\dy{0.4}
    \draw[very thick] (0,-0.4) -- (0,2*\dy+0.4);
    \foreach \y/\l in {0/0.8,1/0.4,2/0.4}{
      \draw (0,\y*\dy) --++ (3*\dx+\l,0);
      \node[mpo] at (0,\y*\dy) {};
    }
    \draw (3*\dx+0.4,1.5*\dy) --++(0.4,0);
    \draw (3*\dx+0.8,0.75*\dy) --++(0.4,0);
    \draw[fusion] (3*\dx+0.4,\dy) --++(0,\dy);
    \draw[fusion] (3*\dx+0.8,0) --++(0,1.5*\dy);
    \foreach \x in {1,2,3}{
      \draw (\x*\dx,-0.4) -- (\x*\dx,2*\dy+0.4);
      \foreach \y in {0,1,2}{
        \node[mpo] at (\x*\dx,\y*\dy) {};
      }
    }
    \node[irrep] at (\dx*0.5,2*\dy) {$g$};
    \node[irrep] at (\dx*0.5,\dy) {$h$};
    \node[irrep] at (\dx*0.5,0) {$k$};
  \end{tikzpicture} \ =
  \begin{tikzpicture}[baseline=3mm]
    \def\dx{0.6}
    \def\dy{0.4}
    \draw[very thick] (0,-0.4) -- (0,2*\dy+0.4);
    \foreach \y/\l in {0/0.4,1/0.4}{
      \draw (0,\y*\dy) --++ (3*\dx+\l,0);
      \node[mpo] at (0,\y*\dy) {};
    }
    \draw (3*\dx+0.4,0.5*\dy) --++(0.4,0);
    \draw[fusion] (3*\dx+0.4,0) --++(0,\dy);
    \foreach \x in {1,2,3}{
      \draw (\x*\dx,-0.4) -- (\x*\dx,2*\dy+0.4);
      \foreach \y in {0,1}{
        \node[mpo] at (\x*\dx,\y*\dy) {};
      }
    }
    \node[irrep] at (\dx*0.5,\dy) {$gh$};
    \node[irrep] at (\dx*0.5,0) {$k$};
    \node[draw, fill=white,rounded corners, minimum width = \dx*3.5cm,inner sep=1pt] at (1.5*\dx,2*\dy) {$v(g,h)$};
  \end{tikzpicture} \ =
  \begin{tikzpicture}[baseline=3mm]
    \def\dx{0.6}
    \def\dy{0.4}
    \draw[very thick] (0,-0.4) -- (0,2*\dy+0.4+0.3);
    \draw (0,0) --++ (3*\dx+0.4,0);
    \node[mpo] at (0,0) {};
    \foreach \x in {1,2,3}{
      \draw (\x*\dx,-0.4) -- (\x*\dx,2*\dy+0.4+0.3);
      \node[mpo] at (\x*\dx,0) {};
    }
    \node[irrep] at (\dx*0.5,0) {$ghk$};
    \node[draw, fill=white,rounded corners, minimum width = \dx*3.5cm,inner sep=1pt] at (1.5*\dx,2*\dy+0.2) {$v(g,h)$};
    \node[draw, fill=white,rounded corners, minimum width = \dx*3.5cm,inner sep=1pt] at (1.5*\dx,1*\dy+0.1) {$v(gh,k)$};
  \end{tikzpicture} \ .
\end{equation}
It can also be written as
\begin{equation*}
  \begin{tikzpicture}[baseline=3mm]
    \def\dx{0.6}
    \def\dy{0.4}
    \draw[very thick] (0,-0.4) -- (0,2*\dy+0.4);
    \foreach \y/\l in {0/0.8,1/0.4,2/0.4}{
      \draw (0,\y*\dy) --++ (3*\dx+\l,0);
      \node[mpo] at (0,\y*\dy) {};
    }
    \draw (3*\dx+0.4,1.5*\dy) --++(0.4,0);
    \draw (3*\dx+0.8,0.75*\dy) --++(0.4,0);
    \draw[fusion] (3*\dx+0.4,\dy) --++(0,\dy);
    \draw[fusion] (3*\dx+0.8,0) --++(0,1.5*\dy);
    \foreach \x in {1,2,3}{
      \draw (\x*\dx,-0.4) -- (\x*\dx,2*\dy+0.4);
      \foreach \y in {0,1,2}{
        \node[mpo] at (\x*\dx,\y*\dy) {};
      }
    }
    \node[irrep] at (\dx*0.5,2*\dy) {$g$};
    \node[irrep] at (\dx*0.5,\dy) {$h$};
    \node[irrep] at (\dx*0.5,0) {$k$};
  \end{tikzpicture} \ = \omega(g,h,k)\cdot
  \begin{tikzpicture}[baseline=3mm]
    \def\dx{0.6}
    \def\dy{0.4}
    \draw[very thick] (0,-0.4) -- (0,2*\dy+0.4);
    \foreach \y/\l in {0/0.4,1/0.4,2/0.8}{
      \draw (0,\y*\dy) --++ (3*\dx+\l,0);
      \node[mpo] at (0,\y*\dy) {};
    }
    \draw (3*\dx+0.8,1.25*\dy) --++(0.4,0);
    \draw (3*\dx+0.4,0.5*\dy) --++(0.4,0);
    \draw[fusion] (3*\dx+0.4,0) --++(0,\dy);
    \draw[fusion] (3*\dx+0.8,0.5*\dy) --++(0,1.5*\dy);
    \foreach \x in {1,2,3}{
      \draw (\x*\dx,-0.4) -- (\x*\dx,2*\dy+0.4);
      \foreach \y in {0,1,2}{
        \node[mpo] at (\x*\dx,\y*\dy) {};
      }
    }
    \node[irrep] at (\dx*0.5,2*\dy) {$g$};
    \node[irrep] at (\dx*0.5,\dy) {$h$};
    \node[irrep] at (\dx*0.5,0) {$k$};
  \end{tikzpicture} \ = \frac{1}{\omega(g,h,k)} \cdot
  \begin{tikzpicture}[baseline=3mm]
    \def\dx{0.6}
    \def\dy{0.4}
    \draw[very thick] (0,-0.4) -- (0,2*\dy+0.4);
    \foreach \y/\l in {0/0.4,2/0.4}{
      \draw (0,\y*\dy) --++ (3*\dx+\l,0);
      \node[mpo] at (0,\y*\dy) {};
    }
    \draw (3*\dx+0.4,\dy) --++(0.4,0);
    \draw[fusion] (3*\dx+0.4,0) --++(0,2*\dy);
    \foreach \x in {1,2,3}{
      \draw (\x*\dx,-0.4) -- (\x*\dx,2*\dy+0.4);
      \foreach \y in {0,2}{
        \node[mpo] at (\x*\dx,\y*\dy) {};
      }
    }
    \node[irrep] at (\dx*0.5,2*\dy) {$g$};
    \node[irrep] at (\dx*0.5,0) {$hk$};
    \node[draw, fill=white,rounded corners, minimum width = \dx*3.5cm,inner sep=1pt] at (1.5*\dx,\dy) {$v(h,k)$};
  \end{tikzpicture} \ .
\end{equation*}
Inserting the identity below $v(h,k)$, we can also write
\begin{equation*}
  \begin{tikzpicture}[baseline=3mm]
    \def\dx{0.6}
    \def\dy{0.4}
    \draw[very thick] (0,-0.4) -- (0,2*\dy+0.4);
    \foreach \y/\l in {0/0.8,1/0.4,2/0.4}{
      \draw (0,\y*\dy) --++ (3*\dx+\l,0);
      \node[mpo] at (0,\y*\dy) {};
    }
    \draw (3*\dx+0.4,1.5*\dy) --++(0.4,0);
    \draw (3*\dx+0.8,0.75*\dy) --++(0.4,0);
    \draw[fusion] (3*\dx+0.4,\dy) --++(0,\dy);
    \draw[fusion] (3*\dx+0.8,0) --++(0,1.5*\dy);
    \foreach \x in {1,2,3}{
      \draw (\x*\dx,-0.4) -- (\x*\dx,2*\dy+0.4);
      \foreach \y in {0,1,2}{
        \node[mpo] at (\x*\dx,\y*\dy) {};
      }
    }
    \node[irrep] at (\dx*0.5,2*\dy) {$g$};
    \node[irrep] at (\dx*0.5,\dy) {$h$};
    \node[irrep] at (\dx*0.5,0) {$k$};
  \end{tikzpicture} \ = \frac{1}{\omega(g,h,k)\cdot\omega(g,g^{-1},g)}\cdot
  \begin{tikzpicture}[baseline=3mm]
    \def\dx{0.6}
    \def\dy{0.4}
    \draw[very thick] (0,-2*\dy-0.4) -- (0,2*\dy+0.4);
    \foreach \y/\f/\l in {-2/black/1.2,-1/black/0.4,0/gray/0.4,2/black/0.8}{
      \draw (0,\y*\dy) --++ (3*\dx+\l,0);
      \node[tensor,fill=\f] at (0,\y*\dy) {};
    }
    \draw[fusion] (3*\dx+0.4,-\dy) --++(0,\dy);
    \draw[densely dotted] (3*\dx+0.4,-0.5*\dy) --++(0.4,0);
    \draw[fusion] (3*\dx+0.8,-0.5*\dy) --++(0,2.5*\dy);
    \draw (3*\dx+0.8,0.75*\dy) --++(0.4,0);
    \draw[fusion] (3*\dx+1.2,-2*\dy) --++(0,2.75*\dy);
    \draw (3*\dx+1.2,-0.625*\dy) --++(0.4,0);
    \foreach \x in {1,2,3}{
      \draw (\x*\dx,-2*\dy-0.4) -- (\x*\dx,2*\dy+0.4);
      \foreach \y in {-2,-1,0,2}{
        \node[mpo] at (\x*\dx,\y*\dy) {};
      }
    }
    \node[irrep] at (\dx*0.5,2*\dy) {$g$};
    \node[irrep] at (\dx*0.5,0) {$g^{-1}$};
    \node[irrep] at (\dx*0.5,-\dy) {$g$};
    \node[irrep] at (\dx*0.5,-2*\dy) {$hk$};
    \node[draw, fill=white,rounded corners, minimum width = \dx*3.5cm,inner sep=1pt] at (1.5*\dx,\dy) {$v(h,k)$};
  \end{tikzpicture} \ .
\end{equation*}
Reordering the fusion tensors, we obtain
\begin{equation*}
   \omega(g,h,k) \cdot
  \begin{tikzpicture}[baseline=3mm]
    \def\dx{0.6}
    \def\dy{0.4}
    \draw[very thick] (0,-0.4) -- (0,2*\dy+0.4);
    \foreach \y/\l in {0/0.8,1/0.4,2/0.4}{
      \draw (0,\y*\dy) --++ (3*\dx+\l,0);
      \node[mpo] at (0,\y*\dy) {};
    }
    \draw (3*\dx+0.4,1.5*\dy) --++(0.4,0);
    \draw (3*\dx+0.8,0.75*\dy) --++(0.4,0);
    \draw[fusion] (3*\dx+0.4,\dy) --++(0,\dy);
    \draw[fusion] (3*\dx+0.8,0) --++(0,1.5*\dy);
    \foreach \x in {1,2,3}{
      \draw (\x*\dx,-0.4) -- (\x*\dx,2*\dy+0.4);
      \foreach \y in {0,1,2}{
        \node[mpo] at (\x*\dx,\y*\dy) {};
      }
    }
    \node[irrep] at (\dx*0.5,2*\dy) {$g$};
    \node[irrep] at (\dx*0.5,\dy) {$h$};
    \node[irrep] at (\dx*0.5,0) {$k$};
  \end{tikzpicture} \ =
  \begin{tikzpicture}[baseline=-1mm]
    \def\dx{0.6}
    \def\dy{0.4}
    \draw[very thick] (0,-2*\dy-0.4) -- (0,2*\dy+0.4);
    \foreach \y/\f/\l in {-2/black/0.4,-1/black/0.4,0/gray/0.4,2/black/0.4}{
      \draw (0,\y*\dy) --++ (3*\dx+\l,0);
      \node[tensor,fill=\f] at (0,\y*\dy) {};
    }
    \draw[fusion] (3*\dx+0.4,0) --++(0,2*\dy);
    \draw[fusion] (3*\dx+0.4,-2*\dy) --++(0,\dy);
    \draw (3*\dx+0.4,-1.5*\dy) --++(0.4,0);
    \foreach \x in {1,2,3}{
      \draw (\x*\dx,-2*\dy-0.4) -- (\x*\dx,2*\dy+0.4);
      \foreach \y in {-2,-1,0,2}{
        \node[mpo] at (\x*\dx,\y*\dy) {};
      }
    }
    \node[irrep] at (\dx*0.5,2*\dy) {$g$};
    \node[irrep] at (\dx*0.5,0) {$g^{-1}$};
    \node[irrep] at (\dx*0.5,-\dy) {$g$};
    \node[irrep] at (\dx*0.5,-2*\dy) {$hk$};
    \node[draw, fill=white,rounded corners, minimum width = \dx*3.5cm,inner sep=1pt] at (1.5*\dx,\dy) {$v(h,k)$};
  \end{tikzpicture} \ =
  \begin{tikzpicture}[baseline=3mm]
    \def\dx{0.6}
    \def\dy{0.4}
    \draw[very thick] (0,-0.4) -- (0,2*\dy+0.4+0.3);
    \draw (0,0) --++ (3*\dx+0.4,0);
    \node[mpo] at (0,0) {};
    \foreach \x in {1,2,3}{
      \draw (\x*\dx,-0.4) -- (\x*\dx,2*\dy+0.4+0.3);
      \node[mpo] at (\x*\dx,0) {};
    }
    \node[irrep] at (\dx*0.5,0) {$ghk$};
    \node[draw, fill=white,rounded corners, minimum width = \dx*3.5cm,inner sep=1pt] at (1.5*\dx,2*\dy+0.2) {$\beta_{gR}(v(g,h))$};
    \node[draw, fill=white,rounded corners, minimum width = \dx*3.5cm,inner sep=1pt] at (1.5*\dx,1*\dy+0.1) {$v(g,hk)$};
  \end{tikzpicture} \ .
\end{equation*}
Comparing the this equation with \eqref{eq:cocycle_rhs}, using invertibility of $X_{ghk}$, we obtain the cocycle equation
\begin{equation}
  \beta_{gR}(v(h,k)) \cdot v(g,hk) = \omega(g,h,k) \cdot v(g,h)\cdot v(gh,k),
\end{equation}
and thus the two cocycles coincide.

\subsection{Injective MPSs satisfy the assumptions}

Let $A^{(x)}$ be injective MPS tensors for each $x\in X$ and $\psi_n(x)$ be the corresponding MPS on $n$ sites such that for each $n\in\mathbb N$, $x\in X$ and $g\in G$,
\begin{equation*}
    U_n(g)^* \psi_n(x) = \psi_n(xg),
\end{equation*}
or graphically,
\begin{equation}\label{eq:MPS_sym}
    \begin{tikzpicture}[baseline = 3mm]
      \def\d{0.6};
      \draw[blue] (-0.5,0) rectangle (3*\d+0.5,-0.5);
      \draw (-0.7,0.4) rectangle (3*\d+0.7,-0.7);
      \node[irrep] at (\d/2,0.4) {$g$};
      \node[irrep] at (\d/2,0) {$x$};
      \foreach \x in {0,1,3}{
        \draw (\x*\d,0) -- (\x*\d,0.8);
        \node[cmps] (t) at (\x*\d,0) {};
        \node[cmpo] (t) at (\x*\d,0.4) {};
      }
      \node[fill=white] at (\d*2,0) {$\dots$};
      \node[fill=white] at (\d*2,0.4) {$\dots$};
    \end{tikzpicture} \ = \ 
    \begin{tikzpicture}
      \def\d{0.6};
      \draw[blue] (-0.5,0) rectangle (3*\d+0.5,-0.5);
      \node[irrep] at (\d/2,0) {$xg$};
      \foreach \x in {0,1,3}{
        \draw (\x*\d,0) -- (\x*\d,0.4);
        \node[cmps] (t) at (\x*\d,0) {};
      }
      \node[fill=white] at (\d*2,0) {$\dots$};
    \end{tikzpicture} \ .
\end{equation}

Let us assume that $A^{(x)}$ is in the right canonical form, $\sum_{\mu} A_\mu^{(x)} (A_{\mu}^{(x)})^* = \mathbb 1$, or graphically,
\begin{equation*}
    \begin{tikzpicture}[baseline=1.5mm]
        \def\d{0.6}
        \draw (-\d/2,0.5) -- (\d/2,0.5) -- (\d/2,0) -- (-\d/2,0);
        \draw (0,0)--(0,0.5);
        \node[cmps] at (0,0) {};
        \node[cmps] at (0,0.5) {};
    \end{tikzpicture} = \mathbb 1.
\end{equation*}
Let $\omega_{x}$ be the translation invariant pure state defined by the matrices $A^{(x)}_{\mu}$, that is,
\begin{equation*}
    \omega_{x}(e^{(0)}_{\nu_0\mu_0} e^{(1)}_{\nu_1\mu_1} \dots e^{(k)}_{\nu_k\mu_k}) = \mathrm{tr} \{ \rho^{(x)} A^{(x)}_{\mu_0} A^{(x)}_{\mu_1} \dots A^{(x)}_{\mu_k} (A^{(x)}_{\nu_k})^* \dots (A^{(x)}_{\nu_1})^*(A^{(x)}_{\nu_0})^* \},
\end{equation*}
or graphically, 
\begin{equation}\label{eq:mps_to_state}
  \omega_{x}(X) = 
  \begin{tikzpicture}
    \def\d{0.6};
    \draw[blue] (-\d,0.5) rectangle (\d*2+0.5,-0.5);
      \node[mps,label=left:$\rho^{(x)}$] at (-\d,0) {};
    \foreach \x in {0,1,2}{
      \draw (\d*\x,-0.5)--(\d*\x,0.5);
      \node[cmps] at (\d*\x,0.5) {};
      \node[mps] at (\d*\x,-0.5) {};
    }
    \node[draw, fill=white,rounded corners, minimum width = \d*3cm] at (\d,0) {$X$};
  \end{tikzpicture} \ ,
\end{equation}
where $\rho^{(x)}$ is the left fixed point of the transfer matrix,
\begin{equation*}
    \begin{tikzpicture}[baseline=1.5mm]
        \def\d{0.6}
        \draw (\d/2,0.5) -- (-\d/2,0.5) -- (-\d/2,0) -- (\d/2,0);
        \draw (0,0)--(0,0.5);
        \node[mps] at (0,0) {};
        \node[mps,label=left:$\rho^{(x)}$] at (-\d/2,0.25) {};
        \node[cmps] at (0,0.5) {};
    \end{tikzpicture} = \sum_{\mu} (A_\mu^{(x)})^* \rho^{(x)} A_\mu^{(x)} = \rho^{(x)}.
\end{equation*}
Let us show that these states satisfy $\omega_x \beta_g = \omega_{xg}$. For that, observe first that Lemma~\ref{lem:main_mps_tool} can be applied on the situation described by \eqref{eq:MPS_sym}, and thus exist operators $V(x,g)$ and $W(x,g)$, denoted graphically as
\begin{equation}\label{fusiontensors}
    V(x,g) = 
    \begin{tikzpicture}[xscale=0.5,yscale=0.25,baseline=1.5mm]
        \draw[blue] (0,0) --++ (-1,0);
        \draw[blue] (0,1) --++ (1,0);
        \draw (0,2) --++ (-1,0);
        \draw[action] (0,0) -- (0,2);
        \node[irrep] at (-0.5,0) {$x$};
        \node[irrep] at (-0.5,2) {$g$};
        \node[irrep] at (0.5,1) {$xg$};
    \end{tikzpicture}
    \quad \text{and} \quad 
    W(x,g) = 
    \begin{tikzpicture}[xscale=-0.5,yscale=0.25,baseline=1.5mm]
        \draw[blue] (0,0) --++ (-1,0);
        \draw[blue] (0,1) --++ (1,0);
        \draw (0,2) --++ (-1,0);
        \draw[action] (0,0) -- (0,2);
        \node[irrep] at (-0.5,0) {$x$};
        \node[irrep] at (-0.5,2) {$g$};
        \node[irrep] at (0.5,1) {$xg$};
    \end{tikzpicture} \ ,
\end{equation}
such that that $V(x,g)W(x,g)=\mathbb 1$ and such that for all $n> 0$,
  \begin{equation}\label{eq:mps_fusion_tensors}
    \begin{tikzpicture}[baseline=1mm]
      \def\d{0.6};
      \draw[blue] (-0.5,0) -- (2*\d+0.5,0);
      \draw (-0.5,0.4) -- (2*\d+0.5,0.4);
      \foreach \x in {0,2}{
        \draw (\x*\d,0) -- (\x*\d,0.8);
        \node[mps] (t) at (\x*\d,0) {};
        \node[cmpo] (t) at (\x*\d,0.4) {};
      }
      \node[fill=white] at (\d*1,0) {$\dots$};
      \node[fill=white] at (\d*1,0.4) {$\dots$};
      \draw[action] (-0.5,0)--++(0,0.4);
      \draw[action] (2*\d+0.5,0)--++(0,0.4);
      \draw (-0.5,0.2)--++(-0.5,0);
      \draw (2*\d+0.5,0.2)--++(0.5,0);
      \node[irrep] at (-0.25,0) {$x$};
      \node[irrep] at (-0.25,0.4) {$g$};
      \node[irrep] at (-0.75,0.2) {$xg$};
      \node[irrep] at (2*\d+0.25,0) {$x$};
      \node[irrep] at (2*\d+0.25,0.4) {$g$};
      \node[irrep] at (2*\d+0.75,0.2) {$xg$};
    \draw[decorate, decoration=brace] (-0.1,0.9) -- (2*\d+0.1,0.9) node[midway,above] {$n$};
    \end{tikzpicture} \  = \
    \begin{tikzpicture}
      \def\d{0.6};
      \draw[blue] (-0.5,0) -- (2*\d+0.5,0);
      \foreach \x in {0,2}{
        \draw (\x*\d,0) -- (\x*\d,0.4);
        \node[mps] (t) at (\x*\d,0) {};
      }
      \node[fill=white] at (\d*1,0) {$\dots$};
      \node[irrep] at (-0.25,0) {$xg$};
      \node[irrep] at (2*\d+0.25,0) {$xg$};
    \draw[decorate, decoration=brace] (-0.1,0.5) -- (2*\d+0.1,0.5) node[midway,above] {$n$};
    \end{tikzpicture} \ .
  \end{equation}
Additionally, there is $M'>0$ such any $n\geq 0$, $k>M'$ and $m>M'$,
\begin{equation}\label{eq:MPO_MPS_compress}
  \begin{tikzpicture}[baseline=1mm]
    \def\d{0.6};
    \draw[blue] (-0.6,0) -- (2*\d+0.6,0);
    \draw (-0.6,0.4) -- (2*\d+0.6,0.4);
    \foreach \x in {0,2}{
      \draw (\x*\d,0) -- (\x*\d,0.8);
      \node[mps] (t) at (\x*\d,0) {};
      \node[cmpo] (t) at (\x*\d,0.4) {};
    }
    \node[fill=white] at (\d*1,0) {$\dots$};
    \node[fill=white] at (\d*1,0.4) {$\dots$};
    \node[irrep] at (-0.3,0) {$x$};
    \node[irrep] at (-0.3,0.4) {$g$};
    \node[irrep] at (2*\d+0.3,0) {$x$};
    \node[irrep] at (2*\d+0.3,0.4) {$g$};
    \draw[decorate, decoration=brace] (-0.1,0.9) -- (2*\d+0.1,0.9) node[midway,above] {$n+k+m$};
  \end{tikzpicture} \  =
  \begin{tikzpicture}[baseline=1mm]
    \def\d{0.6};
    \draw[blue] (-0.6,0) -- (2*\d+0.6,0);
    \draw (-0.6,0.4) -- (2*\d+0.6,0.4);
    \draw[blue] (8*\d-0.6,0) -- (10*\d+0.6,0);
    \draw (8*\d-0.6,0.4) -- (10*\d+0.6,0.4);
    \draw[blue] (2*\d+0.6,0.2) -- (8*\d-0.6,0.2);
    \foreach \x in {0,2,8,10}{
      \draw (\x*\d,0) -- (\x*\d,0.8);
      \node[mps] (t) at (\x*\d,0) {};
      \node[cmpo] (t) at (\x*\d,0.4) {};
    }
    \foreach \x in {4,6}{
      \draw (\x*\d,0.2) -- (\x*\d,0.8);
      \node[mps] (t) at (\x*\d,0.2) {};
    }
    \draw[action] (8*\d-0.6,0)--++(0,0.4);
    \draw[action] (2*\d+0.6,0)--++(0,0.4);
    \node[fill=white] at (\d*1,0) {$\dots$};
    \node[fill=white] at (\d*1,0.4) {$\dots$};
    \node[fill=white] at (\d*9,0) {$\dots$};
    \node[fill=white] at (\d*9,0.4) {$\dots$};
    \node[fill=white] at (\d*5,0.2) {$\dots$};
    \node[irrep] at (-0.3,0) {$x$};
    \node[irrep] at (-0.3,0.4) {$g$};
    \node[irrep] at (2*\d+0.3,0) {$x$};
    \node[irrep] at (2*\d+0.3,0.4) {$g$};
    \node[irrep] at (8*\d-0.3,0) {$x$};
    \node[irrep] at (8*\d-0.3,0.4) {$g$};
    \node[irrep] at (10*\d+0.3,0) {$x$};
    \node[irrep] at (10*\d+0.3,0.4) {$g$};
    \draw[decorate, decoration=brace] (-0.1,0.9) -- (2*\d+0.1,0.9) node[midway,above] {$k$};
    \draw[decorate, decoration=brace] (4*\d-0.1,0.9) -- (6*\d+0.1,0.9) node[midway,above] {$n$};
    \draw[decorate, decoration=brace] (8*\d-0.1,0.9) -- (10*\d+0.1,0.9) node[midway,above] {$m$};
  \end{tikzpicture} \ .
\end{equation}
Using Corollary~\ref{cor:zipper}, the equations 
\begin{equation}\label{eq:action_zipper}
  \begin{aligned}
  \begin{tikzpicture}[baseline=1mm]
    \def\d{0.6};
    \draw[blue] (-0.6,0) -- (2*\d+0.6,0);
    \draw (-0.6,0.4) -- (2*\d+0.6,0.4);
    \draw[blue] (2*\d+0.6,0.2) --++ (0.5,0);
    \foreach \x in {0,2}{
      \draw (\x*\d,0) -- (\x*\d,0.8);
      \node[mps] (t) at (\x*\d,0) {};
      \node[cmpo] (t) at (\x*\d,0.4) {};
    }
    \draw[action] (2*\d+0.6,0)--++(0,0.4);
    \node[fill=white] at (\d*1,0) {$\dots$};
    \node[fill=white] at (\d*1,0.4) {$\dots$};
    \node[irrep] at (-0.3,0) {$x$};
    \node[irrep] at (-0.3,0.4) {$g$};
    \node[irrep] at (2*\d+0.3,0) {$x$};
    \node[irrep] at (2*\d+0.3,0.4) {$g$};
    \draw[decorate, decoration=brace] (-0.1,0.9) -- (2*\d+0.1,0.9) node[midway,above] {$k$};
  \end{tikzpicture} \  =  \
  \begin{tikzpicture}[baseline=1mm]
    \def\d{0.6};
    \draw[blue] (-0.6,0) -- (2*\d+0.6,0);
    \draw (-0.6,0.4) -- (2*\d+0.6,0.4);
    \draw[blue] (2*\d+0.6,0.2) --++ (\d+0.5,0);
    \draw (4*\d,0.2)--++(0,0.6);
    \node[mps] at (4*\d,0.2) {};
    \foreach \x in {0,2}{
      \draw (\x*\d,0) -- (\x*\d,0.8);
      \node[mps] (t) at (\x*\d,0) {};
      \node[cmpo] (t) at (\x*\d,0.4) {};
    }
    \draw[action] (2*\d+0.6,0)--++(0,0.4);
    \node[fill=white] at (\d*1,0) {$\dots$};
    \node[fill=white] at (\d*1,0.4) {$\dots$};
    \node[irrep] at (-0.3,0) {$x$};
    \node[irrep] at (-0.3,0.4) {$g$};
    \node[irrep] at (2*\d+0.3,0) {$x$};
    \node[irrep] at (2*\d+0.3,0.4) {$g$};
    \draw[decorate, decoration=brace] (-0.1,0.9) -- (2*\d+0.1,0.9) node[midway,above] {$k-1$};
  \end{tikzpicture} , \\
  \begin{tikzpicture}[baseline=1mm,xscale=-1]
    \def\d{0.6};
    \draw[blue] (-0.6,0) -- (2*\d+0.6,0);
    \draw (-0.6,0.4) -- (2*\d+0.6,0.4);
    \draw[blue] (2*\d+0.6,0.2) --++ (0.5,0);
    \foreach \x in {0,2}{
      \draw (\x*\d,0) -- (\x*\d,0.8);
      \node[mps] (t) at (\x*\d,0) {};
      \node[cmpo] (t) at (\x*\d,0.4) {};
    }
    \draw[action] (2*\d+0.6,0)--++(0,0.4);
    \node[fill=white] at (\d*1,0) {$\dots$};
    \node[fill=white] at (\d*1,0.4) {$\dots$};
    \node[irrep] at (-0.3,0) {$x$};
    \node[irrep] at (-0.3,0.4) {$g$};
    \node[irrep] at (2*\d+0.3,0) {$x$};
    \node[irrep] at (2*\d+0.3,0.4) {$g$};
    \draw[decorate, decoration={brace,mirror}] (-0.1,0.9) -- (2*\d+0.1,0.9) node[midway,above] {$k$};
  \end{tikzpicture} \  =  \
  \begin{tikzpicture}[baseline=1mm,xscale=-1]
    \def\d{0.6};
    \draw[blue] (-0.6,0) -- (2*\d+0.6,0);
    \draw (-0.6,0.4) -- (2*\d+0.6,0.4);
    \draw[blue] (2*\d+0.6,0.2) --++ (\d+0.5,0);
    \draw (4*\d,0.2)--++(0,0.6);
    \node[mps] at (4*\d,0.2) {};
    \foreach \x in {0,2}{
      \draw (\x*\d,0) -- (\x*\d,0.8);
      \node[mps] (t) at (\x*\d,0) {};
      \node[cmpo] (t) at (\x*\d,0.4) {};
    }
    \draw[action] (2*\d+0.6,0)--++(0,0.4);
    \node[fill=white] at (\d*1,0) {$\dots$};
    \node[fill=white] at (\d*1,0.4) {$\dots$};
    \node[irrep] at (-0.3,0) {$x$};
    \node[irrep] at (-0.3,0.4) {$g$};
    \node[irrep] at (2*\d+0.3,0) {$x$};
    \node[irrep] at (2*\d+0.3,0.4) {$g$};
    \draw[decorate, decoration={brace,mirror}] (-0.1,0.9) -- (2*\d+0.1,0.9) node[midway,above] {$k-1$};
  \end{tikzpicture}  \ ,
  \end{aligned}
\end{equation}
are also satisfied for any $k>M'$. For $g=e$ the MPO is one-dimensional, and the tensors $V(x,e)$ and $W(x,e)$ are chosen to be the identity:
\begin{equation}\label{eq:trivial_action_tensors}
    V(x,e) = 
    \begin{tikzpicture}[xscale=0.5,yscale=0.25,baseline=1.5mm]
        \draw[blue] (0,0) --++ (-1,0);
        \draw[blue] (0,1) --++ (1,0);
        \draw[densely dotted] (0,2) --++ (-1,0);
        \draw[action] (0,0) -- (0,2);
        \node[irrep] at (-0.5,0) {$x$};
        \node[irrep] at (-0.5,2) {$e$};
        \node[irrep] at (0.5,1) {$x$};
    \end{tikzpicture} = \mathbb 1 
    \quad \text{and} \quad 
    W(x,e) = 
    \begin{tikzpicture}[xscale=-0.5,yscale=0.25,baseline=1.5mm]
        \draw[blue] (0,0) --++ (-1,0);
        \draw[blue] (0,1) --++ (1,0);
        \draw[densely dotted] (0,2) --++ (-1,0);
        \draw[action] (0,0) -- (0,2);
        \node[irrep] at (-0.5,0) {$x$};
        \node[irrep] at (-0.5,2) {$e$};
        \node[irrep] at (0.5,1) {$x$};
    \end{tikzpicture} = \mathbb 1 \ .
\end{equation}
Let us denote $V(x,g)^*$ and $W(x,g)^*$ as (notice that the tensors are rotated $180$ degrees w.r.t.\ V and W)
\begin{equation*}
    V(x,g)^* = 
    \begin{tikzpicture}[xscale=0.5,yscale=0.25,baseline=-3.5mm,rotate=180]
        \draw[blue] (0,0) --++ (-1,0);
        \draw[blue] (0,1) --++ (1,0);
        \draw (0,2) --++ (-1,0);
        \draw[caction] (0,0) -- (0,2);
        \node[irrep] at (-0.5,0) {$x$};
        \node[irrep] at (-0.5,2) {$g$};
        \node[irrep] at (0.5,1) {$xg$};
    \end{tikzpicture}
    \quad \text{and} \quad 
    W(x,g)^* = 
    \begin{tikzpicture}[xscale=-0.5,yscale=0.25,baseline=-3.5mm,rotate=180]
        \draw[blue] (0,0) --++ (-1,0);
        \draw[blue] (0,1) --++ (1,0);
        \draw (0,2) --++ (-1,0);
        \draw[caction] (0,0) -- (0,2);
        \node[irrep] at (-0.5,0) {$x$};
        \node[irrep] at (-0.5,2) {$g$};
        \node[irrep] at (0.5,1) {$xg$};
    \end{tikzpicture} \ .
\end{equation*}
These tensors satisfy the adjoint of the Eqns.~\eqref{eq:mps_fusion_tensors}, \eqref{eq:MPO_MPS_compress} and \eqref{eq:action_zipper}. We will need the following lemma:
\begin{lemma}
For large enough $k$, 
\begin{equation}\label{eq:MPS_MPO_ends}
  \begin{tikzpicture}[baseline=5mm]
    \def\d{0.6};
    \draw[blue]  (2*\d+0.6,0) -- (-\d-0.6,0) --++ (0,1.2) node[midway,tensor,blue,label=left:$\color{black}\rho^{(x)}$] {}-- (2*\d+0.6,1.2);
    \draw (-0.6,0.4) -- (2*\d+0.6,0.4);
    \draw[blue] (2*\d+0.6,0.2) --++ (0.5,0);
    \draw (-0.6,0.8) -- (2*\d+0.6,0.8);
    \draw[blue] (2*\d+0.6,1) --++ (0.5,0);
    \foreach \x in {0,2}{
      \draw (\x*\d,0) -- (\x*\d,1.2);
      \node[mps] (t) at (\x*\d,0) {};
      \node[cmpo] (t) at (\x*\d,0.4) {};
      \node[cmps] (t) at (\x*\d,1.2) {};
      \node[mpo] (t) at (\x*\d,0.8) {};
    }
    \draw[action] (2*\d+0.6,0)--++(0,0.4);
    \draw[caction] (2*\d+0.6,0.8)--++(0,0.4);
    \draw[fusion] (-0.6,0.4)--++(0,0.4);
    \foreach \y in {0,0.4,0.8,1.2}{
        \node[fill=white] at (\d*1,\y) {$\dots$};
    }
    \node[irrep] at (-0.3,0) {$x$};
    \node[irrep] at (-0.3,0.4) {$g$};
    \node[irrep] at (2*\d+0.3,0) {$x$};
    \node[irrep] at (2*\d+0.3,0.4) {$g$};
    \draw[decorate, decoration={brace,mirror}] (-0.1,-0.15) -- (2*\d+0.1,-0.15) node[midway,below] {$k$};
  \end{tikzpicture} \ \otimes \ 
  \begin{tikzpicture}[baseline=5mm,xscale=-1]
    \def\d{0.6};
    \draw[blue]  (2*\d+0.6,0) -- (-\d-0.6,0) --++ (0,1.2) -- (2*\d+0.6,1.2);
    \draw (-0.6,0.4) -- (2*\d+0.6,0.4);
    \draw[blue] (2*\d+0.6,0.2) --++ (0.5,0);
    \draw (-0.6,0.8) -- (2*\d+0.6,0.8);
    \draw[blue] (2*\d+0.6,1) --++ (0.5,0);
    \foreach \x in {0,2}{
      \draw (\x*\d,0) -- (\x*\d,1.2);
      \node[mps] (t) at (\x*\d,0) {};
      \node[cmpo] (t) at (\x*\d,0.4) {};
      \node[cmps] (t) at (\x*\d,1.2) {};
      \node[mpo] (t) at (\x*\d,0.8) {};
    }
    \draw[action] (2*\d+0.6,0)--++(0,0.4);
    \draw[caction] (2*\d+0.6,0.8)--++(0,0.4);
    \draw[fusion] (-0.6,0.4)--++(0,0.4);
    \foreach \y in {0,0.4,0.8,1.2}{
        \node[fill=white] at (\d*1,\y) {$\dots$};
    }
    \node[irrep] at (-0.3,0) {$x$};
    \node[irrep] at (-0.3,0.4) {$g$};
    \node[irrep] at (2*\d+0.3,0) {$x$};
    \node[irrep] at (2*\d+0.3,0.4) {$g$};
    \draw[decorate, decoration={brace}] (-0.1,-0.15) -- (2*\d+0.1,-0.15) node[midway,below] {$k$};
  \end{tikzpicture}\ = \rho^{(xg)} \otimes \mathbb 1 \ .  
\end{equation}    
\end{lemma}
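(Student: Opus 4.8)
The plan is to recognise the four-row column in \eqref{eq:MPS_MPO_ends} as a \emph{fused} chain and to read off the two open $xg$-legs as the two ends of a strip built from the injective tensor $a(xg)$. The lower pair in the column ($a(x)$ capped from above by the conjugate MPO $u(g)^*$) and the upper pair ($u(g)$ capped by $a(x)^*$) are exactly the configurations governed by the fusion relation \eqref{eq:mps_fusion_tensors}, so I would first use \eqref{eq:mps_fusion_tensors} together with the compression \eqref{eq:MPO_MPS_compress} to replace the bulk of the lower pair by $a(xg)$ and the bulk of the upper pair by $a(xg)^*$, leaving the action tensors $V(x,g),W(x,g)$ at the right boundary precisely as drawn. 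The point of \eqref{eq:MPO_MPS_compress} — which is the nilpotency factorisation \eqref{eq:VW_factorize} of Lemma~\ref{lem:main_mps_tool} specialised to this MPS--MPO situation, and which uses $V(x,g)W(x,g)=\mathbb 1$ — is that this replacement is \emph{exact} once more than $M'$ sites separate the two ends; hence no genuine limit is involved and only finite transition regions of length at most $M'$ survive near each cap.

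Second, I would invoke the size-independence of Corollary~\ref{cor:zipper}, in the form \eqref{eq:action_zipper}: for $k>M'$, inserting or deleting one site of the fused strip does not change the capped contraction. For the left-hand diagram, whose cap joins $a(x)$ to $a(x)^*$ through $\rho^{(x)}$ and joins the two MPO rows through the fusion tensor, this says the resulting boundary operator on $\mathbb C^{D_{xg}}$ is invariant under one application of the transfer matrix $T_{xg}$, i.e.\ it is a left fixed point of $T_{xg}$. Since $a(xg)$ is injective, Perron--Frobenius gives a unique positive full-rank left fixed point, namely $\rho^{(xg)}$ from \eqref{eqlfp}, so the left diagram equals $\rho^{(xg)}$ up to a constant fixed to $1$ by $\tr\rho^{(x)}=1$. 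The mirrored diagram carries no $\rho^{(x)}$ insertion, so the same argument run with the right canonical form \eqref{eq:rfp} — $\sum_\mu a_\mu(x)a_\mu(x)^*=\mathbb 1$ and $\sum_{ij}u_{ij}(g)u_{ij}(g)^*=d\,\mathbb 1$, the factors of $d$ being absorbed by the normalisation $\rho_g^r=\mathbb 1$ — identifies it with the right fixed point $\mathbb 1$ of $T_{xg}$. Placing the two evaluations in tensor product then yields exactly $\rho^{(xg)}\otimes\mathbb 1$.

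The main obstacle is the bookkeeping that makes each cap evaluation exact rather than merely asymptotic, and in particular checking that the left cap truly lands in the one-dimensional fixed-point subspace of $T_{xg}$. Concretely, one must verify that $\rho^{(x)}$ closing the MPS virtuals, the MPU fixed-point data of \eqref{eqlfp} coming from the closed $u(g)^*$--$u(g)$ loop on the left, and the fusion tensor connecting them combine into a left fixed point, and that the transition regions of length $\le M'$ do not spoil this. This is exactly where injectivity of $a(xg)$ (uniqueness of the Perron--Frobenius eigenvector) and the gauge freedom pinned down in Lemma~\ref{lem:fusion_unique} together with the normalisation \eqref{eq:trivial_action_tensors} are indispensable: they eliminate the residual scalar ambiguity of $V(x,g),W(x,g)$ and guarantee that the two mirrored halves carry the complementary fixed points $\rho^{(xg)}$ and $\mathbb 1$ with no leftover phase.
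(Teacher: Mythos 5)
Your fixed-point argument reproduces the first half of the paper's proof: both you and the paper apply the zipper relations \eqref{eq:action_zipper} (and their adjoints) to each capped diagram, conclude that the $\rho^{(x)}$-capped diagram is a left fixed point and the mirrored diagram a right fixed point of the transfer matrix of the tensor $a(xg)$, and then use injectivity together with Perron--Frobenius uniqueness to identify them with $\rho^{(xg)}$ and $\mathbb 1$ up to scalars. Up to that point the two arguments coincide.

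The gap is in how you fix the scalars. The fixed-point reasoning only gives the left cap as $\kappa_1\,\rho^{(xg)}$ and the right cap as $\kappa_2\,\mathbb 1$, and the lemma is the statement that $\kappa_1\kappa_2=1$ (only the product is meaningful, since the claim is about the tensor product of the two caps). You assert that $\kappa_1=1$ follows from $\tr\rho^{(x)}=1$ and that $\kappa_2=1$ follows from the normalisation $\rho_g^r=\mathbb 1$; neither assertion is justified. Knowing $\tr\rho^{(x)}=1$ says nothing about the trace of the capped contraction, which is a nontrivial closed network containing the two MPO rows, the fusion tensor and the action tensors, and the normalisation $\rho_g^r=\mathbb 1$ concerns the pure MPU transfer matrix, not the mixed MPS--MPO cap appearing here. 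Your appeal to Lemma~\ref{lem:fusion_unique} and to \eqref{eq:trivial_action_tensors} also cannot close this: the former only compares two different choices of fusion tensors, and the latter only normalises $V(x,e),W(x,e)$ for the trivial group element. The paper closes the gap by a separate normalisation step: it evaluates the fully closed network $1=\omega_x\circ\beta_g(\mathbb 1)$, then applies \eqref{eq:MPO_MPS_compress} with $n=0$ and $m=k$ to factorize this closed network into the contraction of precisely the two capped diagrams, which by the fixed-point identification equals $\kappa_1\kappa_2\cdot\tr\rho^{(xg)}=\kappa_1\kappa_2$; hence $\kappa_1\kappa_2=1$. Some such closed-diagram evaluation (unitality of $\beta_g$ plus normalisation of $\omega_x$) is indispensable, and it is missing from your proposal.
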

\begin{proof}
Eqs.~\eqref{eq:action_zipper} and their conjugate imply for large enough $k$
\begin{align*}
  \begin{tikzpicture}[baseline=5mm,xscale=-1]
    \def\d{0.6};
    \draw[blue]  (2*\d+0.6,0) -- (-\d-0.6,0) --++ (0,1.2) -- (2*\d+0.6,1.2);
    \draw (-0.6,0.4) -- (2*\d+0.6,0.4);
    \draw[blue] (2*\d+0.6,0.2) --++ (0.5,0);
    \draw (-0.6,0.8) -- (2*\d+0.6,0.8);
    \draw[blue] (2*\d+0.6,1) --++ (0.5,0);
    \foreach \x in {0,2}{
      \draw (\x*\d,0) -- (\x*\d,1.2);
      \node[mps] (t) at (\x*\d,0) {};
      \node[cmpo] (t) at (\x*\d,0.4) {};
      \node[cmps] (t) at (\x*\d,1.2) {};
      \node[mpo] (t) at (\x*\d,0.8) {};
    }
    \draw[action] (2*\d+0.6,0)--++(0,0.4);
    \draw[caction] (2*\d+0.6,0.8)--++(0,0.4);
    \draw[fusion] (-0.6,0.4)--++(0,0.4);
    \foreach \y in {0,0.4,0.8,1.2}{
        \node[fill=white] at (\d*1,\y) {$\dots$};
    }
    \foreach \y/\l in {0/x,0.4/g,0.8/g,1.2/x}{
        \node[irrep] at (-0.3,\y) {$\l$};
        \node[irrep] at (2*\d+0.3,\y) {$\l$};
    }
    \draw[decorate, decoration={brace}] (-0.1,-0.15) -- (2*\d+0.1,-0.15) node[midway,below] {$k$};
  \end{tikzpicture} & = 
  \begin{tikzpicture}[baseline=5mm,xscale=-0.6]
    \draw[blue]  (3,0) -- (-2,0) --++ (0,1.2) -- (3,1.2);
    \draw (-1,0.4) -- (3,0.4);
    \draw[blue] (3,0.2) --++ (2,0);
    \draw (-1,0.8) -- (3,0.8);
    \draw[blue] (3,1) --++ (2,0);
    \draw[blue] (4,0.2)--(4,1);
    \node[mps] at (4,0.2) {};
    \node[cmps] at (4,1) {};
    \foreach \x in {0,2}{
      \draw (\x,0) -- (\x,1.2);
      \node[mps] (t) at (\x,0) {};
      \node[cmpo] (t) at (\x,0.4) {};
      \node[cmps] (t) at (\x,1.2) {};
      \node[mpo] (t) at (\x,0.8) {};
    }
    \draw[action] (3,0)--++(0,0.4);
    \draw[caction] (3,0.8)--++(0,0.4);
    \draw[fusion] (-1,0.4)--++(0,0.4);
    \foreach \y in {0,0.4,0.8,1.2}{
        \node[fill=white] at (1,\y) {$\dots$};
    }
    \foreach \y/\l in {0/x,0.4/g,0.8/g,1.2/x}{
        \node[irrep] at (-0.3,\y) {$\l$};
        \node[irrep] at (2.3,\y) {$\l$};
    }
    \draw[decorate, decoration={brace}] (-0.1,-0.15) -- (2.1,-0.15) node[midway,below] {$k-1$};
  \end{tikzpicture}\ , \\
  \begin{tikzpicture}[baseline=5mm]
    \def\d{0.6};
    \draw[blue]  (2*\d+0.6,0) -- (-\d-0.6,0) --++ (0,1.2) node[midway,tensor,blue,label=left:$\color{black}\rho^{(x)}$] {}-- (2*\d+0.6,1.2);
    \draw (-0.6,0.4) -- (2*\d+0.6,0.4);
    \draw[blue] (2*\d+0.6,0.2) --++ (0.5,0);
    \draw (-0.6,0.8) -- (2*\d+0.6,0.8);
    \draw[blue] (2*\d+0.6,1) --++ (0.5,0);
    \foreach \x in {0,2}{
      \draw (\x*\d,0) -- (\x*\d,1.2);
      \node[mps] (t) at (\x*\d,0) {};
      \node[cmpo] (t) at (\x*\d,0.4) {};
      \node[cmps] (t) at (\x*\d,1.2) {};
      \node[mpo] (t) at (\x*\d,0.8) {};
    }
    \draw[action] (2*\d+0.6,0)--++(0,0.4);
    \draw[caction] (2*\d+0.6,0.8)--++(0,0.4);
    \draw[fusion] (-0.6,0.4)--++(0,0.4);
    \foreach \y in {0,0.4,0.8,1.2}{
        \node[fill=white] at (\d*1,\y) {$\dots$};
    }
    \foreach \y/\l in {0/x,0.4/g,0.8/g,1.2/x}{
        \node[irrep] at (-0.3,\y) {$\l$};
        \node[irrep] at (2*\d+0.3,\y) {$\l$};
    }
    \draw[decorate, decoration={brace,mirror}] (-0.1,-0.15) -- (2*\d+0.1,-0.15) node[midway,below] {$k$};
  \end{tikzpicture}  & =    
  \begin{tikzpicture}[baseline=5mm,xscale=0.6]
    \draw[blue]  (3,0) -- (-2,0) --++ (0,1.2) node[midway,tensor,blue,label=left:$\color{black}\rho^{(x)}$] {}-- (3,1.2);
    \draw (-1,0.4) -- (3,0.4);
    \draw[blue] (3,0.2) --++ (2,0);
    \draw (-1,0.8) -- (3,0.8);
    \draw[blue] (3,1) --++ (2,0);
    \node[mps] (a) at (4,0.2) {};
    \node[cmps] (b) at (4,1) {};
    \draw (a)--(b);
    \foreach \x in {0,2}{
      \draw (\x,0) -- (\x,1.2);
      \node[mps] (t) at (\x,0) {};
      \node[cmpo] (t) at (\x,0.4) {};
      \node[cmps] (t) at (\x,1.2) {};
      \node[mpo] (t) at (\x,0.8) {};
    }
    \draw[action] (3,0)--++(0,0.4);
    \draw[caction] (3,0.8)--++(0,0.4);
    \draw[fusion] (-1,0.4)--++(0,0.4);
    \foreach \y in {0,0.4,0.8,1.2}{
        \node[fill=white] at (1,\y) {$\dots$};
    }
    \foreach \y/\l in {0/x,0.4/g,0.8/g,1.2/x}{
        \node[irrep] at (-0.5,\y) {$\l$};
        \node[irrep] at (2.5,\y) {$\l$};
    }
    \draw[decorate, decoration={brace,mirror}] (-0.1,-0.15) -- (2.1,-0.15) node[midway,below] {$k-1$};
  \end{tikzpicture} \  ,
\end{align*}
and thus these matrices are proportional to the right and left fixed point of the transfer matrix belonging to the state $\omega_{xg}$; their tensor product is thus $\kappa\cdot \rho^{(xg)}\otimes \id$ for some $\kappa\in\mathbb C$. But
\begin{equation*}
    1 = \omega_x\circ\beta_g(\mathbb 1) = 
  \begin{tikzpicture}[baseline=5mm,xscale=0.6]
    \def\d{0.6};
    \draw[blue]  (4,0) rectangle (-2,1.2);
    \draw (-1,0.4) rectangle (3,0.8);
    \draw[fusion] (-1,0.4)--++(0,0.4);
    \draw[fusion] (3,0.4)--++(0,0.4);
    \foreach \x in {0,2}{
      \draw (\x,0) -- (\x,1.2);
      \node[mps] (t) at (\x,0) {};
      \node[cmpo] (t) at (\x,0.4) {};
      \node[cmps] (t) at (\x,1.2) {};
      \node[mpo] (t) at (\x,0.8) {};
    }
    \foreach \y in {0,0.4,0.8,1.2}{
        \node[fill=white] at (1,\y) {$\dots$};
    }
    \foreach \y/\l in {0/x,0.4/g,0.8/g,1.2/x}{
        \node[irrep] at (-0.5,\y) {$\l$};
        \node[irrep] at (2.5,\y) {$\l$};
    }
    \draw[decorate, decoration={brace,mirror}] (-0.1,-0.15) -- (2.1,-0.15) node[midway,below] {$2k$};
  \end{tikzpicture}  \ ,  
\end{equation*}
and thus, using Eq.~\eqref{eq:MPO_MPS_compress} with $m=k$ and $n=0$, we obtain 
\begin{equation*}
  1 = 
  \begin{tikzpicture}[baseline=5mm,xscale=0.6]
    \draw[blue]  (3,0) -- (-2,0) --++ (0,1.2) node[midway,tensor,blue,label=left:$\color{black}\rho^{(x)}$] {}-- (3,1.2);
    \draw[blue]  (4,0) -- (9,0) --++ (0,1.2) -- (4,1.2);
    \draw (3,0.4) --++ (-4,0)--++(0,0.4)--++(4,0);
    \draw (4,0.4) --++ (4,0)--++(0,0.4)--++(-4,0);
    \draw[blue] (3,0.2)--++(1,0);
    \draw[blue] (3,1)--++(1,0);
    \foreach \x in {0,2,5,7}{
      \draw (\x,0) -- (\x,1.2);
      \node[mps] (t) at (\x,0) {};
      \node[cmpo] (t) at (\x,0.4) {};
      \node[cmps] (t) at (\x,1.2) {};
      \node[mpo] (t) at (\x,0.8) {};
    }
    \draw[action] (3,0)--++(0,0.4);
    \draw[caction] (3,0.8)--++(0,0.4);
    \draw[action] (4,0)--++(0,0.4);
    \draw[caction] (4,0.8)--++(0,0.4);
    \draw[fusion] (-1,0.4)--++(0,0.4);
    \draw[fusion] (8,0.4)--++(0,0.4);
    \foreach \y in {0,0.4,0.8,1.2}{
        \node[fill=white] at (1,\y) {$\dots$};
        \node[fill=white] at (6,\y) {$\dots$};
    }
    \foreach \y/\l in {0/x,0.4/g,0.8/g,1.2/x}{
        \node[irrep] at (-0.5,\y) {$\l$};
        \node[irrep] at (2.5,\y) {$\l$};
    }
    \draw[decorate, decoration={brace,mirror}] (-0.1,-0.15) -- (2.1,-0.15) node[midway,below] {$k$};
    \draw[decorate, decoration={brace,mirror}] (-0.1,-0.15) -- (2.1,-0.15) node[midway,below] {$k$};
  \end{tikzpicture} \  = 
    \kappa \cdot \tr \rho^{(xg)},
\end{equation*}
so $\kappa = 1$.
\end{proof}

We are now in the position to show that  $\omega_x \beta_g = \omega_{gx}$. For that, consider a local operator $X$ and express $\omega_x \beta_g (X)$ as: 
\begin{equation*}
  \omega_x\beta_g(X) = 
  \begin{tikzpicture}
    \def\d{0.6};
    \draw (-3*\d-0.6,0.5) rectangle (\d*5+0.5,-0.5);
    \draw[blue] (-3*\d-0.9,0.9) rectangle (\d*5+0.7,-0.9);
    \node[tensor,blue,label=left:$\color{black}\rho^{(x)}$] at (-3*\d-0.9,0) {};
    \node[tensor,label=right:$\color{black}\rho_g$] at (-3*\d-0.6,0) {};
    \foreach \x in {-3,-1,0,2,3,5}{
      \draw (\d*\x,-0.9)--(\d*\x,0.9);
      \node[mpo] at (\d*\x,0.5) {};
      \node[cmpo] at (\d*\x,-0.5) {};
      \node[cmps] at (\d*\x,0.9) {};
      \node[mps] at (\d*\x,-0.9) {};
    }
    \foreach \x in {-2,1,4}{
        \foreach \y in {-0.9,-0.5,0.5,0.9}{
            \node[fill=white] at (\x*\d,\y) {$\dots$};
        }
    }
    \node[draw, fill=white,rounded corners, minimum width = \d*3cm] at (\d,0) {$X$};
    \draw[decorate, decoration=brace] (-3*\d-0.1,1.1) -- (-\d+0.1,1.1) node[midway,above] {$m$};
    \draw[decorate, decoration=brace] (3*\d-0.1,1.1) -- (5*\d+0.1,1.1) node[midway,above] {$m$};
    \draw[decorate, decoration=brace] (-0.1,1.1) -- (2*\d+0.1,1.1) node[midway,above] {$k$};
  \end{tikzpicture} \ .
\end{equation*}
The value of this expression is independent of $m$ given that it is sufficiently large. Choosing now a large enough $m$ and applying  \eqref{eq:MPO_MPS_compress}, we obtain that 
\begin{equation*}
  \omega_x\beta_g(X) = 
  \begin{tikzpicture}
    \def\d{0.6};
    \draw (-\d,0.5) -- (-4*\d-0.6,0.5) --++(0,-1) --(-\d,-0.5); 
    \draw (3*\d,0.5) -- (6*\d+0.5,0.5) --++(0,-1) --(3*\d,-0.5); 
    \draw[blue] (-\d,0.9) -- (-4*\d-0.9,0.9) --++(0,-1.8) --(-\d,-0.9); 
    \draw[blue] (3*\d,0.9) -- (6*\d+0.7,0.9) --++(0,-1.8) --(3*\d,-0.9); 
    \node[tensor,blue,label=left:$\color{black}\rho^{(x)}$] at (-4*\d-0.9,0) {};
    \node[tensor,label=right:$\color{black}\rho_g$] at (-4*\d-0.6,0) {};
    \foreach \y in {0.7,-0.7}{
        \draw (-\d,\y) -- (3*\d,\y); 
        \node[fill=white] at (\d,\y) {$\dots$};
    }
    \foreach \x in {-1,3}{
        \draw[caction] (\x*\d,0.5)--++(0,0.4);
        \draw[action] (\x*\d,-0.5)--++(0,-0.4);
    }
    \foreach \x in {-4,-2,4,6}{
      \draw (\d*\x,-0.9)--(\d*\x,0.9);
      \node[mpo] at (\d*\x,0.5) {};
      \node[cmpo] at (\d*\x,-0.5) {};
      \node[cmps] at (\d*\x,0.9) {};
      \node[mps] at (\d*\x,-0.9) {};
    }
    \foreach \x in {0,2}{
      \draw (\d*\x,-0.7)--(\d*\x,0.7);
      \node[cmps] at (\d*\x,0.7) {};
      \node[mps] at (\d*\x,-0.7) {};        
    }
    \foreach \x in {-3,5}{
        \foreach \y in {-0.9,-0.5,0.5,0.9}{
            \node[fill=white] at (\x*\d,\y) {$\dots$};
        }
    }
    \node[draw, fill=white,rounded corners, minimum width = \d*3cm] at (\d,0) {$X$};
    \draw[decorate, decoration=brace] (-4*\d-0.1,1.1) -- (-2*\d+0.1,1.1) node[midway,above] {$m$};
    \draw[decorate, decoration=brace] (4*\d-0.1,1.1) -- (6*\d+0.1,1.1) node[midway,above] {$m$};
    \draw[decorate, decoration=brace] (-0.1,1.1) -- (2*\d+0.1,1.1) node[midway,above] {$k$};
  \end{tikzpicture} \ .
\end{equation*}

Finally using \eqref{eq:MPS_MPO_ends} we obtain that $\omega_x \beta_g(X) = \omega_{xg}(X)$ for any local operator $X$, and thus $\omega_x \beta_g = \omega_{xg}$.

\subsection{Injective MPS are split}

For the MPS $\omega_x$ generated by injective tensor of normal form $a(x)$, let 
$\Phi_{x}$ be the parent interaction associated to $a(x)$
given by Definition 1.4 of
\cite{O1}. By \cite{FCS}, $\omega_x$ is a frustration free gapped ground state for $\Phi_{x}$. Therefore, by \cite{Matsui2}, $\omega_x$ satisfies the split property.
As a result, we may take a GNS representation of $\omega_x$
of the form
$(\caH=\caH_L\otimes\caH_R,
\pi=\pi_L\otimes\pi_R, \Omega)$.
Here $\pi_L$ (resp. $\pi_R$)
is an irreducible representation of $\caA_L$ (resp. $\caA_R$)
on $\caH_L$ (resp. $\caH_R$).
Let
\begin{align}
    \Omega=\sum_{j}\sqrt{\lambda_j}\xi_j^{(L)}\otimes\xi_j^{(R)}
\end{align}
be a Schmidt decomposition, with
$\lambda_j\neq 0$.
Choose and fix unit vectors
$\xi_{j}^L$, $\xi_j^{R}$
in the decomposition and
set
\begin{align}
    \omega_L(A_L):
=\lmk {\xi_j^{L}},{A_L\xi_j^L}\rmk,\quad
A_L\in\caA_L,\quad
\omega_R(A_R):
=\lmk {\xi_j^{R}},{A_R\xi_j^R}\rmk,\quad
A_R\in\caA_R.
\end{align}
They give pure states
$\omega_L$, $\omega_R$ on $\caA_L$, $\caA_R$.
Because $\omega$ is frustration free, we have 
\begin{align}
\omega_L(\Phi_x(X_L))=0,\quad X_L\subset (-\infty,-1],\quad \omega_R(\Phi_x(X_R))=0,\quad X_R\subset [0,\infty).
\end{align}
Therefore, by Lemma 3.16 of \cite{O1},
there exist one rank projections
$p_L,p_R$ in $\Mat_{D_x}$
such that
\begin{align}
    \begin{split}
       & \omega_R\lmk
    e_{\mu_0\nu_0}^{(0)}\otimes e_{\mu_1\nu_1}^{(1)}\otimes
    \cdots e_{\mu_{l}\nu_l}^{(l)}\rmk
        =\Tr p_R a_{\mu_0}(x)\cdots a_{\mu_l}(x) a_{\nu_l}^*(x)\cdots a_{\nu_0}(x)^*,\\
        &\omega_L\lmk
        e_{\mu_{-l}\nu_{-l}}^{(-l)}\otimes \cdots e_{\mu_{-1}\nu_{-1}}^{(-1)}\rmk
        =\Tr p_L 
        \rho_x^{-\frac 12}
        a_{\nu_{-1}}(x)^*\cdots
        a_{\nu_{-l}}(x)^*
        \rho_x a_{\mu_{-l}}(x)\cdots
        a_{\mu_{-1}}(x)(x)\rho_x^{-\frac 12}.
    \end{split}
\end{align}
By Kadison transitivity, there exists a unitary $U\in \caA$
such that
\begin{align}
    \omega\Ad U=\omega_L\otimes\omega_R.
\end{align}
As $p_L$ and $p_R$ are rank one, $\omega_L(X)$ and $\omega_R(X)$ are described as
    \begin{equation}\label{eq:mps_to_right_state}
      \omega_{L}(X) = 
      \begin{tikzpicture}[xscale=-1]
        \def\d{0.6};
        \draw[blue] (-\d-0.2,0.5) -- (\d*2+0.5,0.5) -- (\d*2+0.5,-0.5) -- (-\d-0.2,-0.5);
          \node[t,blue,fill=white,label=above:$\bra{\phi_L}$] at (-\d-0.2,0.5) {};
          \node[t,blue,label=above:$\ket{\phi_L}$] at (-\d-0.2,-0.5) {};
        \node[tensor,blue,label=left:$\color{black}\rho^{(x)}$] at (2*\d+0.5,0) {};
        \foreach \x in {0,1,2}{
          \draw (\d*\x,-0.5)--(\d*\x,0.5);
          \node[cmps] at (\d*\x,0.5) {};
          \node[mps] at (\d*\x,-0.5) {};
        }
        \node[draw, fill=white,rounded corners, minimum width = \d*3cm] at (\d,0) {$X$};
    		\draw[decorate, decoration={brace,mirror}] (-0.1,0.7) -- (2*\d+0.1,0.7) node[midway,above] {$k$};
      \end{tikzpicture} \ , \quad
      \omega_{R}(X) = 
      \begin{tikzpicture}
        \def\d{0.6};
        \draw[blue] (-\d-0.2,0.5) -- (\d*2+0.5,0.5) -- (\d*2+0.5,-0.5) -- (-\d-0.2,-0.5);
          \node[t,blue,fill=white,label=above:$\ket{\phi_R}$] at (-\d-0.2,0.5) {};
          \node[t,blue,label=above:$\bra{\phi_R}$] at (-\d-0.2,-0.5) {};
        \foreach \x in {0,1,2}{
          \draw (\d*\x,-0.5)--(\d*\x,0.5);
          \node[cmps] at (\d*\x,0.5) {};
          \node[mps] at (\d*\x,-0.5) {};
        }
        \node[draw, fill=white,rounded corners, minimum width = \d*3cm] at (\d,0) {$X$};
    		\draw[decorate, decoration=brace] (-0.1,0.7) -- (2*\d+0.1,0.7) node[midway,above] {$k$};
      \end{tikzpicture} \ ,
    \end{equation}
for any $X\in\mathcal{A}$ that is localized on $k$ particles. Here we have written $p_L=\ket{\Phi_L}\bra{\Phi_L}$ and $p_R=\ket{\Phi_R}\bra{\Phi_R}$.

\subsection{Index from the finite size MPU action on the MPS}

The equation 
\begin{equation*}
    \begin{tikzpicture}[baseline = 3mm]
      \def\d{0.6};
      \draw[blue] (-0.5,0) rectangle (3*\d+0.5,-0.5);
      \draw (-0.7,0.4) rectangle (3*\d+0.7,-0.7);
      \draw (-0.9,0.8) rectangle (3*\d+0.9,-0.9);
      \node[irrep] at (\d/2,0.8) {$g$};
      \node[irrep] at (\d/2,0.4) {$h$};
      \node[irrep] at (\d/2,0) {$xgh$};
      \foreach \x in {0,1,3}{
        \draw (\x*\d,0) -- (\x*\d,1.2);
        \node[mps] (t) at (\x*\d,0) {};
        \node[mpo] (t) at (\x*\d,0.4) {};
        \node[mpo] (t) at (\x*\d,0.8) {};
      }
      \node[fill=white] at (\d*2,0) {$\dots$};
      \node[fill=white] at (\d*2,0.4) {$\dots$};
    \end{tikzpicture} \ = \ 
    \begin{tikzpicture}
      \def\d{0.6};
      \draw[blue] (-0.5,0) rectangle (3*\d+0.5,-0.5);
      \node[irrep] at (\d/2,0) {$x$};
      \foreach \x in {0,1,3}{
        \draw (\x*\d,0) -- (\x*\d,0.4);
        \node[mps] (t) at (\x*\d,0) {};
      }
      \node[fill=white] at (\d*2,0) {$\dots$};
    \end{tikzpicture} \ ,
\end{equation*}
hold for all system size $n$, and thus there exist $V(g,h,x)$ and $W(g,h,x)$ such that Eq.~\eqref{eq:reduction} holds. In fact, we can construct these matrices directly, in two different ways. First, let us introduce
    \begin{equation}\label{eq:lightblue_action}
        \begin{tikzpicture}[xscale=0.5,yscale=-0.25,baseline=1.5mm,rotate=180]
            \draw[blue] (0,0) --++ (-1,0);
            \draw[blue] (0,1) --++ (1,0);
            \draw (0,2) --++ (-1,0);
            \draw[halfc action] (0,0) -- (0,2);
            \node[irrep] at (-0.5,0) {$xg$};
            \node[irrep] at (-0.5,2) {$g$};
            \node[irrep] at (0.5,1) {$x$};
        \end{tikzpicture} = 
        \begin{tikzpicture}[xscale=0.5,yscale=-0.25,baseline=1.5mm,rotate=180]
            \draw[blue] (0,0) --++ (-2,0);
            \draw[blue] (0,1) --++ (1,0);
            \draw (0,2) --++ (-2,0) node[midway,mpo,fill=gray,label=above:$X_g$] {};
            \draw[action] (0,0) -- (0,2);
            \node[irrep] at (-0.5,0) {$xg$};
            \node[irrep] at (-0.5,2) {$\ g^{-1}$};
            \node[irrep] at (-1.5,2) {$g$};
            \node[irrep] at (0.5,1) {$x$};
        \end{tikzpicture}
        \quad \text{and} \quad  
        \begin{tikzpicture}[xscale=-0.5,yscale=-0.25,baseline=1.5mm,rotate=180]
            \draw[blue] (0,0) --++ (-1,0);
            \draw[blue] (0,1) --++ (1,0);
            \draw (0,2) --++ (-1,0);
            \draw[halfc action] (0,0) -- (0,2);
            \node[irrep] at (-0.5,0) {$xg$};
            \node[irrep] at (-0.5,2) {$g$};
            \node[irrep] at (0.5,1) {$x$};
        \end{tikzpicture} \ = 
        \begin{tikzpicture}[xscale=-0.5,yscale=-0.25,baseline=1.5mm,rotate=180]
            \draw[blue] (0,0) --++ (-2.2,0);
            \draw[blue] (0,1) --++ (1,0);
            \draw (0,2) --++ (-2.2,0);
            \node[mpo,fill=gray,label=above:$X_g^{-1}$] at (-1.2,2) {};
            \draw[action] (0,0) -- (0,2);
            \node[irrep] at (-0.5,0) {$xg$};
            \node[irrep] at (-0.6,2) {$g^{-1}$};
            \node[irrep] at (-1.7,2) {$g$};
            \node[irrep] at (0.5,1) {$x$};
        \end{tikzpicture} \ .
    \end{equation}
  these tensors satisfy 
  \begin{equation}\label{eq:lightblue_action_reduction}
    \begin{tikzpicture}[baseline=1mm,yscale=1]
      \def\d{0.6};
      \draw[blue] (-0.5,0) -- (2*\d+0.5,0);
      \draw (-0.5,0.4) -- (2*\d+0.5,0.4);
      \foreach \x in {0,2}{
        \draw (\x*\d,0) -- (\x*\d,0.8);
        \node[mps] (t) at (\x*\d,0) {};
        \node[mpo] (t) at (\x*\d,0.4) {};
      }
      \node[fill=white] at (\d*1,0) {$\dots$};
      \node[fill=white] at (\d*1,0.4) {$\dots$};
      \draw (-0.5,0.2)--++(-0.5,0);
      \draw (2*\d+0.5,0.2)--++(0.5,0);
      \draw[halfc action] (-0.5,0)--++(0,0.4);
      \draw[halfc action] (2*\d+0.5,0)--++(0,0.4);
      \node[irrep] at (-0.25,0) {$xg$};
      \node[irrep] at (-0.25,0.4) {$g$};
      \node[irrep] at (-0.75,0.2) {$x$};
      \node[irrep] at (2*\d+0.25,0) {$xg$};
      \node[irrep] at (2*\d+0.25,0.4) {$g$};
      \node[irrep] at (2*\d+0.75,0.2) {$x$};
    \draw[decorate, decoration={brace}] (-0.1,0.9) -- (2*\d+0.1,0.9) node[midway,above] {$n$};
    \end{tikzpicture} \  = \
    \begin{tikzpicture}[yscale=1]
      \def\d{0.6};
      \draw[blue] (-0.5,0) -- (2*\d+0.5,0);
      \foreach \x in {0,2}{
        \draw (\x*\d,0) -- (\x*\d,0.4);
        \node[mps] (t) at (\x*\d,0) {};
      }
      \node[fill=white] at (\d*1,0) {$\dots$};
      \node[irrep] at (-0.25,0) {$x$};
      \node[irrep] at (2*\d+0.25,0) {$x$};
    \draw[decorate, decoration={brace}] (-0.1,0.5) -- (2*\d+0.1,0.5) node[midway,above] {$n$};
    \end{tikzpicture} \ ,
  \end{equation}
and thus we can consider
\begin{equation*}
    V(g,h,x) = 
    \begin{tikzpicture}[xscale=-0.5,yscale=0.4]
      \draw[blue] (0,-0.5)--(1,-0.5) node[midway,irrep] {$\color{black}xg$};
      \draw[blue] (1,0.25)--(2,0.25) node[midway,irrep] {$\color{black}x\ $};
      \draw[blue] (0,-1) -- (-1,-1) node[midway,irrep] {$\color{black}xgh$};
      \draw (0,0) -- (-1,0) node[midway,irrep] {$h$};
      \draw (1,1) -- (-1,1);
      \node[irrep] at (-0.5,1) {$g$};
      \draw[halfc action] (0,-1)--(0,0);
      \draw[halfc action] (1,-0.5)--(1,1);
    \end{tikzpicture} \quad \text{and} \quad 
    W(g,h,x) = 
    \begin{tikzpicture}[xscale=0.5,yscale=0.4]
      \draw[blue] (0,-0.5)--(1,-0.5) node[midway,irrep] {$\color{black}xg$};
      \draw[blue] (1,0.25)--(2,0.25) node[midway,irrep] {$\ \color{black}x$};
      \draw[blue] (0,-1) -- (-1,-1) node[midway,irrep] {$\color{black}xgh$};
      \draw (0,0) -- (-1,0) node[midway,irrep] {$h$};
      \draw (1,1) -- (-1,1);
      \node[irrep] at (-0.5,1) {$g$};
      \draw[halfc action] (0,-1)--(0,0);
      \draw[halfc action] (1,-0.5)--(1,1);
    \end{tikzpicture} \ ,
\end{equation*}
 and second, 
\begin{equation*}
    \hat V(g,h,x) = 
    \begin{tikzpicture}[xscale=-0.5,yscale=-0.4]
      \draw (0,-0.5)--(1,-0.5) node[midway,irrep] {$\color{black}gh$};
      \draw (1,0.25)--(2,0.25) node[midway,irrep] {$\color{black}x$};
      \draw (0,-1) -- (-1,-1) node[midway,irrep] {$\color{black}g$};
      \draw (0,0) -- (-1,0) node[midway,irrep] {$h$};
      \draw[blue] (1,1) -- (-1,1);
      \node[irrep] at (-0.5,1) {$xgh$};
      \draw[fusion] (0,-1)--(0,0);
      \draw[halfc action] (1,-0.5)--(1,1);
    \end{tikzpicture} \quad \text{and} \quad 
    \hat W(g,h,x) = 
    \begin{tikzpicture}[xscale=0.5,yscale=-0.4]
      \draw (0,-0.5)--(1,-0.5) node[midway,irrep] {$\color{black}gh$};
      \draw[blue] (1,0.25)--(2,0.25) node[midway,irrep] {$\color{black}x$};
      \draw (0,-1) -- (-1,-1) node[midway,irrep] {$\color{black}g$};
      \draw (0,0) -- (-1,0) node[midway,irrep] {$h$};
      \draw[blue] (1,1) -- (-1,1);
      \node[irrep] at (-0.5,1) {$xgh$};
      \draw[fusion] (0,-1)--(0,0);
      \draw[halfc action] (1,-0.5)--(1,1);
    \end{tikzpicture} \ .
\end{equation*}
Using then Lemma~\ref{lem:fusion_unique}, we obtain that there is $\hat \sigma_x(g,h)$ such that
  \begin{equation}\label{eq:mps_index_2}
  \begin{aligned}
    \begin{tikzpicture}[baseline=4mm]
      \def\d{0.6};
      \draw[blue] (-0.5,0) -- (2*\d+0.5,0);
      \draw (-0.5,0.4) -- (2*\d+0.5,0.4);
      \draw (-0.5,0.8) -- (2*\d+1,0.8);
      \foreach \x in {0,2}{
        \draw (\x*\d,0) -- (\x*\d,1.2);
        \node[mps] (t) at (\x*\d,0) {};
        \node[mpo] (t) at (\x*\d,0.4) {};
        \node[mpo] (t) at (\x*\d,0.8) {};
      }
      \node[fill=white] at (\d*1,0) {$\dots$};
      \node[fill=white] at (\d*1,0.4) {$\dots$};
      \node[fill=white] at (\d*1,0.8) {$\dots$};
      \draw[blue] (2*\d+0.5,0.2)--++(0.5,0);
      \draw[halfc action] (2*\d+0.5,0)--++(0,0.4);
      \draw[blue] (2*\d+1,0.5)--++(0.5,0);
      \draw[halfc action] (2*\d+1,0.2)--++(0,0.6);
      \node[irrep] at (-0.25,0.8) {$g$};
      \node[irrep] at (-0.25,0.4) {$h$};
      \node[irrep] at (-0.25,0) {$xgh$};
      \node[irrep] at (2*\d+0.25,0.8) {$g$};
      \node[irrep] at (2*\d+0.25,0.4) {$h$};
      \node[irrep] at (2*\d+0.25,0) {$xgh$};
      \node[irrep] at (2*\d+0.75,0.2) {$xg$};
      \node[irrep] at (2*\d+1.3,0.5) {$x$};
      \draw[decorate, decoration=brace] (-0.1,1.3) -- (2*\d+0.1,1.3) node[midway,above] {$m$};
    \end{tikzpicture} \  = \ \hat\sigma_x(g,h) \cdot
    \begin{tikzpicture}[baseline=2mm]
      \def\d{0.6};
      \draw[blue] (-0.5,0) -- (2*\d+1,0);
      \draw (-0.5,0.4) -- (2*\d+0.5,0.4);
      \draw (-0.5,0.8) -- (2*\d+0.5,0.8);
      \foreach \x in {0,2}{
        \draw (\x*\d,0) -- (\x*\d,1.2);
        \node[mps] (t) at (\x*\d,0) {};
        \node[mpo] (t) at (\x*\d,0.4) {};
        \node[mpo] (t) at (\x*\d,0.8) {};
      }
      \node[fill=white] at (\d*1,0) {$\dots$};
      \node[fill=white] at (\d*1,0.4) {$\dots$};
      \node[fill=white] at (\d*1,0.8) {$\dots$};
      \draw (2*\d+0.5,0.6)--++(0.5,0);
      \draw[fusion] (2*\d+0.5,0.4)--++(0,0.4);
      \draw[blue] (2*\d+1,0.3)--++(0.5,0);
      \draw[halfc action] (2*\d+1,0)--++(0,0.6);
      \node[irrep] at (-0.25,0.8) {$g$};
      \node[irrep] at (-0.25,0.4) {$h$};
      \node[irrep] at (-0.25,0) {$xgh$};
      \node[irrep] at (2*\d+0.25,0.8) {$g$};
      \node[irrep] at (2*\d+0.25,0.4) {$h$};
      \node[irrep] at (2*\d+0.25,0) {$xgh$};
      \node[irrep] at (2*\d+0.75,0.6) {$gh$};
      \node[irrep] at (2*\d+1.3,0.3) {$x$};
      \draw[decorate, decoration=brace] (-0.1,1.3) -- (2*\d+0.1,1.3) node[midway,above] {$m$};
    \end{tikzpicture}, \\
    \begin{tikzpicture}[baseline=4mm,xscale=-1]
      \def\d{0.6};
      \draw[blue] (-0.5,0) -- (2*\d+0.5,0);
      \draw (-0.5,0.4) -- (2*\d+0.5,0.4);
      \draw (-0.5,0.8) -- (2*\d+1,0.8);
      \foreach \x in {0,2}{
        \draw (\x*\d,0) -- (\x*\d,1.2);
        \node[mps] (t) at (\x*\d,0) {};
        \node[mpo] (t) at (\x*\d,0.4) {};
        \node[mpo] (t) at (\x*\d,0.8) {};
      }
      \node[fill=white] at (\d*1,0) {$\dots$};
      \node[fill=white] at (\d*1,0.4) {$\dots$};
      \node[fill=white] at (\d*1,0.8) {$\dots$};
      \draw[blue] (2*\d+0.5,0.2)--++(0.5,0);
      \draw[halfc action] (2*\d+0.5,0)--++(0,0.4);
      \draw[blue] (2*\d+1,0.5)--++(0.5,0);
      \draw[halfc action] (2*\d+1,0.2)--++(0,0.6);
      \node[irrep] at (-0.25,0.8) {$g$};
      \node[irrep] at (-0.25,0.4) {$h$};
      \node[irrep] at (-0.25,0) {$xgh$};
      \node[irrep] at (2*\d+0.25,0.8) {$g$};
      \node[irrep] at (2*\d+0.25,0.4) {$h$};
      \node[irrep] at (2*\d+0.25,0) {$xgh$};
      \node[irrep] at (2*\d+0.75,0.2) {$xg$};
      \node[irrep] at (2*\d+1.3,0.5) {$x$};
      \draw[decorate, decoration={brace,mirror}] (-0.1,1.3) -- (2*\d+0.1,1.3) node[midway,above] {$m$};
    \end{tikzpicture} \  = \ \frac{1}{\hat\sigma_x(g,h)} \cdot
    \begin{tikzpicture}[baseline=2mm,xscale=-1]
      \def\d{0.6};
      \draw[blue] (-0.5,0) -- (2*\d+1,0);
      \draw (-0.5,0.4) -- (2*\d+0.5,0.4);
      \draw (-0.5,0.8) -- (2*\d+0.5,0.8);
      \foreach \x in {0,2}{
        \draw (\x*\d,0) -- (\x*\d,1.2);
        \node[mps] (t) at (\x*\d,0) {};
        \node[mpo] (t) at (\x*\d,0.4) {};
        \node[mpo] (t) at (\x*\d,0.8) {};
      }
      \node[fill=white] at (\d*1,0) {$\dots$};
      \node[fill=white] at (\d*1,0.4) {$\dots$};
      \node[fill=white] at (\d*1,0.8) {$\dots$};
      \draw (2*\d+0.5,0.6)--++(0.5,0);
      \draw[fusion] (2*\d+0.5,0.4)--++(0,0.4);
      \draw[blue] (2*\d+1,0.3)--++(0.5,0);
      \draw[halfc action] (2*\d+1,0)--++(0,0.6);
      \node[irrep] at (-0.25,0.8) {$g$};
      \node[irrep] at (-0.25,0.4) {$h$};
      \node[irrep] at (-0.25,0) {$xgh$};
      \node[irrep] at (2*\d+0.25,0.8) {$g$};
      \node[irrep] at (2*\d+0.25,0.4) {$h$};
      \node[irrep] at (2*\d+0.25,0) {$xgh$};
      \node[irrep] at (2*\d+0.75,0.6) {$gh$};
      \node[irrep] at (2*\d+1.3,0.3) {$x$};
      \draw[decorate, decoration={brace,mirror}] (-0.1,1.3) -- (2*\d+0.1,1.3) node[midway,above] {$m$};
    \end{tikzpicture} \  .
  \end{aligned}
  \end{equation}
It is straightforward to check that $\hat{\sigma}_x(g,h)$ satisfies Eq.~\eqref{eq:sigmaconstraint} with the three-cocycle $\omega$, that is,
\begin{align}
\hat\sigma_x(g,h) \hat\sigma_x(gh,k) =\omega(g,h,k)\hat\sigma_{xg}(h,k) \hat\sigma_x(g,hk).
\end{align}
The value of $\hat\sigma_x(g,h)$ depends on the concrete choice of the action and fusion tensors; given fixed fusion tensors, different choices of the action tensors, using Lemma~\ref{lem:fusion_unique}, lead to other values, denoted by $\hat\sigma_x'(g,h)$, that are related to $\hat\sigma_x(g,h)$ by
\begin{equation}\label{equivclassigmaMPS}  
\hat{\sigma}_x'(g,h)  = \hat{\sigma}_x(g,h) \cdot \frac{\alpha(x,gh)}{\alpha(xg,h)\alpha(x,g)}\ \forall g,h\in  G \ , 
\end{equation}
for some values $\alpha(x,g)$. We will call $\hat{\sigma}$ and $\hat{\sigma}'$ equivalent,
\begin{equation}\label{equivclassigmaMPS} \hat{\sigma}_x(g,h)  \sim \hat{\sigma}_x(g,h) \cdot \frac{\alpha(x,gh)}{\alpha(xg,h)\alpha(x,g)}\ \forall g,h\in  G \ .
\end{equation}
This defines the equivalent classes of $\hat{\sigma}_x(g,h)$ denoted by $[\hat{\sigma}_x(g,h)]$ satisfying Eq. \eqref{eq:sigmaconstraint}. In the following we show that the index defined this way coincides with the index defined through the general formalism.

\subsection{The GNS representation of a MPS}

Here we give an explicit construction for the GNS representation of a MPS. Intuitively, the Hilbert space is defined as finite range deformations of the infinite MPS and the action of the local operators is the obvious one; the cyclic vector is the MPS itself. 

Let $\omega$ be the state defined by the injective/normal MPS $A$ given by matrices $A_{i}\in\mathcal M_D$, $i=1\dots d$. Another way to think of the MPS tensor is as a collection of vectors $a_{\alpha\beta}\in\mathbb C^d$, with $\alpha,\beta=1\dots D$ defined implicitly through the equation $A = \sum_i \ket{i} \otimes A_i = \sum_{\alpha\beta} a_{\alpha\beta} \otimes e_{\alpha\beta}$. Due to the injectivity condition, the transfer matrix has a unique left and right fixed point, both of which are positive and full rank. W.l.o.g.\ we assume that the MPS in the right canonical form, that is, the right fixed point is the identity, while the left fixed point is an invertible positive operator $\rho\in\Mat_D$: 
\begin{equation}\label{eq:mps_fixpoint}
    \sum_i A_i  A_i^\dagger = \mathbb 1_{\Mat_D} \quad \text{and} \quad \sum_i A_i^\dagger \rho A_i = \rho,
\end{equation}
or graphically,
\begin{equation*}
    \begin{tikzpicture}[baseline=1.5mm]
        \def\d{0.6}
        \draw (-\d/2,0.5) -- (\d/2,0.5) -- (\d/2,0) -- (-\d/2,0);
        \draw (0,0)--(0,0.5);
        \node[mps] at (0,0) {};
        \node[cmps] at (0,0.5) {};
    \end{tikzpicture} = \mathbb 1 
    \quad \text{and} \quad
    \begin{tikzpicture}[baseline=1.5mm]
        \def\d{0.6}
        \draw (\d/2,0.5) -- (-\d/2,0.5) -- (-\d/2,0) node[tensor,blue,midway,label=left:$\rho$] {}-- (\d/2,0);
        \draw (0,0)--(0,0.5);
        \node[mps] at (0,0) {};
        \node[cmps] at (0,0.5) {};
    \end{tikzpicture} = \rho.
\end{equation*}
Let us express these equations with the vectors $a_{\alpha\beta}$:
\begin{equation*}
    \sum_{\beta} \langle a_{\gamma \beta}| a_{\alpha \beta} \rangle = \delta_{\alpha \gamma} \quad \text{and} \quad \sum_{\beta\gamma} \rho_{\gamma \beta} \cdot  \langle a_{\gamma \delta} | a_{\beta \alpha}\rangle = \rho_{\delta \alpha}.   
\end{equation*}

For every interval $I\subset \mathbb Z$,  let $\mathcal{K}_I:=\otimes_{i\in I} \mathbb C^d$ and let us define finite dimensional Hilbert spaces $\mathcal{H}_I$ as
\begin{equation*}
    \mathcal{H}_I = \mathcal{K}_I \otimes \Mat_D,
\end{equation*}
with scalar product 
\begin{equation}\label{eq:interval_scalar_product}
    \langle v \otimes m | w \otimes n \rangle_I := \langle v|w \rangle \cdot \mathrm{tr} (m^\dagger \rho  n).   
\end{equation}
Elements of this space are depicted as 
\begin{equation*}
    \begin{tikzpicture}[xscale=0.4,yscale=0.3]
        \draw (-1,0) --(4,0);
        \foreach \x in {0,1,3}{
          \draw (\x,0)--++(0,1);
        }
        \node at (2,0.6) {$\dots$};
        \draw[fusion,blue] (0,0)--(3,0);
    \end{tikzpicture} \in \mathcal{H}_I.
\end{equation*}
Let $J$ be the interval obtained by extending $I$ by one point to the right. Let us define the map $    \phi_{J\leftarrow I} : \mathcal{H}_I \to \mathcal{H}_J $ as the linear extension of 
\begin{equation*}
\phi_{J\leftarrow I}\left(v \otimes m\right) = \sum_{\alpha \beta } v \otimes a_{\alpha\beta} \otimes m e_{\alpha \beta}.
\end{equation*}
Graphically,
\begin{equation*}
    \phi_{J\leftarrow I} : 
    \begin{tikzpicture}[xscale=0.4,yscale=0.3]
        \draw (-1,0) --(4,0);
        \foreach \x in {0,1,3}{
          \draw (\x,0)--++(0,1);
        }
        \node at (2,0.6) {$\dots$};
        \draw[fusion,blue] (0,0)--(3,0);
    \end{tikzpicture} \mapsto
    \begin{tikzpicture}[xscale=0.4,yscale=0.3]
        \draw (-1,0) --(5,0);
        \foreach \x in {0,1,3,4}{
          \draw (\x,0)--++(0,1);
        }
        \node at (2,0.6) {$\dots$};
        \node[mps] at (4,0) {};
        \draw[fusion,blue] (0,0)--(3,0);
    \end{tikzpicture} \ .
\end{equation*}
This map respects the scalar product of any two vectors as:
\begin{align*}
   &\left\langle \phi_{J\leftarrow I}\left(v\otimes m\right) \middle | \phi_{J\leftarrow I}\left(w\otimes n \right) \right\rangle_J = \sum_{\alpha\beta \gamma \delta} \langle v| w\rangle \cdot \langle a_{\alpha \beta}| a_{\gamma\delta}\rangle \cdot \mathrm{tr}(e_{\beta \alpha} m^\dagger \rho ne_{\gamma\delta}) = \\
   &\sum_{\alpha\beta \gamma } \langle v| w\rangle \cdot \langle a_{\alpha \beta}| a_{\gamma\beta}\rangle \cdot \mathrm{tr}(  m^\dagger \rho ne_{\gamma\alpha}) = 
   \sum_{\alpha\gamma } \langle v| w\rangle \cdot \delta_{\alpha\gamma} \cdot \mathrm{tr}(  m^\dagger \rho ne_{\gamma\alpha}) =  \\
   & \langle v| w\rangle \cdot \mathrm{tr}(  m^\dagger \rho n) = \langle v\otimes m| w\otimes n\rangle_I.
\end{align*}
The same calculation in graphical language reads as
\begin{equation*}
    \begin{tikzpicture}[xscale=0.4,yscale=0.4,baseline=1mm]
        \draw (-1,0)  rectangle (5,1);
        \foreach \x in {0,1,3,4}{
          \draw (\x,0)--++(0,1);
        }
        \node at (2,0.5) {$\dots$};
        \node[mps] at (4,0) {};
        \node[cmps] at (4,1) {};
        \node[mps,label=left:$\rho$] at (-1,0.5) {};
        \draw[fusion,blue] (0,0)--(3,0);
        \draw[cfusion,blue] (0,1)--(3,1);
    \end{tikzpicture} \  = 
    \begin{tikzpicture}[xscale=0.4,yscale=0.4,baseline=1mm]
        \draw (-1,0)  rectangle (4,1);
        \foreach \x in {0,1,3}{
          \draw (\x,0)--++(0,1);
        }
        \node at (2,0.5) {$\dots$};
        \node[mps,label=left:$\rho$] at (-1,0.5) {};
        \draw[fusion,blue] (0,0)--(3,0);
        \draw[cfusion,blue] (0,1)--(3,1);
    \end{tikzpicture} \  ,
\end{equation*}
where in the equality we have used Eq.~\eqref{eq:mps_fixpoint}.

Similarly, if  $J$ is the interval obtained by extending $I$ by one point to the left, then let us define the map $\phi_{J\leftarrow I} : \mathcal{H}_I \to \mathcal{H}_J $ as
\begin{equation*}
\phi_{J\leftarrow I}\left(v\otimes m \right) = \sum_{\alpha \beta } a_{\alpha \beta} \otimes v   \otimes e_{\alpha \beta} m.
\end{equation*}
Graphically,
\begin{equation*}
    \phi_{J\leftarrow I} : 
    \begin{tikzpicture}[xscale=0.4,yscale=0.3]
        \draw (-1,0) --(4,0);
        \foreach \x in {0,1,3}{
          \draw (\x,0)--++(0,1);
        }
        \node at (2,0.6) {$\dots$};
        \draw[fusion,blue] (0,0)--(3,0);
    \end{tikzpicture} \mapsto
    \begin{tikzpicture}[xscale=-0.4,yscale=0.3]
        \draw (-1,0) --(5,0);
        \foreach \x in {0,1,3,4}{
          \draw (\x,0)--++(0,1);
        }
        \node at (2,0.6) {$\dots$};
        \node[mps] at (4,0) {};
        \draw[fusion,blue] (0,0)--(3,0);
    \end{tikzpicture} \ .
\end{equation*}
This map also preserves the scalar product:
\begin{align*}
   &\left\langle \phi_{J\leftarrow I}\left(v\otimes m\right) \middle | \phi_{J\leftarrow I}\left(w\otimes n \right) \right\rangle_J = \sum_{\alpha\beta \gamma \delta} \langle a_{\alpha \beta}| a_{\gamma\delta}\rangle \cdot \langle v| w\rangle \cdot  \mathrm{tr}(m^\dagger e_{\beta \alpha}  \rho e_{\gamma\delta} n) = \\
   &\langle v| w\rangle  \sum_{\alpha\beta \gamma\delta }  \rho_{\alpha\gamma}\langle a_{\alpha \beta}| a_{\gamma\delta}\rangle \cdot \mathrm{tr}(  m^\dagger e_{\beta \delta} n) = 
   \langle v| w\rangle \sum_{\beta \delta }  \rho_{\beta\delta} \cdot \mathrm{tr}(  m^\dagger e_{\beta\delta} n) =  \\
   & \langle v| w\rangle \cdot \mathrm{tr}(  m^\dagger \rho n) = \langle v\otimes m| w\otimes n\rangle_I.
\end{align*}
The same calculation in graphical language reads as
\begin{equation*}
    \begin{tikzpicture}[xscale=-0.4,yscale=0.4,baseline=1mm]
        \draw (-1,0)  rectangle (5,1);
        \foreach \x in {0,1,3,4}{
          \draw (\x,0)--++(0,1);
        }
        \node at (2,0.5) {$\dots$};
        \node[mps] at (4,0) {};
        \node[cmps] at (4,1) {};
        \node[mps,label=left:$\rho$] at (5,0.5) {};
        \draw[fusion,blue] (0,0)--(3,0);
        \draw[cfusion,blue] (0,1)--(3,1);
    \end{tikzpicture} \  = 
    \begin{tikzpicture}[xscale=-0.4,yscale=0.4,baseline=1mm]
        \draw (-1,0)  rectangle (4,1);
        \foreach \x in {0,1,3}{
          \draw (\x,0)--++(0,1);
        }
        \node at (2,0.5) {$\dots$};
        \node[mps,label=left:$\rho$] at (4,0.5) {};
        \draw[fusion,blue] (0,0)--(3,0);
        \draw[cfusion,blue] (0,1)--(3,1);
    \end{tikzpicture} \  ,
\end{equation*}
where in the equality we have used Eq.~\eqref{eq:mps_fixpoint}.

Concatenating these maps, we obtain a unique map $\phi_{J\leftarrow I}$ for any two intervals $I$ and $J$ such that $I\subset J$. Graphically,
\begin{equation*}
    \phi_{J\leftarrow I} : 
    \begin{tikzpicture}[xscale=0.5,yscale=0.3]
        \draw (-1,0) --(4,0);
        \foreach \x in {0,1,3}{
          \draw (\x,0)--++(0,1);
        }
        \node at (2,0.6) {$\dots$};
        \draw[fusion,blue] (0,0)--(3,0);
        \draw[decorate, decoration={brace}] (-0.1,1.1) -- (3.1,1.1) node[midway,above] {$I$};
    \end{tikzpicture} \mapsto
    \begin{tikzpicture}[xscale=0.4,yscale=0.3]
        \draw (-4,0) --(7,0);
        \foreach \x in {-3,-1,0,1,3,4,6}{
          \draw (\x,0)--++(0,1);
        }
        \node at (2,0.6) {$\dots$};
        \node[fill=white,inner sep=0.3pt] at (-2,0) {$\dots$};
        \node[fill=white,inner sep=0.3pt] at (5,0) {$\dots$};
        \node[mps] at (-1,0) {};
        \node[mps] at (-3,0) {};
        \node[mps] at (4,0) {};
        \node[mps] at (6,0) {};
        \draw[fusion,blue] (0,0)--(3,0);
        \draw[decorate, decoration={brace}] (-3.1,1.1) -- (6.1,1.1) node[midway,above] {$J$};
    \end{tikzpicture} \ .
\end{equation*}
The maps $\phi_{J\leftarrow I}$ are isometries and they satisfy the relation 
\begin{equation*}
    \phi_{K \leftarrow I} = \phi_{K\leftarrow J} \circ \phi_{ J \leftarrow I} \quad \forall I\subset J\subset K.
\end{equation*}
Using these maps, we can construct the direct limit 
\begin{equation*}
    \mathcal{H}^{loc} = \lim_{I\to \mathbb Z} \mathcal{H}_I.
\end{equation*}
As $\phi_{J\leftarrow I}$ are an isometries, we can extend the scalar product $\langle . | . \rangle_I$ to this vector space. We can then complete $\mathcal{H}^{loc}$ w.r.t.\ this scalar product obtaining a Hilbert space $\mathcal{H}$.
Let $I\subset \bbZ$ be a finite interval, $\phi_{\infty\leftarrow I} : \caH_I\to\caH$ be the embedding of $\mathcal{H}_I$ into $\mathcal{H}$  and $O\in \caA_I$.
For any $J\supset I$ finite interval we set
\begin{align}\label{pio}
    \pi^{J}\lmk O\rmk:=
    O\otimes\unit_{\lmk \bbC^{d}\rmk^{J\setminus I}}\otimes \unit_{\Mat_D},
\end{align}
and we define $\pi(O)$ as
\begin{align}
\pi(O)\phi_{\infty\leftarrow J}\lmk \xi\rmk:=\phi_{\infty\leftarrow J} \lmk\pi^{J}(O) \xi\rmk
\end{align} 
for any $\xi\in \caH_{\rm J}$.
This defines a well-defined bounded operator on $\caH_{\rm loc}$
whose norm satisfies $\lV \pi(O)\rV\le \lV O\rV$.
Because $\caH_{\rm loc}$ is dense in $\caH$, it extends uniquely to a bounded operator
on $\caH$, which we denote by the same symbol.

Let $\Omega_I$ denote the MPS on the interval $I = [n,n+1,\dots , m]$, i.e.\ 
\begin{equation*}
    \Omega_I = \sum_{\alpha \dots \omega} a_{\alpha\beta}^{(n)} a_{\beta\gamma}^{(n+1)}\dots a_{\zeta \omega}^{(m)} \otimes e_{\alpha\omega}. 
\end{equation*}
Graphically,
\begin{equation*}
    \Omega_I = 
    \begin{tikzpicture}
      \def\d{0.6};
      \draw[blue] (-0.5,0) -- (3*\d+0.5,0);
      \foreach \x in {0,1,3}{
        \draw (\x*\d,0) -- (\x*\d,0.4);
        \node[mps] (t) at (\x*\d,0) {};
      }
      \node[fill=white] at (\d*2,0) {$\dots$};
      \draw[decorate, decoration={brace}] (-0.1,0.5) -- (3*\d+0.1,0.5) node[midway,above] {$I$};
    \end{tikzpicture} \ .
\end{equation*}
These vectors obviously satisfy $\phi_{J\leftarrow I}(\Omega_I) = \Omega_J$. Their limit $\Omega=\lim_{I\to \mathbb Z} \Omega_I$ thus exist in $\mathcal{H}$. It is the infinite MPS. It is cyclic for the representation $\pi$: on any (large enough) finite interval $I$, through injectivity of $A$, $\mathrm{Span} \{(O\otimes \mathbb 1)\Omega_I | O\in \mathcal{A}_I\} = \mathcal{H}_I$, and thus $\pi\lmk \mathcal{A}_{loc}\rmk \Omega \subset \pi\lmk \mathcal{A}\rmk\Omega$ is dense in $\mathcal{H}$. Finally notice that, by construction, for any finite interval $I$ and $O\in \mathcal{A}_I$, $\langle \Omega_I | O \Omega_I\rangle = \omega(O)$, and thus $\langle \Omega | O \Omega\rangle = \omega(O)$ holds as well. 

Therefore the triple $(\mathcal{H},\pi,\Omega)$ is a GNS triple for the state $\omega$ defined by the MPS.

Let us note that the construction also works for half-infinite MPS. In this case we can consider only intervals such that one of their endpoint is $\{0\}$.

\subsection{The invariants of the phases coincide}

In this section we give an explicit representation of the unitaries $u_{x,g}: \mathcal{H}_{xg} \to \mathcal{H}_{x}$ defined in Eq.~\eqref{udef} for states and automorphisms that are given by MPS and MPO. We will show then that the index derived from the MPS theory \cite{mposym} coincides with the index derived in this paper. 

Let $G$ be a finite group and $X$ be a finite $G$-set. 
Let $A_x$ be normal MPS tensors for all $x\in X$ and $B_g$ be injective MPO tensors for all $g\in G$ such that Eq.~\eqref{eq:MPU_mult} and Eq.~\eqref{eq:MPS_sym} holds. As the MPS defined by the tensors $A_x$ are normal, they each define a split state $\omega_x$ on $\mathcal{A}$ through Eq.~\eqref{eq:mps_to_state} and thus also a state $\omega_{x,R}$ on $\mathcal{A}_R$ through Eq.~\eqref{eq:mps_to_right_state}.

Let us consider the GNS triple $(\mathcal{H}_x, \pi_x, \Omega_x)$ of each $\omega_{x,R}$. An explicit representation is given as in the previous section; we denote the finite dimensional Hilbert spaces corresponding to the interval $[0,n]$ by $\mathcal{H}_x^{(n)}$ and the corresponding injections by $\phi^{(m, n)}_{x}:\mathcal{H}_x^{(n)}\to \mathcal{H}_x^{(m)}$ for each $x\in X$ and $n,m\in\mathbb Z$ such that $m>n$.
We denote by $\pi_x^{(m)}$
the representation of $\caA_{[0,m]}$ on $\caH_{x}^{(m)}$ constructed as in (\ref{pio}).
Likewise we use notation
$\phi_x^{(m,n)}$, $\phi_x^{(\infty,n)}$ etc.

In the following we give an explicit representation of $u_{x,g}$ defined in Eq.~\eqref{udef}. Through this explicit representation we connect the index defined in Eq.~\eqref{eq:mps_index} to the index defined in Eq.~\eqref{eq:mps_index_2}.

Before giving the explicit representation, notice that for all n,
\begin{equation*}
    \left\langle \psi_n(x) \middle | U_n^*(g) \psi_n(xg) \right \rangle = 1,
\end{equation*}
or graphically,
    \begin{equation*}
        \begin{tikzpicture}[baseline = 3mm]
          \def\d{0.6};
          \draw[blue] (-0.5,0) rectangle (3*\d+0.5,-0.5);
          \draw (-0.7,0.4) rectangle (3*\d+0.7,-0.7);
          \draw[blue] (-0.9,0.8) rectangle (3*\d+0.9,-0.9);
          \node[irrep] at (\d/2,0.8) {$x$};
          \node[irrep] at (\d/2,0.4) {$g$};
          \node[irrep] at (\d/2,0) {$xg$};
          \foreach \x in {0,1,3}{
            \draw (\x*\d,0) -- (\x*\d,0.8);
            \node[mps] (t) at (\x*\d,0) {};
            \node[mpo] (t) at (\x*\d,0.4) {};
            \node[cmps] (t) at (\x*\d,0.8) {};
          }
          \node[fill=white] at (\d*2,0) {$\dots$};
          \node[fill=white] at (\d*2,0.4) {$\dots$};
          \node[fill=white] at (\d*2,0.8) {$\dots$};
        \end{tikzpicture} \ = 1,
    \end{equation*}
    where the blue dots with $xg$ drawn on the horizontal line correspond to the MPS tensors describing $\ket{\psi_n(xg)}$, the black dots with $g$ written on the horizontal line to the MPO tensors describing the MPU $U_n(g)$ and the white dots with $x$ written on the horizontal line to the MPS tensors describing $\bra{\psi_n(x)}$. Here and in the following the empty dots denote the dagger of the corresponding full dots.
    
    As $1\in\mathbb{C}$ is an injective MPS (with bond dimension 1 and physical dimension 1), we can apply Lemma \ref{lem:main_mps_tool} to conclude that there exists $V_0(x,g)$ and $W_0(x,g)$ such that \eqref{eq:reduction} holds. We can actually construct them explicitly, in two different ways. First, we can consider
    \begin{equation*}
        V_0(x,g) = 
        \begin{tikzpicture}[xscale=-0.5,yscale=-0.4]
          \draw[blue] (0,-0.5)--(1,-0.5) node[midway,irrep] {$\color{black}xg$};
          \draw[blue] (0,-1) -- (-1,-1) node[midway,irrep] {$\color{black}x$};
          \draw (0,0) -- (-1,0) node[midway,irrep] {$g$};
          \draw[blue] (1,-0.5) -- (1,1) -- (-1,1);
          \node[irrep] at (-0.5,1) {$xg$};
          \draw[caction] (0,-1)--(0,0);
          \node[mps,label=left:$\rho^{(xg)}$] at (1,0.25) {};
        \end{tikzpicture} \quad \text{and} \quad 
        W_0(x,g) = 
        \begin{tikzpicture}[xscale=0.5,yscale=-0.4]
          \draw[blue] (0,-0.5)--(1,-0.5) node[midway,irrep] {$\color{black}xg$};
          \draw[blue] (0,-1) -- (-1,-1) node[midway,irrep] {$\color{black}x$};
          \draw (0,0) -- (-1,0) node[midway,irrep] {$g$};
          \draw[blue] (1,-0.5) -- (1,1) -- (-1,1);
          \node[irrep] at (-0.5,1) {$xg$};
          \draw[caction] (0,-1)--(0,0);
        \end{tikzpicture} \ .
    \end{equation*}
    The second way to construct such operators is a bit more involved. First, note that the MPO tensors describing $U_n(g^{-1})$ and $U_n(g)^*$ are related to each other with a gauge transformation as described in Eq.~\eqref{eq:conjugate_gauge}, and thus 
\begin{equation*}
        \hat V_0(x,g) = 
        \begin{tikzpicture}[xscale=-0.5,yscale=0.4]
          \draw[blue] (0,-0.5)--(1,-0.5) node[midway,irrep] {$\color{black}x$};
          \draw[blue] (0,-1) -- (-1,-1) node[midway,irrep] {$\color{black}xg$};
          \draw (0,0) -- (-1,0) node[midway,irrep] {$g$};
          \draw[blue] (1,-0.5) -- (1,1) -- (-1,1);
          \node[irrep] at (-0.5,1) {$x$};
          \draw[halfc action] (0,-1)--(0,0);
          \node[mps,label=left:$\rho^{(x)}$] at (1,0.25) {};
        \end{tikzpicture} \quad \text{and} \quad 
        \hat W_0(x,g) = 
        \begin{tikzpicture}[xscale=0.5,yscale=0.4]
          \draw[blue] (0,-0.5)--(1,-0.5) node[midway,irrep] {$\color{black}x$};
          \draw[blue] (0,-1) -- (-1,-1) node[midway,irrep] {$\color{black}xg$};
          \draw (0,0) -- (-1,0) node[midway,irrep] {$g$};
          \draw[blue] (1,-0.5) -- (1,1) -- (-1,1);
          \node[irrep] at (-0.5,1) {$x$};
          \draw[halfc action] (0,-1)--(0,0);
        \end{tikzpicture} \ 
    \end{equation*}
    also satisfy Eq.~\eqref{eq:reduction}, where the light blue tensors were introduced in Eq.~\eqref{eq:lightblue_action}. Using now Lemma~\ref{lem:fusion_unique}, we conclude that there is a non-zero $c_4(x,g)\in \mathbb C$ and $N\in\mathbb{N}$ such that for all $n>N$
    \begin{equation}\label{eq:sandwich_switch}
        c_4(x,g) \cdot
        \begin{tikzpicture}[baseline = 3mm]
          \def\d{0.6};
          \draw[blue] (-0.5,0) -- (4*\d,0);
          \draw (-0.5,0.4) -- (4*\d,0.4);
          \draw[blue] (-0.5,0.8) -- (4*\d+0.5,0.8)--++(0,-0.6)--++(-0.5,0);
          \foreach \x in {\d/2,3.5*\d}{
              \node[irrep] at (\x,0.8) {$x$};
              \node[irrep] at (\x,0.4) {$g$};
              \node[irrep] at (\x,0) {$xg$};          
          }
          \foreach \x in {0,1,3}{
            \draw (\x*\d,0) -- (\x*\d,0.8);
            \node[mps] (t) at (\x*\d,0) {};
            \node[mpo] (t) at (\x*\d,0.4) {};
            \node[cmps] (t) at (\x*\d,0.8) {};
          }
          \node[fill=white] at (\d*2,0) {$\dots$};
          \node[fill=white] at (\d*2,0.4) {$\dots$};
          \node[fill=white] at (\d*2,0.8) {$\dots$};
          \draw[halfc action] (4*\d,0) -- (4*\d,0.4);
          \draw[decorate, decoration={brace,mirror}] (-0.1,-0.2) -- (3*\d+0.1,-0.2) node[midway,below] {$n$};
        \end{tikzpicture} \ =           
        \begin{tikzpicture}[baseline = 3mm]
          \def\d{0.6};
          \draw[blue] (-0.5,0) -- (4*\d+0.5,0)--++(0,0.6)--++(-0.5,0);
          \draw (-0.5,0.4) -- (4*\d,0.4);
          \draw[blue] (-0.5,0.8) -- (4*\d,0.8);
          \foreach \x in {\d/2,3.5*\d}{
              \node[irrep] at (\x,0.8) {$x$};
              \node[irrep] at (\x,0.4) {$g$};
              \node[irrep] at (\x,0) {$xg$};          
          }
          \foreach \x in {0,1,3}{
            \draw (\x*\d,0) -- (\x*\d,0.8);
            \node[mps] (t) at (\x*\d,0) {};
            \node[mpo] (t) at (\x*\d,0.4) {};
            \node[cmps] (t) at (\x*\d,0.8) {};
          }
          \node[fill=white] at (\d*2,0) {$\dots$};
          \node[fill=white] at (\d*2,0.4) {$\dots$};
          \node[fill=white] at (\d*2,0.8) {$\dots$};
          \draw[caction] (4*\d,0.4) -- (4*\d,0.8);
          \draw[decorate, decoration={brace,mirror}] (-0.1,-0.2) -- (3*\d+0.1,-0.2) node[midway,below] {$n$};
        \end{tikzpicture} \ .
    \end{equation}

Let $m,n\in \mathbb Z$ such that $m-n$ is large enough and let us define the operators $u_{x,g}^{(m, n)}:\mathcal{H}_{xg}^{(n)}\to \mathcal{H}_{x}^{(m)}$ and $w_{x,g}^{(m,n)}: \mathcal{H}_{x}^{(n)}\to \mathcal{H}_{xg}^{(m)}$ as
\begin{align}
    w_{x,g}^{(m, n)}: \ 
    \begin{tikzpicture}[xscale=0.6]
        \draw[blue] (0,0) -- (2.7,0);
        \node[irrep] at (2.5,0) {$x$};
        \foreach \x in {0,2}{
            \draw (\x,0) -- (\x,0.4);
        }
        \foreach \x in {(1,0.25)}{
            \node[fill=white] at \x {$\dots$};
        }
        \draw[fusion] (0,0)--(2,0);
        \draw[decorate, decoration=brace] (-0.1,0.5) -- (2.1,0.5) node[midway,above] {$n$};
    \end{tikzpicture} &\mapsto \frac{1}{\lambda} \cdot 
    \begin{tikzpicture}[xscale=0.6]
        \draw[blue] (0,0) -- (6,0);
        \draw (0,0.4) -- (6,0.4);
        \foreach \x in {0,2}{
            \draw (\x,0) -- (\x,0.8);
            \node[cmpo] at (\x,0.4) {};
        }
        \draw[fusion] (0,0)--(2,0);
        \foreach \x in {3,5}{
            \draw (\x,0) -- (\x,0.8);
            \node[mps] at (\x,0) {};        
            \node[cmpo] at (\x,0.4) {};
        }
        \draw[blue] (6,0.2)--(6.7,0.2);
        \node[irrep] at (6.5,0.2) {$xg$};
        \node[irrep] at (5.5,0.4) {$g$};
        \node[irrep] at (5.5,0) {$x$};
        \draw[action] (6,0)--(6,0.4);
        \foreach \x in {(1,0.4), (4,0.4), (4,0)}{
            \node[fill=white] at \x {$\dots$};
        }
        \draw[decorate, decoration=brace] (-0.1,0.9) -- (2.1,0.9) node[midway,above] {$n$};
        \draw[decorate, decoration=brace] (2.9,0.9) -- (5.1,0.9) node[midway,above] {$m-n$};
    \end{tikzpicture}  \ , \\  \label{eq:u finite def}   
    u_{x,g}^{(m,n)} : \ 
    \begin{tikzpicture}[xscale=0.6]
        \draw[blue] (0,0) -- (2.7,0);
        \node[irrep] at (2.5,0) {$xg$};
        \foreach \x in {0,2}{
            \draw (\x,0) -- (\x,0.4);
        }
        \foreach \x in {(1,0.25)}{
            \node[fill=white] at \x {$\dots$};
        }
        \draw[fusion] (0,0)--(2,0);
        \draw[decorate, decoration=brace] (-0.1,0.5) -- (2.1,0.5) node[midway,above] {$n$};
    \end{tikzpicture} &\mapsto \frac{c_4(x,g)}{\lambda} \cdot 
    \begin{tikzpicture}[xscale=0.6,baseline=1mm]
        \draw[blue] (0,0) -- (6,0);
        \draw (0,0.4) -- (6,0.4);
        \foreach \x in {0,2}{
            \draw (\x,0) -- (\x,0.8);
            \node[mpo] at (\x,0.4) {};
        }
        \draw[fusion] (0,0)--(2,0);
        \foreach \x in {3,5}{
            \draw (\x,0) -- (\x,0.8);
            \node[mps] at (\x,0) {};        
            \node[mpo] at (\x,0.4) {};
        }
        \draw[blue] (6,0.2)--(6.7,0.2);
        \draw[halfc action] (6,0)--(6,0.4);
        \node[irrep] at (6.5,0.2) {$x$};
        \node[irrep] at (5.5,0) {$xg$};
        \node[irrep] at (5.5,0.4) {$g$};
        \foreach \x in {(1,0.4), (4,0.4), (4,0)}{
            \node[fill=white] at \x {$\dots$};
        }
        \draw[decorate, decoration=brace] (-0.1,0.9) -- (2.1,0.9) node[midway,above] {$n$};
        \draw[decorate, decoration=brace] (2.9,0.9) -- (5.1,0.9) node[midway,above] {$m-n$};
    \end{tikzpicture} \ ,
\end{align}
where $\lambda\in \mathbb{R}^+$ is a constant that we fix later and the light blue tensor is the one defined in Eq.~\eqref{eq:lightblue_action}. 

Let us note here that using \eqref{eq:action_zipper} together with \eqref{eq:conjugate_gauge}, we obtain that for large enough $k$,
\begin{equation}\label{eq:trivial_zipper_gray}
  \begin{aligned}
  \begin{tikzpicture}[baseline=1mm]
    \def\d{0.6};
    \draw[blue] (-0.6,0) -- (2*\d+0.6,0);
    \draw (-0.6,0.4) -- (2*\d+0.6,0.4);
    \draw[blue] (2*\d+0.6,0.2) --++ (0.5,0);
    \foreach \x in {0,2}{
      \draw (\x*\d,0) -- (\x*\d,0.8);
      \node[mps] (t) at (\x*\d,0) {};
      \node[mpo] (t) at (\x*\d,0.4) {};
    }
    \draw[halfc action] (2*\d+0.6,0)--++(0,0.4);
    \node[fill=white] at (\d*1,0) {$\dots$};
    \node[fill=white] at (\d*1,0.4) {$\dots$};
    \node[irrep] at (-0.3,0) {$xg$};
    \node[irrep] at (-0.3,0.4) {$g$};
    \node[irrep] at (2*\d+0.3,0) {$xg$};
    \node[irrep] at (2*\d+0.3,0.4) {$g$};
    \node[irrep] at (2*\d+0.85,0.2) {$x$};
    \draw[decorate, decoration=brace] (-0.1,0.9) -- (2*\d+0.1,0.9) node[midway,above] {$k$};
  \end{tikzpicture} \  =  \
  \begin{tikzpicture}[baseline=1mm]
    \def\d{0.6};
    \draw[blue] (-0.6,0) -- (2*\d+0.6,0);
    \draw (-0.6,0.4) -- (2*\d+0.6,0.4);
    \draw[blue] (2*\d+0.6,0.2) --++ (\d+0.5,0);
    \draw (4*\d,0.2)--++(0,0.6);
    \node[mps] at (4*\d,0.2) {};
    \foreach \x in {0,2}{
      \draw (\x*\d,0) -- (\x*\d,0.8);
      \node[mps] (t) at (\x*\d,0) {};
      \node[mpo] (t) at (\x*\d,0.4) {};
    }
    \draw[halfc action] (2*\d+0.6,0)--++(0,0.4);
    \node[fill=white] at (\d*1,0) {$\dots$};
    \node[fill=white] at (\d*1,0.4) {$\dots$};
    \node[irrep] at (-0.3,0) {$xg$};
    \node[irrep] at (-0.3,0.4) {$g$};
    \node[irrep] at (2*\d+0.3,0) {$xg$};
    \node[irrep] at (2*\d+0.3,0.4) {$g$};
    \node[irrep] at (2*\d+0.85,0.2) {$x$};
    \draw[decorate, decoration=brace] (-0.1,0.9) -- (2*\d+0.1,0.9) node[midway,above] {$k-1$};
  \end{tikzpicture} , \\
  \begin{tikzpicture}[baseline=1mm,xscale=-1]
    \def\d{0.6};
    \draw[blue] (-0.6,0) -- (2*\d+0.6,0);
    \draw (-0.6,0.4) -- (2*\d+0.6,0.4);
    \draw[blue] (2*\d+0.6,0.2) --++ (0.5,0);
    \foreach \x in {0,2}{
      \draw (\x*\d,0) -- (\x*\d,0.8);
      \node[mps] (t) at (\x*\d,0) {};
      \node[mpo] (t) at (\x*\d,0.4) {};
    }
    \draw[halfc action] (2*\d+0.6,0)--++(0,0.4);
    \node[fill=white] at (\d*1,0) {$\dots$};
    \node[fill=white] at (\d*1,0.4) {$\dots$};
    \node[irrep] at (-0.3,0) {$xg$};
    \node[irrep] at (-0.3,0.4) {$g$};
    \node[irrep] at (2*\d+0.3,0) {$xg$};
    \node[irrep] at (2*\d+0.3,0.4) {$g$};
    \node[irrep] at (2*\d+0.85,0.2) {$x$};
    \draw[decorate, decoration={brace,mirror}] (-0.1,0.9) -- (2*\d+0.1,0.9) node[midway,above] {$k$};
  \end{tikzpicture} \  =  \
  \begin{tikzpicture}[baseline=1mm,xscale=-1]
    \def\d{0.6};
    \draw[blue] (-0.6,0) -- (2*\d+0.6,0);
    \draw (-0.6,0.4) -- (2*\d+0.6,0.4);
    \draw[blue] (2*\d+0.6,0.2) --++ (\d+0.5,0);
    \draw (4*\d,0.2)--++(0,0.6);
    \node[mps] at (4*\d,0.2) {};
    \foreach \x in {0,2}{
      \draw (\x*\d,0) -- (\x*\d,0.8);
      \node[mps] (t) at (\x*\d,0) {};
      \node[mpo] (t) at (\x*\d,0.4) {};
    }
    \draw[halfc action] (2*\d+0.6,0)--++(0,0.4);
    \node[fill=white] at (\d*1,0) {$\dots$};
    \node[fill=white] at (\d*1,0.4) {$\dots$};
    \node[irrep] at (-0.3,0) {$xg$};
    \node[irrep] at (-0.3,0.4) {$g$};
    \node[irrep] at (2*\d+0.3,0) {$xg$};
    \node[irrep] at (2*\d+0.3,0.4) {$g$};
    \node[irrep] at (2*\d+0.85,0.2) {$x$};
    \draw[decorate, decoration={brace,mirror}] (-0.1,0.9) -- (2*\d+0.1,0.9) node[midway,above] {$k-1$};
  \end{tikzpicture}  \ .
  \end{aligned}
\end{equation}

We will prove that the series $u_{x,g}^{(m,n)}$ extends to a unitary $u_{x,g}:\mathcal{H}_{xg}\to\mathcal{H}_x$ defined in Eq.~\eqref{udef}, and the series $w_{x,g}^{(m,n)}$ extends to the unitary $\left(u_{x,g}\right)^*:\mathcal{H}_{x}\to\mathcal{H}_{xg}$. For that, we need to first understand the properties of $u_{x,g}^{(m,n)}$. First we show that the operators $u_{x,g}^{(m,n)}$ and $w_{x,g}^{(m,n)}$ are compatible with the growing procedure of the Hilbert spaces $\mathcal{H}_{x}^{(n)}$.
\begin{lemma}\label{lem:u_w_zip}
    The operators $u_{x,g}^{(m,n)}$ and $w_{x,g}^{(m,n)}$ satisfy the compatibility equations
    \begin{align*}
        u_{x,g}^{(k,m)} \circ \phi^{(m,n)}_{xg} &= \phi^{(k,m)}_{x} \circ u_{x,g}^{(m,n)}, \\
        w_{x,g}^{(k,m)} \circ \phi^{(m,n)}_{x} &= \phi_{xg}^{(k,m)} \circ w_{x,g}^{(m,n)} ,
    \end{align*}
    for all $k>m>n$ such that $m-n$ is large enough.
\end{lemma}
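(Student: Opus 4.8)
The plan is to prove both compatibility equations by a direct graphical argument: I would expand each side into its explicit tensor-network contraction and observe that the two sides differ only in whether the MPO $U(g)$ is applied to the extra MPS tensors \emph{before} or \emph{after} the boundary action tensor is contracted. The key ingredient is the one-step zipper for the half-colored action tensor, Eq.~\eqref{eq:trivial_zipper_gray} (itself a consequence of \eqref{eq:action_zipper} and \eqref{eq:conjugate_gauge}), which takes one MPS-$A_{xg}$ tensor together with one MPO-$g$ tensor, fused at the right boundary into the emerging $x$-virtual line, and rewrites them as a single free MPS-$A_x$ tensor sitting on that $x$-line with its own physical leg.

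Concretely, for the first equation I would begin from the definitions. Since $\phi^{(m,n)}_{xg}$ appends $m-n$ copies of $A_{xg}$ to the right of a vector in $\mathcal{H}_{xg}^{(n)}$, the composite $u_{x,g}^{(k,m)}\circ\phi^{(m,n)}_{xg}$ is, up to the scalar $c_4(x,g)/\lambda$, the contraction in which the original fusion-decorated deformation occupies the first $n$ sites, sites $n+1,\dots,k$ carry plain MPS-$A_{xg}$ tensors, the MPO $U(g)$ is stacked over all $k$ sites, and the half-colored action tensor is contracted at the far right boundary. By the same definitions, $\phi^{(k,m)}_{x}\circ u_{x,g}^{(m,n)}$ is, again up to $c_4(x,g)/\lambda$, the contraction in which the action tensor has already been fused at site $m$ (producing the $x$-line) and sites $m+1,\dots,k$ carry plain MPS-$A_x$ tensors. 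The two scalar prefactors already coincide, since exactly one $u$-factor occurs on each side while the maps $\phi$ carry no scalar.

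The heart of the argument is then to apply the one-step zipper \eqref{eq:trivial_zipper_gray} exactly $k-m$ times, at each step peeling the rightmost MPO-$g$/MPS-$A_{xg}$ stack off the action tensor and replacing it by a free MPS-$A_x$ tensor on the $x$-line. After $k-m$ applications the left-hand picture is transformed termwise into the right-hand picture, which establishes the first equation. The second equation follows from the identical manipulation applied to the adjoint tensors (the \texttt{cmpo} tensors and the fully colored action tensor) using the adjoint of \eqref{eq:action_zipper}; there the common scalar prefactor on both sides is $1/\lambda$.

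The step I expect to be the main obstacle is the bookkeeping of the ``large enough'' hypothesis. The zipper \eqref{eq:trivial_zipper_gray}, inherited through Corollary~\ref{cor:zipper} and Lemma~\ref{lem:fusion_unique}, is valid only once the number of MPS-MPO stacks separating the fusion tensor from the boundary action tensor exceeds the relevant nilpotency length $M'$ (together with the constant $N$ entering \eqref{eq:sandwich_switch}). One must therefore check that at every intermediate step of the peeling the action tensor still has more than this many stacks to its left; since the smallest such count occurs once the action tensor has reached site $m$, this is exactly the requirement that $m-n$ be large enough with $k>m>n$. Everything else reduces to verifying that the diagrams produced by iterating the zipper match, leg for leg, the diagrams defining $\phi^{(k,m)}_{x}$ composed with $u_{x,g}^{(m,n)}$ (respectively $\phi^{(k,m)}_{xg}$ composed with $w_{x,g}^{(m,n)}$), which is routine.
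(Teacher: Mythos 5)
Your proposal is correct and follows essentially the same route as the paper's proof: expand both composites via the definitions, note that the scalar prefactors ($c_4(x,g)/\lambda$ for the $u$-equation, $1/\lambda$ for the $w$-equation) agree on both sides, and apply the appropriate one-step zipper $k-m$ times -- the paper simply treats the $w$-equation first via Eq.~\eqref{eq:action_zipper} and declares the $u$-equation analogous via Eq.~\eqref{eq:trivial_zipper_gray}, while you do the reverse. One small correction: the zipper needed for the $w$-equation is Eq.~\eqref{eq:action_zipper} itself (it already involves the conjugate MPO tensors and the full action tensor), not its adjoint, though the manipulation you describe is the right one.
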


\begin{proof}
    The map $\phi_{x}^{(m,n)}$ acts on a vector $v\in\mathcal{H}_{x}^{(n)}$ by adding $m-n$ MPS tensors to it. Then $w_{x,g}^{(k,m)}$ adds $k-m$ more MPS tensors and multiplies with $k$ MPO tensors. That is, 
    \begin{equation*}
    w_{x,g}^{(k,m)} \circ \phi^{(m,n)}_{x} : 
    \begin{tikzpicture}[xscale=0.6]
        \draw[blue] (0,0) -- (2.7,0);
        \node[irrep] at (2.5,0) {$x$};
        \foreach \x in {0,2}{
            \draw (\x,0) -- (\x,0.4);
        }
        \foreach \x in {(1,0.25)}{
            \node[fill=white] at \x {$\dots$};
        }
        \draw[fusion] (0,0)--(2,0);
        \draw[decorate, decoration=brace] (-0.1,0.5) -- (2.1,0.5) node[midway,above] {$n$};
    \end{tikzpicture} \mapsto \frac{1}{\lambda} \cdot 
    \begin{tikzpicture}[xscale=0.6,baseline=1mm]
        \draw[blue] (0,0) -- (6,0);
        \draw (0,0.4) -- (6,0.4);
        \foreach \x in {0,2}{
            \draw (\x,0) -- (\x,0.8);
            \node[cmpo] at (\x,0.4) {};
        }
        \draw[fusion] (0,0)--(2,0);
        \foreach \x in {3,5}{
            \draw (\x,0) -- (\x,0.8);
            \node[mps] at (\x,0) {};        
            \node[cmpo] at (\x,0.4) {};
        }
        \draw[blue] (6,0.2)--(6.7,0.2);
        \node[irrep] at (6.5,0.2) {$xg$};
        \node[irrep] at (5.5,0.4) {$g$};
        \node[irrep] at (5.5,0) {$x$};
        \draw[fusion] (6,0)--(6,0.4);
        \foreach \x in {(1,0.4), (4,0.4), (4,0)}{
            \node[fill=white] at \x {$\dots$};
        }
        \draw[decorate, decoration=brace] (-0.1,0.9) -- (5.1,0.9) node[midway,above] {$k$};
    \end{tikzpicture}  \ . 
    \end{equation*}
    Given that $n-m$ is large enough, we can use Eq.~\eqref{eq:action_zipper} $k-m$ times to arrive at the equivalent expression
    \begin{equation*}
    w_{x,g}^{(k,m)} \circ \phi^{(m,n)}_{x} : 
    \begin{tikzpicture}[xscale=0.6]
        \draw[blue] (0,0) -- (2.7,0);
        \node[irrep] at (2.5,0) {$x$};
        \foreach \x in {0,2}{
            \draw (\x,0) -- (\x,0.4);
        }
        \foreach \x in {(1,0.25)}{
            \node[fill=white] at \x {$\dots$};
        }
        \draw[fusion] (0,0)--(2,0);
        \draw[decorate, decoration=brace] (-0.1,0.5) -- (2.1,0.5) node[midway,above] {$n$};
    \end{tikzpicture} \mapsto \frac{1}{\lambda} \cdot 
    \begin{tikzpicture}[xscale=0.6,baseline=1mm]
        \draw[blue] (0,0) -- (6,0);
        \draw (0,0.4) -- (6,0.4);
        \foreach \x in {0,2}{
            \draw (\x,0) -- (\x,0.8);
            \node[cmpo] at (\x,0.4) {};
        }
        \draw[fusion] (0,0)--(2,0);
        \foreach \x in {3,5}{
            \draw (\x,0) -- (\x,0.8);
            \node[mps] at (\x,0) {};        
            \node[cmpo] at (\x,0.4) {};
        }
        \draw[blue] (6,0.2)--(9.2,0.2);
        \foreach \x in {6.7,8.7}{
            \draw (\x,0.2) -- (\x,0.6);
            \node[mps] at (\x,0.2) {};  
        }
        \node[irrep] at (6.4,0.2) {$xg$};
        \node[irrep] at (5.5,0.4) {$g$};
        \node[irrep] at (5.5,0) {$x$};
        \draw[fusion] (6,0)--(6,0.4);
        \foreach \x in {(1,0.4), (4,0.4), (4,0),(7.7,0.2)}{
            \node[fill=white] at \x {$\dots$};
        }
        \draw[decorate, decoration=brace] (-0.1,0.9) -- (2.1,0.9) node[midway,above] {$n$};
        \draw[decorate, decoration=brace] (3-0.1,0.9) -- (5.1,0.9) node[midway,above] {$m-n$};
        \draw[decorate, decoration=brace] (6.6,0.7) -- (8.8,0.7) node[midway,above] {$k-m$};
    \end{tikzpicture}  \ . 
    \end{equation*}
    Finally notice that this action is the same as the action of $\phi^{(k,m)}_{xg} \circ w_{x,g}^{(m,n)}$. The other equation is proven analogously, using Eq.~\eqref{eq:trivial_zipper_gray}.
\end{proof}

We then show that -- after to growing the Hilbert spaces to matching sizes-- $u_{x,g}^{(m,n)}$ and $w_{x,g}^{(m,n)}$ are the adjoint of each other.
\begin{lemma}\label{lem:u_w_dagger}
    Let $\chi\in\mathcal{H}_{xg}^{(n)}$ and $\psi\in\mathcal{H}_x^{(n)}$ be arbitrary. The operators $u_{x,g}^{(m,n)}$ and $w_{x,g}^{(m,n)}$ satisfy 
    \begin{equation*}
        \left\langle \phi_{x}^{(m,n)}(\psi)\middle| u_{x,g}^{(m,n)} (\chi)\right\rangle^{(m)}_{x} = 
        \left\langle w_{x,g}^{(m,n)}(\psi)\middle| \phi_{xg}^{(m,n)} (\chi)\right\rangle^{(m)}_{xg},
    \end{equation*}
    where the scalar product $\langle . |. \rangle^{(m)}_{x}$ on $\mathcal{H}_x^{(m)}$ for the interval $[0,m]$ is defined in Eq.~\eqref{eq:interval_scalar_product}.
\end{lemma}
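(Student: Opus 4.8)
The plan is to expand both inner products into closed tensor-network scalars and to check that they are carried into one another by the single ``switch'' identity \eqref{eq:sandwich_switch} together with the fixed-point transport identity \eqref{eq:MPS_MPO_ends} just established. First I would substitute the definitions \eqref{eq:u finite def}: the vector $u_{x,g}^{(m,n)}(\chi)$ is $\frac{c_4(x,g)}{\lambda}$ times the network obtained from $\chi$ by laying down $m-n$ further MPS tensors of state $xg$, acting with the MPO layer $U(g)$, and capping the right boundary with the light-blue half-conjugate action tensor of \eqref{eq:lightblue_action}; while $\phi_x^{(m,n)}(\psi)$ grows $\psi$ by $m-n$ MPS tensors of state $x$, so its bra supplies a conjugate MPS layer of state $x$. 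Pairing the two through the scalar product \eqref{eq:interval_scalar_product}, which identifies the $m$ physical legs and closes the $D_x$-bond with $\rho^{(x)}$, produces a closed diagram consisting, read from bottom to top, of the MPS layer of state $xg$ coming from $\chi$, the MPO layer of $g$, and the conjugate MPS layer of state $x$, contracted with $\chi$ and $\psi^\dagger$ on the left, capped by the half-conjugate action tensor on the right, and closed by $\rho^{(x)}$.

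Carrying out the same expansion on the right-hand side using the definition of $w_{x,g}^{(m,n)}$, and recalling that $\lambda\in\mathbb R^{+}$ so that conjugation in the bra $\langle w_{x,g}^{(m,n)}(\psi)|$ leaves the prefactor $1/\lambda$ unchanged, I would obtain the analogous closed diagram: the same three layers, MPS of state $xg$ from $\chi$, MPO of $g$, and conjugate MPS of state $x$ from $\psi$, but now capped on the right by the full conjugate action tensor and closed on the bond by $\rho^{(xg)}$, with overall prefactor $1/\lambda$.

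The two closed diagrams therefore differ in exactly three controlled ways: the prefactor ($c_4(x,g)/\lambda$ versus $1/\lambda$), the right-boundary cap (half-conjugate action versus conjugate action), and the boundary density matrix ($\rho^{(x)}$ versus $\rho^{(xg)}$). I would reconcile the first two simultaneously by applying \eqref{eq:sandwich_switch}: provided $m-n$ is large enough, it trades the half-conjugate action cap for the conjugate action cap at the precise cost of the factor $c_4(x,g)$, which is exactly the $c_4(x,g)$ carried in the prefactor of $u_{x,g}^{(m,n)}$, so the two scalars now share the prefactor $1/\lambda$ and the same cap. The remaining discrepancy between $\rho^{(x)}$ and $\rho^{(xg)}$ is removed by transporting the density matrix through the MPS-MPO-conjugate-MPS sandwich via \eqref{eq:MPS_MPO_ends}, which states precisely that this sandwich sends $\rho^{(x)}$ to $\rho^{(xg)}\otimes\mathbb 1$. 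After these two moves the left diagram becomes identical to the right one, giving the claimed identity.

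The main obstacle is not any single algebraic step but the diagrammatic bookkeeping: one must align the orientations of the (conjugate) MPS and MPO layers produced by the two bras, track where the fusion and action tensors sit after the pairing, and verify that the ``$m-n$ large enough'' hypotheses of \eqref{eq:sandwich_switch}, \eqref{eq:MPS_MPO_ends} and of the zipper relations \eqref{eq:action_zipper} and \eqref{eq:trivial_zipper_gray} are all simultaneously met, so that every reduction used is legitimate and independent of the chosen sizes. Once both scalars are drawn in a common gauge, the equality is immediate.
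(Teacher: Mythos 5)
Your first two steps are exactly the paper's proof: expand $\left\langle \phi_{x}^{(m,n)}(\psi)\middle| u_{x,g}^{(m,n)}(\chi)\right\rangle^{(m)}_{x}$ graphically and apply \eqref{eq:sandwich_switch} once. That single identity simultaneously trades the half-conjugate action cap for the conjugate action cap and absorbs the factor $c_4(x,g)$, and the resulting diagram is read off, by inspection, as $\left\langle w_{x,g}^{(m,n)}(\psi)\middle| \phi_{xg}^{(m,n)}(\chi)\right\rangle^{(m)}_{xg}$ (the common factor $1/\lambda$ is untouched by conjugation since $\lambda>0$). Up to that point your proposal and the paper coincide, and the proof is already complete there.

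Your third step --- reconciling a residual $\rho^{(x)}$ versus $\rho^{(xg)}$ discrepancy by ``transporting'' the density matrix with \eqref{eq:MPS_MPO_ends} --- is both unnecessary and, as described, not executable. It is unnecessary because no density matrix actually enters the comparison: in the paper's convention the pairing contracts the physical legs and the \emph{right} virtual legs (which is where the caps and the wrap-around sit), while the left virtual legs of $\chi$, of $\bar\psi$, and of the single MPO layer are identical spectators on both sides of the claimed identity; the entire difference between the two sides is the right cap and the prefactor, which is precisely what \eqref{eq:sandwich_switch} equates. Indeed the closed diagrams you describe are not even well-defined, since the bra's and ket's left bonds live in different spaces ($D_x$ against $D_{xg}\otimes D_g$), so neither $\rho^{(x)}$ nor $\rho^{(xg)}$ can be inserted there. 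It is not executable because \eqref{eq:MPS_MPO_ends} concerns a homogeneous \emph{four}-layer strip (MPS of $x$, conjugate MPO of $g$, MPO of $g$, conjugate MPS of $x$) closed on the right by the action tensors --- it is the tool for proving $\omega_x\beta_g=\omega_{xg}$, where both a $U(g)$ and a $U(g)^*$ layer are present --- whereas the scalar-product diagrams of this lemma contain a single MPO layer (three layers total), and the only place a density matrix could sit is the far-left boundary, adjacent to the arbitrary vectors $\chi$ and $\bar\psi$, with no homogeneous strip next to it through which anything could be transported. Deleting this step leaves a correct proof identical to the paper's.
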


\begin{proof}
        
   Let us represent the l.h.s.\ using the graphical representation:
   \begin{equation*}
       \left\langle \phi_{x}^{(m,n)}(\psi)\middle| u_{x,g}^{(m,n)} (\chi)\right\rangle_{x}^{(m)} = \frac{c_4(x,g)}{\lambda} \cdot
        \begin{tikzpicture}[xscale=0.6,baseline=3mm]
            \draw[blue] (0,0) -- (6,0);
            \draw[blue] (0,0.8) -- (7,0.8);
            \draw (0,0.4) -- (6,0.4);
            \foreach \x in {0,2}{
                \draw (\x,0) -- (\x,0.8);
                \node[mpo] at (\x,0.4) {};
            }
            \foreach \x in {3,5}{
                \draw (\x,0) -- (\x,0.8);
                \node[mps] at (\x,0) {};        
                \node[cmps] at (\x,0.8) {};        
                \node[mpo] at (\x,0.4) {};
            }
            \draw[fusion] (0,0)--(2,0);
            \draw[cfusion] (0,0.8)--(2,0.8);
            \draw[blue] (6,0.2)--(7,0.2)--(7,0.8);
            \draw[halfc action] (6,0)--(6,0.4);
            \node[irrep] at (6.5,0.2) {$x$};
            \node[irrep] at (5.5,0) {$xg$};
            \node[irrep] at (5.5,0.4) {$g$};
            \node[irrep] at (5.5,0.8) {$x$};
            \foreach \x in {(1,0.4), (4,0.4), (4,0)}{
                \node[fill=white] at \x {$\dots$};
            }
        \draw[decorate, decoration=brace] (-0.1,1) -- (2.1,1) node[midway,above] {$n$};
        \draw[decorate, decoration=brace] (3-0.1,1) -- (5.1,1) node[midway,above] {$m-n$};
        \end{tikzpicture} \ ,
   \end{equation*}
   where the long empty tensor represents $\bra{\psi}$. Using Eq.~\eqref{eq:sandwich_switch}, the r.h.s.\ can be changed to
    \begin{equation*}
       \left\langle \phi_{x}^{(m,n)}(\psi)\middle| u_{x,g}^{(m,n)} (\chi)\right\rangle_{x}^{(m)} = \frac{1}{\lambda} \cdot 
       \begin{tikzpicture}[xscale=0.6,baseline=3mm]
            \draw[blue] (0,0) -- (7,0);
            \draw[blue] (0,0.8) -- (6,0.8);
            \draw (0,0.4) -- (6,0.4);
            \foreach \x in {0,2}{
                \draw (\x,0) -- (\x,0.8);
                \node[mpo] at (\x,0.4) {};
            }
            \foreach \x in {3,5}{
                \draw (\x,0) -- (\x,0.8);
                \node[mps] at (\x,0) {};        
                \node[cmps] at (\x,0.8) {};        
                \node[mpo] at (\x,0.4) {};
            }
            \draw[fusion] (0,0)--(2,0);
            \draw[cfusion] (0,0.8)--(2,0.8);
            \draw[blue] (7,0)--(7,0.6)--(6,0.6);
            \draw[caction] (6,0.8)--(6,0.4);
            \node[irrep] at (6.5,0.6) {$xg$};
            \node[irrep] at (5.5,0) {$xg$};
            \node[irrep] at (5.5,0.8) {$x$};
            \node[irrep] at (5.5,0.4) {$g$};
            \foreach \x in {(1,0.4), (4,0.4), (4,0)}{
                \node[fill=white] at \x {$\dots$};
            }
            \draw[decorate, decoration=brace] (-0.1,1) -- (2.1,1) node[midway,above] {$n$};
            \draw[decorate, decoration=brace] (3-0.1,1) -- (5.1,1) node[midway,above] {$m-n$};
        \end{tikzpicture} \ .  
    \end{equation*}
    This expression is exactly $ \left\langle w_{x}^{(m,n)}(\psi)\middle| \phi_{xg}^{(m,n)} (\chi)\right\rangle_{xg}^{(m)}$ finishing the proof. 
\end{proof}

Let us also show that the two operators are the inverses of each other -- again, up to growing the system size.

\begin{lemma}\label{lem:u_w_inverse}
    There is a choice of $\lambda=:\lambda(x,g)>0$ such that the operators $u_{x,g}^{(m,n)}$ and $w_{x,g}^{(m,n)}$ satisfy 
    \begin{align*}
        u_{x,g}^{(k,m)} \circ w_{x,g}^{(m,n)} = \phi_x^{(k,n)}, \\
        w_{x,g}^{(k,m)} \circ u_{x,g}^{(m,n)} = \phi_{xg}^{(k,n)},
    \end{align*}
    for all $k>m>n$ such that $m-n$ and $k-m$ is large enough.
\end{lemma}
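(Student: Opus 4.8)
The plan is to compute the two compositions directly as tensor networks, starting from the defining diagrams \eqref{eq:u finite def} of $u_{x,g}^{(m,n)}$ and $w_{x,g}^{(m,n)}$, and to show that after the MPO virtual bonds cancel in the bulk only a chain of MPS tensors survives, i.e.\ the map $\phi$, up to a single scalar that we absorb into $\lambda$. First I would stack the diagram of $w_{x,g}^{(m,n)}$ below that of $u_{x,g}^{(k,m)}$. On the physical legs $1,\dots,m$ this places the conjugate MPU tensor coming from $w$ directly underneath the MPU tensor coming from $u$; that is, $U(g)$ sits on top of $U(g)^{*}$ along a block of length $m$. The crucial simplification is that this $U(g)$-over-$U(g)^{*}$ sandwich collapses locally: invoking the gray-fusion identities \eqref{eq:gray_g_g_zipper} together with \eqref{eq:trivial_zipper_gray}, the shared horizontal MPO indices unzip bond by bond, replacing the doubled MPO layer on each site by the identity on the physical leg and leaving only gray boundary tensors carrying the Frobenius--Perron data $\rho_g$. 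On the remaining legs $m+1,\dots,k$ the layer contributed by $u$ consists of bare MPS tensors, so those merely elongate the MPS chain.

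The heart of the argument is the right boundary. Here $w$ contributes an ordinary action tensor fusing the MPS index $x$ with the conjugate-MPO index $g$ into $xg$, while $u$ contributes the half-conjugate (light blue) action tensor of \eqref{eq:lightblue_action}. I would first use \eqref{eq:sandwich_switch} to slide the half-conjugate action through the MPO stack, converting it into the conjugate action at the cost of the scalar $c_4(x,g)$; the two action tensors then meet and, via \eqref{eq:action_zipper}, \eqref{eq:lightblue_action_reduction} and the gauge relation \eqref{eq:conjugate_gauge}, reduce to a purely MPS boundary. Once every MPO bond and every action tensor has been eliminated, the surviving diagram is exactly the fusion line of the argument extended by $k-n$ MPS tensors, namely $\phi_x^{(k,n)}$ applied to it, multiplied by an overall scalar $\kappa(x,g)$ built from $c_4(x,g)$, the normalization of $\rho_g$, and a possible phase. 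Because all the reductions are local and valid as soon as the blocks are large enough, $\kappa$ does not depend on $m,k,n$, so it suffices to fix a single $\lambda=\lambda(x,g)$; one concludes $u_{x,g}^{(k,m)}\circ w_{x,g}^{(m,n)}=\kappa(x,g)\,\phi_x^{(k,n)}$.

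It remains to identify $\kappa(x,g)$ as a strictly positive real and to fix $\lambda$ accordingly. For this I would pair both sides with $\phi_x^{(k,n)}\psi$: on one side this gives $\kappa(x,g)\,\|\phi_x^{(k,n)}\psi\|^{2}=\kappa(x,g)\,\|\psi\|^{2}$ since $\phi$ is isometric, and on the other side, writing $\phi_x^{(k,n)}=\phi_x^{(k,m)}\circ\phi_x^{(m,n)}$ and applying the adjointness of Lemma~\ref{lem:u_w_dagger} followed by the compatibility of Lemma~\ref{lem:u_w_zip}, the inner product rewrites as $\|\phi_{xg}^{(k,m)}w_{x,g}^{(m,n)}\psi\|^{2}=\|w_{x,g}^{(m,n)}\psi\|^{2}\ge 0$. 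Hence $\kappa(x,g)=\|w_{x,g}^{(m,n)}\psi\|^{2}/\|\psi\|^{2}$ is positive (nonzero because $w$ has a left inverse up to $\phi$) and independent of $\psi$, so $w_{x,g}^{(m,n)}$ is a constant multiple of an isometry. Choosing $\lambda=\lambda(x,g)>0$ to normalize $\kappa(x,g)=1$ yields the first identity. The second identity $w_{x,g}^{(k,m)}\circ u_{x,g}^{(m,n)}=\phi_{xg}^{(k,n)}$ follows by the same computation with the roles of $u$ and $w$ exchanged; since $u$ and $w$ are mutually adjoint under the embeddings, the two normalizing scalars coincide and the single choice of $\lambda$ serves both equations.

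I expect the main obstacle to be the right-boundary bookkeeping: correctly combining the ordinary and half-conjugate action tensors, faithfully tracking the constant $c_4(x,g)$ from \eqref{eq:sandwich_switch} and the $\rho_g$ normalization from the gray fusion through every unzipping step, and verifying that the residual scalar $\kappa(x,g)$ is genuinely independent of the block sizes $m,k,n$ so that the qualifier \emph{large enough} is all that is required. The bulk cancellation of $U(g)$ over $U(g)^{*}$ and the positivity of $\lambda$ are, by contrast, clean consequences of the zipper relations and of Lemmas~\ref{lem:u_w_dagger} and~\ref{lem:u_w_zip}.
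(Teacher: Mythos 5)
Your proposal is correct and follows essentially the same route as the paper: expand $u_{x,g}^{(k,m)}\circ w_{x,g}^{(m,n)}$ as a tensor network, collapse the doubled MPO layer and the boundary action tensors via the zipper identities (the paper uses \eqref{eq:action_zipper}, \eqref{eq:gray_fusion}, \eqref{eq:mps_index_2} and \eqref{eq:zipper_g_g_inv}, which in particular pin the residual scalar down as $c_4(x,g)\hat\sigma_x(g,g^{-1})/\lambda^2$ rather than a generic $\kappa$), and then prove positivity of that scalar by pairing with $\phi_x^{(k,n)}\psi$ and invoking Lemmas~\ref{lem:u_w_dagger} and~\ref{lem:u_w_zip}, exactly as you do. Your explicit treatment of the second identity via mutual adjointness is a reasonable completion of a step the paper leaves implicit.
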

\begin{proof}
Let us first calculate  $u_{x,g}^{(k,m)} \circ w_{x,g}^{(m,n)}(\psi)$ for some $\psi\in \mathcal{H}_x^{(n)}$: 
\begin{equation*}
    u_{x,g}^{(k,m)} \circ w_{x,g}^{(m,n)} (\psi) = 
    \frac{c_4(x,g)}{\lambda^2} \cdot
    \begin{tikzpicture}[xscale=0.6,baseline=3mm]
        \draw[blue] (0,0) -- (6,0);
        \draw (0,0.8) -- (10,0.8);
        \draw[blue] (6,0.2)--(10,0.2);
        \draw[blue] (10,0.5) -- (10.7,0.5);
        \draw (0,0.4) -- (6,0.4);
        \foreach \x in {0,2}{
            \draw (\x,0) -- (\x,1.2);
            \node[cmpo] at (\x,0.4) {};
            \node[mpo] at (\x,0.8) {};
        }
        \foreach \x in {3,5}{
            \draw (\x,0) -- (\x,1.2);
            \node[mps] at (\x,0) {};        
            \node[mpo] at (\x,0.8) {};        
            \node[cmpo] at (\x,0.4) {};
        }
        \foreach \x in {7,9}{
            \draw (\x,0.2) -- (\x,1.2);
            \node[mps] at (\x,0.2) {};        
            \node[mpo] at (\x,0.8) {};
        }
        \draw[fusion] (0,0)--(2,0);
        \draw[action] (6,0)--(6,0.4);
        \draw[halfc action] (10,0.2)--(10,0.8);
        \node[irrep] at (6.5,0.2) {$xg$};
        \node[irrep] at (9.5,0.2) {$xg$};
        \node[irrep] at (5.5,0) {$x$};
        \node[irrep] at (5.5,0.4) {$g$};
        \node[irrep] at (5.5,0.8) {$g$};
        \node[irrep] at (9.5,0.8) {$g$};
        \foreach \x in {(1,0.4), (1,0.8), (4,0.4), (4,0),(4,0.8), (8,0.2), (8,0.8)}{
            \node[fill=white] at \x {$\dots$};
        }
        \draw[decorate, decoration=brace] (-0.1,1.3) -- (2.1,1.3) node[midway,above] {$n$};
        \draw[decorate, decoration=brace] (2.9,1.3) -- (5.1,1.3) node[midway,above] {$m-n$};
        \draw[decorate, decoration=brace] (6.9,1.3) -- (9.1,1.3) node[midway,above] {$k-m$};
    \end{tikzpicture}      
\end{equation*}
Using now Eq.~\eqref{eq:action_zipper} $k-m$ times, we obtain
\begin{equation*}
    u_{x,g}^{(k,m)} \circ w_{x,g}^{(m,n)} (\psi) = 
    \frac{c_4(x,g)}{\lambda^2} \cdot
    \begin{tikzpicture}[xscale=0.6,baseline=3mm]
        \draw[blue] (0,0) -- (6,0);
        \draw (0,0.8) -- (7,0.8);
        \draw[blue] (6,0.2)--(7,0.2);
        \draw[blue] (7,0.5) -- (7.7,0.5);
        \draw (0,0.4) -- (6,0.4);
        \foreach \x in {0,2}{
            \draw (\x,0) -- (\x,1.2);
            \node[cmpo] at (\x,0.4) {};
            \node[mpo] at (\x,0.8) {};
        }
        \foreach \x in {3,5}{
            \draw (\x,0) -- (\x,1.2);
            \node[mps] at (\x,0) {};        
            \node[mpo] at (\x,0.8) {};        
            \node[cmpo] at (\x,0.4) {};
        }
        \draw[fusion] (0,0)--(2,0);
        \draw[action] (6,0)--(6,0.4);
        \draw[halfc action] (7,0.2)--(7,0.8);
        \node[irrep] at (6.5,0.2) {$xg$};
        \node[irrep] at (5.5,0) {$x$};
        \node[irrep] at (5.5,0.4) {$g$};
        \foreach \x in {(1,0.4), (1,0.8), (4,0.4), (4,0),(4,0.8)}{
            \node[fill=white] at \x {$\dots$};
        }
        \draw[decorate, decoration=brace] (-0.1,1.3) -- (2.1,1.3) node[midway,above] {$n$};
        \draw[decorate, decoration=brace] (2.9,1.3) -- (5.1,1.3) node[midway,above] {$k-n$};
    \end{tikzpicture}      \ .
\end{equation*}
Using now the tensors defined in Eq.~\eqref{eq:gray_fusion} together with Eq.~\eqref{eq:mps_index_2}, we obtain
\begin{equation*}
    u_{x,g}^{(k,m)} \circ w_{x,g}^{(m,n)}(\psi) = 
    \frac{c_4(x,g)}{\lambda^2} \cdot \hat\sigma_x(g,g^{-1}) \cdot 
    \begin{tikzpicture}[xscale=0.6,baseline=1mm]
        \draw[blue] (0,0) -- (7,0);
        \draw (0,0.8) -- (6,0.8);
        \draw[densely dotted] (6,0.6)--(7,0.6);
        \draw[blue] (7,0.3) -- (7.7,0.3);
        \draw (0,0.4) -- (6,0.4);
        \foreach \x in {0,2}{
            \draw (\x,0) -- (\x,1.2);
            \node[cmpo] at (\x,0.4) {};
            \node[mpo] at (\x,0.8) {};
        }
        \foreach \x in {3,5}{
            \draw (\x,0) -- (\x,1.2);
            \node[mps] at (\x,0) {};        
            \node[mpo] at (\x,0.8) {};        
            \node[cmpo] at (\x,0.4) {};
        }
        \draw[fusion] (0,0)--(2,0);
        \draw[fusion,gray] (6,0.4)--(6,0.8);
        \draw[action] (7,0)--(7,0.6);
        \node[irrep] at (6.5,0.6) {$e$};
        \node[irrep] at (5.5,0) {$x$};
        \node[irrep] at (7.5,0.3) {$x$};
        \node[irrep] at (5.5,0.4) {$g$};
        \node[irrep] at (5.5,0.8) {$g$};
        \foreach \x in {(1,0.4), (1,0.8), (4,0.4), (4,0),(4,0.8)}{
            \node[fill=white] at \x {$\dots$};
        }
        \draw[decorate, decoration=brace] (-0.1,1.3) -- (2.1,1.3) node[midway,above] {$n$};
        \draw[decorate, decoration=brace] (2.9,1.3) -- (5.1,1.3) node[midway,above] {$k-n$};
    \end{tikzpicture}      \ .
\end{equation*}
Here the black action tensor is the identity. Using now Eq.~\eqref{eq:zipper_g_g_inv}, we obtain that

\begin{equation*}
     u_{x,g}^{(k,m)} \circ w_{x,g}^{(m,n)}(\psi) = \frac{c_4(x,g)}{\lambda^2} \cdot \hat\sigma_x(g,g^{-1}) \cdot \phi_{x}^{k,n} (\psi)  .
\end{equation*}
Let us show that this constant is positive. For that, let us choose $\psi$ such that $\|\psi\|=1$, then 
\begin{align*}
     &\frac{c_4(x,g)}{\lambda^2} \cdot \hat\sigma_x(g,g^{-1}) = \left\langle \phi_{x}^{(k,n)}(\psi) \middle| u_{x,g}^{(k,m)} \circ w_{x,g}^{(m,n)}(\psi) \right\rangle_{x}^{(k)}  =\\
     &\left\langle w_{x,g}^{(k,n)}(\psi) \middle| \phi_{x}^{(k,m)} \circ w_{x,g}^{(m,n)}(\psi) \right\rangle_{x}^{(k)} = \left\langle w_{x,g}^{(m,n)}(\psi) \middle| w_{x,g}^{(m,n)}(\psi) \right\rangle_{x}^{(m)} >0,
\end{align*}
where in the second equality we have used Lemma~\ref{lem:u_w_dagger} and in the last one Lemma~\ref{lem:u_w_zip}. This implies that there is a choice of $\lambda\in \mathbb{R}^+$ such that the expression is $1$.
\end{proof}

\begin{lemma}\label{u_w_beta}
    Let $O\in\mathcal{A}_{[0,n]}$ and $k>m>n$ with $k-m$, $m-n$ large enough. The operators $u_{x,g}^{(m,n)}$ and $w_{x,g}^{(m,n)}$ satisfy
    \begin{equation*}
        u_{x,g}^{(k,m)} \circ \pi_{xg}^{(m)}(O) \circ w_{x,g}^{(m,n)} = \phi_x^{(k,m)}\circ \pi_x^{(m)}\big(\beta_{g}^{(m,n)}(O)\big)\circ\phi_x^{(m,n)} .
    \end{equation*}
\end{lemma}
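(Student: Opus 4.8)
The plan is to expand the left-hand side graphically by substituting the definitions of $w_{x,g}^{(m,n)}$ and $u_{x,g}^{(k,m)}$, insert $\pi_{xg}^{(m)}(O)$ on the physical legs in between, and then recognize the resulting tensor network as the right-hand side. After substitution the composition produces a diagram in which $O$ sits on the physical legs of the first sites, sandwiched below by the conjugate MPO layer $U(g)^*$ coming from $w_{x,g}^{(m,n)}$ and above by the MPO layer $U(g)$ coming from $u_{x,g}^{(k,m)}$, while the $x$-MPS tensors, the fusion tensors on the input sites, and the action (resp.\ half-conjugate action) tensors at the right boundary supply the remaining structure. The argument then splits into two tasks: (i) identify the vertical $U(g)\!\cdot\!O\!\cdot\!U(g)^*$ sandwich with $\beta_g^{(m,n)}(O)$, and (ii) collapse the surrounding shell to $\phi_x^{(k,m)}\circ(\,\cdot\,)\circ\phi_x^{(m,n)}$.

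For step (i), I would read off the stack as the $\rho_g$-form of the finite-size map $\beta_g^{(m,n)}$ defined in \eqref{eq:beta_def_with_mps}: $U(g)$ on top and $U(g)^*$ below is precisely the MPO/conjugate-MPO sandwich appearing there, the only point to check being that the boundary carried by the action tensors reproduces the gauge $X_g$ built into that definition through \eqref{eq:conjugate_gauge}. This is the same rewriting used earlier to show that $\beta_g$ is a $*$-automorphism. Since $O$ is supported on $[0,n]$ while $m-n$ is large, the padding sites between the support of $O$ and the boundary carry only the sandwiched identity, so the identification is clean and does not reach into the region where the action tensors live.

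For step (ii), once $O$ has been replaced by the operator $\beta_g^{(m,n)}(O)$ inserted on the physical legs, the remainder of the network --- the action and half-conjugate action tensors, the superfluous MPO tensors towards the right, and the $x$-MPS tensors --- is structurally identical to the network analyzed in the proof of Lemma~\ref{lem:u_w_inverse}. Repeating that computation verbatim, namely zipping away the extra MPO tensors with \eqref{eq:action_zipper}, rewriting the boundary via \eqref{eq:gray_fusion} and \eqref{eq:sandwich_switch}, and contracting the $g$/$g^{-1}$ action tensors using \eqref{eq:mps_index_2} and \eqref{eq:zipper_g_g_inv}, collapses the shell to the identity with prefactor $c_4(x,g)\,\hat\sigma_x(g,g^{-1})/\lambda^2$, which equals $1$ for the value of $\lambda$ already fixed in Lemma~\ref{lem:u_w_inverse}. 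What survives is exactly the $x$-MPS state with $\beta_g^{(m,n)}(O)$ acting on the first $m$ physical sites, that is, $\phi_x^{(k,m)}\circ \pi_x^{(m)}\big(\beta_g^{(m,n)}(O)\big)\circ\phi_x^{(m,n)}$.

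The main obstacle is the bookkeeping of a separation of scales: one must verify that the support of $O$, and hence of $\beta_g^{(m,n)}(O)$, remains to the left of the region where the zipping and the $g$/$g^{-1}$ collapse are performed, so that steps (i) and (ii) genuinely decouple rather than interfering. This is secured by taking $m-n$ and $k-m$ larger than the nilpotency and injectivity lengths governing \eqref{eq:action_zipper}, \eqref{eq:mps_index_2} and \eqref{eq:sandwich_switch}; with that choice the identification of the sandwich and the reuse of the collapse from Lemma~\ref{lem:u_w_inverse} can be carried out independently, yielding the claimed equality.
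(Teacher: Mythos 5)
Your toolbox and end-goal are the paper's --- graphical expansion of the left-hand side, the zipper identities, and the cancellation $c_4(x,g)\,\hat\sigma_x(g,g^{-1})/\lambda^2=1$ coming from the choice of $\lambda$ in Lemma~\ref{lem:u_w_inverse} --- but the organization into two independent steps contains a genuine flaw: steps (i) and (ii) do not decouple, and step (i) cannot be performed first. In the initial network the sub-diagram on $[0,m]$ is \emph{not} the operator $\beta_g^{(m,n)}(O)$: that map is defined with the MPO and conjugate-MPO layers closed at site $m$ by a fusion tensor (the gray fusion of \eqref{eq:gray_fusion}), whereas on the left-hand side the lower (conjugate) layer terminates at site $m$ in the action tensor of $w_{x,g}^{(m,n)}$ and the upper layer runs \emph{through} site $m$ all the way to site $k$, where it meets the half-conjugate action tensor of $u_{x,g}^{(k,m)}$. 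Without that closure the ``sandwich'' has open virtual indices at site $m$ and is not an operator on $\mathcal{A}_{[0,m]}$ at all; the closure is produced precisely by the boundary manipulations you defer to step (ii). Since the MPO virtual lines connect the two regions across site $m$, no separation-of-scales argument makes them independent: largeness of $m-n$ and $k-m$ is needed to apply the zipper identities, but it does not factorize the network.

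Relatedly, step (ii) is not a verbatim repetition of Lemma~\ref{lem:u_w_inverse}: there, with nothing inserted, the final move is \eqref{eq:zipper_g_g_inv}, which annihilates the whole double layer; here that is both impossible ($O$ obstructs it on $[0,n]$) and undesirable, since the double layer on $[0,m]$ must survive as part of the answer. The correct argument interleaves the two steps, as the paper does: first move the action tensor from site $m$ out to site $k$ (reading \eqref{eq:action_zipper} backwards, with the upper MPO layer as a spectator; the paper cites \eqref{eq:trivial_zipper_gray} here), so that it sits next to the half-conjugate action tensor; then contract that pair into the gray fusion via \eqref{eq:mps_index_2}, producing the factor $\hat\sigma_x(g,g^{-1})$ that cancels $c_4(x,g)/\lambda^2$ (this part you have right); finally zip the gray fusion back from site $k$ to site $m$ with \eqref{eq:gray_g_g_zipper}, which turns the double-layer sites on $[m,k]$ into the bare MPS tensors of $\phi_x^{(k,m)}$ and leaves the fusion at site $m$ as the right boundary of $\beta_g^{(m,n)}(O)$. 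Only after all three moves does the network coincide with the right-hand side. Two smaller inaccuracies: the $\rho_g$-form of \eqref{eq:beta_def_with_mps} you invoke is the two-sided, full-chain form, while the relevant object here is the half-chain map whose right boundary is the fusion tensor; and \eqref{eq:sandwich_switch} belongs to the proof of Lemma~\ref{lem:u_w_dagger}, not to this collapse.
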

\begin{proof} 
    Let us apply $u_{x,g}^{(k,m)} \circ \pi_{xg}^{(m)}(O) \circ w_{x,g}^{(m,n)}$ on a vector $\chi\in\mathcal{H}_{x}^{(n)}$. This can be represented using the graphical language as
\begin{align*}
    u_{x,g}^{(k,m)} \circ \pi_{xg}^{(m)}(O) \circ w_{x,g}^{(m,n)} (\chi)= \frac{c_4(x,g)}{\lambda^2} \cdot  
    \begin{tikzpicture}[xscale=0.6,baseline=5mm]
        \draw[blue] (0,0) -- (5.5,0);
        \draw[blue] (5.5,0.2) -- (9,0.2);
        \draw (0,0.4) -- (5.5,0.4);
        \draw (0,1.2) -- (9,1.2);
        \foreach \x in {0,2}{
            \draw (\x,0) -- (\x,1.6);
            \node[cmpo] at (\x,0.4) {};
            \node[mpo] at (\x,1.2) {};
        }
        \draw[fusion,blue] (0,0)--(2,0);
        \draw[fusion] (0,0.8)--(2,0.8);
        \foreach \x in {3,5}{
            \draw (\x,0) -- (\x,1.6);
            \node[mps] at (\x,0) {};        
            \node[cmpo] at (\x,0.4) {};
            \node[mpo] at (\x,1.2) {};
        }
        \draw[action] (5.5,0)--(5.5,0.4);
        \foreach \x in {6,8}{
            \draw (\x,0.2) -- (\x,1.6);
            \node[mps] at (\x,0.2) {};        
            \node[mpo] at (\x,1.2) {};
        }
        \draw[blue] (9,0.7) -- (10,0.7);
        \draw[halfc action] (9,0.2)--(9,1.2);
        \foreach \x in {(1,0.4), (1,1.2), (4,0.4), (4,0),(4,1.2),(7,0.2),(7,1.2)}{
            \node[fill=white] at \x {$\dots$};
        }
        \node[irrep] at (2.5,0) {$x$};
        \node[irrep] at (2.5,0.4) {$g$};
        \node[irrep] at (2.5,1.2) {$g$};
        \node[irrep] at (8.5,0.2) {$xg$};
        \node[irrep] at (8.5,1.2) {$g$};
        \node[irrep] at (9.5,0.7) {$x$};
        \draw[decorate, decoration=brace] (-0.1,1.7) -- (2.1,1.7) node[midway,above] {$n$};
        \draw[decorate, decoration=brace] (2.9,1.7) -- (5.1,1.7) node[midway,above] {$m-n$};
        \draw[decorate, decoration=brace] (5.9,1.7) -- (8.1,1.7) node[midway,above] {$k-m$};
    \end{tikzpicture} \ .
\end{align*}
Using 
Eq.~\eqref{eq:trivial_zipper_gray}
$k-m$ times, we obtain
\begin{align*}
    u_{x,g}^{(k,m)} \circ \pi_{xg}^{(m)}(O) \circ w_{x,g}^{(m,n)} (\chi)=  \frac{c_4(x,g)}{\lambda^2} \cdot    
    \begin{tikzpicture}[xscale=0.6,baseline=5mm]
        \draw[blue] (0,0) -- (6,0);
        \draw[blue] (6,0.2) -- (7,0.2);
        \draw (0,0.4) -- (6,0.4);
        \draw (0,1.2) -- (7,1.2);
        \foreach \x in {0,2}{
            \draw (\x,0) -- (\x,1.6);
            \node[cmpo] at (\x,0.4) {};
            \node[mpo] at (\x,1.2) {};
        }
        \draw[fusion,blue] (0,0)--(2,0);
        \draw[fusion] (0,0.8)--(2,0.8);
        \foreach \x in {3,5}{
            \draw (\x,0) -- (\x,1.6);
            \node[mps] at (\x,0) {};        
            \node[cmpo] at (\x,0.4) {};
            \node[mpo] at (\x,1.2) {};
        }
        \draw[action] (6,0)--(6,0.4);
        \draw[blue] (7,0.7) -- (8,0.7);
        \draw[halfc action] (7,0.2)--(7,1.2);
        \foreach \x in {(1,0.4), (1,1.2), (4,0.4), (4,0),(4,1.2)}{
            \node[fill=white] at \x {$\dots$};
        }
        \node[irrep] at (2.5,0) {$x$};
        \node[irrep] at (2.5,0.4) {$g$};
        \node[irrep] at (2.5,1.2) {$g$};
        \node[irrep] at (6.5,0.2) {$xg$};
        \node[irrep] at (6.5,1.2) {$g$};
        \node[irrep] at (7.5,0.7) {$x$};
        \draw[decorate, decoration=brace] (-0.1,1.7) -- (2.1,1.7) node[midway,above] {$n$};
        \draw[decorate, decoration=brace] (2.9,1.7) -- (5.1,1.7) node[midway,above] {$k-n$};
    \end{tikzpicture} \ . 
\end{align*}
Using now \eqref{eq:mps_index_2} together with the fact that in Lemma~\ref{lem:u_w_inverse} we have chosen $\lambda$ such that $c_4(x,g)\cdot \hat\sigma_x(g,g^{-1})=\lambda^2$, we obtain
\begin{equation*}
    u_{x,g}^{(k,m)} \circ \pi_{xg}^{(m)}(O) \circ w_{x,g}^{(m,n)} (\chi)=         \begin{tikzpicture}[xscale=0.6,baseline=5mm]
        \draw[blue] (0,0) -- (6,0);
        \draw (0,0.4) -- (5.5,0.4);
        \draw (0,1.2) -- (5.5,1.2);
        \foreach \x in {0,2}{
            \draw (\x,0) -- (\x,1.6);
            \node[cmpo] at (\x,0.4) {};
            \node[mpo] at (\x,1.2) {};
        }
        \draw[fusion,blue] (0,0)--(2,0);
        \draw[fusion] (0,0.8)--(2,0.8);
        \foreach \x in {3,5}{
            \draw (\x,0) -- (\x,1.6);
            \node[mps] at (\x,0) {};        
            \node[cmpo] at (\x,0.4) {};
            \node[mpo] at (\x,1.2) {};
        }
        \draw[fusion,gray] (5.5,0.4)--(5.5,1.2);
        \foreach \x in {(1,0.4), (1,1.2), (4,0.4), (4,0),(4,1.2)}{
            \node[fill=white] at \x {$\dots$};
        }
        \node[irrep] at (2.5,0) {$x$};
        \node[irrep] at (2.5,0.4) {$g$};
        \node[irrep] at (2.5,1.2) {$g$};
        \draw[decorate, decoration=brace] (-0.1,1.7) -- (2.1,1.7) node[midway,above] {$n$};
        \draw[decorate, decoration=brace] (2.9,1.7) -- (5.1,1.7) node[midway,above] {$k-n$};
    \end{tikzpicture}  \ .    
\end{equation*}
Finally using \eqref{eq:gray_g_g_zipper}, we obtain 
\begin{equation*}
    u_{x,g}^{(k,m)} \circ \pi_{xg}^{(m)}(O) \circ w_{x,g}^{(m,n)} (\chi)=             \begin{tikzpicture}[xscale=0.6,baseline=5mm]
        \draw[blue] (0,0) -- (9,0);
        \draw (0,0.4) -- (5.5,0.4);
        \draw (0,1.2) -- (5.5,1.2);
        \foreach \x in {0,2}{
            \draw (\x,0) -- (\x,1.6);
            \node[cmpo] at (\x,0.4) {};
            \node[mpo] at (\x,1.2) {};
        }
        \draw[fusion,blue] (0,0)--(2,0);
        \draw[fusion] (0,0.8)--(2,0.8);
        \foreach \x in {3,5}{
            \draw (\x,0) -- (\x,1.6);
            \node[mps] at (\x,0) {};        
            \node[cmpo] at (\x,0.4) {};
            \node[mpo] at (\x,1.2) {};
        }
        \draw[fusion,gray] (5.5,0.4)--(5.5,1.2);
        \foreach \x in {6,8}{
            \draw (\x,0) -- (\x,0.4);
            \node[mps] at (\x,0) {}; 
        }
        \foreach \x in {(1,0.4), (1,1.2), (4,0.4), (4,0),(4,1.2),(7,0)}{
            \node[fill=white] at \x {$\dots$};
        }
        \node[irrep] at (2.5,0) {$x$};
        \node[irrep] at (2.5,0.4) {$g$};
        \node[irrep] at (2.5,1.2) {$g$};
        \draw[decorate, decoration=brace] (-0.1,1.7) -- (2.1,1.7) node[midway,above] {$n$};
        \draw[decorate, decoration=brace] (2.9,1.7) -- (5.1,1.7) node[midway,above] {$m-n$};
        \draw[decorate, decoration=brace] (5.9,0.5) -- (8.1,0.5) node[midway,above] {$k-m$};
    \end{tikzpicture}  \ .    
\end{equation*}
The r.h.s.\ is $\phi_x^{(k,m)}\circ \pi_x^{(m)}\big(\beta_{g}^{(m,n)}(O)\big)\circ\phi_x^{(m,n)} (\chi)$ finishing the proof.
\end{proof}

\begin{lemma}\label{lem:u_w_index}Let $m-n$, $k-m$ be large enough.
    The operators $u_{x,g}^{(k,m)}$,$ u_{xg,h}^{(m,n)} $
    $u_{x,gh}^{(k,n)}$satisfy
    \begin{equation}\label{eq:sigmaprime}
      u_{x,g}^{(k,m)} u_{xg,h}^{(m,n)} = \sigma'_x(g,h) \pi_x^{(k)}(v(g,h)) u_{x,gh}^{(k,n)} ,
    \end{equation}
    for some $\sigma'_x(g,h)$ such that  $[\sigma'_x(g,h)] = [\hat{\sigma}_x(g,h)]$, see Eq.\eqref{equivclassigmaMPS}. 
\end{lemma}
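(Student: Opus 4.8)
The plan is to evaluate the composition $u_{x,g}^{(k,m)}\circ u_{xg,h}^{(m,n)}$ directly on an arbitrary vector $\chi\in\mathcal H_{xgh}^{(n)}$ using the explicit graphical definition \eqref{eq:u finite def}, and to match the outcome against $\pi_x^{(k)}(v(g,h))\circ u_{x,gh}^{(k,n)}(\chi)$. Applying $u_{xg,h}^{(m,n)}$ first attaches an $h$--MPO layer together with the light-blue corner tensor of \eqref{eq:lightblue_action} for the pair $(xgh,h)$ and a string of MPS tensors of type $xgh$; applying $u_{x,g}^{(k,m)}$ then stacks a $g$--MPO layer and a second light-blue corner for $(xg,g)$ on top. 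The composite diagram therefore consists of a doubled MPO layer ($g$ above $h$) acting on the MPS, carrying a pair of stacked light-blue corner tensors at the right boundary and the overall scalar $c_4(x,g)c_4(xg,h)/\big(\lambda(x,g)\lambda(xg,h)\big)$.

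First I would use the zipper relations \eqref{eq:action_zipper} and \eqref{eq:trivial_zipper_gray} to bring the MPS tails into a common standard form, so that the doubled-MPO part of the diagram becomes exactly the left-hand configuration appearing in the definition of $\hat\sigma_x(g,h)$. Indeed, the two stacked corner tensors are precisely the operators $V(g,h,x)$ and $\hat V(g,h,x)$ introduced before \eqref{eq:mps_index_2}, and \eqref{eq:mps_index_2} rewrites them as $\hat\sigma_x(g,h)$ times a single $gh$--corner together with the MPO fusion tensor for $gh$. Recognising, via \eqref{eq:v graphical} and the construction of Section~\ref{3eq}, that this fusion-tensor insertion is exactly the finite MPO realising $v(g,h)$, the reorganised diagram splits as $\pi_x^{(k)}(v(g,h))$ applied to the single-corner-plus-MPS diagram, the latter being $u_{x,gh}^{(k,n)}(\chi)$ up to its own prefactor $c_4(x,gh)/\lambda(x,gh)$.

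Collecting all scalars then yields \eqref{eq:sigmaprime} with
\[
\sigma'_x(g,h)=\hat\sigma_x(g,h)\cdot\frac{c_4(x,g)/\lambda(x,g)\cdot c_4(xg,h)/\lambda(xg,h)}{c_4(x,gh)/\lambda(x,gh)}.
\]
Setting $\alpha(x,g):=\lambda(x,g)/c_4(x,g)$, the correction factor equals the $2$-coboundary $\alpha(x,gh)/\big(\alpha(xg,h)\alpha(x,g)\big)$ of \eqref{equivclassigmaMPS}, so $[\sigma'_x(g,h)]=[\hat\sigma_x(g,h)]$, as claimed. As a consistency check, the mere existence of a phase $\sigma'_x(g,h)$ satisfying \eqref{eq:sigmaprime} also follows abstractly from Lemma~\ref{u_w_beta} together with \eqref{vdef} (so that $\pi_x\beta_{gR}\beta_{hR}=\Ad(\pi_x(v(g,h)))\,\pi_x\beta_{ghR}$) and irreducibility, exactly as in the derivation of \eqref{eq:mps_index}; the graphical computation is what additionally fixes the cohomology class.

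The main obstacle is the bookkeeping in the reduction step: one must verify that, after aligning the tails, the pair of stacked light-blue corners in the composite diagram coincides with the tensors for which \eqref{eq:mps_index_2} was stated. This requires tracking the gauge transformations $X_g$ relating the conjugate MPO to the MPO of the inverse element (cf.\ \eqref{eq:conjugate_gauge} and \eqref{eq:lightblue_action}), and confirming that the emergent fusion tensor genuinely represents $v(g,h)$ in the sense of \eqref{eq:v graphical}, rather than merely some cohomologous $3$-cocycle datum. Once these identifications are in place, the scalar extraction is a routine application of Lemma~\ref{lem:fusion_unique} and the coboundary computation above.
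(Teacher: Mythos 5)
Your overall strategy is the same as the paper's: evaluate $u_{x,g}^{(k,m)}u_{xg,h}^{(m,n)}$ on a vector graphically with the prefactors $c_4(x,g)/\lambda(x,g)$ and $c_4(xg,h)/\lambda(xg,h)$, compress the tails with \eqref{eq:action_zipper} and \eqref{eq:trivial_zipper_gray}, pull out $\hat\sigma_x(g,h)$ via \eqref{eq:mps_index_2}, and collect the $c_4/\lambda$ factors into the coboundary $\alpha(x,gh)/\bigl(\alpha(x,g)\alpha(xg,h)\bigr)$ with $\alpha=\lambda/c_4$. Your final formula for $\sigma'_x(g,h)$ agrees with the paper's.

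However, there is a genuine gap at exactly the step you flag as ``the main obstacle'' and then defer as routine. After the single application of \eqref{eq:mps_index_2}, the right boundary of the diagram carries the fusion tensor for $(g,h)\mapsto gh$ followed by the light-blue action tensor for $(x,gh)$. This is \emph{not} yet $\pi_x^{(k)}(v(g,h))$ applied to $u_{x,gh}^{(k,n)}(\chi)$: by \eqref{eq:v graphical}, $v(g,h)$ is a three-layer MPO that also contains a gauged $(gh)^{-1}$ layer and a second fusion tensor. To pass from one picture to the other, the paper inserts the identity in the form of the MPO pair for $(gh)^{-1}$ and $gh$ below the fused layers, which costs a factor $1/\omega(gh,(gh)^{-1},gh)$, and then applies \eqref{eq:mps_index_2} twice more, producing the additional scalars $\hat\sigma_{xgh}((gh)^{-1},gh)^{-1}$ and $\hat\sigma_x(gh,(gh)^{-1})$. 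None of these factors appear in your bookkeeping. They disappear only because of the cocycle identity \eqref{eq:sigmaconstraint}, which yields
\begin{equation*}
\frac{\hat\sigma_x(gh,(gh)^{-1})}{\omega(gh,(gh)^{-1},gh)\,\hat\sigma_{xgh}((gh)^{-1},gh)}
=\frac{\hat\sigma_x(gh,e)}{\hat\sigma_x(e,gh)}=1 .
\end{equation*}
This cancellation is essential, not cosmetic: the residual factor depends only on $x$ and the product $gh$, and a nontrivial function of $(x,gh)$ is in general \emph{not} a coboundary of the form \eqref{equivclassigmaMPS}, so without invoking \eqref{eq:sigmaconstraint} the conclusion $[\sigma'_x]=[\hat\sigma_x]$ would not follow from Lemma~\ref{lem:fusion_unique} and scalar bookkeeping alone. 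Your answer is correct, but the deferred identification is the crux of the proof and requires the identity-insertion argument together with the cocycle relation to be complete.
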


\begin{proof}
    Let $\chi\in\mathcal{H}_{xg}^{(n)}$. Then 
    \begin{equation*}
    u_{x,g}^{(k,m)} u_{xg,h}^{(m,n)}(\chi) = \frac{c_4(x,g)}{\lambda(x,g)} \cdot \frac{c_4(xg,h)}{\lambda(xg,h)} \cdot 
    \begin{tikzpicture}[xscale=0.6,baseline=3mm]
        \draw[blue] (0,0) -- (6,0);
        \draw (0,0.8) -- (10,0.8);
        \draw (0,0.4) -- (6,0.4);
        \foreach \x in {0,2}{
            \draw (\x,0) -- (\x,1.2);
            \node[mpo] at (\x,0.4) {};
            \node[mpo] at (\x,0.8) {};
        }
        \foreach \x in {3,5}{
            \draw (\x,0) -- (\x,1.2);
            \node[mps] at (\x,0) {};        
            \node[mpo] at (\x,0.8) {};        
            \node[mpo] at (\x,0.4) {};
        }
        \foreach \x in {7,9}{
            \draw (\x,0.2) -- (\x,1.2);
            \node[mps] at (\x,0.2) {};        
            \node[mpo] at (\x,0.8) {};        
        }
        \draw[fusion] (0,0)--(2,0);
        \draw[blue] (6,0.2)--(10,0.2)--(10,0.8);
        \draw[blue] (10,0.5)--++(1,0);
        \draw[halfc action] (6,0)--(6,0.4);
        \draw[halfc action] (10,0.2)--(10,0.8);
        \node[irrep] at (6.5,0.2) {$xg$};
        \node[irrep] at (5.5,0) {$xgh$};
        \node[irrep] at (10.5,0.5) {$x$};
        \node[irrep] at (5.5,0.4) {$h$};
        \node[irrep] at (5.5,0.8) {$g$};
        \node[irrep] at (9.5,0.2) {$xg$};
        \node[irrep] at (9.5,0.8) {$g$};
        \foreach \x in {(1,0.4), (1,0.8), (4,0.8), (4,0.4), (4,0), (8,0.2),(8,0.8)}{
            \node[fill=white] at \x {$\dots$};
        }
        \draw[decorate, decoration=brace] (-0.1,1.3) -- (2.1,1.3) node[midway,above] {$n$};
        \draw[decorate, decoration=brace] (2.9,1.3) -- (5.1,1.3) node[midway,above] {$m-n$};
        \draw[decorate, decoration=brace] (6.9,1.3) -- (9.1,1.3) node[midway,above] {$k-m$};
    \end{tikzpicture} 
    \end{equation*}
    Using Eq.~\eqref{eq:trivial_zipper_gray}, we obtain
    \begin{equation*}
    u_{x,g}^{(k,m)} u_{xg,h}^{(m,n)}(\chi) = \frac{c_4(x,g)}{\lambda(x,g)} \cdot \frac{c_4(xg,h)}{\lambda(xg,h)} \cdot
    \begin{tikzpicture}[xscale=0.6,baseline=3mm]
        \draw[blue] (0,0) -- (6,0);
        \draw[] (0,0.8) -- (7,0.8);
        \draw (0,0.4) -- (6,0.4);
        \foreach \x in {0,2}{
            \draw (\x,0) -- (\x,1.2);
            \node[mpo] at (\x,0.4) {};
            \node[mpo] at (\x,0.8) {};
        }
        \foreach \x in {3,5}{
            \draw (\x,0) -- (\x,1.2);
            \node[mps] at (\x,0) {};        
            \node[mpo] at (\x,0.8) {};        
            \node[mpo] at (\x,0.4) {};
        }
        \draw[fusion] (0,0)--(2,0);
        \draw[blue] (6,0.2)--(7,0.2)--(7,0.8);
        \draw[blue] (7,0.5)--++(1,0);
        \draw[halfc action] (6,0)--(6,0.4);
        \draw[halfc action] (7,0.2)--(7,0.8);
        \node[irrep] at (6.5,0.2) {$xg$};
        \node[irrep] at (5.5,0) {$xgh$};
        \node[irrep] at (7.5,0.5) {$x$};
        \node[irrep] at (5.5,0.4) {$h$};
        \node[irrep] at (5.5,0.8) {$g$};
        \foreach \x in {(1,0.4), (1,0.8), (4,0.8), (4,0.4), (4,0)}{
            \node[fill=white] at \x {$\dots$};
        }
        \draw[decorate, decoration=brace] (-0.1,1.3) -- (2.1,1.3) node[midway,above] {$n$};
        \draw[decorate, decoration=brace] (2.9,1.3) -- (5.1,1.3) node[midway,above] {$k-n$};
    \end{tikzpicture} 
    \end{equation*}    
    Let us use now Eq.~\eqref{eq:mps_index_2} to obtain
    \begin{equation*}
    u_{x,g}^{(k,m)} u_{xg,h}^{(m,n)}(\chi) = \frac{c_4(x,g)}{\lambda(x,g)} \cdot \frac{c_4(xg,h)}{\lambda(xg,h)} \cdot \hat\sigma_{x}(g, h) \cdot 
    \begin{tikzpicture}[xscale=0.6,baseline=3mm]
        \draw[blue] (0,0) -- (7,0);
        \draw (0,0.8) -- (6,0.8);
        \draw (0,0.4) -- (6,0.4);
        \foreach \x in {0,2}{
            \draw (\x,0) -- (\x,1.2);
            \node[mpo] at (\x,0.4) {};
            \node[mpo] at (\x,0.8) {};
        }
        \foreach \x in {3,5}{
            \draw (\x,0) -- (\x,1.2);
            \node[mps] at (\x,0) {};        
            \node[mpo] at (\x,0.8) {};        
            \node[mpo] at (\x,0.4) {};
        }
        \draw[fusion] (0,0)--(2,0);
        \draw[blue] (6,0.6)--(7,0.6);
        \draw[blue] (7,0.3)--++(1,0);
        \draw[fusion] (6,0.4)--(6,0.8);
        \draw[halfc action] (7,0)--(7,0.6);
        \node[irrep] at (6.5,0.6) {$gh$};
        \node[irrep] at (5.5,0) {$xgh$};
        \node[irrep] at (7.5,0.3) {$x$};
        \node[irrep] at (5.5,0.4) {$h$};
        \node[irrep] at (5.5,0.8) {$g$};
        \foreach \x in {(1,0.4), (1,0.8), (4,0.8), (4,0.4), (4,0)}{
            \node[fill=white] at \x {$\dots$};
        }
        \draw[decorate, decoration=brace] (-0.1,1.3) -- (2.1,1.3) node[midway,above] {$n$};
        \draw[decorate, decoration=brace] (2.9,1.3) -- (5.1,1.3) node[midway,above] {$k-n$};
    \end{tikzpicture} 
    \end{equation*}    
    Let us insert now the identity in the middle:
    \begin{align*}
    u_{x,g}^{(k,m)} u_{xg,h}^{(m,n)}(\chi) = &\frac{c_4(x,g)}{\lambda(x,g)} \cdot \frac{c_4(xg,h)}{\lambda(xg,h)} \cdot \hat\sigma_{x}(g, h) \cdot \frac{1}{\omega(gh,(gh)^{-1},gh)} \cdot \\
    &
    \begin{tikzpicture}[xscale=0.6,yscale=0.4,baseline=7mm]
        \draw[blue] (0,0) -- (7,0);
        \draw (0,4) -- (6,4);
        \draw (0,1) -- (6,1);
        \draw (0,2) -- (6,2);
        \draw (0,3) -- (6,3);
        \foreach \x in {0,2}{
            \draw (\x,0) -- (\x,5);
            \node[mpo] at (\x,1) {};
            \node[cmpo] at (\x,2) {};
            \node[mpo] at (\x,3) {};
            \node[mpo] at (\x,4) {};
        }
        \foreach \x in {3,5}{
            \draw (\x,0) -- (\x,5);
            \node[mps] at (\x,0) {};        
            \node[mpo] at (\x,1) {};        
            \node[cmpo] at (\x,2) {};
            \node[mpo] at (\x,3) {};
            \node[mpo] at (\x,4) {};
        }
        \draw[fusion] (0,0)--(2,0);
        \draw[blue] (7,0.75)--++(1,0);
        \draw[blue] (8,2.125)--++(1,0);
        \draw[densely dotted] (6,1.5)--++(1,0);
        \draw (6,3.5) --++(2,0);
        \draw[fusion] (6,1)--(6,2);
        \draw[fusion] (6,3)--(6,4);
        \draw[halfc action] (7,0)--(7,1.5);
        \draw[halfc action] (8,0.75)--(8,3.5);
        \node[irrep] at (7.5,0.75) {$xgh$};
        \node[irrep] at (5.5,0) {$xgh$};
        \node[irrep] at (8.5,2.125) {$x$};
        \node[irrep] at (5.5,1) {$gh$};
        \node[irrep] at (5.5,2) {$gh$};
        \node[irrep] at (5.5,3) {$h$};
        \node[irrep] at (5.5,4) {$g$};
        \foreach \x in {(1,1), (1,2), (1,3), (1,4),(4,0),(4,1),(4,2),(4,3),(4,4)}{
            \node[fill=white] at \x {$\dots$};
        }
        \draw[decorate, decoration=brace] (-0.1,5.1) -- (2.1,5.1) node[midway,above] {$n$};
        \draw[decorate, decoration=brace] (2.9,5.1) -- (5.1,5.1) node[midway,above] {$k-n$};
    \end{tikzpicture} 
    \end{align*}    
    Using now Eq.~\eqref{eq:mps_index_2}, we obtain
    \begin{align*}
    u_{x,g}^{(k,m)} u_{xg,h}^{(m,n)}(\chi) = &\frac{c_4(x,g)}{\lambda(x,g)} \cdot \frac{c_4(xg,h)}{\lambda(xg,h)} \cdot \frac{\hat\sigma_{x}(g, h)}{\hat\sigma_{xgh}((gh)^{-1},gh)} \cdot \frac{1}{\omega(gh,(gh)^{-1},gh)} \cdot \\
    &
    \begin{tikzpicture}[xscale=0.6,yscale=0.4,baseline=7mm]
        \draw[blue] (0,0) -- (6,0);
        \draw (0,4) -- (6,4);
        \draw (0,1) -- (6,1);
        \draw (0,2) -- (7,2);
        \draw (0,3) -- (6,3);
        \foreach \x in {0,2}{
            \draw (\x,0) -- (\x,5);
            \node[mpo] at (\x,1) {};
            \node[cmpo] at (\x,2) {};
            \node[mpo] at (\x,3) {};
            \node[mpo] at (\x,4) {};
        }
        \foreach \x in {3,5}{
            \draw (\x,0) -- (\x,5);
            \node[mps] at (\x,0) {};        
            \node[mpo] at (\x,1) {};        
            \node[cmpo] at (\x,2) {};
            \node[mpo] at (\x,3) {};
            \node[mpo] at (\x,4) {};
        }
        \draw[fusion] (0,0)--(2,0);
        \draw[blue] (7,1.25)--++(1,0);
        \draw[blue] (8,2.375)--++(1,0);
        \draw[blue] (6,0.5)--++(1,0);
        \draw (6,3.5) --++(2,0);
        \draw[halfc action] (6,0)--(6,1);
        \draw[fusion] (6,3)--(6,4);
        \draw[halfc action] (7,0.5)--(7,2);
        \draw[halfc action] (8,1.25)--(8,3.5);
        \node[irrep] at (7.5,1.25) {$xgh$};
        \node[irrep] at (5.5,0) {$xgh$};
        \node[irrep] at (8.5,2.375) {$x$};
        \node[irrep] at (5.5,1) {$gh$};
        \node[irrep] at (5.5,2) {$gh$};
        \node[irrep] at (5.5,3) {$h$};
        \node[irrep] at (5.5,4) {$g$};
        \foreach \x in {(1,1), (1,2), (1,3), (1,4),(4,0),(4,1),(4,2),(4,3),(4,4)}{
            \node[fill=white] at \x {$\dots$};
        }
        \draw[decorate, decoration=brace] (-0.1,5.1) -- (2.1,5.1) node[midway,above] {$n$};
        \draw[decorate, decoration=brace] (2.9,5.1) -- (5.1,5.1) node[midway,above] {$k-n$};
    \end{tikzpicture} 
    \end{align*}    
    Using again Eq.~\eqref{eq:mps_index_2}, we obtain
    \begin{align*}
    u_{x,g}^{(k,m)} u_{xg,h}^{(m,n)}(\chi) = &\frac{c_4(x,g)}{\lambda(x,g)} \cdot \frac{c_4(xg,h)}{\lambda(xg,h)} \cdot \frac{\hat\sigma_{x}(g, h)}{\hat\sigma_{xgh}((gh)^{-1},gh)} \cdot \frac{\hat\sigma_x(gh,(gh)^{-1})}{\omega(gh,(gh)^{-1},gh)} \cdot \\
    &
    \begin{tikzpicture}[xscale=0.6,yscale=0.4,baseline=7mm]
        \draw[blue] (0,0) -- (6,0);
        \draw (0,4) -- (6,4);
        \draw (0,1) -- (6,1);
        \draw (0,2) -- (7,2);
        \draw (0,3) -- (6,3);
        \foreach \x in {0,2}{
            \draw (\x,0) -- (\x,5);
            \node[mpo] at (\x,1) {};
            \node[cmpo] at (\x,2) {};
            \node[mpo] at (\x,3) {};
            \node[mpo] at (\x,4) {};
        }
        \foreach \x in {3,5}{
            \draw (\x,0) -- (\x,5);
            \node[mps] at (\x,0) {};        
            \node[mpo] at (\x,1) {};        
            \node[cmpo] at (\x,2) {};
            \node[mpo] at (\x,3) {};
            \node[mpo] at (\x,4) {};
        }
        \draw[fusion] (0,0)--(2,0);
        \draw[densely dotted] (7,2.75)--++(1,0);
        \draw[blue] (8,1.625)--++(1,0);
        \draw[blue] (6,0.5)--++(2,0);
        \draw (6,3.5) --++(1,0);
        \draw[halfc action] (6,0)--(6,1);
        \draw[fusion] (6,3)--(6,4);
        \draw[fusion, gray] (7,2)--(7,3.5);
        \draw[halfc action] (8,0.5)--(8,2.75);
        \node[irrep] at (7.5,2.75) {$e$};
        \node[irrep] at (5.5,0) {$xgh$};
        \node[irrep] at (8.5,1.625) {$x$};
        \node[irrep] at (5.5,1) {$gh$};
        \node[irrep] at (5.5,2) {$gh$};
        \node[irrep] at (5.5,3) {$h$};
        \node[irrep] at (5.5,4) {$g$};
        \foreach \x in {(1,1), (1,2), (1,3), (1,4),(4,0),(4,1),(4,2),(4,3),(4,4)}{
            \node[fill=white] at \x {$\dots$};
        }
        \draw[decorate, decoration=brace] (-0.1,5.1) -- (2.1,5.1) node[midway,above] {$n$};
        \draw[decorate, decoration=brace] (2.9,5.1) -- (5.1,5.1) node[midway,above] {$k-n$};
    \end{tikzpicture} 
    \end{align*}    
    Finally notice using the definition of $v(g,h)$ (see Eq.~\eqref{eq:v graphical}) and the definition of $u_{x,g}^{k,n}$ (see Eq.~\eqref{eq:u finite def})  that this is 
    \begin{align*}
            u_{x,g}^{(k,m)} u_{xg,h}^{(m,n)}(\chi) = &\frac{c_4(x,g)}{\lambda(x,g)} \cdot \frac{c_4(xg,h)}{\lambda(xg,h)} \cdot \frac{\hat\sigma_{x}(g, h)}{\hat\sigma_{xgh}((gh)^{-1},gh)} \cdot \frac{\hat\sigma_x(gh,(gh)^{-1})}{\omega(gh,(gh)^{-1},gh)} \cdot \frac{\lambda(x,gh)}{c_4(x,gh)} \\ & \cdot \pi_x^{(k)}(v(g,h)) \cdot u_{x,gh}^{(k,n)}(\chi).
    \end{align*}
    Note that part of this expression can be simplified using Eq.~\eqref{eq:sigmaconstraint}: we can write 
    $$ \frac{\hat\sigma_x(gh,(gh)^{-1})}{\omega(gh,(gh)^{-1},gh) \hat\sigma_{xgh}((gh)^{-1},gh)} = \frac{\hat\sigma_{x}(gh, e)}{\hat\sigma_{x}(e,gh)}= 1\ , $$
    and denoting $\alpha(x,g)=\frac{\lambda(x,g)}{c_4(x,g)}$ we obtain
    
$$ u_{x,g}^{(k,m)} u_{xg,h}^{(m,n)}(\chi) = \frac{\alpha(x,gh)} {\alpha(x,g)  \cdot\alpha(xg,h)} \hat\sigma_x(g,h) \pi_x^{(k)}(v(g,h)) \cdot u_{x,gh}^{(k,n)}(\chi)$$
We thus obtain that Eq.~\eqref{eq:sigmaprime} holds with
$$ \hat\sigma'_x(g,h)  = \frac{\alpha(x,gh)} {\alpha(x,g)  \cdot\alpha(xg,h)} \hat\sigma_x(g,h), $$
and that $\hat\sigma'_x(g,h)$ is in the same equivalence class (see Eq.\eqref{equivclassigmaMPS}) as $\hat\sigma_x(g,h)$.
\end{proof}

\begin{thm}
    The operators $u_{x,g}^{(m,n)}$ can be extended to unitary operators $u_{x,g}:\mathcal{H}_{xg}\to\mathcal{H}_x$  such that
    \begin{equation*}
        \mathrm{Ad}(u_{x,g})\pi_{xg} = \pi_x\beta_{g,R},
    \end{equation*}
    and such that
    \begin{equation*}
        u_{xg,h} u_{x,gh} = \sigma'_x(g,h) \pi_x(v(g,h)) u_{x,gh},
    \end{equation*}
    where $[\sigma'_x(g,h)] = [{\hat\sigma_x(g,h)}] $. 
\end{thm}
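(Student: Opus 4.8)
The plan is to assemble the unitary $u_{x,g}$ out of the finite–size operators $u_{x,g}^{(m,n)}$ by passing to the inductive limit that defines $\caH_x$, and then to read off the two asserted relations from Lemmas \ref{u_w_beta} and \ref{lem:u_w_index} by letting the system size grow. Concretely, I would define $u_{x,g}$ first on the dense local subspace $\phi_{xg}^{(\infty,n)}\lmk\caH_{xg}^{(n)}\rmk$. For $\chi\in\caH_{xg}^{(n)}$ I set
\begin{equation*}
u_{x,g}\lmk\phi_{xg}^{(\infty,n)}(\chi)\rmk:=\phi_{x}^{(\infty,m)}\lmk u_{x,g}^{(m,n)}(\chi)\rmk,
\end{equation*}
for any $m$ with $m-n$ large enough. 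Well-definedness — independence both of the chosen $m$ and of the representative $\chi$ of a vector in the direct limit — is exactly the compatibility relation $u_{x,g}^{(k,m)}\circ\phi_{xg}^{(m,n)}=\phi_x^{(k,m)}\circ u_{x,g}^{(m,n)}$ proved in Lemma \ref{lem:u_w_zip}. In the same way the operators $w_{x,g}^{(m,n)}$ assemble into a densely defined $W_{x,g}:\caH_x\to\caH_{xg}$.

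Next I would show that $u_{x,g}$ is isometric and extends to a unitary. Since the embeddings $\phi$ are isometries, $\lV u_{x,g}\chi\rV$ computed in $\caH_x$ equals the norm of $u_{x,g}^{(m,n)}(\chi)$ in the finite scalar product $\langle\cdot|\cdot\rangle_x^{(m)}$. Lemma \ref{lem:u_w_dagger} identifies $W_{x,g}$ as the adjoint of $u_{x,g}$ on the local subspaces, while Lemma \ref{lem:u_w_inverse} — with the distinguished choice of $\lambda(x,g)$ fixed there — gives $u_{x,g}\circ W_{x,g}=\id$ and $W_{x,g}\circ u_{x,g}=\id$ after growing. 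Hence $u_{x,g}$ is a densely defined isometry with dense range and inverse $W_{x,g}=u_{x,g}^{*}$, so it extends uniquely by continuity to a unitary $u_{x,g}:\caH_{xg}\to\caH_x$.

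Finally I would verify the two algebraic identities on the dense local subspace and extend by continuity. For the intertwining relation, Lemma \ref{u_w_beta} gives
\begin{equation*}
u_{x,g}^{(k,m)}\circ\pi_{xg}^{(m)}(O)\circ w_{x,g}^{(m,n)}=\phi_x^{(k,m)}\circ\pi_x^{(m)}\lmk\beta_g^{(m,n)}(O)\rmk\circ\phi_x^{(m,n)};
\end{equation*}
letting $k\to\infty$, using $W_{x,g}=u_{x,g}^{*}$ and the convergence of the right–localized maps $\beta_g^{(m,n)}(O)$ to $\beta_{g,R}(O)$, one obtains $\Ad(u_{x,g})\pi_{xg}(O)=\pi_x\lmk\beta_{g,R}(O)\rmk$ for every local $O$, hence for all of $\caA_R$, matching the defining property \eqref{udef}. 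The cocycle relation is the analogous limit of Eq.~\eqref{eq:sigmaprime} in Lemma \ref{lem:u_w_index}, and the equivalence $[\sigma'_x(g,h)]=[\hat\sigma_x(g,h)]$ is already recorded there. The main obstacle is the bookkeeping of the ``large enough'' thresholds: one must check that all five lemmas can be invoked simultaneously along a common cofinal family of sizes, so that the limits defining $u_{x,g}$, its unitarity, and both relations are taken along the same net; and one must confirm that the operator produced this way is genuinely the abstract $u_{x,g}$ characterized by \eqref{udef}. It is this last identification that truly ties the concrete MPS/MPU construction to the operator–algebraic invariant $I(\boldsymbol{\omega})$.
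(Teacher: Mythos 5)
Your proposal is correct and follows essentially the same route as the paper's own proof: define $u_{x,g}$ and $w_{x,g}$ on the local (dense) subspaces as inductive limits using the compatibility relations of Lemma~\ref{lem:u_w_zip}, obtain unitarity from Lemmas~\ref{lem:u_w_dagger} and \ref{lem:u_w_inverse}, identify the extension with the abstract $u_{x,g}$ of Eq.~\eqref{udef} via Lemma~\ref{u_w_beta} (for local $O$ the finite-size maps stabilize to $\beta_{g,R}(O)$, so the limit is immediate), and read off the cocycle relation and the equivalence $[\sigma'_x(g,h)]=[\hat\sigma_x(g,h)]$ from Lemma~\ref{lem:u_w_index}. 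The threshold bookkeeping you flag is handled in the paper exactly as you suggest, by taking all sizes along a common cofinal family.
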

This, combined with subsection \ref{3eq} shows then that the index defined from the MPS formalism coincides with the index defined from the general formalism when applied to MPS and MPO. 

\begin{proof}
    Lemma \ref{lem:u_w_zip} allows one to define $u_{x,g}:\mathcal{H}_{xg}^{loc}\to \mathcal{H}_{x}^{loc}$ and $w_{x,g}:\mathcal{H}_{x}^{loc} \to \mathcal{H}_{xg}^{loc}$ as the inductive limit of $u_{x,g}^{(m,n)}$ and $w_{x,g}^{(m,n)}$, respectively. For example, given $\xi\in\mathcal{H}_{xg}^{loc}$, there is a representative $\xi_n\in \caH_{xg}^{(n)}$ such that $\xi=\phi_x^{(\infty,n)}{(\xi_n)}$ for some $n\in \mathbb N$. Given this representative, $u_{x,g}$ acts as $\xi\mapsto \phi_x^{(\infty,m)}\lmk u_{x,g}^{(m,n)} \xi_n\rmk$. 
    Lemma \ref{lem:u_w_zip} shows
    \begin{align}
        \phi_x^{(\infty, k)} u_{x,g}^{(k,m)}\lmk\phi_{xg}^{(m,n)} (\xi_n)\rmk
        =\phi_x^{(\infty, k)}\phi_x^{(k,m)}\lmk u_{x,g}^{(m,n)}\xi_n\rmk=\phi_x^{(\infty,m)}
        u_{x,g}^{(m,n)}\xi_n,
    \end{align}
     hence the resulting equivalence class is independent of the representative $\xi_n$ of $\xi$.
    
    Due to Lemma \ref{lem:u_w_inverse}, the operators $u_{x,g}$ and $w_{x,g}$ are invertible and $u_{x,g}^{-1} = w_{x,g}$: given a vector $\xi\in \mathcal{H}_{xg}^{loc}$ and a representative $\xi_n\in \mathcal{H}_{x,g}^{[0,n]}$, Lemma \ref{lem:u_w_inverse} shows that $w_{x,g} u_{x,g} \xi = \phi_{\infty\leftarrow [0,k]}\lmk w_{x,g}^{(k,m)} u_{x,g}^{(m,n)} \xi_n\rmk = \phi_{\infty\leftarrow [0,n]}(\xi_n) = \xi$, and similarly, $u_{x,g} w_{x,g} \xi = \xi$ for any $\xi\in\mathcal{H}_x$. Similarly, due to Lemma \ref{lem:u_w_dagger}, $u_{x,g}^{\dagger} = w_{x,g}$.  In particular, their norm is bounded, and thus they can be extended to the whole Hilbert space. The extension keeps all the above properties. 

    Lemma \ref{u_w_beta} shows that the obtained $u_{x,g}$ satisfies Eq.~\ref{udef}, and thus by uniqueness, it coincides (up to a phase) with $u_{x,g}$ obtained in Section \ref{sxgh}. Finally Lemma~\ref{lem:u_w_index} shows that the index obtained from the MPS formalism is in the same equivalence class as the equivalence class of the index obtained in this paper.
\end{proof}

\backmatter

\bmhead{Acknowledgments}

The authors thank the hospitality of the Institute of Mathematical Sciences  (ICMAT) in Madrid where part of this work has been done.
YO was supported by JSPS KAKENHI Grant Number 19K03534 and
22H01127 and also supported by JST CREST Grant Number JPMJCR19T2.
JGR and AM have been partially supported by the European Research Council (ERC) under the European Union’s Horizon 2020 research and innovation programme through the ERC-CoG SEQUAM (Grant Agreement No. 863476).

\begin{appendices}

\end{appendices}


\begin{thebibliography}{10}

\bibitem[BJP]{bjp}
O.~Bratteli  P.~Jorgensen, G.~Price. 
\newblock{Endomorphisms of $B(\caH)$.}
\newblock{Quantization, nonlinear partial differential equations, and operator algebra.}
93--138, 
Proc. Sympos. Pure Math., {\bf 59}, (1996). 
\bibitem[A]{arv}W.B.~ Arveson. \newblock{Continuous analogues of Fock space I.}
\newblock{ Mem. Amer. Math. Soc.}, {\bf 409}, (1989).

\bibitem[GLM]{mposym}
\newblock{Classifying phases protected by matrix product operator symmetries using matrix product states} J. Garre Rubio, L. Lootens, A. Molnar \newblock{Quantum 7, 927 (2023)}



\bibitem[PTBO]{EntSPT}
\newblock{Entanglement Spectrum of a Topological Phase in One Dimension}
F. Pollmann, A. Turner, E. Berg and M. Oshikawa
\newblock{Phys. Rev. B, 81, 064439}

\bibitem[CGW]{SPTclass1}
\newblock{Classification of {{Gapped Symmetric Phases}} in {{1D Spin Systems}}}
X. Chen,  Z.-G. Gu and X.-G. Wen
\newblock{Phys. Rev. B, 83,035107}

\bibitem[SPC]{SPTclass2}
\newblock{Classifying Quantum Phases Using Matrix Product States and Projected Entangled Pair States}
N. Schuch,D. Pérez-García and I. Cirac 
\newblock{Phys. Rev. B, 84,165139}


\bibitem[H]{Hastings}
\newblock{Locality in Quantum and Markov Dynamics on Lattices and Networks}
M. B. Hastings
\newblock{Phys. Rev. Lett. , 93,140402}

\bibitem[M1]{Matsui}
\newblock{A Characterization of Pure Finitely Correlated States}
T. Matsui
\newblock{Infinite Dimensional Analysis, Quantum Probability and Related Topics, 01,04,647-661 (1998)}

\bibitem[EO]{EO}
Pavel Etingof and Viktor Ostrik. Finite tensor categories, 2003.


\bibitem[M2]{Matsui2}
Matsui, Taku. "Boundedness of entanglement entropy and split property of quantum spin chains." Reviews in Mathematical Physics 25.09 (2013): 1350017.

\bibitem[O1]{O1SPT}
\newblock{A classification of pure states on quantum spin chains satisfying the split property with on-site finite group symmetries}, 
Y. Ogata
\newblock{arXiv.1908.08621}
 
\bibitem[KS]{Kanomalous}
\newblock{Anomalous symmetries of quantum spin chains and a generalization of the Lieb-Schultz-Mattis theorem} 
A. Kapustin and N. Sopenko
\newblock{arXiv.2401.02533}


\bibitem[O3]{2dSPT_O}
\newblock{A $H^{3}(G,\bbT)$-valued index of symmetry protected topological phases with on-site finite group symmetry for two-dimensional quantum spin systems} Y.~Ogata \newblock{Forum of Mathematics, Pi 9 1-62 (2021)}

\bibitem[BO]{1dSPT_BO}
\newblock{The classification of symmetry protected topological phases of one-dimensional fermion systems
}, C.~Bourne, Y.~Ogata, \newblock{Forum of Mathematics, Sigma 9 25 (2021)}


\bibitem[CPSV]{MPUdef}{Matrix Product Unitaries: Structure, Symmetries, and Topological Invariants, J. Ignacio Cirac, David Perez-Garcia, Norbert Schuch, Frank Verstraete, J. Stat. Mech. (2017) 083105}

\bibitem[CLW]{2dSPT}{Two-dimensional symmetry-protected topological orders and their protected gapless edge excitations, Xie Chen, Zheng-Xin Liu, and Xiao-Gang Wen,  Phys. Rev. B 84, 235141(2011)}

\bibitem[EN]{EN3co}{Classifying symmetry-protected topological phases through the anomalous action of the symmetry on the edge, Dominic V. Else and Chetan Nayak, Phys. Rev. B 90, 235137 (2014)}

\bibitem[O1]{O1}{A Class of Asymmetric Gapped Hamiltonians on Quantum Spin Chains and its Characterization I
Yoshiko Ogata
COMMUNICATIONS IN MATHEMATICAL PHYSICS 348(3) 847-895 (2016)}

\bibitem[O2]{OCDM}Y.~Ogata
\newblock{Classification of  symmetry protected topological phases in quantum spin
 chains}
\newblock{Current Developments in Mathematics 2020}

\bibitem[OT]{ot}
Y.~Ogata, H.~Tasaki
\newblock{Lieb--Schultz--Mattis type theorems for quantum spin chains without continuous symmetry},

 \newblock{Communications in Mathematical Physics},
  372, {3},
 {951--962},
 {(2019)},
 



\bibitem[S]{QuantumWielandt}{
 A Quantum Version of {{Wielandt}}'s Inequality, Sanz, M. and Perez-Garcia, D. and Wolf, M. M. and Cirac, J. I., IEEE Transactions on Information Theory 56/9
}

\bibitem[MGSC]{MPSdec}{
A generalization of the injectivity condition for Projected Entangled Pair States, Andras Molnar, Yimin Ge, Norbert Schuch, J. Ignacio Cirac, J. Math. Phys. 59, 021902 (2018)}

\bibitem[FSWCP]{UncleHamiltonian}{
Frustration free gapless Hamiltonians for Matrix Product States, 
C. Fernández-González, N. Schuch, M. M. Wolf, J. I. Cirac,
D. Pérez-García, Commun. Math. Phys. {\bf 333}, 299-333 (2015)
}

\bibitem[EHK]{PerronFrobenius}{Evans, D.E., Høegh-Krohn, R., 1978. Spectral Properties of Positive Maps on $C^*$-Algebras. Journal of the London Mathematical Society s2-17, 345–355.}

\bibitem[FNW]{FCS}{
M Fannes, Bruno Nachtergaele, and R. F. Werner, “Finitely correlated states on quantum
spin chains,” Commun Math Phys 144, 443–490 (1992)}

\bibitem[PVWC]{MPSReps}{
D. Perez-Garcia, F. Verstraete, M. M. Wolf, and J. I. Cirac, “Matrix Product State
Representations,” Quant. Inf. Comp. 7, 401 (2007)
}

\end{thebibliography}
\end{document}